\newacro{ses}[SES]{smallest enclosing sphere}
\newtheorem{theorem}{Theorem}
\newtheorem{lemma}[theorem]{Lemma}
\newtheorem{corollary}[theorem]{Corollary}
\newtheorem{definition}{Definition}
\def\Fsync/{$\mathcal{F}$\textsc{sync}}
\def\Gathering/{\textsc{Gathering}}
\def\chainForm/{\textsc{Chain-Formation}}
\def\patternForm/{\textsc{Pattern Formation}}
\def\Point/{\textsc{Point}}
\def\UniformCircle/{\textsc{Uniform Circle}}
\def\gtc/{\textsc{Go-To-The-Center}}
\def\gtcShort/{\textsc{GTC}}
\def\mobs/{\textsc{Move-on-Bisector}}
\def\gtcThreeD/{\textsc{3d-Go-To-The-Center}}
\def\gtcThreeDShort/{\textsc{3d-GTC}}
\def\gtcThreeDCont/{\textsc{Continuous-3d-Go-To-The-Center}}
\def\gtcThreeDContShort/{\textsc{Cont-3d-GTC}}
\def\tangentialNormal/{tangential-normal}
\def\moveOnAngleMinimizer/{\textsc{Move-on-Angle-Minimizer}}
\def\look/{\texttt{Look}}
\def\compute/{\texttt{Compute}}
\def\move/{\texttt{Move}}
\def\LCMlong/{\textsc{Look-Compute-Move}}
\def\LCM/{LCM}
\def\chainForm/{\textsc{Chain-Formation}}
\def\Gathering/{\textsc{Gathering}}
\def\maxForm/{\textsc{Max-Chain-Formation}}
\def\gtm/{\textsc{GtM}}
\def\gtmlong/{\textsc{Go-To-The-Middle}}
\def\mob/{\textsc{MoB}}
\def\moblong/{\textsc{Move-On-Bisector}}
\def\maxmob/{\textsc{Max-MoB}}
\def\naivemaxmob/{\textsc{Naive-Max-MoB}}
\def\maxmoblong/{\textsc{Max-Move-On-Bisector}}
\def\Oblot/{\ensuremath{\mathcal{OBLOT}}}
\newcommand{\parameterizedMaxMoveOnBisectorLong}[2]{\textsc{Max-Move-On-Bisector}}
\newcommand{\parameterizedMaxMoveOnBisector}[2]{\textsc{Max-MoB}}
\newcommand{\AName}{\textsc{Max-GtM}}
\def\maxgtm/{\AName}
\def\maxgtmlong/{\textsc{Max-Go-To-The-Middle}}
\newcommand{\tauAName}{(\ensuremath{1-\tau})}
\newcommand{\fsync}{\textsc{$\mathcal{F}$sync}}
\newcommand{\innerLength}{\ensuremath{I(t)}}
\newcommand{\leftOuter}{\ensuremath{O_{\ell}(t)}}
\newcommand{\rightOuter}{\ensuremath{O_{r}(t)}}
\newcommand{\innerLengthShort}{\ensuremath{I}}
\newcommand{\leftOuterShort}{\ensuremath{O_{\ell}}}
\newcommand{\opposedConfig}{opposed configuration}
\newcommand{\maxChain}{max-chain}
\newcommand{\marchingConfig}{marching configuration}
\newcommand{\MarchingConfig}{Marching configuration}
\newcommand{\marchingChain}{marching chain}
\newcommand{\deltaUConfig}{discrete $\delta$-V-con\-fi\-gu\-ra\-tion}
\newcommand{\taudeltaUConfig}{discrete $\left(\delta, 1-\tau\right)$-V-con\-fig\-u\-ra\-tion}
\newcommand{\thetaV}{continuous $\delta$-V-con\-fi\-gu\-ra\-tion}
\newcommand{\thetaVs}{continuous $\delta$-V-con\-fi\-gu\-ra\-tions}
\newcommand{\deriv}[2]{\ensuremath{{#1}'(t)}}
\newcommand{\tauhalf}{\ensuremath{\frac{\tau}{2}}}
\newcommand{\oneminustau}{\ensuremath{\left(1-\tau\right)}}
\newcommand{\halfoneminustau}{\ensuremath{\frac{\left(1-\tau \right)}{2}}}
\newcommand{\halfoneplustau}{\ensuremath{\frac{\left(1+\tau \right)}{2}}}
\newcommand{\half}{\frac{1}{2}}
\newcommand{\quarter}{\frac{1}{4}}
\newcommand{\referenceAngle}{\ensuremath{\psi}}
\newcommand{\referenceAngleConcrete}{\ensuremath{2 \cdot \cos^{-1} \left(1-\tau\right)}}
\newcommand{\norm}[1]{\ensuremath{\|{#1}\|}}
\newcounter{PotentialIndex}
\newcounter{MatrixIndex}
\def\@parfont{\bfseries}
\title{A Discrete and Continuous Study of  the \maxForm/ Problem
\thanks{This paper is a full version of the respective paper presented at SSS 2020.
}}
\author[1]{Jannik Castenow}
\author[2]{Peter Kling}
\author[1]{Till Knollmann}
\author[1]{Friedhelm Meyer auf der Heide}
\affil[1]{Heinz Nixdorf Institute and Department of Computer Science\\
	Paderborn University,
 \{jannik.castenow, till.knollmann, fmadh\}@upb.de}
\affil[2]{Department of Informatics, Universit\"at Hamburg,  peter.kling@uni-hamburg.de}
\date{}
\begin{document}

\maketitle

\begin{abstract}
	Most existing robot formation problems seek a target formation of a certain \emph{minimal} and, thus, efficient structure.
Examples include the \Gathering/ and the \chainForm/ problem.
In this work, we study formation problems that try to reach a \emph{maximal} structure, supporting for example an efficient coverage in exploration scenarios.
A recent example is the NASA Shapeshifter project~\cite{DBLP:journals/corr/abs-2002-00515}, which describes how the robots form a relay chain along which gathered data from extraterrestrial cave explorations may be sent to a home base.

As a first step towards understanding such maximization tasks, we introduce and study the \maxForm/ problem, where $n$ robots are ordered along a winding, potentially self-intersecting chain and must form a connected, straight line of maximal length connecting its two endpoints.
We propose and analyze strategies in a discrete and in a continuous time model.
In the discrete case, we give a complete analysis if all robots are initially collinear, showing that the worst-case time to reach an $\varepsilon$-approximation is upper bounded by $\mathcal{O}(n^2 \cdot \log (n/\varepsilon))$ and lower bounded by $\Omega(n^2 \cdot~\log (1/\varepsilon))$.
If one endpoint of the chain remains stationary, this result can be extended to the non-collinear case.
If both endpoints move, we identify a family of instances whose runtime is unbounded.
For the continuous model, we give a strategy with an optimal runtime bound of $\Theta(n)$.
Avoiding an unbounded runtime similar to the discrete case relies crucially on a counter-intuitive aspect of the strategy: slowing down the endpoints while all other robots move at full speed.
Surprisingly, we can show that a similar trick does not work in the discrete model.
\end{abstract}


\section{Introduction}

Robot coordination problems deal with systems consisting of many autonomous but simple, mobile robots that try to achieve a common task.
The robots' capabilities are typically quite restricted (e.g., they have no common coordinate system or sense of direction).
Among the most well-studied tasks are \Gathering/ problems, in which robots are initially scattered and must gather at one point.
Another class of important tasks are \chainForm/ problems, where robots take the role of communication relays that, initially, form a winding chain connecting two distinguished robots.
The relays are to move such that the chain becomes straight, allowing for a more energy-efficient communication.
Applications of such chain formations can be found in the exploration of difficult terrain that restricts normal communication (e.g., cave systems)~\cite{conf/icar/NguyenPRGS03,DBLP:journals/corr/abs-2002-00515}.

Both \Gathering/ and \chainForm/ problems can be described as \emph{contracting}: starting from an initially scattered formation, they seek to reach a smaller, more efficient (communication) structure.
A natural complement to such contracting formation primitives are \emph{extension problems}.
The general idea is to spread a set of distinguished robots such that their convex hull is maximized, while maintaining a suitable connection network of simple relay robots.
We initiate the theoretical study of such problems for the case of two distinguished robots connected by a chain of relay robots.
Already this comparatively simple scenario turns out to be non-trivial to analyze.

\paragraph{Movement Model \& Time Notions}
We consider $n$ identical, oblivious, mobile robots with a limited viewing range (normalized to $1$) scattered in the Euclidean plane.
The robots form a communication chain, such that each robot has a specific predecessor and successor in distance at most $1$.
We assume no common coordinate systems.
Instead, a robot may only measure its relative position (distances and angles) to its two neighbors.
We seek a simple, deterministic\footnote{%
	Determinism implies that from certain, very symmetrical system states, robots won't be able to form a maximum length chain (e.g., when all robots start in the same position).
	This can be resolved with a very limited and small amount of randomness.
	(e.g., having the outer robots move in a random direction in such a situation).
} movement strategy that, when executed simultaneously by all robots, causes them to converge towards a straight chain of (maximal) length $n-1$.
This movement strategy takes the relative positions of the (at most) two neighboring robots and specifies where the robot moves next.
It is crucial that the distance between two neighboring robots never exceeds $1$, since otherwise we cannot guarantee that the (oblivious) robots will be able to reconnect the chain.

We refer to this as the \maxForm/ problem and study it in two different time models, the classical (synchronous) \emph{\textnormal{\LCMlong/} (\LCM/)} model and the \emph{continuous time} model.
In the \LCM/ model, time is divided into discrete \emph{rounds} in which all robots simultaneously perform a cycle of a \look/, a \compute/, and a \move/ operation.
During the \look/ operation, each robot takes a \emph{snapshot} of its neighbors' current relative positions.
Afterward, all robots start the \compute/ operation, during which they use their snapshot to compute a \emph{target point}.
Finally, all robots perform the \move/ operation by moving to the target point.
Together with our simple type of (oblivious and communication-less) robots, this is also known as the \Oblot/ model \cite{series/lncs/11340}.

The above described model is inherently discrete, which severely limits the accuracy of information on which movements are based.
The situation observed by a robot at the beginning of a round might be very different from the end of the round, when all other robots performed their movement.
This effect can be compensated, e.g., by limiting how far a robot may move towards its target point during a round.~\cite{conf/antsw/GordonWB04} considered such a model and studied how it evolves in the limit, such that robots move an infinitesimal distance per round.
This gives rise to the \emph{continuous time} model.
Here, each robot perpetually measures its neighbors' positions and, at the same time, adjusts the target point towards which it moves.
This model exhibits fundamentally different properties, as was already experimentally observed in~\cite{conf/antsw/GordonWB04} and later analytically proven in~\cite{journals/topc/DegenerKKH15} (see our detailed discussion of related work).

While the continuous model is certainly idealized, it also abstracts away the \enquote{loss of discretization} and allows one to focus on the complexity of the formation problem.
In a sense, it showcases the best possible improvement one can hope for when approaching \LCM/ cycles of length zero in practical implementations.

\paragraph{Related Work}
The following overview focuses on robot formation strategies with known runtime bounds.
In particular, we do not cover semi- or asynchronous variants of the \LCM/ model, in which the robots' \LCM/ cycles are not necessarily synchronized.
In such systems, already achieving a task like \Gathering/ may be impossible~\cite{DBLP:conf/algosensors/DieudonneP09} or requires additional robot properties~\cite{DBLP:journals/tcs/FlocchiniPSW05,DBLP:conf/sss/PoudelS17}.
The synchronous setting allows us to concentrate on the runtime analysis and to better compare the discrete and continuous models.
See~\cite{series/lncs/11340} for a quite complete and very recent survey on robot coordination problems.

The \Gathering/ problem has been considered in both the discrete and continuous setting.
Here, there is no predecessor/successor relation between the robots, and the snapshot from the \look/ operation contains all robot positions within viewing range.
A natural strategy is to move towards the center of the smallest enclosing circle spanning all robots in viewing range.
In the discrete setting, \cite{DBLP:journals/trob/AndoOSY99} proved that this strategy gathers all robots in finite time; a runtime bound of $\Theta(n^2)$ was proven later in \cite{DBLP:conf/spaa/DegenerKLHPW11}.
Up to now, this strategy achieves the  asymptotically fastest (and conjectured optimal) runtime in this model.
Taking a look at the continuous setting yields a very different situation:
\cite{conf/antsw/GordonWB04} proposed a simple, continuous strategy, in which robots try and decide locally whether they are at a vertex of the global convex hull formed by all robots.
If a robot concludes that it is at such a vertex, it moves along the angle bisector towards the inside of the (supposed) convex hull.
This strategy was shown to gather all robots in finite time.
Later, \cite{DBLP:conf/spaa/KempkesKH12} proved that the strategy's worst-case runtime is $\Theta(n)$; a considerable improvement about the $\Theta(n^2)$ bound for discrete \Gathering/.
For an overview over continuous strategies for \Gathering/, see \cite{Kling0219}.

The \chainForm/ problem was introduced and analyzed by \cite{conf/ifip10/DyniaKLH06} in the discrete setting.
The authors proposed the natural \gtmlong/ (\gtm/) strategy, in which each robot moves towards the midpoint between its two neighbors.
It is proven that \gtm/ requires $\mathcal{O}\bigl(n^2 \cdot \log(n/\epsilon)\bigr)$ rounds to reach an $\epsilon$-approximation (w.r.t.~the length) of the straight chain between the base stations.
\cite{conf/spaa/KlingH11} gave an almost matching lower bound of $\Omega\bigl(n^2 \cdot \log(1/\epsilon)\bigr)$ and generalized these bounds to a class of (linear) strategies related to \gtm/.
Note that while there are some discrete \chainForm/ strategies, specifically~\cite{journals/tcs/KutylowskiH09}, that achieve a better (linear) asymptotic runtime, such strategies are known only for relaxed models and goals (e.g., reaching only a $\Theta(1)$-approximation).
The continuous setting was analyzed by \cite{journals/topc/DegenerKKH15}, who suggested the \moblong/ (\mob/) strategy (robots move along the angle bisector formed by their two neighbors) and proved a runtime of $\Theta(n)$.
Similar to the \Gathering/ problem, we see a linear improvement when going from the discrete to the continuous setting.

Scenarios related to the idea of extension problems have been considered in other settings (like on discrete graphs) under the name uniform scattering or deployment~\cite{DBLP:conf/ipps/BarriereFBS09,DBLP:journals/jpdc/ShibataMOKM18}.
The general problem of forming a line in a distributed system has been studied in many different contexts, see e.g.\ \cite{conf/dna/GmyrHKKRSS18,conf/icinfa/JiangW018,conf/icar/NguyenPRGS03}.
While the presented theoretical models are certainly idealized (ignoring, e.g., collisions of physical robots), such algorithms can be adapted for practical systems~\cite{journals/jfr/YunAA97}.

\paragraph{Our Contribution}
We adapt the known (contracting) \chainForm/ strategies \gtm/ (discrete setting) and \mob/ (continuous setting) such that they still straighten the chain but, at the same time, keep extending its length.
The basic idea is to let inner robots perform the contracting strategy while the two outer robots extend the chain by moving away from their respective neighbor.
While this seems to be a small modification of the contracting strategies on a conceptual level, we identify a much more complex behavior of the robots caused by the extension part.
This also affects the analysis -- we use several different techniques: among others, we make use of discrete Fourier transforms, the mixing time of Markov chains and the stability theory of dynamical systems.

\Cref{section:discrete} considers the discrete setting, for which we distinguish the one-dimension\-al case (all robots are initially collinear) and the general two-dimension\-al case.
In the one-dimensional case, we already see that very symmetric configurations are problematic for any (deterministic) strategy.
This is obvious for the trivial configuration (all robots start in the same spot).
But also from less contrived starting positions (e.g., when the initial chain is symmetrical around the origin), any deterministic strategy results in a non-maximal chain (that potentially keeps moving) (see \cref{thm:cannotreachmax}).
Still, in the case of our proposed \maxgtm/ strategy, we can show:
\begin{theorem}%
	\label{thm:introduction:main:discr:onedim}
	Under the \textnormal{\maxgtm/} strategy on the line, the robot movement reaches in time  $\Omega\bigl(n^2 \cdot \log(1/\epsilon)\bigr)$ and $\mathcal{O}\bigl(n^2 \cdot \log(n/\epsilon)\bigr)$ an $\epsilon$-approximation of:
	a stationary, \maxChain{} of length $n-1$, if initially the outer robots move in different directions or
	a chain of non-maximal length moving at speed $1/n$ (\emph{marching chain}), if initially the outer robots move in the same direction.
\end{theorem}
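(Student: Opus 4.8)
\emph{Proof plan.} The plan is to treat \maxgtm/ on the line as a single affine dynamical system and to read off both claimed limit behaviours, and their convergence speed, from the spectrum of one tridiagonal matrix. Collect the robot positions into a vector $\mathbf{x}(t)\in\mathbb{R}^{n}$ ordered along the chain. An inner robot moves to the midpoint of its two neighbours, and an outer robot moves to the midpoint of its unique neighbour and the point at distance $1$ from itself on the far side of that neighbour; hence one round is an affine map $\mathbf{x}(t+1)=M\mathbf{x}(t)+\mathbf{b}$, where $M$ depends only on $n$ and $\mathbf{b}$ only on the two \emph{orientations} $s_{\ell},s_{r}\in\{-1,+1\}$ of the outer robots (which side of its neighbour each of them sits on). Concretely, $M$ is a fixed, doubly-stochastic, symmetric tridiagonal matrix --- essentially the transition matrix of a lazy reflecting random walk on the path on $n$ vertices, the laziness at the two ends being produced by the outer-robot midpoint rule --- while $\mathbf{b}$ is supported on the two outer coordinates with entries $\pm\tfrac12$ according to $s_{\ell},s_{r}$. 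Before anything else I would check that this recursion is ``honest'': the gap vector $\mathbf{d}(t)$ with $d_{i}(t)=x_{i+1}(t)-x_{i}(t)$ evolves by its own affine map that sends the cube $\{\|\mathbf{d}\|_{\infty}\le1\}$ into itself (interior gaps become convex combinations of gaps; outer gaps become $\tfrac12\,(\text{inner gap})\pm\tfrac12$), so the chain stays connected for all time; and, as a consequence, each outer gap retains the sign it has at time $0$, so the orientations $s_{\ell},s_{r}$ never flip and $M,\mathbf{b}$ are literally the same in every round. Thus $\mathbf{x}(t)=M^{t}\mathbf{x}(0)+\bigl(\textstyle\sum_{j<t}M^{j}\bigr)\mathbf{b}$, and everything reduces to the spectrum of $M$ together with the singular system $(I-M)\mathbf{x}^{\ast}=\mathbf{b}$.

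\emph{The two regimes.} Because \maxgtm/ is translation-equivariant, $M\mathbf{1}=\mathbf{1}$; let $\mathbf{w}>0$ be the left (Perron) eigenvector for eigenvalue $1$ (here $\mathbf{w}=\mathbf{1}$, since $M$ is doubly stochastic). By the Fredholm alternative, $(I-M)\mathbf{x}^{\ast}=\mathbf{b}$ is solvable iff $\mathbf{w}^{\!\top}\mathbf{b}=0$, and $\mathbf{w}^{\!\top}\mathbf{b}=\tfrac12(s_{\ell}+s_{r})$ --- which vanishes exactly when the outer robots move in \emph{different directions}. In that case every solution $\mathbf{x}^{\ast}$ is an arithmetic progression of spacing $1$, i.e.\ a straight chain of length $n-1$; and because $\mathbf{w}^{\!\top}\mathbf{x}(t)$ is a conserved quantity the system converges to the unique such $\mathbf{x}^{\ast}_{\infty}$ with $\mathbf{w}^{\!\top}\mathbf{x}^{\ast}_{\infty}=\mathbf{w}^{\!\top}\mathbf{x}(0)$, which is \emph{stationary}. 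When the outer robots move in the \emph{same direction}, $\mathbf{w}^{\!\top}\mathbf{x}(t)=\mathbf{w}^{\!\top}\mathbf{x}(0)+t\,\mathbf{w}^{\!\top}\mathbf{b}$ grows linearly, so the chain cannot come to rest; instead I would exhibit a travelling-wave solution $\mathbf{x}^{\ast}(t)=t\,v_{\ast}\mathbf{1}+\mathbf{p}$ with marching speed $v_{\ast}=\mathbf{w}^{\!\top}\mathbf{b}/\mathbf{w}^{\!\top}\mathbf{1}=\pm 1/n$, where $\mathbf{p}$ solves the now-consistent system $(I-M)\mathbf{p}=\mathbf{b}-v_{\ast}\mathbf{1}$ (normalised by $\mathbf{w}^{\!\top}\mathbf{p}=\mathbf{w}^{\!\top}\mathbf{x}(0)$). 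Inspecting the gap pattern of $\mathbf{p}$ then shows that its two endpoints span strictly less than $n-1$: the marching chain has non-maximal length and moves at speed $1/n$.

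\emph{Convergence rate.} Set $\mathbf{r}(t):=\mathbf{x}(t)-\mathbf{x}^{\ast}(t)$; the normalisations above give $\mathbf{w}^{\!\top}\mathbf{r}(0)=0$, so $\mathbf{r}(0)$ has no component along the eigenvalue-$1$ eigenvector $\mathbf{1}$, and $\mathbf{r}(t)=M^{t}\mathbf{r}(0)$. A discrete Fourier (discrete-cosine) diagonalisation of $M$ --- the point at which the mixing time of the random walk on a path enters --- gives eigenvalues $\lambda_{k}=\cos(k\pi/n)$ for $k=0,\dots,n-1$, so $\max_{k\ge1}|\lambda_{k}|=\cos(\pi/n)=1-\Theta(1/n^{2})$ and $\|\mathbf{r}(t)\|\le\cos(\pi/n)^{t}\|\mathbf{r}(0)\|$. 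Since every gap is at most $1$, $\|\mathbf{r}(0)\|$ is only polynomial in $n$, whence $\|\mathbf{r}(t)\|\le\epsilon$ already after $\mathcal{O}\!\bigl(\log(\mathrm{poly}(n)/\epsilon)/(1-\cos(\pi/n))\bigr)=\mathcal{O}\!\bigl(n^{2}\log(n/\epsilon)\bigr)$ rounds, uniformly over all starting configurations --- the upper bound. For the lower bound I would start from the straight, uniformly spaced chain with common gap $1-c/n$ for a small constant $c>0$: it is collinear, respects $\|\mathbf{d}\|_{\infty}\le1$, and lies in the ``different directions'' regime, yet its deviation from $\mathbf{x}^{\ast}_{\infty}$ is $\Theta(1)$ with a $\Theta(1)$ component along the slowest mode $\lambda_{1}$, so after $t$ rounds the (length, hence positional) error is $\Theta(1)\cdot\cos(\pi/n)^{t}$, which exceeds $\epsilon$ for every $t=o\!\bigl(n^{2}\log(1/\epsilon)\bigr)$; a mirror-image instance gives the same bound in the marching regime.

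\emph{Main obstacle.} I expect the spectral/mixing-time calculation to be the bulk of the work but largely a by-now-standard adaptation of the \gtm/ analysis. The genuinely delicate point is keeping the whole evolution inside a single linear regime: proving that connectivity ($\|\mathbf{d}(t)\|_{\infty}\le1$) is never lost and, above all, that neither outer robot ever crosses to the ``wrong'' side of its neighbour --- it is precisely this boundary (outer-robot) behaviour that departs from ordinary chain-formation, and it is precisely the degenerate symmetric configurations excluded by \Cref{thm:cannotreachmax} that one must steer around. A secondary subtlety is pinning down the marching chain: one has to solve the singular system $(I-M)\mathbf{p}=\mathbf{b}-v_{\ast}\mathbf{1}$ and verify that the resulting fixed shape is both admissible (all its gaps in $[-1,1]$) and of strictly non-maximal length.
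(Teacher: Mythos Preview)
Your plan is correct and, in fact, cleaner than the paper's route. The paper does \emph{not} work with a single affine map on positions. It splits the argument into three separate linear-algebra computations: for the opposed upper bound it passes to the gap defects $m_i=1-w_i$ and applies a discrete sine transform \`a la Cohen--Peleg (\Cref{theorem:preliminaries:discrete-sine-transformations}); for the opposed lower bound it augments the state by a constant coordinate so that the recursion becomes an \emph{absorbing} stochastic matrix $A_1$ and then invokes a mixing-time lower bound (\Cref{theorem:1-dimension:lower-bound-mixing-time}); and for the marching case it changes variables to the per-step displacements $z_i(t)=p_i(t+1)-p_i(t)$, which happen to evolve under the very same doubly-stochastic matrix you call $M$, and quotes the mixing-time bounds of~\cite{conf/spaa/KlingH11}. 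Your Fredholm-alternative viewpoint recovers all of this at once: $\mathbf{1}^\top\mathbf b=0$ vs.\ $\neq 0$ is exactly the opposed/marching dichotomy, the marching speed $\pm 1/n$ drops out of $\mathbf{1}^\top\mathbf b/\mathbf{1}^\top\mathbf 1$ rather than from an explicit eigenvector computation, and your ``orientation never flips'' check is precisely the paper's \Cref{lemma:1-dimension:orientation-of-outer-robots}. What the paper's split buys is that each piece plugs directly into an existing black box; what your unification buys is a single spectrum controlling both regimes and a conceptual explanation (solvability of $(I-M)\mathbf x^\ast=\mathbf b$) for why one case is stationary and the other marches.

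One small repair is needed in your lower bound. The uniformly spaced instance $d_i(0)=1-c/n$ has error $\mathbf e(0)=-\tfrac{c}{n}\mathbf 1$; its coefficient on the slowest gap-eigenmode $u_1[i]=\sin(i\pi/n)$ is only $\Theta(1/n)$, so after translating back to the $\varepsilon$-approximation (which is an $\ell^\infty$ condition on gaps) you get $\Omega\bigl(n^2(\log(1/\varepsilon)-\log n)\bigr)$ rather than the clean $\Omega(n^2\log(1/\varepsilon))$. The fix is to start instead from the pure slow mode in gap space, $d_i(0)=1-c\sin(i\pi/n)$: this is a valid opposed configuration for any $c\in(0,1)$, the gap error is exactly $c\cos(\pi/n)^t\,u_1$, and $\|\mathbf e(t)\|_\infty=c\cos(\pi/n)^t$ gives the full bound directly. (The paper sidesteps this by reading the lower bound off a single row of $A_1^t$ rather than an $\ell^2$ norm, using a somewhat ad hoc instance with one gap equal to $\varepsilon$.)
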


While this gives a pretty complete picture of the one-dimensional case, the two-dimensional case exhibits a much more complex behavior.
We can still prove convergence in finite time and derive a lower bound (which now depends also on the outer robots' initial distance) but an upper bound remains elusive.
\begin{theorem}%
	\label{thm:introduction:main:discr:twodim}
	Under the \textnormal{\maxgtm/} strategy, the robot movement reaches an $\varepsilon$-approxima\-tion either of the \maxChain{} or of a one-dimensional marching chain.
	There are configurations for which this takes $\Omega(n^2 \cdot \log(1/\delta))$ rounds, where $\delta$ denotes the initial distance between the outer robots.
\end{theorem}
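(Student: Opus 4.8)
The plan is to reduce the two-dimensional behaviour to the one-dimensional picture of \cref{thm:introduction:main:discr:onedim} in two phases, the first being to show that the configuration \emph{flattens} onto a line. I would track a width-type potential, say the width $w(t)$ of the thinnest strip containing all robots (or $\sum_i h_i(t)^2$, with $h_i$ the signed distance of robot $i$ to a best-fit line of the current configuration). Inner robots are harmless: each moves to the midpoint of its neighbours, so, modulo the motion of the reference line, their transversal coordinates evolve under the path's averaging operator, which damps the non-affine part. The delicate part is the two outer robots, which move \emph{away} from their neighbour and thus rotate and translate the reference line. Here I would aim for a per-round dichotomy -- either $w(t)$ is already below the target threshold, or it strictly decreases by a definite amount -- exploiting that an outer robot extends \emph{along} its incident edge, so that once the edges are nearly parallel to the reference line (equivalently, once $w(t)$ is small) the extension perturbs the transversal coordinates only to lower order and cannot re-inflate $w$. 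It follows that after finitely many rounds the robots stay inside an arbitrarily thin strip.

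\textbf{Convergence: reduction to 1D.} Once the robots lie in a strip of width $w$, I would project them onto the strip's axis, obtaining a collinear configuration $\tilde x(t)$, and show that one step of the true \maxgtm/ dynamics differs from one step of one-dimensional \maxgtm/ applied to $\tilde x(t)$ by only $O(w)$ (the transversal feedback being $O(w^2)$). By \cref{thm:introduction:main:discr:onedim} the one-dimensional dynamics converges, at rate $1-\Theta(1/n^2)$, either to the stationary \maxChain{} of length $n-1$ or to a \marchingChain; both limits are exponentially stable modulo the global rigid-motion symmetry -- the \maxChain{} as a fixed point, the \marchingChain as a traveling solution. The standard perturbation/stability machinery for dynamical systems then upgrades convergence of the one-dimensional model to convergence of the slightly perturbed two-dimensional model to an $\varepsilon$-approximation of one of these two targets. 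Concatenating the two phases yields the dichotomy, with -- as in the statement -- no explicit upper bound on the number of rounds.

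\textbf{Lower bound.} For the $\Omega(n^2\log(1/\delta))$ bound I would exhibit a family of near-degenerate instances indexed by the initial outer distance $\delta$: an almost-closed \deltaUConfig{} whose two legs are nearly collinear and whose two outer robots start at distance $\delta$. The idea is to pin down a suitable scalar progress potential $\Phi(t)$ (for instance the distance between the two outer robots, the opening angle of the V, or a straightness defect of the chain) such that $\Phi(0)$ and the value of $\Phi$ at any $\varepsilon$-approximation of the \maxChain{} differ by a factor $\Omega(1/\delta)$, and then prove the central lemma that $\Phi$ moves towards its target value by only a multiplicative $1\pm O(1/n^2)$ factor per round. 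The latter is the crux: the V can open -- and the chain straighten and spread -- only as fast as the inner robots following the midpoint (averaging) operator of the path permit, and the relevant mode of that operator relaxes at the slow rate $1-\Theta(1/n^2)$; I would make the per-round factor precise with the discrete-Fourier analysis of the tridiagonal step operator already developed for the one-dimensional $\Omega(n^2\log(1/\epsilon))$ bound. Unfolding the resulting recurrence from $\Phi(0)$ to its target value then costs $\Omega(n^2\log(1/\delta))$ rounds. Finally one checks that these instances avoid the alternative: being genuinely bent, they stay $\varepsilon$-far from every (nearly straight) \marchingChain until they are already $\varepsilon$-close to the straight \maxChain{}, so the bound applies to reaching an $\varepsilon$-approximation of either target.

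\textbf{Main obstacle.} I expect the flattening phase to be the hardest point: one must exclude a feedback loop in which the extending outer robots, by moving the reference line, keep re-injecting transversal deviation faster than the midpoint dynamics damps it, and turn the heuristic ``once the chain is almost flat the extension is almost parallel to it'' into an unconditional monotonicity-or-dichotomy statement rather than a circular one. Secondary hurdles are the perturbation argument around the \emph{non-stationary} \marchingChain attractor in the second phase, and verifying the $1\pm O(1/n^2)$ per-round bound uniformly over the whole near-degenerate regime in the lower bound.
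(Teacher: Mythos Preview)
Your two-phase plan (flatten, then perturb the one-dimensional analysis) is much heavier than what the paper actually does, and it contains a genuine error. The paper's argument is a single global potential: with $z_i(t)=p_i(t+1)-p_i(t)$ the per-round displacement of robot $i$, set $\phi_2(t)=\sum_i\|z_i(t)\|^2$. A short computation (parallelogram law) gives
\[
\phi_2(t)-\phi_2(t+1)\ \ge\ \tfrac14\sum_i\|z_{i-1}(t)-z_{i+1}(t)\|^2,
\]
so $\phi_2$ is monotone and bounded, and its drop can vanish only when all $z_i$ coincide. Equal nonzero $z_i$ is exactly the marching chain; equal zero $z_i$ is the max-chain. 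No flattening phase, no perturbation theory, no reference line to track.

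Your Phase~2 also relies on a false premise: you assert that both limits are ``exponentially stable modulo the global rigid-motion symmetry.'' The paper proves the opposite for the marching chain --- the Jacobian of the \maxgtm/ map at the marching chain has an eigenvalue strictly larger than $1$ (this is exactly what makes the $\delta$-V lower-bound family informative: an arbitrarily small transversal perturbation of the marching chain eventually escapes to the max-chain). So the standard stability/perturbation machinery you invoke does not apply near the marching attractor, and your reduction-to-1D step breaks precisely in the case you need it.

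\textbf{Lower bound.} Here you are close to the paper. Both arguments use the \deltaUConfig{}. The paper's execution is more elementary than your proposed Fourier route: it tracks the single scalar $x_2(t)$, the $x$-component of $w_2$, uses that $\|w_2(t)\|$ never drops below its initial value (so the nonlinear corner entry $1/(2\|w_2(t)\|)$ of $S(t)$ stays bounded), and derives a one-step multiplicative bound $x_2(t+1)\le(1+c/n)\,x_2(t)$. Unrolling this from $x_2(0)=\delta/(n-1)$ to a constant gives the dependence on $\log(1/\delta)$. Your plan to pass through the spectral gap of the path averaging operator is in the right spirit for the $n^2$ factor, but the paper does not need the Fourier machinery here; a direct scalar recursion suffices. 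Your final remark --- checking that the $\delta$-V family avoids the marching-chain alternative --- is correct and is implicitly used (for any $\delta>0$ the configuration converges to the max-chain, by the instability of the marching chain).
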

Interestingly, however, fixing the position of one of the two outer robots enables us to employ tools from Markov Chain theory (as used in previous results~\cite{conf/spaa/KlingH11}), yielding again the same almost tight runtime bound as in the one-dimensional case (see \cref{theorem:2-dimensions:one-stationary-upper-bound}).
Given this and some simple experimental evaluations, we conjecture that the lower worst-case bound stated in \cref{thm:introduction:main:discr:twodim} is tight.

\Cref{section:continuous} considers the continuous setting.
As in the discrete setting, very symmetric configurations again lead to unavoidable problems for deterministic strategies.
Moreover, a na\"ive translation of the \mob/ strategy results in the same dependency on the outer robots' initial distance $\delta$.
However, the continuous model allows for an interesting tweak which, as we show in \cref{section:speedOfOuterRobots}, cannot be done in the discrete model.
Namely, it turns out that decreasing the speed of outer robots by a small constant $\tau$ gets rid of the dependency on $\delta$ and yields an optimal, linear runtime bound.
As a byproduct, this also causes symmetrical initial positions to collapse to a single point instead of becoming a marching chain.
Summarized, we get the following result for the continuous setting:
\begin{theorem}%
	\label{thm:introduction:main:cont}
	\textnormal{\maxmob/} reaches in worst-case optimal time $\Theta(n)$
	\begin{enumerate*}[label=, afterlabel=]
		\item a stationary, maximum chain of length $n-1$ or
		\item the chain collapses to a single point.
	\end{enumerate*}
\end{theorem}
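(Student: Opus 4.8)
The plan is to establish the two quantitative bounds separately --- a matching $\Omega(n)$ lower bound and an $\mathcal{O}(n)$ upper bound --- and, orthogonally, the dichotomy between the two possible outcomes. The lower bound is the easy ingredient: start the $n$ robots folded into a zig-zag with the two outer robots at distance $\delta_0=\Theta(1)$ (or even $0$). Since under \maxmob/ each outer robot moves at speed at most $1-\tau$, the endpoint distance $\norm{p_n(t)-p_1(t)}$ grows by at most $2(1-\tau)$ per time unit, so an $\varepsilon$-approximation of the length-$(n-1)$ chain --- which forces $\norm{p_n-p_1}\ge n-1-\varepsilon$ --- cannot be reached before time $\bigl(n-1-\varepsilon-\delta_0\bigr)/\bigl(2(1-\tau)\bigr)=\Omega(n)$. (For the collapse outcome the same idea gives $\Omega(n)$, since a spread-out chain has diameter $\Omega(n)$ and the diameter shrinks at speed $\mathcal{O}(1)$.)

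For the $\mathcal{O}(n)$ upper bound in the \emph{extending} case I would model the evolution as a differential inclusion (the bisector field is only piecewise smooth, and the hard constraint that no edge length ever exceeds $1$ must be built in) and organize the argument in two phases. \textbf{Phase 1 (straightening).} Introduce a Lyapunov function $H(t)$ measuring the chain's deviation from the straight segment $p_1(t)p_n(t)$, e.g.\ the largest distance of a robot from the line through the two outer robots, or the excess $\sum_i\norm{p_{i+1}-p_i}-\norm{p_n-p_1}$. Using that \mob/ moves every inner robot at \emph{constant} speed along its bisector, one shows that the inner robots decrease $H$ at an absolute rate $\Omega(1)$, uniform in $n$ and in the current endpoint distance. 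The role of the $\tau$-slowdown enters exactly here: it bounds the rate at which the two moving endpoints can feed new deviation back into $H$, so the net decrease stays $\Omega(1)$. This is the step that breaks for the na\"ive ($\tau=0$) variant (\naivemaxmob/) and produces its $\delta$-dependence --- for tiny $\delta$ the full-speed endpoints pull the chain into a needle-like shape the inner robots cannot keep up with. Since $H(0)=\mathcal{O}(n)$, after $\mathcal{O}(n)$ time the configuration is within any prescribed constant of a straight segment, and $H$ stays that small afterwards. \textbf{Phase 2 (taut chain).} On a nearly straight chain the dynamics reduce to an essentially one-dimensional process: the endpoints keep moving outward at speed $1-\tau$, pulling the chain taut (each outer robot's effective speed becoming capped by its neighbor once the incident edge reaches length $1$), and --- again invoking the constant-speed property rather than linear averaging, which would only give $\mathcal{O}(n^2)$ --- the edge lengths are driven to the uniform value $1$. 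Concretely, the length deficit $n-1-\sum_i\norm{p_{i+1}-p_i}$ decreases at rate $\Omega(1)$ until it vanishes, costing $\mathcal{O}(n)$ additional time, at which point the configuration is a stationary maximum chain. Together with the lower bound this yields $\Theta(n)$.

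For the \emph{collapsing} case I would single out the family of \opposedConfig{}s, in which the two outer robots initially move in (essentially) the same direction so that their separation does not grow. There the \mob/ contraction of the inner robots --- behaving like the continuous \Gathering/ strategies with their known $\Theta(n)$ runtime --- shrinks the diameter at rate $\Omega(1)$, and with the $\tau$-slowdown even the outer robots are dominated by this contraction, so the whole chain collapses to a point in $\mathcal{O}(n)$ time (while $\tau=0$ would stabilize into a \marchingChain). Finally, for the dichotomy I would appeal to the stability theory of the underlying dynamical system: the straight maximum chain and the single point are the only equilibria; the former is locally asymptotically stable and attracts every non-opposed initial configuration, whereas the opposed configurations --- characterized by a sign/symmetry condition on the outer robots' initial directions --- form the basin of the point. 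The main obstacle is Phase 1 of the extending case: making ``the endpoints cannot outrun the straightening'' precise and uniform in $\delta$, i.e.\ exhibiting a Lyapunov function whose $\Omega(1)$ decrease survives the feedback from the moving (but $\tau$-slowed) endpoints, while simultaneously coping with the non-smooth bisector field and the edge-length constraint.
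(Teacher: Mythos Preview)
Your lower-bound argument is correct and is essentially what the paper relies on.

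The upper-bound plan has a genuine gap at precisely the point you yourself flag. Neither of your two candidate Lyapunov functions decreases at rate $\Omega(1)$ under \maxmob/. The obstruction is built into the strategy: an inner robot $r_i$ moves \emph{only} if $\|w_i\|=1$, $\|w_{i+1}\|=1$, or $\alpha_i<\psi=2\cos^{-1}(1-\tau)$; otherwise it stands still. Take any configuration in which every angle lies in $[\psi,\pi)$ and every edge has length $<1$: then no inner robot moves, the total length $\sum_i\|w_i\|$ strictly \emph{increases} (the endpoints stretch $\|w_2\|$ and $\|w_n\|$), and the line through $p_1,p_n$ can rotate so that some stationary inner robot's orthogonal distance grows. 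So both the length-excess and the max-distance-to-line can stall or increase, and your Phase~1 argument does not go through. The paper does not use a global Lyapunov function at all. It instead tracks two monotone \emph{frontier} indices $\ell(t)\le r(t)$ delimiting the already-straightened prefix and suffix, and bounds the time until $\ell(t)=r(t)$ by a three-regime accounting on the outer angle $\alpha_{\ell}(t)$ (and symmetrically $\alpha_r(t)$): for $\alpha_\ell<\psi$ the inner length $I(t)$ drops at rate $\ge 1-\tau$; for $\psi\le\alpha_\ell\le\frac{3}{4}\pi$ either the outer length $O_\ell(t)$ grows at rate $1-\tau$ (this is where the ``idle'' regime above is absorbed) or $I(t)$ drops at a fixed constant; for $\alpha_\ell>\frac{3}{4}\pi$ one switches to the height $H_\ell(t)$ of $r_{\ell(t)}$ above the segment $r_1r_{\ell^{+}(t)}$, which drops at rate $\ge\frac{1}{20}$. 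Each of $I$, $O_\ell$, $O_r$, and the heights is bounded by $\mathcal{O}(n)$, so the total time is $\mathcal{O}(n)$.

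Two further corrections. You have the terminology inverted: an \opposedConfig{} is one where the endpoints move \emph{apart}, and it converges to the \maxChain{}; the collapse-to-a-point outcome in the continuous model occurs only in the degenerate situation where, at the moment $\ell(t)=r(t)$, the two endpoints coincide. And the dichotomy requires no stability theory: once $\ell(t)=r(t)$ the configuration is already one-dimensional, and a direct case split on whether $p_1(t)=p_n(t)$ finishes in time $\mathcal{O}\bigl(n(\tfrac{1}{\tau}+\tfrac{1}{1-\tau})\bigr)$.
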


Our results show that the idealized continuous model yields again a linear speed-up for the \maxForm/ problem, similar as for contracting robot formation problems.
The major open problem is to find an upper runtime bound for \maxgtm/ in the discrete setting where both endpoints move.
Moreover, while very symmetrical initial configurations pose a problem for deterministic algorithms, both, experiments with a simple, custom simulator and looking at our processes from the perspective of dynamical systems~\cite{robinson2012introduction} suggest that such configurations are few and unstable.
Thus, minor, random perturbations usually yield a configuration in which the robots reach the desired maximal chain.
We analyze this observation formally by proving that the marching chain is an unstable fixed point of the related dynamical system.
We discuss this in more detail towards the end of \cref{section:discrete}.
Due to space constraints, all proofs have been deferred to the appendix.
\Cref{section:appendixProof} contains the proofs of  \Cref{section:discrete}.
The analysis of our continuous algorithm presented in \Cref{section:continuous} can be found in \Cref{section:contTwoDim} and \Cref{section:Influence-of-outer-robots-extended} contains an extended discussion and the missing proofs of \Cref{section:speedOfOuterRobots}.

\section{Model and Problem Description}%
\label{sec:model_and_preliminaries}
We follow the robot model of the \chainForm/ problem \cite{conf/sirocco/CohenP06,journals/topc/DegenerKKH15,conf/ifip10/DyniaKLH06,DBLP:conf/spaa/KempkesKH12,conf/spaa/KlingH11}:
We consider $n$ robots, $r_1, \dots , r_{n}$ that are connected in a chain topology positioned in the Euclidean plane.
The robots $r_1$ and $r_{n}$ are denoted as \emph{outer robots} and all other robots are \emph{inner robots}.
In the chain topology each inner robot $r_i$ can distinguish its two neighbors $r_{i-1}$ and $r_{i+1}$ while the robots do not have a common
understanding of left and right.
The outer robots have only a single neighbor: the neighbor of $r_1$ is $r_2$ and $r_n$'s neighbor is $r_{n-1}$.
Based on their neighborhoods, robots can detect whether they are an inner or an outer robot.
Each robot has a uniform viewing range of one.
Apart from their direct neighbors, robots cannot see any other robot that might be present in their viewing range.
In the initial configuration at time $t_0$, we assume that the chain topology is \emph{connected}, i.e.\ the distance between a robot and its neighbors is less than or equal to one.
The position of $r_i$ at time $t$ is denoted by $p_i(t) \in \mathbb{R}^2$ and for all $2 \leq i \leq n$, the vector $w_i(t) := p_i(t) - p_{i-1}(t)$ is the vector pointing from robot $r_{i-1}$ to robot $r_i$ at time $t$.
Starting at robot $r_1$, a \emph{configuration} of robots at time $t$ can be written as $w(t) := \left(w_2(t), w_3(t), \dots w_n(t)\right)^T$.
The \emph{length} of a configuration at time $t$ is denoted by $L(t) := \sum_{i=2}^{n} \norm{w_i(t)}$, where $\norm{w_i(t)}$ denotes the Euclidean norm of vector $w_i(t)$.
For a vector $w_i(t)$, we denote the normalized vector $\frac{1}{\norm{w_i(t)}} w_i(t)$ by $\widehat{w}_i(t)$.
The Euclidean distance between two robots $r_i$ and $r_j$ at time $t$ is denoted by $\Delta_{i,j}(t) = \norm{p_i(t) - p_j(t)}$.

Next, we introduce a characterization of configurations that is relevant for our analyses.
In \emph{one-dimensional} configurations, the positions of all robots are collinear.
In \emph{two-dimensional} configurations, there exists a set of at least $3$ robots whose positions are not collinear.
Our analyses distinguish two special kinds of one-dimensional configurations: \emph{Opposed configurations} and \emph{\marchingConfig{}s}.
In opposed configurations, the outer robots are on different sides of their neighbors, i.e. $\widehat{w}_2(t) = \widehat{w}_{n}(t)$.
In marching configurations, the outer robots are on the same side of their neighbors, i.e. $\widehat{w}_2(t) = - \widehat{w}_{n}(t)$.
For $2 \leq i \leq n-1$, we denote by $\alpha_i(t) := \angle \left(w_i(t), w_{i+1}(t)\right) \in [0, \pi]$ the angles along the vector chain.
Our goal is to reach a configuration with $ \Delta_{1,n}(t) = n-1$.
More precisely, each vector $w_i$ should have a length of $1$ and $w_i(t) = w_{i+1}(t)$ for $2 \leq i \leq n-1$.
We call this configuration a \emph{\maxChain{}}.
We say that we have reached an $\varepsilon$-approximation of the \maxChain{} if $  \Delta_{1,n}(t) \geq (1-\varepsilon)\,(n-1)$
and $\norm{w_i(t)} > 1-\varepsilon$ for all $2 \leq i \leq n$.

We assume a very restricted robot model, namely robots having the capabilities \text{} of the \Oblot/ model with disoriented coordinate systems and limited visibility.
Thus, the robots neither have a global coordinate system nor a common compass.
A robot can only observe the position of its neighbors relative to its own.
We assume that the robots can measure distances precisely and have a common notion of unit distance.
Additionally, the robots are \emph{oblivious} and cannot rely on any information from the past.
Furthermore, the robots cannot communicate.
Throughout this work, we consider two different notions of time, the \fsync{} time model and the continuous time model.
In \fsync{} all robots operate in fully synchronous \textsc{Look-Compute-Move} (\LCM/) cycles (rounds), i.e.; robots observe their environment, compute a target point and finally move there.
The \emph{continuous time model} can be seen as a continuous variant of the \fsync{} model for an infinitesimal small movement distance for each robot per round~\cite{conf/antsw/GordonWB04}.
In this model, robots continuously observe their environment and adjust their own movement.
There is no delay between observing the environment and adjusting the movement.
At every point in time, the movement of each robot $r_i$ can be expressed by a \emph{velocity vector} $v_i(t)$ with $0 \leq \|v_i(t)\| \leq 1$, i.e.\ the maximal speed of a robot is bounded by $1$.
The function $p_i \colon \mathbb{R}_{> 0} \to \mathbb{R}^2$ is the \emph{trajectory} of $r_i$.
The trajectories $p_i$ are continuous but not necessarily differentiable because robots are able to change their speed and direction non-continuously.
However, natural movement strategies, such as the strategy presented in this paper, have (right) differentiable trajectories.
Thus, the velocity vector of a robot $v_i \colon \mathbb{R}_{> 0} \to \mathbb{R}^2$ can be seen as the (right) derivative of $p_i$ and we can write $v_i(t) = \deriv{p_i}{t}$.

\section{The Discrete Case}%
\label{section:discrete}
In this section, we describe \AName{} for the \fsync{} time model.
Intuitively, the strategy solves two tasks concurrently.
The first task is to arrange all robots on a straight line while the second task is to lengthen the chain by moving the outer robots away from each other.
For the first task, we adapt the \gtm/-strategy for \chainForm/ in which all inner robots move to the midpoint between their neighbors in every round.
For the second task, the outer robots move as far as possible away from their neighbors while keeping the chain connected.

\subsection{\maxgtmlong/ (\maxgtm/)}%
\label{section:algorithm}

\AName{} works as follows: Every inner robot moves to the midpoint between its neighbors.
The new position of an inner robot $r_i$ can thus be computed as $p_i(t+1)    =  \half p_{i-1}(t) + \half p_{i+1}(t)$.
An outer robot moves as far possible away from its neighbor by imagining a virtual robot.
At time $t$, the outer robot $r_1$ normalizes the vector $w_2(t)$, imagines a virtual robot $r_0$ positioned at $p_1(t) - \widehat{w}_2(t)$ and moves to the midpoint between $r_0$ and $r_2$.
Thus, $p_1(t+1)    =	\half p_1(t) + \half p_2(t) - \half \widehat{w}_2(t)$.
The procedure works analogously for $r_n : p_{n}(t+1)  = \half p_{n-1}(t) + \half p_{n}(t) + \half \widehat{w}_{n}(t)  $ .
Similarly, we can derive formulas for $w(t+1)$:
$w_2(t+1)   = \half \widehat{w}_2(t) + \half w_3(t) ,
w_i(t+1)   = \half w_{i-1}(t) + \half w_{i+1}(t)$ and $
w_{n}(t+1) = \half w_{n-1}(t) + \half \widehat{w}_{n}(t)$.
Simplified, we can compute $w(t+1)$ as a matrix-vector product: $w(t+1) = S(t) \cdot w(t) =  \prod_{i=0}^{t} S(i) \cdot w(0)$ with the strategy matrix $S(t)$.
See \Cref{figure:visualizationMaxGtm,figure:matrix} for a visualization of \AName{} and the strategy matrix $S(t)$.

\begin{figure}
	\subcaptionbox{A visualization of \maxgtm/.  The target point of each robot is marked by a cross.\label{figure:visualizationMaxGtm}}[.5\linewidth]{
		\centering
		\includegraphics[width=0.5\textwidth, clip=true]{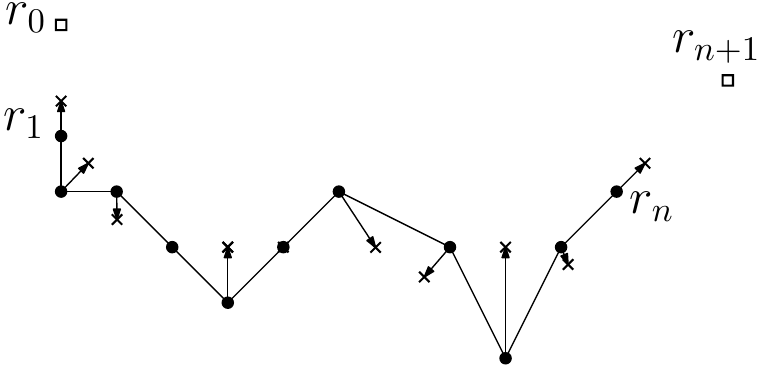}
	}
	\hfill
	\subcaptionbox{The strategy matrix $S(t)$. The upper-left and the lower-right entry depend on $w_2(t)$ and $w_n(t)$.	\label{figure:matrix}}[.45\linewidth]{
		$S(t) = $ \\
		\hfill \\
		$ \begin{bmatrix}
			\frac{1}{2 \cdot \norm{w_2(t)}} & \half  & 0      & 0      & \dots  & 0      & 0                               \\
			\half                           & 0      & \half  & 0      & \dots  & 0      & 0                               \\
			0                               & \half  & 0      & \half  & \dots  & 0      & 0                               \\
			\vdots                          & \vdots & \vdots & \vdots & \vdots & \vdots & \vdots                          \\
			0                               & 0      & 0      & 0      & \dots  & \half  & \frac{1}{2 \cdot \norm{w_n(t)}}
		\end{bmatrix}$
	}
	\caption{A visualization of \maxgtm/ and the strategy matrix $S(t)$.}
\end{figure}

\subsection{One-Dimensional Analysis} \label{section:1-dimension}

This section investigates the performance of \AName{} in a one-dimensional configuration.
One-dimensional configurations already reveal an interesting behavior of \maxgtm/: in some configurations, the strategy does not converge to a \maxChain{} but to a different structure, we will denote as the \marchingChain{}.
The two classes of configurations that play a role in this analysis are marching and opposed configurations.
We can show that \AName{} does not switch between the two classes.

\begin{restatable}{lemma}{lemmaMaxGtmSwitchOpposedMarching}\label{lemma:1-dimension:orientation-of-outer-robots}
	\textnormal{\AName{}} does not switch between opposed and marching configurations.
\end{restatable}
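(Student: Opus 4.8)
The plan is to prove the slightly stronger claim that the \emph{directions} $\widehat{w}_2(t)$ and $\widehat{w}_n(t)$ of the two outer vectors never change over time (as long as they are defined); this immediately gives the lemma, because in a one-dimensional configuration all vectors $w_i(t)$ are parallel, so $\widehat{w}_2(t),\widehat{w}_n(t)\in\{e,-e\}$ for a fixed unit vector $e$, and such a configuration is opposed exactly when $\widehat{w}_2(t)=\widehat{w}_n(t)$ and marching exactly when $\widehat{w}_2(t)=-\widehat{w}_n(t)$ --- there is no third possibility. Note also that \AName{} maps a one-dimensional configuration to a one-dimensional one, since every target point is an affine combination of robot positions shifted by a multiple of $\widehat{w}_2$ or $\widehat{w}_n$, all of which stay on the common line; hence it suffices to analyze a single round.

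First I would record the connectivity invariant $\norm{w_i(t)}\le 1$ for all $i$ and all $t$: it holds at $t_0$ by assumption, and every update rule --- $w_i(t+1)=\half w_{i-1}(t)+\half w_{i+1}(t)$ for inner $i$, and $w_2(t+1)=\half\widehat{w}_2(t)+\half w_3(t)$, $w_n(t+1)=\half w_{n-1}(t)+\half\widehat{w}_n(t)$ for the outer vectors --- expresses $w_i(t+1)$ as an average of two vectors of norm at most $1$. The core of the argument is then immediate: by collinearity write $w_3(t)=\lambda\,\widehat{w}_2(t)$ with $\lambda\in[-1,1]$ (possible since $\norm{w_3(t)}\le 1$); then $w_2(t+1)=\half\widehat{w}_2(t)+\half\lambda\,\widehat{w}_2(t)=\frac{1+\lambda}{2}\,\widehat{w}_2(t)$ with $\frac{1+\lambda}{2}\ge 0$, so either $w_2(t+1)=0$ or $\widehat{w}_2(t+1)=\widehat{w}_2(t)$. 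The analogous computation $w_n(t+1)=\frac{1+\mu}{2}\,\widehat{w}_n(t)$ with $w_{n-1}(t)=\mu\,\widehat{w}_n(t)$, $\mu\in[-1,1]$, gives either $w_n(t+1)=0$ or $\widehat{w}_n(t+1)=\widehat{w}_n(t)$. Hence, whenever the configuration at time $t+1$ is non-degenerate, $\widehat{w}_2$ and $\widehat{w}_n$ are unchanged, so $\widehat{w}_2(t+1)=\widehat{w}_n(t+1)$ if and only if $\widehat{w}_2(t)=\widehat{w}_n(t)$; opposed stays opposed and marching stays marching, and induction over the rounds finishes the proof.

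The one point that needs care --- and the only real obstacle --- is the degenerate event $w_2(t+1)=0$, which by the formula above means $w_3(t)=-\widehat{w}_2(t)$, i.e.\ $\norm{w_3(t)}=1$ and $r_1$ collides with $r_2$ (and symmetrically $w_n(t+1)=0$). There $\widehat{w}_2(t+1)$ is undefined, so the configuration at $t+1$ is neither opposed nor marching and no \emph{switch} has occurred; this is precisely the kind of degenerate, highly symmetric situation covered by the small amount of randomness allotted to the outer robots. If one wishes to exclude it cleanly, it suffices to assume all initial edge lengths are strictly below $1$: the same averaging bound then yields $\norm{w_i(t)}<1$ for every $t\ge t_0$, hence $\lambda\in(-1,1)$, $w_2(t+1)\ne 0$, and the event never arises.
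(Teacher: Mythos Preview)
Your proof is correct and follows essentially the same approach as the paper's: both arguments show that the sign of $w_2$ (and of $w_n$) cannot flip because connectivity forces $\norm{w_3(t)}\le 1$. The paper phrases it as a contradiction on positions (assuming $r_2$ passes $r_1$ forces $p_2(t)-p_3(t)>1$), while you phrase it as the direct computation $w_2(t+1)=\frac{1+\lambda}{2}\,\widehat{w}_2(t)$ with $\lambda\in[-1,1]$; these are the same calculation in different notation. Your handling of the degenerate case $w_2(t+1)=0$ is more explicit than the paper's, which simply does not discuss it.
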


As a consequence of \Cref{lemma:1-dimension:orientation-of-outer-robots}, starting from a \marchingConfig{}, \AName{} does not converge to a \maxChain{}.
For some highly symmetric configurations, for instance the configuration depicted in \Cref{figure:problem-description:marching-chain}, this even cannot be obtained by any deterministic strategy because the outer robots always have the same view and, thus, always stay on the same position.

\begin{restatable}{theorem}{theoremMaxGtmMarchingSymmetry}%
	\label{thm:cannotreachmax}
	There are marching configurations that cannot be transformed into a \maxChain{} by any deterministic strategy.
\end{restatable}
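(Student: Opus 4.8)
The plan is to exhibit one concrete family of \marchingConfig{}s---the \enquote{folded} chains of \Cref{figure:problem-description:marching-chain}---and to argue that their mirror symmetry is an invariant of every deterministic strategy, which keeps the two outer robots glued together forever. For even $n=2m$, place $r_j$ and $r_{n+1-j}$ at the \emph{same} point for every $j$, with these positions spread evenly along one line: e.g.\ $p_j(t_0)=p_{n+1-j}(t_0)=(m-j)\,\gamma\,e$ for $1\le j\le m$, where $e$ is a fixed unit vector and $0<\gamma\le 1$. Then consecutive robots are at distance at most $\gamma\le 1$, so the chain is connected; all robots are collinear; and a direct check gives $\widehat{w}_2(t_0)=-e$ and $\widehat{w}_n(t_0)=e$, i.e.\ $\widehat{w}_2(t_0)=-\widehat{w}_n(t_0)$, so this is a \marchingConfig{} in the sense of our definition (and, since $-e\neq e$, not an \opposedConfig{}). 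For odd $n$ one additionally puts the middle robot at the tip of the fold; nothing changes.

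The first step is to record the key symmetry: the relabeling $j\mapsto n+1-j$ reverses the chain and, by construction, fixes the position of every robot, so $p_k(t_0)=p_{n+1-k}(t_0)$ for all $k$. Then I would show that this equality is preserved in every round, irrespective of the strategy. Since the robots are disoriented, we may assign all of them a common local coordinate system; a robot's computed target then depends only on the set of relative positions of its (at most two) neighbors. Using $p_k=p_{n+1-k}$, robot $r_{n+1-j}$ has its neighbors at the positions $p_{j-1}$ and $p_{j+1}$---exactly where $r_j$'s neighbors $r_{j-1},r_{j+1}$ sit---while it occupies the same point $p_j$ as $r_j$; likewise $r_1$ and $r_n$ sit at a common point and see their unique neighbor at the same relative position. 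Hence $r_j$ and $r_{n+1-j}$ feed identical data into the deterministic strategy, obtain identical targets, and coincide again after the move. Since the robots are oblivious, an induction on rounds (with the symmetric starting configuration as base case) yields $p_k(t)=p_{n+1-k}(t)$ for all $t\ge t_0$.

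In particular $p_1(t)=p_n(t)$, hence $\Delta_{1,n}(t)=0$ for every $t$, whereas a \maxChain{} has $\Delta_{1,n}=n-1$ and even an $\varepsilon$-approximation requires $\Delta_{1,n}\ge(1-\varepsilon)(n-1)>0$ for $\varepsilon<1$; so no deterministic strategy transforms such a configuration into (an approximation of) the \maxChain{}. The one place that warrants care is the claim that oblivious, deterministic, disoriented robots cannot break this mirror symmetry: it is intuitively clear but formally rests on the freedom to assign the local coordinate systems (equivalently, on the fact that a well-defined strategy must be hyper-symmetric with respect to symmetries of a robot's view, since otherwise it would be ambiguous on the perfectly symmetric view realized here). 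I expect everything else---selecting the family, checking connectivity and the marching property, and running the induction---to be routine.
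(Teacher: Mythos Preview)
Your proof is correct and follows essentially the same approach as the paper: exhibit a mirror-symmetric marching configuration (the paper likewise points to the folded chain of \Cref{figure:problem-description:marching-chain}) and argue that deterministic, disoriented, anonymous robots cannot break the symmetry $p_j=p_{n+1-j}$, so in particular $p_1(t)=p_n(t)$ for all $t$. The paper's write-up is terser and phrased as a contradiction---if the strategy ever switched to an \opposedConfig{}, one outer robot would have reversed direction, but by symmetry so would the other, leaving the configuration marching---whereas you run the full induction and conclude directly from $\Delta_{1,n}(t)=0$; the underlying invariant is the same.
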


For \opposedConfig{}s, we can show that \AName{} converges towards an \( \varepsilon \)-approxima\-tion of the \maxChain{}.
Define $m_i(t) = 1 - w_i(t)$.
For the analysis, we use the potential function  \( \phi_{1} (t) = \sum_{i=2}^{n} m_i(t)^2 \) as a progress measure.
Intuitively, $\phi_{1}(t)$ measures how close the configuration is to the \maxChain{}, since in the \maxChain{}, $w_i(t) = 1$ for all $i$.
For $m_1(t) = 0$ and $m_{n+1}(t) = 0$, it holds $\phi_{1}(t+1) = \sum_{i=2}^{n} m_i(t+1)^2 = \sum_{i=2}^{n}\bigl(\frac{m_{i-1}(t) + m_{i+1}(t)}{2}\bigr)^2$.
Inspired by \cite{conf/sirocco/CohenP06}, we can analyze $\phi_{1}(t)$ with the help of discrete Fourier Transforms.
Discrete Fourier Transforms are useful here, because they allow us to decouple the computations of the $m_i(t+1)$'s.
By now, we can express $m_i(t+1)$ based upon $m_{i-1}(t)$ and $m_{i+1}(t)$.
Discrete Fourier Transforms remove this dependency such that we get a single (non-recursive) formula for each $m_i(t+1)$ and can bound $\phi_{1}(t+1)$ by the slowest decreasing $m_i(t)$, resulting in an upper runtime bound of $\mathcal{O}\left(n^2 \cdot \log \left(n/\varepsilon\right)\right)$.

\begin{restatable}{theorem}{theoremFourier}\label{theorem:1-dimension:opposite-directions-upper-bound}
	Started in an \opposedConfig{}, \textnormal{\AName{}} needs at most \( \mathcal{O}(n^{2} \cdot \log \left(n/\varepsilon \right)) \) rounds to achieve an \(\varepsilon\)-ap\-prox\-i\-ma\-tion of the \maxChain{}.
\end{restatable}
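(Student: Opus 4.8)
The plan is to reduce the evolution of \maxgtm/ from an \opposedConfig{} to one time-independent linear recurrence, diagonalize it via the discrete sine (Fourier) transform, and read off the contraction rate of the potential $\phi_{1}$. First I fix an orientation of the line so that $\widehat{w}_2(t_0)=\widehat{w}_n(t_0)=1$. By \cref{lemma:1-dimension:orientation-of-outer-robots} the configuration stays an \opposedConfig{}, so $\widehat{w}_2(t)=\widehat{w}_n(t)=1$ in every round; substituting this into the update rules from \cref{section:algorithm} and setting $m_i(t)=1-w_i(t)$ with the conventions $m_1(t):=m_{n+1}(t):=0$ turns the three cases (outer, inner, outer) into the single recurrence $m_i(t+1)=\half\bigl(m_{i-1}(t)+m_{i+1}(t)\bigr)$ for all $2\le i\le n$, i.e.\ $m(t+1)=M\,m(t)$ with $m(t)=(m_2(t),\dots,m_n(t))^T$ and $M\in\mathbb{R}^{(n-1)\times(n-1)}$ equal to $\half$ times the adjacency matrix of the path on $n-1$ vertices. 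A one-line induction (base case: the initial chain is connected, so $w_i(t_0)\le 1$; step: $m_i(t+1)$ is a convex combination of nonnegative values and the zero boundary terms) shows $0\le m_i(t)$, equivalently $w_i(t)\le 1$, so the chain never disconnects and all $\widehat{w}$ terms stay well defined.

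The heart of the argument is the spectral analysis of $M$. Since $M$ is real symmetric it has an orthonormal eigenbasis, and the spectrum of the path graph is classical: the eigenvalues are $\lambda_k=\cos(k\pi/n)$ for $k=1,\dots,n-1$, with eigenvectors the sine vectors $v_k=\bigl(\sin(kj\pi/n)\bigr)_{j=1}^{n-1}$. This is exactly what the discrete Fourier/sine transform buys us: writing $m(t_0)=\sum_k c_k v_k$ decouples the coordinates, so $m(t)=\sum_k c_k\lambda_k^{\,t}v_k$, and by Parseval $\phi_{1}(t)=\norm{m(t)}^2=\sum_k c_k^2\lambda_k^{\,2t}\norm{v_k}^2\le\rho(M)^{2t}\,\phi_{1}(t_0)$, where $\rho(M)=\max_k|\lambda_k|=\cos(\pi/n)$. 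In words, $\phi_{1}$ contracts at the rate of the slowest-decaying Fourier mode.

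It then remains to plug in numbers. For every $n\ge 2$ a Taylor estimate gives $\cos(\pi/n)\le 1-c/n^2$ for an absolute constant $c>0$, hence $\phi_{1}(t)\le\exp(-2ct/n^2)\,\phi_{1}(t_0)$; moreover $\phi_{1}(t_0)=\sum_{i=2}^{n}(1-w_i(t_0))^2\le 4(n-1)$ since $w_i(t_0)\in[-1,1]$. Thus $\phi_{1}(t)<\varepsilon^2$ for every $t\ge\tfrac{n^2}{2c}\ln\!\bigl(4(n-1)/\varepsilon^2\bigr)=\mathcal{O}\bigl(n^2\log(n/\varepsilon)\bigr)$. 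Once $\phi_{1}(t)<\varepsilon^2$, each summand obeys $0\le m_i(t)^2\le\phi_{1}(t)<\varepsilon^2$, so $\norm{w_i(t)}=w_i(t)=1-m_i(t)>1-\varepsilon$ for all $i$; in particular all $w_i(t)>0$, whence $\Delta_{1,n}(t)=\sum_{i=2}^{n}w_i(t)=(n-1)-\sum_{i=2}^{n}m_i(t)>(1-\varepsilon)(n-1)$. This is precisely an $\varepsilon$-approximation of the \maxChain{}, giving the claimed bound.

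The hard part is the second step together with the bookkeeping that makes it applicable. One has to recognize that the constant terms $\half\widehat{w}_2$ and $\half\widehat{w}_n$ contributed by the two outer robots act as \emph{pinned} boundary values $m_1\equiv m_{n+1}\equiv 0$, so that the dynamics is an averaging process with Dirichlet boundary conditions whose eigenmodes are sine vectors; extracting the $\Theta(1/n^2)$ spectral gap of this operator is exactly what produces the $n^2$ factor in the runtime. A secondary subtlety, and the reason \cref{lemma:1-dimension:orientation-of-outer-robots} is invoked, is that the recurrence is genuinely linear and time-independent only as long as $\widehat{w}_2(t)=\widehat{w}_n(t)=1$, i.e.\ as long as the configuration neither degenerates nor switches to a \marchingConfig{}.
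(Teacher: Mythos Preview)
Your proof is correct and follows essentially the same approach as the paper: introduce $m_i(t)=1-w_i(t)$ with Dirichlet boundary $m_1=m_{n+1}=0$, observe that the three update rules collapse to the single averaging recurrence $m_i(t+1)=\half(m_{i-1}(t)+m_{i+1}(t))$, and extract the $\Theta(1/n^2)$ spectral gap via the discrete sine transform to bound the decay of $\phi_1$. The only difference is cosmetic: the paper packages the Fourier step into a preliminary lemma citing Cohen and Peleg, whereas you carry out the eigendecomposition of the path-graph matrix explicitly; your bound $\phi_1(t_0)\le 4(n-1)$ is in fact slightly more careful than the paper's $\phi_1(0)\le n-1$, since inner $w_i$ may be negative.
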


The analysis of the \emph{mixing time} of a Markov Chain, allows us to prove a close lower bound of $\Omega \left(n^2 \cdot \log  \left(1/\varepsilon\right)\right)$.
Since \opposedConfig{}s remain \opposedConfig{}s, we can rewrite $w(t)$ and $S(t)$ slightly such that the resulting strategy matrix is stochastic (every row sums up to $1$).
Hence, this matrix could also be the transition matrix of a Markov Chain.
This Markov Chain has a single absorbing state.
In general, there are no mixing time bounds for absorbing Markov Chains since this type of Markov Chains does not have a unique stationary distribution.
Markov Chains with a single absorbing state, however, have a unique stationary distribution, such that some bounds for the mixing time exist.
Here, we can make use of a lower runtime bound.

\begin{restatable}{theorem}{theoremMaxGtmLowerBoundOpposed}
	\label{theorem:1-dimension:lower-bound-opposed-configs}
	There exist \opposedConfig{}s such that  \textnormal{\AName{}} needs at least \( \Omega \bigl(n^2 \cdot \log (1/\varepsilon)\bigr)\) rounds to achieve an \( \varepsilon \)-ap\-proxi\-mation of the \maxChain{}.
\end{restatable}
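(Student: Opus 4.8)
The plan is to reuse the Markov chain reformulation sketched right before the statement and then invoke a standard lower bound on the mixing time in terms of the spectral gap (relaxation time) of the chain. First I would pick a concrete family of \opposedConfig{}s -- the natural candidate is the configuration in which every vector $w_i(t_0)$ is as short as permitted while still being ``generic'', e.g.\ $w_i(t_0) = c$ for a suitable constant $c \in (0,1)$ and all $i$, or a configuration in which a single vector is noticeably shorter than the others, so that the error has a large component along the slowest-decaying mode. Recall $m_i(t) = 1 - w_i(t)$; in an \opposedConfig{} the normalization terms in $S(t)$ act on the ``good'' side, and after rewriting as in the paragraph preceding \Cref{theorem:1-dimension:lower-bound-opposed-configs} we obtain a fixed stochastic matrix $P$ (independent of $t$ once the signs are fixed) governing the evolution of the $m_i$'s, with a single absorbing state corresponding to the \maxChain{}. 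The key quantitative input is that $P$ is (a rank-one perturbation of) the path-graph averaging matrix, whose nontrivial eigenvalues are $\cos(\pi k/n)$-type quantities; the second-largest one in absolute value is $1 - \Theta(1/n^2)$.

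The main steps, in order, are: (i) fix the sign pattern so that $S(t)$ becomes a constant stochastic matrix $P$ and identify the absorbing state; (ii) compute, or bound from below, the spectral gap of $P$, showing $1 - \lambda_2(P) = \mathcal{O}(1/n^2)$ -- this is where the path-graph / discrete Fourier structure from the proof of \Cref{theorem:1-dimension:opposite-directions-upper-bound} is recycled, now used to show a mode that decays \emph{slowly} rather than one that decays fast enough; (iii) choose the initial configuration so that the coefficient of this slow eigenvector (or, in the absorbing-chain language, the initial distribution's overlap with the slow mode of the substochastic part) is bounded below by a constant, which guarantees that $\|m(t)\|$ -- and hence $\phi_1(t) = \sum_i m_i(t)^2$ -- can shrink by at most a factor $(1-\Theta(1/n^2))$ per round; (iv) conclude that reaching $\phi_1(t) \le \varepsilon^2$, equivalently an $\varepsilon$-approximation per the definition $\norm{w_i(t)} > 1-\varepsilon$, requires $t = \Omega\bigl(n^2 \log(1/\varepsilon)\bigr)$ rounds, since $(1-\Theta(1/n^2))^t \le \varepsilon$ forces $t \ge \Omega(n^2 \log(1/\varepsilon))$. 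One must also check that on this trajectory the configuration stays an \opposedConfig{} (so the matrix really does stay $P$), which follows from \Cref{lemma:1-dimension:orientation-of-outer-robots}, and that the outer-robot normalization terms do not accidentally accelerate convergence -- for this it helps that near the \maxChain{} all $\norm{w_i(t)}$ are close to $1$, so $S(t)$ is close to $P$ and a perturbation argument controls the difference.

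The hard part will be step (iii): making precise that the error genuinely has a constant-sized component along the slow mode and that this component cannot be ``shortcut'' by the absorbing dynamics. In an absorbing Markov chain the relevant object is the substochastic matrix obtained by deleting the absorbing row/column, and its top eigenvalue is what one needs to lower-bound; I would argue this by a direct Rayleigh-quotient / test-vector computation, plugging the discrete-Fourier mode $v_k$ with $k$ small into the substochastic operator and checking the eigenvalue is $1 - \Theta(k^2/n^2)$, then choosing the initial $m(t_0)$ proportional to (a truncation of) this $v_k$ while still corresponding to a legal \opposedConfig{} with all $\norm{w_i} \le 1$. A secondary technical nuisance is bookkeeping the two nonstandard entries of $S(t)$ (the $\frac{1}{2\norm{w_2(t)}}$ and $\frac{1}{2\norm{w_n(t)}}$ terms): along the chosen trajectory these converge to $\frac12$, so the evolution of $m(t)$ is governed by $P$ up to a multiplicative $(1+o(1))$ factor that does not affect the $\Omega(n^2\log(1/\varepsilon))$ conclusion, but this needs to be stated carefully rather than hand-waved.
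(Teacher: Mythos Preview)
Your approach is correct and rests on the same spectral observation as the paper's: the one-dimensional opposed dynamics is governed by a fixed matrix whose second eigenvalue is $\cos(\pi/n)=1-\Theta(1/n^2)$, and any configuration with constant mass on the corresponding slow mode needs $\Omega(n^2\log(1/\varepsilon))$ rounds to become an $\varepsilon$-approximation. The executions differ in detail. The paper augments $w(t)$ by a constant entry~$1$ to obtain an \emph{exact} $n\times n$ stochastic matrix $A_1$ with a single absorbing state (\Cref{lemma:1-dimension:lower-bound-matrix-properties} records its eigenvalues $\cos(j\pi/n)$), then invokes the generic mixing-time lower bound of \Cref{theorem:1-dimension:lower-bound-mixing-time} and extracts a concrete bad initial configuration from the fact that some row of $A_1^t$ has not yet absorbed enough mass. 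You instead pass to $m=1-w$, obtain the substochastic path matrix directly, and take the initial data along its top eigenvector; this is arguably more elementary since it bypasses the mixing-time theorem and yields a clean one-parameter family of bad opposed configurations $w_i(0)=1-c\sin(\pi(i-1)/n)$.

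One point you should drop: the perturbation worry in your last paragraph is unnecessary and in fact contradicts your own step~(i). In one dimension the entries $\tfrac{1}{2\norm{w_2(t)}}$ and $\tfrac{1}{2\norm{w_n(t)}}$ act on $w_2(t)$ and $w_n(t)$ only through $\widehat{w}_2(t)$ and $\widehat{w}_n(t)$, which are just signs; by \Cref{lemma:1-dimension:orientation-of-outer-robots} these signs never flip, so the recursion for $m(t)$ (equivalently for the augmented $w'(t)$) is \emph{exactly} linear with a constant matrix from round~$0$ on --- no approximation, no $(1+o(1))$ factor to control. This exactness is precisely what the paper exploits via the augmentation trick, and it is also why your eigenvector initial data evolves as $m(t)=\lambda_1^t\,m(0)$ on the nose. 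Steps~(i)--(iv) with the explicit eigenvector in~(iii) already constitute a complete proof.
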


\MarchingConfig{}s do not converge to a \maxChain{} but have a different convergence behavior, they converge to the \emph{\marchingChain}.
It is called \marchingChain{} because the robots all together move into the same direction and never stop.
The configuration $w_M(t)$ defines the \marchingChain{}.
$w_M(t) = (1- \frac{2}{n}, 1 - \frac{4}{n}, \dots, \frac{2}{n} ,0,$
$ - \frac{2}{n}, - \frac{4}{n}, \dots, -(1- \frac{2}{n}))^T$.
\Cref{figure:problem-description:marching-chain} visualizes this \marchingChain{}.
Observe that $S(t) \cdot w_M (t)= w_M(t)$ ($w_M(t)$ is an eigenvector of $S(t)$ to the eigenvalue $1$).
In the marching chain, each robots moves distance $\frac{1}{n}$ per round.

\begin{figure}[htb]
	\centering
	\includegraphics[width=0.8\textwidth, clip=true]{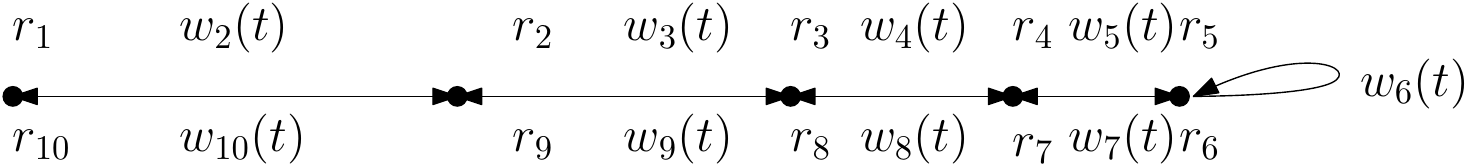}
	\caption{A marching chain for $n=10$. Every position is occupied by two robots.}
	\label{figure:problem-description:marching-chain}
\end{figure}

Starting in a \marchingConfig{}, the convergence time until all vectors only differ up to $\varepsilon$ from their corresponding vector in the \marchingChain{} is equal to the runtime bound for opposed configurations.
Here, we can again use the analysis of the mixing time of a Markov Chain for a slightly different transition matrix.
We consider the vectors $z_i(t) = p_i(t+1) - p_i(t)$ pointing from the current position of a robot to its next position.
The changes of the vectors $z_i(t)$ can be calculated via a matrix-vector product of a doubly stochastic transition matrix and the current vectors.
A doubly stochastic matrix can also be the transition matrix of an irreducible, aperiodic and reversible Markov Chain.
These Markov Chains converge to a unique stationary distribution and the runtime bounds depend on the second largest eigenvalue of the transition matrix.
By analyzing this second largest eigenvalue we can prove the following runtime bounds:

\begin{restatable}{theorem}{theoremMaxGtmMarchingBounds}\label{theorem:1-dimension:upper-bound-marching-configs}
	Given a \marchingConfig{}, \textnormal{\AName{}} needs at most \( \mathcal{O}(n^{2} \cdot \log (n/\varepsilon)) \) and at least $\Omega(n^2 \cdot \log (1/\varepsilon))$ rounds to achieve an \(\varepsilon\)-approximation of the \marchingChain{}.
\end{restatable}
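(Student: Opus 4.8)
The plan is to mirror the analysis of \opposedConfig{}s, replacing the absorbing Markov chain by an \emph{ergodic} one whose stationary distribution encodes the \marchingChain{}. By \Cref{lemma:1-dimension:orientation-of-outer-robots}, an execution started in a \marchingConfig{} stays in a \marchingConfig{}; hence all robots remain collinear, connectivity is preserved (each step is an average, so $\norm{w_i(t)}\le 1$ is maintained by the triangle inequality), and the unit vectors $\widehat{w}_2(t),\widehat{w}_n(t)$ are constant with $\widehat{w}_2(t)=-\widehat{w}_n(t)$. Fixing a coordinate on the line I pass to the displacement vector $z(t)=(z_1(t),\dots,z_n(t))^T$, $z_i(t)=p_i(t+1)-p_i(t)$. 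Subtracting consecutive \maxgtm/ update rules, the additive endpoint terms $\pm\half\widehat{w}_2(t)$ and $\pm\half\widehat{w}_n(t)$ are constant and cancel, so $z$ obeys the \emph{linear} recursion $z(t+1)=M\,z(t)$, where $M$ is the tridiagonal, symmetric, doubly stochastic matrix with all off-diagonal neighbour entries equal to $\half$ and zero diagonal except for the two corner entries $\half$; equivalently $M$ is the transition matrix of the nearest-neighbour walk on the path on $n$ nodes that is lazy only at its two endpoints.

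This $M$ is irreducible, aperiodic (the endpoint self-loops kill bipartiteness) and reversible w.r.t.\ the uniform distribution $\frac1n\mathbf 1$. Its eigenvectors are the discrete-cosine modes $v^{(k)}_j=\cos\!\bigl(\pi k(j-\half)/n\bigr)$ with eigenvalues $\lambda_k=\cos(\pi k/n)$, $k=0,\dots,n-1$; thus $\lambda_0=1$ is simple and the second-largest absolute eigenvalue is $\lambda_\star=\cos(\pi/n)=1-\Theta(1/n^2)$, i.e.\ the relaxation time is $\Theta(n^2)$. For the \textbf{upper bound}, write $z(0)=(c/n)\mathbf 1+r$ with $r\perp\mathbf 1$ and $c=\mathbf 1^{T}z(0)$ invariant under $M$ (column sums are $1$); a short telescoping computation shows $c=\pm1$ for \emph{every} \marchingConfig{} (this is exactly where $\widehat{w}_2=-\widehat{w}_n$ enters). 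Since each robot moves at most $1$ per round, $\norm{r}=\mathcal{O}(\sqrt n)$, so $\norm{z(t)-(c/n)\mathbf 1}\le\lambda_\star^{\,t}\norm{r}=\mathcal{O}(\sqrt n\,\lambda_\star^{\,t})$. Summing $w_i(t')-w_i(t)=\sum_{s=t}^{t'-1}\!\bigl(z_i(s)-z_{i-1}(s)\bigr)$ and using this geometric decay shows $w(t)$ converges with $\norm{w(t)-w(\infty)}_\infty=\mathcal{O}(n^{2.5}\lambda_\star^{\,t})$; passing to the limit in the update rules forces $w(\infty)$ to be the unique rigidly translating chain of speed $|c|/n=1/n$ with the given endpoint orientation, which the interior and boundary fixed-point equations pin down to $w_M$. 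Hence an $\varepsilon$-approximation of the \marchingChain{} is reached after $\mathcal{O}\!\bigl(n^2\log(n/\varepsilon)\bigr)$ rounds. (Equivalently one checks directly that $w(t)-w_M$ satisfies a linear recursion with the path matrix \emph{without} endpoint self-loops, whose spectral norm is $\cos(\pi/n)$.)

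For the \textbf{lower bound} I use the standard lower bound on mixing in terms of the relaxation time $\Theta(n^2)$, realised by starting in the slowest eigenmode. Concretely, set $w(0)=w_M+\eta\,v$, where $v$ is the $k=1$ mode of the $w-w_M$ recursion --- an all-positive sine profile whose two extreme entries are $\sin(\pi/n)=\Theta(1/n)$, matching the endpoint slack $1-\norm{w^M_i}\ge 2/n$ --- so a constant $\eta=\Theta(1)$ keeps $w(0)$ a connected \marchingConfig{}. Then $\norm{w(t)-w_M}_\infty$ decays exactly like $\eta\cos(\pi/n)^{t}$, so $\Omega\!\bigl(n^2\log(1/\varepsilon)\bigr)$ rounds are necessary, matching the bound for \opposedConfig{}s.

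The cancellation of the nonlinear endpoint terms and the spectrum of $M$ are routine (a classical reflecting-walk computation). The main obstacle is the bookkeeping that converts ``the displacement vector has mixed'' into ``$w(t)$ is $\varepsilon$-close to $w_M$'' in the precise sense of the approximation definition --- in particular establishing $c=\pm1$ and that the limit is \emph{exactly} $w_M$ --- together with the care needed in the lower bound to realise the abstract slow eigenmode as a genuinely admissible (connected, correctly oriented) \marchingConfig{} while losing at most a constant in the exponent.
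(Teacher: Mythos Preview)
Your approach is essentially the paper's: pass to the displacement vector $z_i(t)=p_i(t+1)-p_i(t)$, observe that in a \marchingConfig{} the normalized endpoint vectors are constant so the additive terms cancel and $z(t+1)=A_2\,z(t)$ with $A_2$ the doubly stochastic, reflecting path-walk matrix, and then read off both bounds from its spectral gap $1-\cos(\pi/n)=\Theta(1/n^2)$. The paper records the same recursion and the same matrix and simply cites \cite{conf/spaa/KlingH11} for the mixing-time bounds $t_{\mathrm{mix}}(\varepsilon)\in\Omega(n^2\log(1/\varepsilon))$ and $\mathcal{O}(n^2\log(n/\varepsilon))$; you instead compute the cosine spectrum explicitly, verify the conserved quantity $\sum_i z_i(t)\equiv\pm1$, and carry out the bookkeeping that converts ``$z(t)$ has mixed'' into ``$w(t)$ is $\varepsilon$-close to $w_M$'' via the telescoping sum $w_i(t')-w_i(t)=\sum_{s}(z_i(s)-z_{i-1}(s))$. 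That last step is genuinely something the paper leaves implicit, and your explicit lower-bound instance (perturbing $w_M$ by the slowest sine mode, checked to remain a connected \marchingConfig{}) is again what the cited mixing-time lower bound amounts to once unpacked. So: same route, with the details the paper outsources to \cite{conf/spaa/KlingH11} spelled out.
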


\stepcounter{PotentialIndex}

\subsection{Two-Dimensional Analysis} \label{section:2-dim}
Next, we prove a convergence result for two-dimensional configurations, stating that an initial configuration either converges to the \maxChain{} or to the \marchingChain{}.
In the analysis, we again consider the vectors $z_i(t) = p_i(t+1) - p_i(t)$ that we have already seen in the upper runtime bound for marching configurations.
The potential function $\phi_{2}(t)$ is the sum of all squared lengths of $z_i(t)$'s:
$\phi_{2}(t) = \sum_{i=1}^{n} \norm{z_i(t)}^2$.
$\phi_{2}(t)$ is a monotonically decreasing function and the potential difference 	$	\phi_{2}(t) - \phi_{2}(t+1)$ can be bounded as stated by \Cref{lemma:2-dimensions:potential-difference}.
With help of this potential difference, we can conclude that two-dimensional configurations either converge to a \marchingChain{} or to the \maxChain{}.

\begin{restatable}{theorem}{theoremConvergence} \label{theorem:2-dimensions:convergence}
	Given an arbitrary connected chain in the Euclidean plane, \textnormal{\AName{}} converges either to the \marchingChain{} or to the \maxChain{}.
\end{restatable}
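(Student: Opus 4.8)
The plan is to use the potential function $\phi_2(t) = \sum_{i=1}^n \norm{z_i(t)}^2$ (with $z_i(t) = p_i(t+1) - p_i(t)$) as a Lyapunov function, following the standard recipe for convergence of dynamical systems: show $\phi_2$ is non-increasing, bounded below by $0$, hence convergent, then characterize the configurations in which the per-round decrease $\phi_2(t) - \phi_2(t+1)$ vanishes and argue the system must approach that set. First I would invoke \Cref{lemma:2-dimensions:potential-difference} to get an explicit lower bound on $\phi_2(t) - \phi_2(t+1)$ in terms of geometric quantities of the configuration — presumably a sum involving the angles $\alpha_i(t)$ between consecutive vectors and the deviations $1 - \norm{w_i(t)}$ of the vector lengths from $1$ (these are exactly the two ``defects'' that \maxgtm/ is designed to remove: bending of the chain and shortness of the links). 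Since $\phi_2(t) \ge 0$ for all $t$ and is monotonically decreasing, $\phi_2(t)$ converges and therefore $\phi_2(t) - \phi_2(t+1) \to 0$.

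Next I would exploit that $\phi_2(t) - \phi_2(t+1) \to 0$ forces each nonnegative term in the lower bound from \Cref{lemma:2-dimensions:potential-difference} to tend to $0$. Concretely: all bending angles $\alpha_i(t) \to 0$ (or $\to \pi$) and, crucially, the outer-vector contributions force $\norm{w_2(t)} \to 1$ and $\norm{w_n(t)} \to 1$. From $\alpha_i(t)\to 0,\pi$ the chain becomes asymptotically one-dimensional: the directions $\widehat w_i(t)$ all align onto a common line (up to sign). I would need to rule out persistent oscillation — e.g. that the limit set is a single configuration type rather than the robots drifting among several — which is where one uses that $\phi_2$ itself converges (not just its increments) together with the fact that once the configuration is within a small neighborhood of a one-dimensional state, \maxgtm/ is essentially the linear map $S(t)$ of the one-dimensional analysis, and \Cref{lemma:1-dimension:orientation-of-outer-robots} pins down which regime (opposed vs.\ marching) it is in. Opposed $\Rightarrow$ \maxChain{}; marching $\Rightarrow$ \marchingChain{}; and by \Cref{theorem:1-dimension:opposite-directions-upper-bound} and \Cref{theorem:1-dimension:upper-bound-marching-configs} these are precisely the two attractors. (The trivial collapsed configuration, where all $w_i = 0$, can be subsumed under the \marchingChain{} description or excluded by the connectivity/approximation bookkeeping.)

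The main obstacle I anticipate is the passage from ``all increments vanish'' to ``the configuration actually converges to one of the two named limits.'' Vanishing of $\phi_2(t)-\phi_2(t+1)$ only says the configuration approaches the set where the strategy is a fixed point (up to global translation), and a priori this set could be large or the trajectory could fail to settle into a single connected piece of it. Handling this requires a genuine limit-set / invariance argument: show the $\omega$-limit set is nonempty, compact, invariant under one step of \maxgtm/, and consists entirely of configurations with $\phi_2$-increment zero; then show every such configuration is (a global translate of) either the \maxChain{} or the \marchingChain{}. Proving that last classification — that the only fixed points of $w \mapsto S(t)w$ with all $\norm{w_i}\le 1$ and the chain straight are $w_M(t)$ and the all-ones vector — is the technical heart; it amounts to solving $S w = w$ with the nonlinear normalization in the corner entries, and the one-dimensional results plus the observation $S(t)\cdot w_M(t) = w_M(t)$ are exactly what make this tractable. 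A secondary subtlety is bounding away from below the norms $\norm{w_2(t)},\norm{w_n(t)}$ so the corner entries $\tfrac{1}{2\norm{w_2(t)}}$ of $S(t)$ stay well-defined and bounded throughout; I would establish a uniform lower bound on these norms early (they cannot shrink to $0$ because an outer robot always moves \emph{away} from its neighbor by a full half-unit in the virtual-robot construction, so $\norm{w_2(t+1)} \ge \tfrac12 - \tfrac12\norm{w_3(t)}$-type estimates give control), and use it as a standing assumption for the rest of the argument.
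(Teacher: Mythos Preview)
Your overall Lyapunov strategy---show $\phi_2$ is nonincreasing and bounded below, then classify the configurations where the per-round decrease vanishes---is exactly what the paper does. But you have guessed the wrong content for \Cref{lemma:2-dimensions:potential-difference}, and that wrong guess is what drives all the complications in your proposal. The lemma does \emph{not} bound $\phi_2(t)-\phi_2(t+1)$ by a sum involving the angles $\alpha_i(t)$ and the defects $1-\norm{w_i(t)}$; it gives
\[
\phi_2(t)-\phi_2(t+1)\;\ge\;\tfrac14\sum_{i=1}^{n}\norm{z_{i-1}(t)-z_{i+1}(t)}^2,
\]
with the boundary conventions $z_0=z_1$ and $z_{n+1}=z_n$. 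With this in hand the fixed-point classification is a one-liner: the right-hand side vanishes iff $z_1(t)=z_2(t)=\dots=z_n(t)$. Either this common value is $0$, meaning no robot moves and the chain is already a \maxChain{}; or it is a nonzero vector, meaning every robot translates by the same amount---which is precisely the \marchingChain{}. That is the paper's entire proof.

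Consequently, the machinery you propose---reducing to the one-dimensional case via \Cref{lemma:1-dimension:orientation-of-outer-robots}, invoking \Cref{theorem:1-dimension:opposite-directions-upper-bound} and \Cref{theorem:1-dimension:upper-bound-marching-configs}, solving $Sw=w$ with the nonlinear corner entries, and establishing a uniform lower bound on $\norm{w_2(t)},\norm{w_n(t)}$---is unnecessary. The key variable is the \emph{displacement} vector $z_i$, not the \emph{edge} vector $w_i$; once you work in the $z$-coordinates the two attractors fall out directly without any appeal to the one-dimensional analysis. Your concern about the passage from ``increments vanish'' to ``trajectory converges to a single limit'' is, in principle, a fair point about rigor, and the paper does treat it lightly; but at the level of this paper it is taken as read, and in any case the correct form of the lemma makes the zero-increment set so small (exactly two configuration types up to translation) that the $\omega$-limit argument you sketch would be straightforward rather than the ``technical heart'' you anticipate.
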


\stepcounter{PotentialIndex}
\stepcounter{MatrixIndex}

Interestingly, when assuming that only one of the outer robot moves while the other one remains stationary, we can prove the same upper runtime bound as for one-dimensional configurations.
The proof relies on the analysis of $\phi_2(t)$  for this case.
Again, the analysis of a transition matrix plays a role here -- since only one outer robot moves we obtain a \emph{substochastic} transition matrix where every row except of one sums up to $1$.
The last row only sums up to $1/2$ such that high powers of this matrix converge to the $0$-matrix.
A diagonalization of the transition matrix yields the following runtime bound:

\begin{restatable}{theorem}{theoremMaxGtmOneMovingUpper}\label{theorem:2-dimensions:one-stationary-upper-bound}
	In case one of the outer robots is stationary and all other robots move according to \textnormal{\AName{}}, an $\varepsilon$-approximation of the
	\maxChain{} is achieved after $\mathcal{O}(n^2 \cdot  \log (n/\varepsilon))$ rounds.
\end{restatable}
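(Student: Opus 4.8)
The plan is to set up coordinates so that the stationary outer robot, say $r_n$, is fixed at the origin, and to track the deviation vectors $m_i(t) := \widehat{w}_i(t) - \widehat{w}_{i+1}(t)$ or, more directly, work with the $z_i(t) = p_i(t+1)-p_i(t)$ vectors already introduced for the marching-configuration and convergence analyses. The key observation is that, with $r_n$ stationary, the update $z(t+1) = T(t)\,z(t)$ is governed by a transition matrix $T(t)$ that is \emph{substochastic}: every row sums to $1$ except the row corresponding to the stationary endpoint, which sums to $1/2$ (this is exactly where the outer robot \emph{fails} to move the full midpoint amount because it is pinned). First I would verify this structural claim by writing out the entries of $T(t)$ explicitly from the \maxgtm/ update rules in \Cref{section:algorithm}, confirming that the matrix is (up to normalization of the outer entries) a fixed, $t$-independent substochastic matrix once the outer robots are close enough to distance $1$ — or, more carefully, a matrix that converges entrywise to a fixed substochastic matrix $T$ as $\norm{w_2(t)}, \norm{w_n(t)} \to 1$.

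Next I would argue that $\phi_2(t) = \sum_i \norm{z_i(t)}^2$ decays geometrically at a rate controlled by the spectral radius of $T$. Because $T$ is substochastic with one deficient row, $T^k \to 0$, and the relevant quantity is $\rho(T) < 1$; more precisely the relevant eigenvalue structure is that of the path-graph averaging matrix with one absorbing end, whose second-largest-magnitude eigenvalue is $1 - \Theta(1/n^2)$ (this is the standard estimate, analogous to the mixing-time bound cited for \Cref{theorem:1-dimension:opposite-directions-upper-bound} and the spectral gap of the path). I would make this quantitative by diagonalizing $T$ — its eigenvalues are explicitly of the form $\cos\bigl(\tfrac{(2k-1)\pi}{2n-1}\bigr)$ or similar, as for a path with a Dirichlet boundary on one side and a Neumann-type boundary on the other — so that $\norm{z(t)} \le C\cdot\bigl(1 - \Theta(1/n^2)\bigr)^t \cdot \norm{z(0)}$, where the constant $C$ absorbs the condition number of the eigenbasis (which is polynomial in $n$ for these tridiagonal Toeplitz-like matrices). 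Taking $t = \Theta\bigl(n^2 \log(n/\varepsilon)\bigr)$ then forces $\norm{z_i(t)} \le \varepsilon/\mathrm{poly}(n)$ for all $i$.

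Finally I would translate smallness of all $z_i(t)$ into the desired $\varepsilon$-approximation guarantee. Once every robot moves by at most $\varepsilon/\mathrm{poly}(n)$ per round and the chain is converging (which \Cref{theorem:2-dimensions:convergence} already tells us happens — and with one endpoint fixed the limit must be a max-chain rather than a marching chain, since the fixed endpoint cannot march), a telescoping/summation argument bounds $\norm{w_i(t) - w_i^\star}$ by the geometric tail $\sum_{s \ge t}\norm{z(s)}$, hence also by $\varepsilon/\mathrm{poly}(n)$; this yields both $\norm{w_i(t)} > 1-\varepsilon$ and $\Delta_{1,n}(t) \ge (1-\varepsilon)(n-1)$. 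The main obstacle I anticipate is the first part: justifying that $T(t)$ may be treated as the fixed matrix $T$ despite the nonlinear, time-varying normalization in the outer-robot entries (the $\tfrac{1}{2\norm{w_2(t)}}$ and $\tfrac{1}{2\norm{w_n(t)}}$ terms). I would handle this by first showing $\norm{w_2(t)}, \norm{w_n(t)} \to 1$ (they are squeezed between the converging inner dynamics and the connectivity cap of $1$), then doing a perturbation argument — writing $T(t) = T + E(t)$ with $\norm{E(t)}$ summable, so that the product $\prod_s T(s)$ inherits the geometric decay of $\prod_s T = T^t$ up to a bounded multiplicative factor (a standard fact for products of matrices converging to a matrix with spectral radius $<1$). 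Keeping the polynomial-in-$n$ factors under control through this perturbation step, so that they only contribute a $\log n$ to the final bound, is the delicate bookkeeping I'd need to be careful about.
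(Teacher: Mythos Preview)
Your overall strategy matches the paper's: track the displacement vectors $z_i(t)=p_i(t{+}1)-p_i(t)$, observe that with one endpoint stationary the update is governed by a substochastic matrix whose single deficient row (the one adjacent to the pinned robot, since $z_{\text{pinned}}\equiv 0$ drops out) has row sum $1/2$, diagonalize, and read off the spectral gap $1-\Theta(1/n^2)$. You even hit the right eigenvalue formula $\cos\bigl((2j-1)\pi/(2n-1)\bigr)$, which is exactly what the paper computes via \cite{yueh2005eigenvalues}.

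The genuine gap is your handling of the nonlinearity at the \emph{moving} outer robot, where the update carries the normalization $\widehat{w}$ and hence a time-varying entry $\tfrac{1}{2\|w(t)\|}$. Your proposed perturbation argument $T(t)=T+E(t)$ needs a quantitative rate for $\|E(t)\|\to 0$, i.e.\ for $\|w(t)\|\to 1$ at the moving end---but that rate is precisely the convergence rate you are trying to establish, so the argument is circular. The qualitative convergence from \Cref{theorem:2-dimensions:convergence} does not supply a rate, and ``summable with polynomially bounded tail'' is exactly the missing ingredient. The paper sidesteps this entirely with a clean trick: it does \emph{not} iterate the vectors $z_i(t)$ but the \emph{squared norms} $\|z_i(t)\|^2$. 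Convexity of $\|\cdot\|^2$ turns the inner-robot averaging into componentwise inequalities, and for the moving outer robot one has the pointwise bound $\|z_n(t{+}1)\|^2\le \tfrac14\|z_{n-1}(t)+z_n(t)\|^2\le \tfrac12\|z_{n-1}(t)\|^2+\tfrac12\|z_n(t)\|^2$ (this is the same estimate used in \Cref{lemma:2-dimensions:potential-difference}). The result is a componentwise inequality $z'(t{+}1)\le A\,z'(t)$ with a \emph{fixed}, $t$-independent substochastic matrix $A$, so \Cref{theorem:preliminaries:powers-of-substochastic-matrices} applies directly with no perturbation bookkeeping at all. Once every $\|z_i(t)\|\le\varepsilon$, the $\varepsilon$-approximation follows as you outlined.
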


In case both outer robots move, we identify a certain class of configurations that lead to an arbitrarily high runtime based on a parameter $\delta$ which can be seen as the width of the configuration.
Before defining the configurations, we give some intuition about their construction: Applying \maxgtm/ to the configuration of robots can be interpreted as a discrete time \emph{dynamical system}.
See \cite{robinson2012introduction} for an introduction to dynamical systems.
In \Cref{section:1-dimension}, we have seen that this dynamical system has two different (classes of) \emph{fixed points}, i.e.\ a configuration that remains unchanged when applying \maxgtm/.
These fixed points are the \maxChain{} and the \marchingChain{}.
We can prove that the \marchingChain{} is an \emph{unstable} fixed point.
Unstable means that a small perturbation in the configuration results in a different behavior -- the dynamical system moves away from this fixed point.
In our case this means that a small perturbation in the \marchingChain{} leads to a configuration that converges to the \maxChain{}.
For a formal description of the relation to dynamical systems and a proof that the \marchingChain{} is an unstable fixed point, we refer the reader to \Cref{section:discreteDynamicalSystems}.
We use the property that \marchingChain{s} are unstable fixed points to define configurations which are very close to the \marchingChain{}, \emph{\deltaUConfig{}s}.
For $\delta = 0$, \deltaUConfig{}s and \marchingChain{}s coincide.
Choosing \emph{any} $\delta > 0$ changes the behavior such that the configuration converges to the \maxChain{}.
The runtime, however, can be arbitrarily high depending on $\delta$.
For a visualization of \deltaUConfig{}s, see \Cref{fig:discreteThetaV} in \Cref{section:speedOfOuterRobots}.

\begin{definition} \label{definition:2-dimensions:delta-u-configurations}
	For $n$ even, a \emph{\deltaUConfig{}} is defined by the vectors $w_i(t) := (\frac{\delta}{n-1},1-\frac{2\left(i-1\right)}{n})^T$ for \( i = 2,\dots, n \).
\end{definition}

\begin{restatable}{theorem}{deltaU} \label{theorem:2-dimensions:general-lower-bound}
	Starting in a \deltaUConfig{}, \textnormal{\AName{}} needs at least $\Omega\bigl(n^2 \cdot \max \{ \log (1/\varepsilon),$ $\log  (1/\delta) \} \bigr)$ rounds
	to achieve an $\varepsilon$-approximation of the \maxChain{}.
\end{restatable}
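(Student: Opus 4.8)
The plan is to use a symmetry of the \deltaUConfig{} to collapse everything to a one-dimensional system, and then to show that this system escapes the neighborhood of the \marchingChain{} only slowly because the \marchingChain{} is a fixed point whose unstable eigenvalue is as small as $1+\Theta(1/n^2)$. Write $w_i(t)=(x_i(t),y_i(t))$. First I would check that \maxgtm/ preserves the symmetry $x_{n+2-i}(t)=x_i(t)$ and $y_{n+2-i}(t)=-y_i(t)$: the midpoint rule and the virtual-robot rule are both equivariant under rigid motions and under reversing the chain, and the \deltaUConfig{} has this symmetry at $t=0$. Consequently $x_2(t)=x_n(t)$, $\norm{w_2(t)}=\norm{w_n(t)}$, and $\sum_{i=2}^n y_i(t)=0$, so $\Delta_{1,n}(t)=\norm{\sum_{i=2}^n w_i(t)}=\sum_{i=2}^n x_i(t)=:X(t)$ (the $x_i(t)$ remain nonnegative, evolving by a nonnegative recursion from nonnegative data). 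Hence an $\varepsilon$-approximation of the \maxChain{} requires $X(t)\ge(1-\varepsilon)(n-1)$, and it suffices to lower bound the number of rounds until $X$ reaches even a $\Theta(1/n)$ threshold.

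Summing the $x$-updates and using the symmetry yields the exact identity $X(t+1)=X(t)+x_2(t)\bigl(\tfrac1{\norm{w_2(t)}}-1\bigr)$, so $X$ is non-decreasing; more usefully, $x(t)=(x_2(t),\dots,x_n(t))$ evolves as $x(t+1)=M(t)x(t)$ where $M(t)$ is the symmetric tridiagonal matrix with off-diagonal entries $\tfrac12$ and with only its two corner diagonal entries nonzero, both equal to $\tfrac1{2\norm{w_2(t)}}$. Since $\norm{w_2(t)}\le1$, $M(t)$ is a nonnegative-diagonal perturbation of $\tfrac12(A+E_{2,2}+E_{n,n})$ (with $A$ the path adjacency matrix and $E_{j,j}$ a diagonal matrix unit), whose spectrum lies in $[-1,1]$; a direct estimate of the top eigenvalue of $M(t)$ (solving the transcendental equation $\beta=\cosh\kappa+\tanh(\kappa\ell)\sinh\kappa$ for the Perron root, with $\beta=1/\norm{w_2(t)}$ and $\ell=(n-2)/2$) shows that if $\norm{w_2(t)}\ge1-c/n$ then $\norm{M(t)}_2\le1+C/n^2$ for absolute constants $c,C>0$. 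So, as long as $\norm{w_2(\tau)}\ge1-c/n$ for all $\tau<t$,
\[
  \norm{x(t)}_2\le(1+C/n^2)^t\,\norm{x(0)}_2=(1+C/n^2)^t\,\frac{\delta}{\sqrt{n-1}}.
\]
To keep $\norm{w_2}$ close to $1$ I would use $\norm{w_2(t)}\ge|y_2(t)|$ and control the drift of $y_2$ from its marching-chain value $1-2/n$: the deviations $(w_M)_i-y_i(t)$ obey a discrete heat equation (absorbing in the middle by antisymmetry) with a nonnegative source of order $x_2(t)^2\le\norm{x(t)}_2^2$ at the two ends, so a Green's-function / geometric-series estimate keeps the drift $O(1/n)$ for all $\tau$ with $X(\tau)=O(1/n)$ — which closes the bootstrap with the displayed inequality. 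Stopping at the first time $T_0$ that $X$ reaches $\Theta(1/n)$ and using $\norm{x(T_0)}_2\ge X(T_0)/\sqrt{n-1}$ gives $(1+C/n^2)^{T_0}\ge\Omega\bigl(1/(n\delta)\bigr)$, hence $T_0=\Omega\bigl(n^2(\log(1/\delta)-\log n)\bigr)=\Omega(n^2\log(1/\delta))$ once $\delta$ lies below a fixed inverse-polynomial threshold in $n$; and since $X(T_0)=O(1/n)<(1-\varepsilon)(n-1)$, the running time is at least $T_0$.

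For the $\log(1/\varepsilon)$ term I would argue as in the one-dimensional lower bounds \Cref{theorem:1-dimension:lower-bound-opposed-configs,theorem:1-dimension:upper-bound-marching-configs}: once close to the \maxChain{} the configuration is essentially a one-dimensional \opposedConfig{}, $M(t)$ degenerates to $\tfrac12(A+E_{2,2}+E_{n,n})$ whose second eigenvalue is $1-\Theta(1/n^2)$, and the residual $(n-1)-X(t)=\sum_i(1-x_i(t))$ is fed by the $y$-coordinates, which themselves decay only at rate $1-\Theta(1/n^2)$; reaching $\norm{w_i}>1-\varepsilon$ for all $i$ therefore costs $\Omega(n^2\log(1/\varepsilon))$ rounds, a bound that also covers the remaining range of $\delta$ (where $\log(1/\delta)=O(\log n)$). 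Combining the two lower bounds gives $\Omega(n^2\max\{\log(1/\varepsilon),\log(1/\delta)\})$. I expect the main obstacle to be the drift control: the $x$- and $y$-coordinates genuinely couple through the factor $1/\norm{w_2(t)}$, and one must quantify that over the $\Omega(n^2\log(1/\delta))$ rounds in question this coupling is weak enough that the trajectory stays where \maxgtm/ behaves like its linearization at $w_M$, so that the escaping mode grows at rate $1+\Theta(1/n^2)$ and not $1+\Theta(1/n)$; this is exactly where the instability analysis of the \marchingChain{} (see \Cref{section:discreteDynamicalSystems}) is needed.
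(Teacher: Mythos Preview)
Your route is genuinely different from the paper's and in one key respect sharper. The paper does not use a spectral-norm argument: it first asserts a monotonicity $\|w_i(t)\|\ge\|w_i(t_0)\|$ (so that $\|w_2(t)\|\ge 1-2/n+o(1)$), observes $x_3(t)\le x_2(t)$, and then controls the single coordinate via
\[
x_2(t+1)\;\le\;\frac{x_2(t)}{2\|w_2(t)\|}+\tfrac12\,x_3(t)\;\le\;\Bigl(1+\tfrac{1}{n-2}\Bigr)\,x_2(t),
\]
concluding the $\Omega(n^2\log(1/\delta))$ bound from this scalar recursion. Taken at face value, however, a growth rate of $1+\Theta(1/n)$ on $x_2$ alone yields only $\Omega(n\log(n/\delta))$; the step from ``doubles at most every $\mathcal O(n)$ rounds'' to ``needs $\Omega(n^2)$ rounds to grow by a factor $2(n-1)$'' does not follow. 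Your computation that the Perron root of the full $(n-1)\times(n-1)$ matrix $M(t)$ is only $1+\Theta(1/n^2)$ whenever $\|w_2(t)\|\ge 1-O(1/n)$ (solving $\cosh\kappa+\tanh(\kappa m)\sinh\kappa=\beta$ with $\beta-1=\Theta(1/n)$ and $m=\Theta(n)$ forces $\kappa=\Theta(1/n)$, hence $\lambda=\cosh\kappa=1+\Theta(1/n^2)$) is precisely what restores the missing factor of $n$, and that ingredient is absent from the paper.

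Two remarks. First, the paper's monotonicity lemma would let you dispense with the drift bootstrap entirely, but check it before relying on it: a direct computation gives $\|w_2(1)\|<\|w_2(0)\|$ already at step one (the defect is of order $(\delta/n)^2$), so your Green's-function control of $y_2(t)-(w_M)_2$ via the $O(x_2^2)$ source term is in fact the right instrument here, not merely an alternative. Second, both you and the paper handle the $\log(1/\varepsilon)$ contribution at the same level of detail---the paper simply writes ``together with the lower bound for one-dimensional configurations''---so your final paragraph is neither more nor less rigorous than the reference argument; making that reduction precise for this particular two-dimensional initial configuration (and covering the regime where $\delta$ lies above your inverse-polynomial threshold) is work that remains in either version.
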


As a consequence, the runtime of \maxgtm/ can be arbitrarily depending on $\delta$.
Interestingly, the dependence on $\delta$ can be removed in the continuous time model by an -- at the first sight -- counter-intuitive approach: The outer robots move slower than the inner robots.
The same approach, however, does not work in \fsync{} (see \Cref{section:speedOfOuterRobots}).

\section{The Continuous Case}%
\label{section:continuous}

This section is dedicated to the  \parameterizedMaxMoveOnBisector{1}{1-\tau} strategy that transforms a connected chain into a \maxChain{} in the continuous time model.
After introducing the strategy, we continue with some preliminaries in \Cref{section:prelim} and provide an intuitive explanation of the strategy combined with a proof outline in \Cref{section:oneDimeOutline}.

\paragraph{\parameterizedMaxMoveOnBisectorLong{1}{1-\tau} (\maxmob/)}
Outer robots move with a maximal speed of $\left(1-\tau\right)$ for a constant $0 < \tau \leq 1/2$ as follows:
In case $\norm{w_2(t)} < 1$: $v_1(t) = - \left(1-\tau\right) \cdot \widehat{w}_2(t)$.
Similarly, in case $\norm{w_n(t)} < 1$: $v_n(t) =  \left(1-\tau\right) \cdot \widehat{w}_n(t)$.
In other words, outer robots move with a speed of $\left(1- \tau\right)$ away from their direct neighbors.
Otherwise, provided $\norm{w_2(t)} = 1$ ($\norm{w_n(t)} = 1$ respectively), an outer robot adjusts its own speed and tries to stay in distance $1$ to its neighbor while moving with a maximal speed of $1 - \tau$.
An inner robot $r_i$ with $0 <\alpha_{i}(t) < \pi$ moves only if at least one of the following three conditions holds: $\|w_{i}(t) \| = 1$, $\|w_{i+1}(t) \| = 1$ or $\alpha_i(t) < \referenceAngle{} $ for $\referenceAngle{} := \referenceAngleConcrete{}$.
Otherwise an inner robot does not move at all.
In case one of the conditions holds, an inner robot moves with speed $1$ along the angle bisector formed by the vectors pointing to its neighbors.
As soon as the position of the robot and the positions of its neighbors are collinear it continues to move with speed $1$ towards the midpoint between its neighbors while ensuring to stay collinear.
Once it has reached the midpoint it adjust its own speed to stay on the midpoint.
See \Cref{figure:strategyDescription} for a visualization.

\subsection{Preliminaries} \label{section:prelim}

For both outer robots we determine the index of the first robot that is not collinear with its neighbors and the outer ro\-bot.

\begin{definition} \label{definition:leftAndRightIndices}
	$\ell(t)$ is the index, s.t.\ for all $2 < j \leq \ell(t)$ either $w_j(t) = (0,0)$ or $\widehat{w}_j(t) = \widehat{w}_2(t), w_{\ell(t)+1}(t) \neq (0,0)$ and $\widehat{w}_{\ell(t)+1}(t) \neq \widehat{w}_2(t)$.
	Similarly, define $r(t)$ to be the index such that for all $r(t) < j < n$ either $w_j(t) = (0,0)$ or $\widehat{w}_j(t) = \widehat{w}_{n}(t)$ and $w_{r(t)}(t) \neq (0,0)$ and $\widehat{w}_{r(t)}(t) \neq \widehat{w}_{n}(t)$.
	In case there is no such an index  define $\ell(t) = r(t) = 0$.
	$\alpha_{\ell(t)}(t)$ and $\alpha_{r(t)}(t)$ are denoted as \emph{outer angles}.
\end{definition}
We omit the time parameter $t$ when it is clear from the context, e.g., we write $\alpha_{\ell}(t)$ instead of $\alpha_{\ell(t)}(t)$.
In addition to the indices $\ell(t)$ and $r(t)$, we define the last indices of robots (starting to count at $\ell(t)$ and $r(t)$) that are collinear with their neighbors and the corresponding robot with index $\ell(t)$ or $r(t)$.

\begin{definition}
	Let $\ell^{+}(t)$ be the smallest index larger than $\ell(t)$ such that $\alpha_{\ell^{+}}(t) < \pi$.
	Similarly let $r^{+}(t)$ be the largest index less than $r(t)$ such that $\alpha_{r^{+}}(t) < \pi$.
\end{definition}

\begin{figure}[hbt]
	\begin{minipage}[hbt]{0.48\textwidth}
		\centering
		\vspace*{0.4cm}
		\includegraphics[width=\textwidth, clip=true]{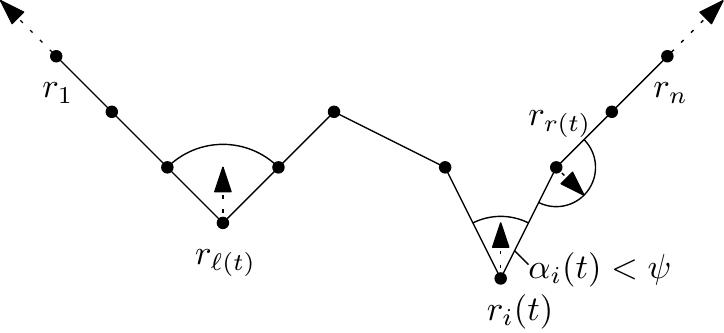}
		\caption{A chain visualizing the movements of \parameterizedMaxMoveOnBisector{1}{1-\tau}. The velocity vectors are depicted by dashed arrows.}
		\label{figure:strategyDescription}
	\end{minipage}
	\hfill
	\begin{minipage}[hbt]{0.48\textwidth}

		\centering
		\vspace*{0.63cm}
		\includegraphics[width=\textwidth]{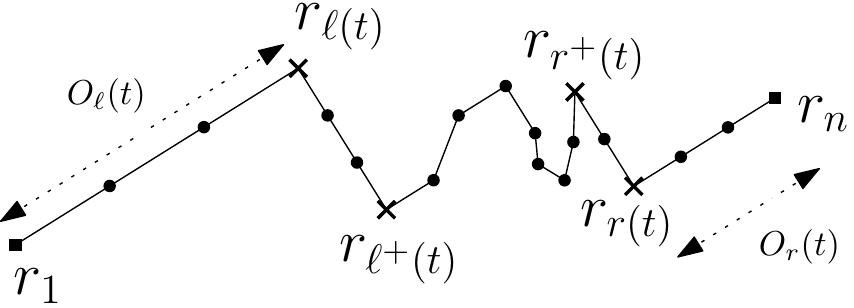}
		\caption{%
			A visualization of  $\ell(t), \ell^{+}(t), r(t)$, $r^{+}(t)$, $\leftOuter{}$ and $\rightOuter{}$.
		}	\label{figure:definition}

	\end{minipage}
\end{figure}

\begin{definition}
	The \emph{left outer length} $\leftOuter{} := \sum_{i=2}^{\ell(t)} \norm{w_i(t)}$ and the  \emph{right outer length} $\rightOuter{} := \sum_{i=r(t)+1}^{n} \norm{w_i(t)}$.
	The maximal values of the left and right outer length are denoted by $\gamma_\ell(t) := \ell(t)-1$ and $\gamma_r(t) := n-r(t)$.
	Additionally, the \emph{inner length} is defined as $\innerLength{} := \sum_{i=\ell(t)+1}^{r(t)} \norm{w_i(t)}$. See \Cref{figure:definition}.
\end{definition}

\subsection{Intuition \& Proof Outline} \label{section:oneDimeOutline}
We only give the high level idea here.
For the complete analysis, we refer to \Cref{section:contTwoDim}.
The main idea of  \parameterizedMaxMoveOnBisector{1}{1-\tau} is to flatten and stretch the chain starting at the outer robots towards the inside of the chain.
At first, $\norm{w_2(t)} = 1$ and $\norm{w_n(t)} =1$ is ensured, afterwards the angles $\alpha_{2}(t)$ and $\alpha_{n-1}(t)$ should reach a size of $\pi$ and so on until finally all vectors have length $1$ and all angles have a size of $\pi$.
\Cref{figure:coreIdea} visualizes this core idea.
To achieve this behavior, one of the two cases in which an inner robot $r_i$ moves demands either $\norm{w_{i}(t)} =1$ or $\norm{w_{i+1}(t)} =1$, because locally it can assume that it is already located on the straight line to one outer robot and all vectors into the direction of the outer robot have a length of $1$.
In addition, an inner robot $r_i$ moves if $\alpha_{i}(t) < \referenceAngle{}$.
In \Cref{section:speedOfOuterRobots} we prove that this property is crucial for the linear runtime of the strategy by introducing configurations that have a high runtime that not only depends on the number of robots in case the property is ignored.
To express the behavior of flattening and stretching the chain starting at the outer robots towards the inside of the chain, we have introduced the indices $\ell(t)$ and $r(t)$.
For each of the two sets of robots $r_1, \dots, r_{\ell(t)}$ and $r_{r(t)}, \dots, r_n$ it always holds that these robots continue to stay collinear for the rest of the execution.

For the analysis, the outer angles $\alpha_{\ell}(t)$ and $\alpha_r(t)$ play an important role.
Our analysis divides the possible sizes into three intervals, $[0, \referenceAngle{}), [\referenceAngle, \frac{3}{4} \pi]$ and $(\frac{3}{4} \pi, \pi]$.
In each interval, \maxmob/ has a certain behavior.
For an outer angle  $ \alpha_i(t) \in [0,\psi) (i \in \{\ell(t), r(t)\})$ two properties hold:
$I(t)$ decreases with speed at least $1-\tau$  and the corresponding outer length decreases since the outer robots move with speed at most $1-\tau$.
As $I(t)$ decreases with a constant speed of at least $1-\tau$, the total time in such a case is upper bounded by $\mathcal{O}\left(n\right)$.
Given $\alpha_i(t) \in [\psi, \frac{3}{4}\pi]$, the strategy is designed such that $r_i$ only moves if $O_i(t) = \gamma_i(t)$.
Thus, as long as $O_i(t) < \gamma_i(t)$, $O_i(t)$ increases with speed $1-\tau$.
As soon as $O_{i}(t) = \gamma_i(t)$ holds, the robot $r_i$ starts moving along its bisector.
This movement causes a decrease of $I(t)$ with speed at least $\cos \left(\frac{3}{8}\pi\right)$ while the length of $O_i(t)$ does not change.
Since, $I(t)$ decreases with constant speed, this case can hold for time at most $\mathcal{O}(n)$.
For the last interval, $\alpha_{i}(t) \in (\frac{3}{4}, \pi]$ we use a different progress measure since large angles cause a very slow decrease of $I(t)$ which cannot be bounded by a constant anymore.
Therefore, we consider the height $H_i(t)$.
Assume $i = \ell(t)$, then $H_{\ell}(t)$ denotes the distance between $r_{\ell(t)}$ and the line segment connecting $r_1$ and $r_{\ell^{+}(t)}$.
Intuitively, if we consider the line segment connecting $r_1$ and $r_{\ell^{+}(t)}$ as a line parallel to the $x$-axis, the robot $r_{\ell}$ moves with a velocity vector that has a small angle to the $y$-axis towards this line segment.
Thus, $H_i(t)$ decreases with constant speed.
All in all, we can prove that the total time of outer angles in any of these intervals is bounded by $\mathcal{O}\left(n\right)$ such that finally $\ell(t) = r(t)$ holds.
Lastly, we analyze the case $\ell(t) = r(t)$ and prove a linear runtime for the strategy (\Cref{theorem:mainTheorem}).

\begin{figure}[hbt]
	\centering
	\includegraphics[width=1\textwidth, clip=true]{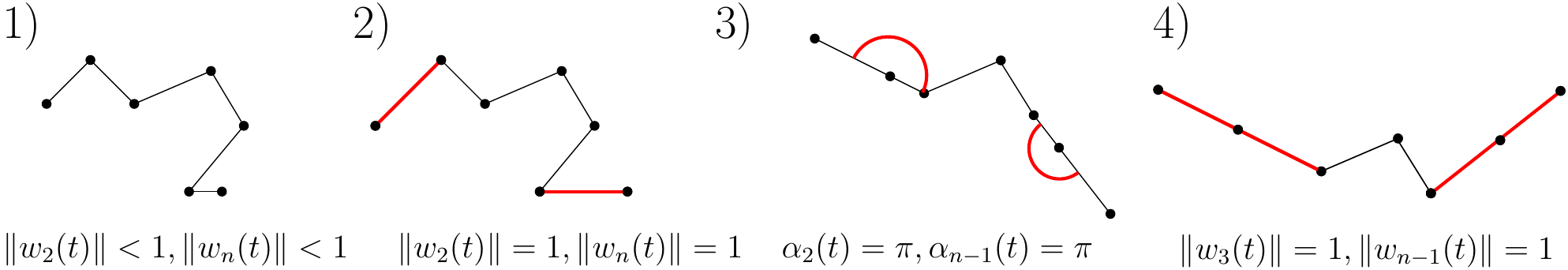}
	\caption{A visualization of the core idea of \parameterizedMaxMoveOnBisector{1}{1-\tau}. $1)$ depicts an initial configuration. $2)$ visualizes the configuration after stretching $w_2(t)$ and $w_n(t)$. In $3)$, $\alpha_{2}(t) = \pi$ and $\alpha_{n-1}(t) = \pi$.  In $4)$ $r_1$ and $r_2$ as well as $r_{n-1}$ and $r_n$ have moved such that $\norm{w_2(t)} = \norm{w_3(t)} = \norm{w_{n-1}(t)} = \norm{w_n(t)} = 1$. }
	\label{figure:coreIdea}
\end{figure}

\begin{restatable}{theorem}{theoremMainTheorem} \label{theorem:mainTheorem}
	Starting \textnormal{\parameterizedMaxMoveOnBisector{1}{1- \tau}} in a two-dimensional configuration,
	the initial chain is either transformed into a straight line of length $n-1$ or all robots are located at the same position after time $2 \left(n-3\right) \bigl(\frac{1}{1-\tau} + \frac{1}{\sqrt{2-\sqrt{2}}} + 10\bigr) + 3n \bigl(\frac{1}{\tau} + \frac{1}{1-\tau}\bigr)$
\end{restatable}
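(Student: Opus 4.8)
The plan is to argue by tracking the two outer angles $\alpha_\ell(t)$ and $\alpha_r(t)$ through the three regimes described in the proof outline and summing the time spent in each. First I would fix some $t$ and distinguish whether the configuration is already one-dimensional with $\ell(t)=r(t)$ (the endgame, handled last) or whether $\ell(t)<r(t)$. In the latter case, by \Cref{definition:leftAndRightIndices} both outer angles are well-defined, and I would show that, as long as $\ell(t) < r(t)$, a fixed amount of ``progress'' is made that is charged either to a decrease of the inner length $\innerLength{}$, to a decrease of one of the outer heights $H_\ell(t), H_r(t)$, or to an increase of an outer length $\leftOuter{}$ or $\rightOuter{}$ (which is bounded above by $\gamma_\ell(t)$, resp.\ $\gamma_r(t)$). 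Concretely: when an outer angle lies in $[0,\psi)$, the strategy forces $\innerLength{}$ to shrink at rate $\ge 1-\tau$ while the corresponding outer length does not grow; when it lies in $[\psi,\tfrac34\pi]$, either the outer length is strictly increasing at rate $1-\tau$ (and is bounded by its maximum $\gamma_i(t)$), or $r_i$ moves along its bisector and $\innerLength{}$ shrinks at rate $\ge \cos(\tfrac38\pi)=\sqrt{(2-\sqrt2)}/2$; when it lies in $(\tfrac34\pi,\pi]$, I switch to the height $H_i(t)$, which by the geometry of the bisector movement (velocity vector nearly parallel to the drop-line) decreases at a constant rate bounded below by a constant depending on $\psi$. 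Each time an outer angle reaches $\pi$ and the adjacent vector reaches length $1$, one of the indices $\ell(t)$ or $r(t)$ advances, which can happen at most $n$ times total; and each such advancement ``resets'' an outer length whose maximum increases by only $1$. Adding the contributions: the total length that $\innerLength{}$ can lose is $O(n)$ and it loses it at rate $\Omega(1)$, giving $O(n)$ time; the total height drops sum to $O(n)$ at rate $\Omega(1)$, again $O(n)$; and the total outer-length increase is bounded by $\sum (\gamma_\ell+\gamma_r) = O(n)$, absorbed at rate $1-\tau$, giving the $1/(1-\tau)$-type terms. This is where the explicit constants $\frac{1}{1-\tau}$, $\frac{1}{\sqrt{2-\sqrt2}}$, $10$, and $\frac{1}{\tau}$, $\frac{1}{1-\tau}$ in the bound come from — the $1/\tau$ term I expect to arise in the endgame, where after $\ell(t)=r(t)$ one must still merge the two collinear halves and the rate at which the slowed outer robots close the remaining gap involves $\tau$.

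The main technical obstacle is the transition bookkeeping between regimes: one must verify that an outer angle cannot oscillate indefinitely between the intervals $[\psi,\tfrac34\pi]$ and $(\tfrac34\pi,\pi]$ without net progress, and that the events ``$\|w_2(t)\|$ or $\|w_n(t)\|$ reaches $1$,'' ``$\alpha_i(t)$ crosses $\psi$,'' and ``$\ell(t)$ or $r(t)$ advances'' are consistent with the piecewise movement rule (in particular that collinearity of $r_1,\dots,r_{\ell(t)}$, resp.\ $r_{r(t)},\dots,r_n$, is preserved once established — this is asserted in \Cref{section:oneDimeOutline} and I would treat it as a prerequisite lemma). A subtle point is that $\innerLength{}$ and the heights are not globally monotone — $\innerLength{}$ can in principle grow when an outer robot stretches an outer vector — so the clean accounting requires a combined potential of the form $\innerLength{}(t) + \sum_{i\in\{\ell,r\}} H_i(t) + \bigl(\gamma_\ell(t)+\gamma_r(t) - \leftOuterShort{}(t) - \rightOuterShort{}(t)\bigr)$ (up to constants), and showing this drops at rate $\Omega(1)$ in every regime while its initial value is $O(n)$.

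Finally I would handle the endgame $\ell(t)=r(t)$: now the chain is one-dimensional (all of $r_1,\dots,r_n$ collinear), the inner robots have collapsed the interior to the two collinear segments meeting at $r_{\ell(t)}$, and the two outer robots either pull apart (increasing the total length towards $n-1$, reaching a straight chain of length $n-1$ in additional time $O(n/(1-\tau))$ since each remaining unit of length is gained at rate $1-\tau$), or they were pulling the same way, in which case — unlike the discrete model — the speed deficit $\tau$ of the outer robots causes the configuration to contract to a point in time $O(n/\tau)$. Combining the $O(n)$ from the three regimes with these two $O(n/(1-\tau))$ and $O(n/\tau)$ contributions, and plugging in the worst-case constant factors ($\cos(\tfrac38\pi)^{-1}$ for the bisector regime, a crude $+10$ to absorb lower-order terms from regime transitions, the factor $2(n-3)$ because each of the at most $n-3$ inner ``levels'' is flattened once from each side), yields exactly the stated bound
$2(n-3)\bigl(\tfrac{1}{1-\tau}+\tfrac{1}{\sqrt{2-\sqrt2}}+10\bigr)+3n\bigl(\tfrac1\tau+\tfrac{1}{1-\tau}\bigr)$.
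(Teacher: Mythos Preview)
Your high-level decomposition (three angle regimes until $\ell(t)=r(t)$, then a one-dimensional endgame) matches the paper's proof structure, and your endgame analysis is essentially correct. However, there are two concrete errors in the regime analysis.

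First, your claim that ``$\innerLength{}$ can in principle grow when an outer robot stretches an outer vector'' is wrong: the paper proves (\Cref{lemma:vectorsInBetween}) that $\innerLength{}$ is \emph{globally monotone decreasing}, since the outer robots only touch $w_2$ and $w_n$, which belong to $\leftOuter{}$ and $\rightOuter{}$, not to $\innerLength{}$. This monotonicity is precisely what makes the per-regime accounting work directly: since $\innerLength{}\le n-3$ and never increases, the \emph{total} time across all moments where $\innerLength{}$ drops at rate $\ge 1-\tau$ (regime~1) or $\ge \cos(\tfrac{3}{8}\pi)$ (regime~2 with $O_i=\gamma_i$) is bounded by $\frac{n-3}{1-\tau}$ and $\frac{2(n-3)}{\sqrt{2-\sqrt 2}}$ respectively. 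You therefore do not need the combined potential you propose.

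Second, your worry about oscillation between $[\psi,\tfrac34\pi]$ and $(\tfrac34\pi,\pi]$ is legitimate, but the paper resolves it with a dedicated monotonicity lemma (\Cref{lemma:outerAngleIncreasing}): once an outer angle reaches $\tfrac34\pi$ with $O_i=\gamma_i$, it is monotonically increasing, so it never re-enters the middle regime. This, together with the monotone $\innerLength{}$, is what rules out oscillation without any combined potential. Relatedly, the constant $10$ is not there ``to absorb lower-order terms'': it comes from the large-angle regime, where the paper shows $H_i'(t)\le -\tfrac{1}{20}$ (\Cref{lemma:outerAngleHeightNew}) and $\sum H_i \le n-3$, giving total time $20(n-3)=2(n-3)\cdot 10$.
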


There might be two-dimensional configurations in which the chain contracts to a single point instead of reaching the \maxChain{}.
Our simulations support the following conjecture.

\begin{restatable}{conjecture}{conjectureLebesgueContinuousTwo} \label{conjecture:lebesgueContinuous2}
	The set of initial two-dimensional configurations that result in a configuration where all robots are located on the same position when applying \textnormal{\parameterizedMaxMoveOnBisector{1}{1-\tau}} has Lebesgue measure~$0$.
\end{restatable}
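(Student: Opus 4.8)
\medskip\noindent\textit{Towards a proof of \Cref{conjecture:lebesgueContinuous2}.}
The plan is to show that the \emph{collapsing set} $B\subseteq\mathbb{R}^{2(n-1)}$ — the set of initial configurations $w(0)$ whose \maxmob/ trajectory ends with all robots at one common position — is contained in a finite union of proper real-analytic subvarieties of the configuration space (hence of positive codimension), and is therefore Lebesgue-null. The first step is to localise the moment at which the alternative of \Cref{theorem:mainTheorem} is decided: its proof shows that every trajectory reaches in finite time a configuration with $\ell(t)=r(t)$, and from then on it is already fixed whether the chain straightens to length $n-1$ or contracts to a point. So it suffices to (i) characterise which configurations with $\ell=r$ collapse, and (ii) pull this characterisation back through the flow to time $0$.

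For step (i), the key claim is that collapse forces an \emph{exact} symmetry at the decision time: the two collinear half-chains $r_1,\dots,r_{\ell}$ and $r_{r},\dots,r_n$ must be directed straight against each other, $\widehat{w}_2(t)=-\widehat{w}_n(t)$, with equal outer lengths $\leftOuter{}=\rightOuter{}$ and $\innerLength{}=0$ — the continuous analogue of a \marchingConfig{}. The reason is that each outer robot moves \emph{away} from its neighbour, so the chain can shrink rather than grow only if the two arms overlap and point toward a common centre, and to keep them fully collinear and exactly annihilating for the entire remaining time the configuration must be perfectly balanced. Here one invokes the monotonicity of $\innerLength{}$ and the behaviour of $\leftOuter{},\rightOuter{}$ in the three outer-angle regimes $[0,\psi)$, $[\psi,\frac{3}{4}\pi]$, $(\frac{3}{4}\pi,\pi]$ from the proof of \Cref{theorem:mainTheorem}, together with the role of the $\tau$-slowdown, which is precisely what rules out a non-symmetric ``marching'' survivor. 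These conditions cut out a proper real-analytic subset $C$ of the stratum $\{\ell=r\}$, which is thus null.

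For step (ii), one partitions a trajectory into phases according to the finitely many possible values of the pair $(\ell(t),r(t))$ and of the active outer-angle regime. Inside a phase the \maxmob/ velocity field — assembled from vector normalisations, angle bisectors and midpoints — is real-analytic away from a lower-dimensional degenerate locus (vanishing $w_i$, accidental collinearities, an outer angle sitting exactly at a regime interface or at $\pi$, robots coinciding prematurely). Hence every phase-to-phase transition map is, off a null set, a local diffeomorphism; a finite composition of such maps pulls the null set $C$ back to a null set, and the degenerate loci contribute only measure zero. Summing over the finitely many phase sequences shows $B$ is null (the one-dimensional initial configurations are themselves null and may be ignored), which is the conjecture.

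The main obstacle is step (i): rigorously proving that collapse is \emph{impossible} without the exact symmetry, i.e.\ that an arbitrarily small generic perturbation of an almost-marching chain straightens rather than contracts. This is a quantitative continuous-time instability statement — the counterpart of ``the \marchingChain{} is an unstable fixed point'', established in \Cref{section:discreteDynamicalSystems} — and the non-smoothness of \maxmob/ at the regime interfaces makes it delicate to obtain from a stable-manifold theorem; one may instead have to track explicitly how the imbalance $\leftOuter{}-\rightOuter{}$ and a suitable measure of non-collinearity evolve and show at least one of them is driven away from $0$. A secondary technical point is to guarantee that the number of phases along any trajectory is finite (or that the exceptional trajectories with infinitely many phases themselves form a null set), so that the finite-composition argument of step (ii) applies uniformly.
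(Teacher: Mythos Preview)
The statement you are attempting to prove is labeled a \emph{conjecture} in the paper, and the paper offers no proof whatsoever --- it merely remarks that ``our simulations support the following conjecture.'' There is therefore no proof in the paper to compare your proposal against.

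Your outline is a reasonable strategy sketch, and you are honest about its status: you explicitly flag that step~(i) --- showing collapse forces exact symmetry --- is the ``main obstacle'' and is not established, and that the finiteness of phase transitions in step~(ii) is a ``secondary technical point'' still to be verified. Both gaps are real. In particular, for step~(i) you would need a genuine instability/repulsion argument for the continuous, piecewise-defined flow, and the discrete Jacobian computation in \Cref{section:discreteDynamicalSystems} does not transfer directly. For step~(ii), the phase maps are not obviously local diffeomorphisms even off the degenerate locus (the outer robots' speed caps and the threshold $\psi$ make the vector field only piecewise smooth), so the ``null sets pull back to null sets'' step requires more care --- at minimum a Lipschitz or absolute-continuity argument rather than analyticity. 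As written, this is a plausible programme but not a proof; the paper itself leaves the statement open.
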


\section{On the Speed of the outer Robots} \label{section:speedOfOuterRobots}

We close by a brief discussion on the influence of the speed of the outer robots.
An elaboration on it can be found in \Cref{section:Influence-of-outer-robots-extended}.
It turns out there exists a special class of configurations -- called \thetaV{}s -- parameterized in the initial distance of the two outer robots \( \delta \) for which \parameterizedMaxMoveOnBisector{1}{1-\tau} needs a runtime independent of \( \delta \), see \cref{lemma:continuousThetaV}.

\begin{restatable}{theorem}{lemmaMaxMobContinuousThetaV} \label{lemma:continuousThetaV}
	Starting in a \thetaV{}, \textnormal{\parameterizedMaxMoveOnBisector{1}{1-\tau}} needs time at most $n \cdot \bigl(\frac{1}{\tau} + \frac{1}{1-\tau}\bigr) $ to transform the configuration into a \maxChain{}.
\end{restatable}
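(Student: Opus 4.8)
The plan is to track how the straightening of the chain propagates from the two outer robots toward the middle and to charge a bounded amount of time to each robot that gets \emph{fixed} (placed at distance $1$ from its predecessor and made collinear with the straight prefix already built). A \thetaV{} is mirror-symmetric, and \parameterizedMaxMoveOnBisector{1}{1-\tau} preserves this symmetry, so the two arms evolve in lockstep and it suffices to analyze one of them; the total runtime then equals the time a single arm needs to become straight, and a single arm contains at most $n$ robots. First I would record the structural facts that make a \thetaV{} tractable: all interior angles of each arm are close to $\pi$, so initially only the outer robots (and possibly a few robots at the tip, if $\tau$ is large) are triggered, and $\ell(0)=2$ with outer angle $\alpha_{\ell}(0)$ close to $\pi$. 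I would then argue that this structure is maintained — the prefix $r_1,\dots,r_{\ell(t)}$ stays collinear (which \cref{definition:leftAndRightIndices} encodes in general), the outer angle $\alpha_{\ell}(t)$ stays in $[\referenceAngle{},\pi]$ so the arm never folds, and $\ell(t)$ is non-decreasing — so the whole run decomposes into alternating \emph{stretching} and \emph{flattening} sub-phases, terminating when $\ell(t)=r(t)$; that terminal phase I would handle by invoking the analysis behind \cref{theorem:mainTheorem}, whose extra additive constants are harmless here.

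For the quantitative bound I would use the quantities from \cref{section:prelim}: the left outer length $\leftOuter{}$ with cap $\gamma_{\ell}(t)=\ell(t)-1$, the symmetric $\rightOuter{}$, and the inner length $\innerLength{}$, with $\leftOuter{}+\innerLength{}+\rightOuter{}=L(t)\le n-1$. In a stretching sub-phase $\leftOuter{}<\gamma_{\ell}(t)$, the robot $r_{\ell(t)}$ is idle, and $r_1$ pulls along $-\widehat{w}_2(t)$ at full outer speed, so $\leftOuter{}$ grows at rate exactly $\oneminustau$; as long as the angle invariant holds, $\leftOuter{}$ never decreases, so the total stretching time is at most $\tfrac{n-1}{1-\tau}$ (and likewise on the right). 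In a flattening sub-phase $\leftOuter{}=\gamma_{\ell}(t)$ and $r_{\ell(t)}$ moves at speed $1$ along its bisector toward collinearity and then toward the midpoint of its neighbors. The whole point of the reference angle $\referenceAngle{}=\referenceAngleConcrete{}$ is that $\cos(\referenceAngle{}/2)=1-\tau$, so the component of this full-speed velocity that reduces $r_{\ell(t)}$'s defect (its height over the line through its neighbors, and subsequently $\innerLength{}$) is bounded below by a quantity that degrades like $1-\cos(\referenceAngle{}/2)=\tau$ in the worst case $\alpha_{\ell}\to\referenceAngle{}$; accounting $O(1/\tau)$ per robot yields at most $\tfrac{n}{\tau}$ over both arms. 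Since the two arms run in parallel, adding the stretching and flattening contributions and the terminal $\ell(t)=r(t)$ phase gives the claimed $n\bigl(\tfrac{1}{\tau}+\tfrac{1}{1-\tau}\bigr)$, with room to spare.

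The main obstacle is showing that the progress rates above hold \emph{independently of $\delta$} — equivalently, that a \thetaV{} genuinely straightens from the outside in rather than drifting rigidly. This is exactly where the slowdown of the outer robots is essential: with full outer speed the two near-antiparallel arms translate together and the outer-robot distance grows only at a rate that vanishes with $\delta$, which is the source of the $\log(1/\delta)$ blow-up of the naive strategy (and of \cref{figure:coreIdea} being unreachable in that case); capping the outer speed at $\oneminustau<1$ lets the full-speed inner flattening moves keep the outer angles away from the degenerate regime and keeps $\ell(t)$ advancing. I expect the delicate points to be (i) verifying that $\alpha_{\ell}(t),\alpha_{r}(t)\in[\referenceAngle{},\pi]$ is preserved for every admissible $\delta$, and — for the few tip robots whose angle starts below $\referenceAngle{}$ when $\tau$ is close to $1/2$ — falling back on the alternative accounting sketched in \cref{section:oneDimeOutline}, where $\innerLength{}$ itself drops at rate $\ge\oneminustau$, so the overall bound still goes through; and (ii) controlling the last sub-phase where the two straightened halves meet, which I would reduce to the general argument of \cref{section:contTwoDim} rather than redo from scratch.
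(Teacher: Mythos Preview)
Your proposal rests on a misreading of the configuration. In a \thetaV{} every angle except the tip angle $\alpha_{\lceil n/2\rceil}=\theta$ equals $\pi$ exactly, and every $\|w_i\|=1$. By \cref{definition:leftAndRightIndices} this means $\ell(0)=r(0)=\lceil n/2\rceil$ from the very start, not $\ell(0)=2$; there is nothing for $\ell(t)$ to advance through, and the outer angle $\alpha_{\ell}(0)=\theta$ is \emph{small} (close to $0$ for small $\delta$), not close to $\pi$. Your invariant $\alpha_{\ell}(t)\in[\referenceAngle{},\pi]$ therefore fails at $t=0$, and the whole ``per-robot stretching/flattening'' accounting (charging $O(1/\tau)$ and $O(1/(1-\tau))$ to each of $\Theta(n)$ arm robots) is analyzing a process that does not occur. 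You also cannot hand off the terminal phase to \cref{theorem:mainTheorem}: the proof of that theorem \emph{invokes} the present statement to handle the case $\ell(t)=r(t)$, so appealing to it here is circular.

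The paper's argument is much more direct. Since $\ell(t)=r(t)$ is the tip throughout, the only thing to control is the single angle $\theta(t)$. Place the tip at the origin with the base $\overline{r_1 r_n}$ horizontal above it, and use the height $H_\ell(t)$ of the tip below the base as the progress measure. The tip robot moves straight up at speed $1$; each outer robot moves along its leg at speed at most $1-\tau$, hence its vertical component is at most $1-\tau$. Thus $H_\ell$ drops at rate at least $\tau$, giving at most $n/(2\tau)$ time until either $H_\ell=0$ or $\theta$ reaches $\referenceAngle{}$. If the latter happens first, the tip stops, the legs re-stretch to full length at rate $1-\tau$ (cost $\le \lfloor n/2\rfloor/(1-\tau)$), and then the same height argument finishes in another $n/(2\tau)$. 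Once $H_\ell=0$ the configuration is one-dimensional opposed and straightens in $\le (n-1)/(2(1-\tau))$. Summing gives the bound. The $\delta$-independence you correctly flag as the crux is exactly the gap between the tip's vertical speed $1$ and the outer robots' vertical speed $\le 1-\tau$; it does not come from per-robot phases along the arms.
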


One might suspect that an algorithm in which the outer robots move at full speed stretches the chain faster.
Interestingly, this is not true!
For such an algorithm -- let it be called \naivemaxmob/ -- we can show that the runtime for \thetaV{}s is lower bounded dependent on \( \delta \).

\begin{restatable}{theorem}{theoremOneOneMaxMob} \label{theorem:thetaTrianglesRuntime}
	\textnormal{\naivemaxmob/}  transforms a \thetaV{} into a \maxChain{} in time $\Omega \left(n \cdot \log \left(1/\delta\right)\right)$.
\end{restatable}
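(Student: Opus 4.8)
The plan is to establish the bound directly on the symmetric \thetaV{}, tracking a single scalar: the distance $b(t):=\Delta_{1,n}(t)$ between the two outer robots. A \thetaV{} is invariant under the reflection $R$ that swaps $r_i$ with $r_{n+1-i}$, and since \naivemaxmob/ is a local, orientation--oblivious strategy the whole execution stays $R$--symmetric. Consequently the symmetry axis is fixed in time, $p_n(t)-p_1(t)$ is always orthogonal to it (write $u$ for the corresponding unit direction, oriented as in the initial configuration), and $w_2(t),w_n(t)$ as well as the velocities $v_1(t),v_n(t)$ are $R$--mirror images of each other at all times. Since $b(0)=\delta$ while reaching the \maxChain{} (indeed any $\varepsilon$--approximation of it) forces $b\ge(1-\varepsilon)(n-1)=\Omega(n)$, it suffices to show that $b$ can grow only exponentially and with rate of order $1/n$: once one knows $b(t)\le\delta\cdot e^{ct/n}$ for a constant $c$ up to the first time $b$ reaches $\Omega(n)$, that time is $\Omega\!\bigl(n\log(1/\delta)\bigr)$.

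For the rate bound I would first differentiate $b$. By $R$--symmetry $v_n-v_1$ is a multiple of $u$, and while $\norm{w_2(t)}<1$ we have $v_1(t)=-\widehat{w}_2(t)$, so a short computation gives $\frac{d}{dt}b(t)=2\,\langle\widehat{w}_2(t),u\rangle$ and hence $\frac{d}{dt}b(t)\le 2\,\lvert(w_2(t))_u\rvert/\norm{w_2(t)}$, where $(\cdot)_u$ denotes the component along $u$. Two facts then close the argument. (i) $\norm{w_2(t)}$ is bounded below by a positive constant: while $\norm{w_2}<1$ one has $\frac{d}{dt}\norm{w_2}=\langle\widehat{w}_2,v_2\rangle+1\ge 0$, so $\norm{w_2}$ is nondecreasing until it reaches $1$ and is afterwards kept at $1$ by $r_1$, while $\norm{w_2(0)}$ is a constant bounded away from $0$ by the definition of a \thetaV{}. (ii) The shape invariant $\lvert(w_2(t))_u\rvert=O(b(t)/n)$. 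Granting (i) and (ii), $\frac{d}{dt}b(t)\le\frac{c}{n}\,b(t)$ for as long as (ii) holds, which is exactly what is needed. The regime $\norm{w_2(t)}=1$, where $r_1$ only adjusts its speed to keep $\norm{w_2}=1$, is handled the same way: there $v_1$ is forced to agree with $v_2$ up to a multiple of $\widehat{w}_2$, and its $u$--component is therefore still $O(b/n)$ by (ii) together with a bound on $(v_2)_u$.

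The heart of the proof is the shape invariant (ii). Intuitively $b(t)=\sum_i(w_i(t))_u$ is spread over $\Omega(n)$ vectors near $r_1$ (and symmetrically near $r_n$), so each of them -- $w_2$ in particular -- carries only an $O(1/n)$ fraction of it; in the initial \thetaV{} this is exact, as all $(w_i)_u$ are equal. To keep it true I would show that \naivemaxmob/ straightens the chain only ``from the outside inwards'': at every time there is a prefix $r_1,\dots,r_{k(t)}$ with $k(t)=\Omega(n)$ that is still close to the (nearly $u$--orthogonal) initial arc and whose vectors have almost equal, small $u$--components; their sum being at most $b(t)$, each is $O(b(t)/n)$, and $w_2$ is one of them. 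One argues that the ``straightening front'' needs $\Omega(n)$ rounds even to leave $r_1$ -- the same $\Omega(n)$ phenomenon that underlies the lower bounds for continuous \chainForm/, and where the rule governing inner robots (that $r_i$ moves only once an incident vector has length $1$ or its angle $\alpha_i$ has dropped below $\psi$) is used -- and afterwards advances at speed $O(1)$, so $k(t)$ stays $\Omega(n)$ until $b$ has already grown to $\Omega(n)$; and that whenever a robot of the prefix does move, its net displacement along $u$ before it becomes collinear with its neighbours is only $O(b/n)$, so the prefix stays thin.

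I expect the main obstacle to be precisely this invariant: controlling the shape of the chain through all of the strategy's regime switches -- free outer robot versus distance--$1$ outer robot, inactive versus bisector--moving inner robots, and the final collinear ``move to the midpoint'' phase -- and pinning down the $\Omega(n)$ delay with which the straightening front leaves the endpoints, given that the robots of a \thetaV{} are initially clustered unevenly along a thin arc. Once (i) and (ii) are in place, the exponential bound on $b$, and hence the claimed runtime $\Omega(n\log(1/\delta))$, follow from the one--line differential inequality of the second paragraph.
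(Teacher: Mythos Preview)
Your overall target is right: bound $b'(t)\le \tfrac{c}{n}\,b(t)$ and integrate. But you are missing the one observation that collapses the argument to two lines, and your plan for the key invariant (ii) rests on an incorrect picture of the dynamics.

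In a \thetaV{} every $\norm{w_i}=1$, so from the very first instant \emph{all} robots move: the apex moves at speed $1$ along the symmetry axis, every collinear inner robot is already at the midpoint of its neighbours and simply tracks it, and each outer robot (being at distance~$1$ to its neighbour) moves at the reduced speed $\cos(\theta/2)\le 1$ along its leg so as to keep $\norm{w_2}=\norm{w_n}=1$. Consequently the configuration stays a \thetaV{} for all $t$: both legs remain straight segments of length $\lfloor n/2\rfloor$ with unit-length vectors, and only the apex angle $\theta(t)$ evolves. There is no ``straightening front'' that has to leave $r_1$, no regime switching between $\norm{w_2}<1$ and $\norm{w_2}=1$, and no prefix whose shape you need to control. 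Your invariant~(ii) is then immediate and exact, $(w_2)_u=\sin(\theta/2)=b/(2\lfloor n/2\rfloor)$, but you never need it in that form.

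The paper's proof exploits this self-similarity directly. Using \Cref{lemma:continuousDistances} and the law of sines on the isosceles triangle $r_1,r_\ell,r_n$ one gets the closed-form rate
\[
\deriv{\Delta_{1,n}}{t}=\sin\theta(t)=\frac{\Delta_{1,n}(t)\cos(\theta(t)/2)}{\lfloor n/2\rfloor}\le \frac{2}{n}\,\Delta_{1,n}(t),
\]
so $b$ at most doubles every $\Omega(n)$ time units and reaching $\Omega(1)$ from $\delta$ takes $\Omega(n\log(1/\delta))$. Once you notice the shape is preserved, this replaces the whole machinery of your (i)--(ii) and the front-tracking argument; the latter, as sketched, would not go through anyway because the inner robots are never static in a \thetaV{}.
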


So, slowing down the outer robots actually allows us to achieve a runtime independent of the initial configuration in the continuous case.
Can we apply the same idea in the discrete model?
Unfortunately not!
Consider the algorithm $(1-\tau)$-\textnormal{\AName{}} that behaves as \AName{} except that the outer robots move always by a distance of \( (1-\tau) \) times the distance they would in \AName{}.
Similar to \deltaUConfig{}s for \AName{}, there exists the class of \taudeltaUConfig{}s in which $(1-\tau)$-\textnormal{\AName{}} has a runtime depending on $\delta$.

\begin{restatable}{theorem}{theoremOneMinusTauGtmLower} \label{theorem:2-dimensions:tau-lower-bound}
	Starting in a \taudeltaUConfig{}, $(1-\tau)$-\textnormal{\AName{}} needs at least $\Omega\bigl(n^2 \cdot \max \{ \log (1/\varepsilon), \log  (1/\delta)\} \bigr)$ rounds
	to a\-chieve an $\varepsilon$-approximation of the \maxChain{}.
\end{restatable}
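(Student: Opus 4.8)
The plan is to mirror the structure of the lower-bound argument for \deltaUConfig{}s (\Cref{theorem:2-dimensions:general-lower-bound}), tracking how the rescaled outer-robot movement interacts with the width parameter $\delta$. First I would fix an even $n$ and a \taudeltaUConfig{} — the natural candidate is $w_i(0) = \bigl(\tfrac{\delta}{n-1},\, 1-\tfrac{2(i-1)}{n}\bigr)^T$, i.e.\ the same V-shape as in \Cref{definition:2-dimensions:delta-u-configurations} but with the outer vectors flagged so that $(1-\tau)$-\maxgtm/ applies the shortened outer step. Because the $x$- and $y$-coordinates of the update rule decouple in this one-sided-symmetric family (the inner robots do coordinate-wise midpoint averaging, and the outer-robot renormalization acts on each coordinate in a controlled way when the chain stays close to vertical), I would analyze the evolution of the \emph{width} — the common $x$-offset, call it $\delta(t)$ — essentially independently of the $y$-dynamics. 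The key point is that as long as the configuration remains near-vertical, the chain cannot reach an $\varepsilon$-approximation of the \maxChain{} until the width has grown from $\delta$ to a constant (the max-chain requires $\norm{w_i} > 1-\varepsilon$ with all $w_i$ aligned, forcing the horizontal spread up to $\Theta(n)$ and in particular each $x$-component up to $\Theta(1)$).

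The core estimate is then: starting from width $\delta$, how many rounds does $(1-\tau)$-\maxgtm/ need to inflate the width to a constant? Here I would show the width grows by at most a bounded multiplicative factor per round. The inner robots' midpoint rule is a (sub)stochastic averaging step that cannot increase the maximum $x$-component; only the two outer robots push the width outward, and even they push by at most $(1-\tau)\cdot\norm{\widehat{w}_2}$-worth of horizontal displacement, which when $\norm{w_2(t)}$ is already close to $1$ and the chain is near-vertical contributes only $O(\delta)$ to the horizontal spread (the horizontal component of a near-vertical unit vector is tiny). More carefully, when the $x$-components are all $O(\delta)$ and the $y$-components are $\Theta(1)$, the normalization $\widehat{w}_2 = w_2/\norm{w_2}$ has $x$-component $\Theta(\delta)$, so the outer robot's horizontal motion is $\Theta(\delta)$ per round; combined with the averaging this gives $\delta(t+1) \le (1 + c/n)\,\delta(t)$ for a suitable constant $c$ — the $1/n$ because the outer push is diluted across the chain, exactly as in the $\Theta(n^2)$ phenomenon. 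Iterating, $\delta(t) \le \delta \cdot e^{ct/n}$, so reaching width $\Omega(1)$ requires $t = \Omega(n\log(1/\delta))$. Since the outer robots only add width at rate governed by how fast the near-vertical structure relaxes, and that relaxation is itself the $\Theta(n^2\log(\cdot))$ \maxgtm/ process, I would combine this with the intrinsic $\Omega(n^2\log(1/\varepsilon))$ bound inherited from \Cref{theorem:1-dimension:lower-bound-opposed-configs}-style Markov-chain mixing on the $y$-coordinates (which the $\tau$-rescaling does not speed up — it only slows the outer rows, i.e.\ makes the corresponding transition matrix entries even smaller) to obtain $\Omega\bigl(n^2\max\{\log(1/\varepsilon), \log(1/\delta)\}\bigr)$. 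The factor $n^2$ rather than $n$ on the $\log(1/\delta)$ term comes from noticing the width-inflation clock runs at rate $1/n$ \emph{and} each ``effective'' width-doubling is gated by a full $\Theta(n)$-round relaxation of the vertical profile, so the two $n$'s compound; alternatively one argues directly that the relevant eigenvalue of the linearized $x$-dynamics is $1 + \Theta(1/n^2)$, giving $\Omega(n^2\log(1/\delta))$ in one stroke.

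I expect the main obstacle to be rigorously justifying the decoupling and the near-vertical invariant: one must show that over the $\Omega(n^2\log(1/\delta))$ rounds in question the configuration genuinely stays close enough to a vertical V-shape that the linear/coordinate-wise approximation of the outer-robot normalization is valid, and that lower-order cross terms between the $x$- and $y$-evolutions do not accidentally accelerate the width growth. This is the same delicate point underlying \Cref{theorem:2-dimensions:general-lower-bound}, so I would reuse that machinery — in particular the fact that \maxgtm/ (and hence its $(1-\tau)$-variant) contracts the $y$-profile toward the marching/straight shape monotonically in an appropriate potential (a $\phi_2$-type or Fourier-type argument), keeping the chain thin — and then carry the $\tau$-factor through the constants, observing at each step that replacing $1$ by $(1-\tau)$ in the outer rows can only \emph{decrease} the rate of width growth and the relevant eigenvalue gap, hence can only help the lower bound. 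The remaining work is bookkeeping: confirming the $\varepsilon$-approximation criterion is not met before width $\Omega(1)$ and assembling the $\max\{\cdot,\cdot\}$ from the two independent obstructions.
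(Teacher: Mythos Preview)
Your overall plan---bound the per-round multiplicative growth of the horizontal width and combine with the one-dimensional mixing lower bound---matches the paper's, but two concrete points diverge and the first is a genuine gap.

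You start from the wrong configuration. A \taudeltaUConfig{} is \emph{not} the plain \deltaUConfig{} of \Cref{definition:2-dimensions:delta-u-configurations}; it is defined separately in \Cref{definition:2-dimensions:tau-delta-u-configurations} with $w_2(0)=\bigl(\tfrac{\delta}{n-1},\,\tfrac{1-\tau}{1-\tau+2/(n-2)}\bigr)$ and the remaining $w_i$ as specific scalar multiples of $w_2$. The $y$-component is tuned so that at $\delta=0$ the configuration is the fixed point (the $(1-\tau)$-analogue of the marching chain) of $(1-\tau)$-\AName{}. This is what makes the paper's monotonicity lemma (\Cref{lemma:2-dimension:tau-lower-bound-monotonically-increasing-outer-vectors}) go through, and with it the clean bound of \Cref{lemma:maxGtmTauLower}: since $\norm{w_2(t)}\ge\norm{w_2(0)}\ge\tfrac{1-\tau}{1-\tau+2/(n-2)}$, one gets $\tfrac{\tau}{2}+\tfrac{1-\tau}{2\norm{w_2(t)}}\le\tfrac{1}{2}+\tfrac{1}{n-2}$ and hence $x_2(t{+}1)\le\bigl(1+\tfrac{1}{n-2}\bigr)x_2(t)$---the $\tau$ cancels exactly and one recovers the same recursion as in \Cref{lemma:2-dimensional-lower-bound-x}. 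With your choice $y_2=1-2/n$, under $(1-\tau)$-\AName{} a direct computation gives $\norm{w_2(1)}<\norm{w_2(0)}$, so monotonicity fails and you cannot bound the normalization factor $1/\norm{w_2(t)}$ without separately tracking the evolving $y$-profile. That is precisely the ``delicate near-vertical invariant'' you flag as the main obstacle, and the paper sidesteps it entirely by choosing the right initial data rather than by any decoupling or invariant argument.

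Second, your route from the per-round growth bound to the $n^2$ factor is not a proof: neither the ``gated by a full $\Theta(n)$-round relaxation'' heuristic nor the asserted eigenvalue $1+\Theta(1/n^2)$ is established. The paper does not argue any such compounding; it simply invokes \Cref{lemma:maxGtmTauLower} for the $\log(1/\delta)$ contribution and the one-dimensional lower bound for the $\log(1/\varepsilon)$ contribution, then takes the maximum.
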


\bibliographystyle{splncs04}
\bibliography{DBLPelizer,stretching}

\newpage
\appendix

\section{Omitted Proofs for \maxgtm/} \label{section:appendixProof}

\subsection{Preliminary Theorems} \label{section:discretePrelim}

\begin{theorem}[\cite{levin2017markov,2013arXiv1310.8021J}] 	\label{theorem:1-dimension:lower-bound-mixing-time}
	Let $P$ be the transition matrix of a reversible, irreducible and aperiodic Markov Chain over a state space of size $n$.
	Furthermore, let $\pi_{min}(P)$ be the smallest entry of its stationary distribution $\pi(P)$ and $\lambda_{2}$ the second largest absolute eigenvalue of $P$.  Then, it holds
	$
	\left(\frac{1}{1-\lambda_{2}} -1 \right) \cdot \frac{1}{\log (2 \varepsilon)} \leq t_{mix}(\varepsilon) \leq 	\left(\frac{1}{1-\lambda_{2}} -1 \right) \cdot \ln \left(\frac{1}{\varepsilon \cdot \pi_{min}(P)}\right)
	$.
	The lower bound even holds in case the Markov Chain is not irreducible and aperiodic, it must only converge to a unique stationary distribution.
\end{theorem}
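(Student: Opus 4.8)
The plan is to derive both inequalities from the spectral decomposition of $P$; this is the classical spectral-gap argument for mixing times (see \cite{levin2017markov}), and I will sketch the two directions and indicate where the stated constants come from. Since the chain is reversible with respect to $\pi$, the transition operator is self-adjoint on $\ell^2(\pi)$ with inner product $\langle f,g\rangle_\pi := \sum_x \pi(x) f(x) g(x)$, so it has real eigenvalues $1 = \lambda_1 > \lambda_2 \ge \dots \ge \lambda_n \ge -1$ together with an orthonormal eigenbasis $f_1 \equiv 1, f_2, \dots, f_n$. Writing $\lambda_\star := \max\{|\lambda_2|, |\lambda_n|\}$ (which equals $\lambda_2$ for the non-negative stochastic matrices arising in our application) and using the standard expansion $P^t(x,y)/\pi(y) = \sum_{j=1}^n \lambda_j^t f_j(x) f_j(y)$ together with the identity $\sum_{j=1}^n f_j(x)^2 = 1/\pi(x)$ obtained by setting $y=x$, everything reduces to elementary estimates.

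For the \emph{upper} bound I would subtract the $j=1$ term and estimate in $\ell^2(\pi)$,
\[
\Bigl\| \tfrac{P^t(x,\cdot)}{\pi(\cdot)} - 1 \Bigr\|_{2,\pi}^2 = \sum_{j=2}^n \lambda_j^{2t} f_j(x)^2 \le \lambda_\star^{2t} \sum_{j=1}^n f_j(x)^2 = \frac{\lambda_\star^{2t}}{\pi(x)},
\]
then pass to total variation via Cauchy--Schwarz, $2\|P^t(x,\cdot)-\pi\|_{TV} \le \bigl\|\tfrac{P^t(x,\cdot)}{\pi}-1\bigr\|_{2,\pi} \le \lambda_\star^t/\sqrt{\pi_{min}}$, and solve $\lambda_\star^t/\sqrt{\pi_{min}} \le 2\varepsilon$ for $t$, using $-\log\lambda_\star \ge 1-\lambda_\star$ to turn $1/\log(1/\lambda_\star)$ into $1/(1-\lambda_\star)$. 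This yields a bound of the advertised shape $\tfrac{1}{1-\lambda_2}\cdot\ln\!\big(\tfrac{1}{\varepsilon\,\pi_{min}}\big)$; matching the precise constant in the statement is a routine exercise in absorbing the square root and lower-order factors, and can otherwise be quoted verbatim from \cite{levin2017markov}.

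For the \emph{lower} bound I would take an eigenfunction $f$ for an eigenvalue $\lambda$ with $|\lambda| = \lambda_\star$, scale it so that $\|f\|_\infty = 1$, and (flipping the sign) pick $x$ with $f(x) = 1$. Interchanging summation in $\sum_y \pi(y)(Pf)(y)$ and using $\pi P = \pi$ shows $\langle \pi, Pf\rangle = \langle\pi, f\rangle$, hence $\langle\pi,f\rangle = 0$ since $\lambda \ne 1$; note this uses only stationarity of $\pi$, not reversibility. Therefore
\[
\lambda_\star^t = |\lambda^t f(x)| = |(P^t f)(x)| = \Bigl| \sum_y \bigl(P^t(x,y) - \pi(y)\bigr) f(y) \Bigr| \le \sum_y |P^t(x,y) - \pi(y)| = 2\,\|P^t(x,\cdot) - \pi\|_{TV},
\]
and evaluating at $t = t_{mix}(\varepsilon)$ gives $\lambda_\star^{t_{mix}(\varepsilon)} \le 2\varepsilon$, i.e.\ $t_{mix}(\varepsilon) \ge \log(1/(2\varepsilon))/\log(1/\lambda_\star)$; the elementary inequality $-\log x \le 1/x-1$ then turns $1/\log(1/\lambda_\star)$ into $\lambda_\star/(1-\lambda_\star) = \tfrac{1}{1-\lambda_\star}-1$, which is exactly the claimed factor. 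The step I expect to need the most care is the asserted extension of this lower bound to chains that merely converge to a unique stationary distribution (not necessarily irreducible and aperiodic): there $P$ need not be diagonalizable and the second-largest-modulus eigenvalue may be complex, so I would run the argument above on a real eigenfunction obtained from the real or imaginary part of a (possibly generalized) eigenvector for $|\lambda_2|$, or simply cite the corresponding statement in \cite{levin2017markov,2013arXiv1310.8021J}; the orthogonality computation and the TV estimate carry over unchanged.
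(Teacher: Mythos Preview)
The paper does not prove this statement at all: it is listed under ``Preliminary Theorems'' and simply cited from \cite{levin2017markov,2013arXiv1310.8021J} as a black box, with no argument given. So there is nothing to compare your approach against; what you have written is essentially the standard spectral-gap proof one finds in those references (eigenbasis expansion plus Cauchy--Schwarz for the upper bound, eigenfunction test against total variation for the lower bound), and it is correct in outline. Your caution about the extension to merely-convergent chains is well placed, since the clean eigenfunction argument does rely on having a real eigenvalue of modulus $\lambda_\star$; in the paper's actual application, however, the relevant matrix $A_1$ is explicit with known real spectrum (\Cref{lemma:1-dimension:lower-bound-matrix-properties}), so this subtlety never bites.
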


\begin{theorem}[Lemma 2 in \cite{conf/ifip10/DyniaKLH06}] \label{theorem:preliminaries:powers-of-substochastic-matrices}
	For any irreducible, symmetric, substochastic $n \times n$ matrix P with pairwise distinct eigenvalues and any $i, j$ we have
	$P^k[i,j] \leq n \cdot \alpha \cdot \beta ^k$
	where $\beta$ is the largest absolute eigenvalue of $P$ and
	$\alpha = \max_{i,j,i',j'} |x_j[i] \cdot x_{j'}[i']|$ with $x_j$ denoting the
	$j$-th eigenvector of $P$.
\end{theorem}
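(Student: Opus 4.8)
The plan is to reduce everything to the spectral decomposition of $P$, since all the structural hypotheses except symmetry (and, at the very end, nonnegativity) turn out to be inessential for this particular bound. First I would invoke the spectral theorem: as $P$ is a real symmetric $n\times n$ matrix, it admits an orthonormal basis of eigenvectors $x_1,\dots,x_n$ with real eigenvalues $\lambda_1,\dots,\lambda_n$, so that $P=\sum_{m=1}^{n}\lambda_m\, x_m x_m^{T}$. The assumption of pairwise distinct eigenvalues is used only to pin down each $x_m$ up to sign, which is what makes the quantity $\alpha$ unambiguous; it plays no further role. Using orthonormality ($x_m^{T} x_{m'}=1$ if $m=m'$ and $0$ otherwise), a one-line induction shows that all mixed terms vanish, so
\[
P^{k}=\sum_{m=1}^{n}\lambda_m^{k}\, x_m x_m^{T},
\qquad\text{hence}\qquad
P^{k}[i,j]=\sum_{m=1}^{n}\lambda_m^{k}\, x_m[i]\, x_m[j].
\]

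Second, I would bound this sum termwise. For every $m$ we have $|\lambda_m|^{k}\le\beta^{k}$, because $\beta$ is by definition the largest absolute eigenvalue of $P$; and $|x_m[i]\cdot x_m[j]|\le\alpha$, because $\alpha=\max_{i,j,i',j'}|x_j[i]\cdot x_{j'}[i']|$ is the maximum product of two entries drawn from any two eigenvectors, so in particular it dominates the case in which both factors come from the single eigenvector $x_m$. Summing the $n$ terms yields $|P^{k}[i,j]|\le n\,\alpha\,\beta^{k}$. Finally, since $P$ is substochastic it is entrywise nonnegative, hence $P^{k}$ is entrywise nonnegative, so $P^{k}[i,j]=|P^{k}[i,j]|\le n\,\alpha\,\beta^{k}$, which is the claim.

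There is essentially no hard step here — the statement is a routine consequence of the spectral theorem, and irreducibility does not enter the inequality itself (it matters only elsewhere: for an irreducible substochastic matrix Perron--Frobenius gives $\beta<1$, which is what turns the bound into a geometric decay and makes it useful in the runtime analysis). The only points that deserve a word of care are matching the overloaded indexing in the definition of $\alpha$ (there the letters $i,j$ index eigenvectors, not matrix rows and columns) and the final step of dropping the absolute value via nonnegativity of the powers; I would spell out both explicitly so that the argument reads cleanly.
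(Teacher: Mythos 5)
Your proof is correct: the spectral decomposition $P^k[i,j]=\sum_{m}\lambda_m^k\,x_m[i]\,x_m[j]$ followed by termwise bounding is exactly the standard argument for this lemma, and your side remarks (distinct eigenvalues only making $\alpha$ well-defined, irreducibility only mattering for $\beta<1$ downstream) are accurate. Note that the paper itself imports this statement from \cite{conf/ifip10/DyniaKLH06} without proof, so there is nothing to compare against beyond the original source, whose argument is the same spectral one you give; the only implicit convention worth stating is that the $x_j$ are taken orthonormal, which your decomposition already assumes.
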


\begin{restatable}{theorem}{theoremMaxGtmPrelimDiscreteFourier}
	\label{theorem:preliminaries:discrete-sine-transformations}
	Let \( t \) be a time step and \( z_{1}(t), \dots, z_{n+1}(t) \) be real values with \( z_{1}(t) = z_{n+1}(t) = 0 \).
	Let \( \phi \) be a potential function with \( \textbf{1)} \phi (t) = \sum_{i=1}^{n+1} z_{i}(t)^2 \) and
	\( \textbf{2)}\, \phi (t+1) = \sum_{i=2}^{n} \left(\frac{z_{i-1}(t) + z_{i+1}(t)}{2}\right)^2 \).
	After \( t \in \mathcal{O} (n^2 \log (x)) \) rounds for \( x\in \mathbb{R}, \,  \phi (t) \leq \frac{1}{x^{2}} \phi (0). \)

\end{restatable}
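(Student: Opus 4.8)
## Proof proposal for Theorem~\ref{theorem:preliminaries:discrete-sine-transformations}

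The plan is to diagonalize the linear averaging operator using the discrete sine transform and then track how each Fourier mode decays. First I would observe that the update rule in condition \textbf{2)} is linear in the vector $z(t) = (z_1(t),\dots,z_{n+1}(t))^T$ with the boundary conditions $z_1(t) = z_{n+1}(t) = 0$ fixed for all $t$; effectively the state is the interior vector $(z_2(t),\dots,z_n(t))^T \in \mathbb{R}^{n-1}$, and the map is $z_i(t+1) = \tfrac12 z_{i-1}(t) + \tfrac12 z_{i+1}(t)$, i.e. multiplication by the tridiagonal matrix $M$ with $0$ on the diagonal and $\tfrac12$ on the off-diagonals. The key classical fact is that the eigenvectors of this $(n-1)\times(n-1)$ matrix are exactly the discrete sine vectors $v^{(k)}_j = \sin\bigl(\tfrac{\pi k j}{n}\bigr)$ for $k = 1,\dots,n-1$ (with $j$ indexing interior coordinates $1,\dots,n-1$; equivalently one may index by $j = 2,\dots,n$ and use $\sin(\pi k (j-1)/n)$), with eigenvalues $\mu_k = \cos\bigl(\tfrac{\pi k}{n}\bigr)$. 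I would verify this by the standard trigonometric identity $\sin(\theta - \phi) + \sin(\theta+\phi) = 2\sin\theta\cos\phi$, which makes the averaging step act diagonally, and note the boundary values vanish automatically since $\sin(0) = \sin(\pi k) = 0$.

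Next I would use the orthogonality of the sine vectors. Writing $z(0) = \sum_{k=1}^{n-1} c_k v^{(k)}$ in this orthogonal basis, we get $z(t) = \sum_k c_k \mu_k^t v^{(k)}$, and by Parseval (the $v^{(k)}$ are mutually orthogonal with common norm $\sqrt{n/2}$),
\begin{equation*}
\phi(t) = \|z(t)\|^2 = \frac{n}{2}\sum_{k=1}^{n-1} c_k^2 \mu_k^{2t} \leq \Bigl(\max_k |\mu_k|^{2t}\Bigr)\cdot \frac{n}{2}\sum_{k=1}^{n-1} c_k^2 = \Bigl(\max_k |\mu_k|\Bigr)^{2t}\,\phi(0).
\end{equation*}
Here I should be slightly careful about the indexing: $\phi(0) = \sum_{i=1}^{n+1} z_i(0)^2$ in condition \textbf{1)} but since $z_1(0) = z_{n+1}(0) = 0$ this equals the squared norm of the interior vector, so the Parseval computation applies directly. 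The dominant eigenvalue in absolute value is $|\mu_1| = |\mu_{n-1}| = \cos(\pi/n)$, so $\phi(t) \le \cos^{2t}(\pi/n)\,\phi(0)$.

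Finally I would turn the bound $\cos^{2t}(\pi/n) \le 1/x^2$ into a statement about $t$. Using $\cos(\pi/n) \le 1 - \tfrac{c}{n^2}$ for some absolute constant $c > 0$ (from the Taylor expansion $\cos\theta \le 1 - \theta^2/2 + \theta^4/24$ and $\theta = \pi/n \le \pi$, so $\cos(\pi/n) \le 1 - \tfrac{\pi^2}{2n^2} + \tfrac{\pi^4}{24 n^4} \le 1 - \tfrac{1}{n^2}$ say for $n \ge 2$), together with $\ln(1-y) \le -y$, we get $\cos^{2t}(\pi/n) \le \exp(-2t/n^2)$. Requiring $\exp(-2t/n^2) \le 1/x^2 = \exp(-2\ln x)$ gives $t \ge n^2 \ln x$, hence $t \in \mathcal{O}(n^2 \log x)$ suffices to guarantee $\phi(t) \le \tfrac{1}{x^2}\phi(0)$. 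The only real subtlety — and the step I would be most careful about — is the eigenvalue/eigenvector identification with the correct boundary conditions and the resulting index bookkeeping between the $(n+1)$-vector in the potential and the $(n-1)$-dimensional state space; the rest is routine trigonometric estimation. (One should also note the edge case $x \le 1$ where the statement is trivial, and $x$ need only be taken $\ge 1$ for the bound to be meaningful.)
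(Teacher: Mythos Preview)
Your proposal is correct and follows essentially the same approach as the paper: diagonalize the averaging operator via the discrete sine transform, identify the dominant eigenvalue $\cos(\pi/n)$ (the paper, citing Cohen--Peleg, writes $\cos(\pi/(n+1))$, a slightly weaker but still valid bound), and convert the resulting contraction $\phi(t)\le\cos^{2t}(\pi/n)\,\phi(0)$ into $t\in\mathcal{O}(n^2\log x)$ via a Taylor estimate on the cosine. Your write-up is in fact more self-contained than the paper's, which defers the Fourier step to an external reference.
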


\begin{proof}
	The potential function \( \phi(t) \) is due to the properties $1)$ and $2)$  nearly identical to the function \( \psi(t) \) of \cite{conf/sirocco/CohenP06}.
	In particular, this allows us to apply the same discrete sine transformations and the proof of Theorem 2.2 of \cite{conf/sirocco/CohenP06} works analogously up to Equation 5 resulting in
	$\phi (t+1) \leq \cos^{2} \left(\frac{\pi}{n+1}\right)\,\phi (t) \leq \cos^{2\,t} \left(\frac{\pi}{n+1}\right)\,\phi (0)$ for any time step \( t \).
	Note that the difference in the denominator comes from the fact that we consider \( n \) moving robots in contrast to the \( N - 2 \) moving robots assumed in \cite{conf/sirocco/CohenP06}.
	Next, using standard methods and Taylor's theorem, we can derive that for \( y \in \mathbb{R} \) and \( 0 \leq y \leq \frac{\pi}{2} \):
	$\cos (y) \leq \left(1-\frac{y^{2}}{4}\right)^2$.
	Setting \( t = \left\lceil \frac{(n+1)^2}{\pi^2}\,\ln (x^{2}) \right\rceil \in \mathcal{O}(n^2 \log (x^{2})) \) then yields
	$\phi (t) \leq \frac{1}{x^{2}} \phi (0).$

\end{proof}

\subsection{One-Dimensional Analysis} \label{appendix:discreteOneDim}

For the analysis, we assume that the robots are distributed on the x-axis of a two-dimensional Cartesian coordinate system (not known to the robots).
We divide the analysis into two parts based on the initial configuration of the robots.
For \opposedConfig{}s, we assume w.l.o.g.\ that $r_1$ moves to the left (into negative direction) and $r_{n}$ moves to the right (into positive direction).
Define $d$ to be $-1$ for \opposedConfig{}s and $1$ for \marchingConfig{}s.
Then, the formulas for $p_i(t+1)$ simplify for one dimension to
$p_1(t+1)    = \half p_1(t) + \half p_2(t) + \half d$ and  $p_{n}(t+1) = \half p_{n-1}(t) + \half p_{n}(t)+ \half$.
Similarly, the vectors $w_i(t)$ are one-dimensional vectors here and the equations for $w_2(t+1)$ and $w_{n}(t+1)$ can be expressed as  $w_2(t+1)    = \half w_3(t)  -  \half d$ and
$w_{n}(t+1) = \half w_{n-1}(t) + \half$.

\lemmaMaxGtmSwitchOpposedMarching*

\begin{proof}
	Consider an outer robot, for example \( r_{1} \).
	Assume w.l.o.g. that initially \( p_{1}(t) < p_{2}(t) \) and that \( r_{1} \) is passed by \( r_{2} \) at time \( t + 1 \).
	Then,
	$
	p_{1}(t+1) > p_{2}(t+1)
	\Leftrightarrow \frac{1}{2} p_{1}(t) + \frac{1}{2} p_{2}(t) - \frac{1}{2} > \frac{1}{2} p_{1}(t) + \frac{1}{2} p_{3}(t)  \Leftrightarrow p_{2}(t) - 1 > p_{3}(t).
	$
	This implies that the distance between the two neighbored robots \( r_{2} \) and \( r_{3} \) was greater than 1 at time \( t \) which cannot be.
	Using similar arguments for the other cases, \Cref{lemma:1-dimension:orientation-of-outer-robots} follows.
\end{proof}

\theoremMaxGtmMarchingSymmetry*

\begin{proof}
	Assume that there is a deterministic strategy \( s \) that transforms an initial marching configuration into a \maxChain{}.
	Then \( s \) must switch the marching configuration to an opposed configuration.
	Consider an initial marching configuration \( w(0) \) that is symmetric in the sense that the local situation of any robot \( r_{i} \) is equivalent to the local situation of robot \( r_{n+1-i} \) for each \( 1 \leq i \leq n  \).
	Assume that \( s \) transforms \( w(0) \) into an \opposedConfig{} at time step \( t \).
	Thus, \( r_{1} \) changes its direction at time \( t \).
	Since the configuration is symmetric, \( s \) is deterministic, and the robots are anonymous, \( r_{n} \) changes its direction at time \( t \) as well.
	Then \( w(t+1) \) is still a marching configuration which poses a contradiction.
\end{proof}

\theoremFourier*

\stepcounter{PotentialIndex}
\begin{proof}
	We use the potential \( \phi_{1} (t) = \sum_{i=2}^{n}(1- \norm{w_i(t)})^2 \) for a time step \( t \).
	First, observe that for \( m_{i}(t) = 1 - \norm{w_i(t)} \), it holds that
	$\phi_{1}(t) = \sum_{i=2}^{n} m_{i}(t)^2$ and
	$\phi_{1}(t+1) = \sum_{i=2}^{n} \left(\frac{m_{i-1}(t) + m_{i+1}(t)}{2}\right)^2.$
	Also, since the virtual robots \( r_{0} \) and \( r_{n+1} \) are always in distance 1 to their neighbors, \( m_{1}(t) = m_{n+1}(t) = 0 \).
	\( \phi_{1}(t) \) fulfills properties $1)$ and $2)$ of \Cref{theorem:preliminaries:discrete-sine-transformations} which yields
	$\phi_{1} (t)
	\leq \left(\frac{\varepsilon}{n-1}\right)^2  \cdot \phi_{1}(0)
	\leq \frac{\varepsilon^{2}}{n-1}$ for a \( t \in \mathcal{O}(n^2 \log (n/\varepsilon)) \).
	The last inequality holds, because \( \phi_{1}(0) \) is upper bounded by \( n-1 \).
	Obviously, there cannot be any vector $w_i(t)$ fulfilling $\|w_i(t)\| < 1 - \varepsilon$ since $\Phi_1(t) \leq \frac{\varepsilon^2}{n-1}$ holds.
	Thus, for each individual vector $w_i$ it holds $\|w_i\| > 1-\varepsilon$ and thus, a $\varepsilon$-approximation of the max-chain is achieved.
\end{proof}

\theoremMaxGtmLowerBoundOpposed*

For the proof consider the strategy matrix \( S(t) \) of \AName{} as described in \Cref{section:algorithm}.
Recapitulate that for one dimension the transition function for the outer robots simplifies as described in the beginning of \Cref{appendix:discreteOneDim}.
Therefore, the respective entries in \( S(t) \) collapse for the entries \( w_{2}(t) \) and \( w_{n}(t) \).
To achieve the addition of a constant for these vectors in each executed step, we change both \( w(t) \) and \( S(t) \).
\( w(t) \) is extended to $
w'(t) = \begin{bmatrix}
	1 & w_{2}(t) & \dots & w_{n}(t)
\end{bmatrix}^{T}.$
\( S(t) \) changes to the following matrix, independent of the current time $t$:
\begin{align*}
	A_{\theMatrixIndex} & = \begin{bmatrix}
		1      & 0      & 0      & 0      & 0      & 0      & \dots  & 0      & 0      \\
		\half  & 0      & \half  & 0      & 0      & 0      & \dots  & 0      & 0      \\
		0      & \half  & 0      & \half  & 0      & 0      & \dots  & 0      & 0      \\
		\vdots & \vdots & \vdots & \vdots & \vdots & \vdots & \vdots & \vdots & \vdots \\
		\half  & 0      & 0      & 0      & 0      & 0      & \dots  & \half  & 0
	\end{bmatrix}.
\end{align*}
Similar to before the behavior of \AName{} in one step is described by \( w'(t+1) = A_{\theMatrixIndex} \cdot w'(t) \).
Furthermore, \( A_{\theMatrixIndex} \) is stochastic and can also be interpreted as a Markov Chain with \( n \) states, where the first state is absorbing.
Similar to \cite{conf/spaa/KlingH11}, the following eigenvalues can be derived:
\begin{restatable}{lemma}{lemmaMaxGtmEigenvaluesLowerBound}
	\label{lemma:1-dimension:lower-bound-matrix-properties}
	\( A_{\theMatrixIndex} \) has the eigenvalues
	$\lambda_{j+1} = \cos \left(\frac{j \pi}{n}\right)$  for $j = 0, \dots, n - 1$ and the unique stationary distribution
	$\pi(A_{1}) = \begin{bmatrix}
		1 & 0 & \dots & 0 \end{bmatrix}$.

\end{restatable}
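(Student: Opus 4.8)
The plan is to read the spectrum off the block structure of $A_{1}$ rather than attacking its characteristic polynomial directly. Since the first row of $A_{1}$ equals $(1,0,\dots,0)$, the matrix is block lower triangular,
\[
A_{1}=\begin{bmatrix} 1 & \mathbf{0}^{T}\\ c & B\end{bmatrix},
\]
where $c=(\tfrac12,0,\dots,0,\tfrac12)^{T}\in\mathbb{R}^{n-1}$ collects the column-$1$ entries of rows $2,\dots,n$, and $B$ is the $(n-1)\times(n-1)$ submatrix indexed by $\{2,\dots,n\}$. Inspecting the remaining rows of $A_{1}$ — in particular using that the wrap-around entry $A_{1}[n,1]=\tfrac12$ is precisely the entry that is removed when we delete the first column — one checks that $B$ is the symmetric tridiagonal matrix with zero diagonal and all off-diagonal entries equal to $\tfrac12$, i.e.\ $B=\tfrac12 M$ where $M$ is the adjacency matrix of the path on $n-1$ vertices. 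Block triangularity then gives $\det(A_{1}-\lambda I)=(1-\lambda)\det(B-\lambda I)$, so $\mathrm{spec}(A_{1})=\{1\}\cup\mathrm{spec}(B)$.

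Next I would invoke (or re-derive with the standard ansatz $x_{k}=\sin(k\theta)$, whose boundary condition forces $\sin(n\theta)=0$) the classical fact that $M$ has eigenvalues $2\cos(j\pi/n)$ for $j=1,\dots,n-1$; hence $B$ has eigenvalues $\cos(j\pi/n)$, $j=1,\dots,n-1$. Together with the eigenvalue $1=\cos 0$ contributed by the $1\times1$ block, this is exactly the claimed list $\lambda_{j+1}=\cos(j\pi/n)$, $j=0,\dots,n-1$. These $n$ numbers are pairwise distinct because $\cos$ is strictly decreasing on $[0,\pi)$ and $j\pi/n\in[0,\pi)$, so they exhaust the $n$ eigenvalues of $A_{1}$.

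For the stationary distribution, observe that $e_{1}^{T}A_{1}$ is simply the first row of $A_{1}$, namely $e_{1}^{T}$, so $\pi=(1,0,\dots,0)$ is stationary. Uniqueness is routine Markov-chain bookkeeping: state $1$ is absorbing and is reachable from every other state $i$ — directly for $i\in\{2,n\}$ (since $A_{1}[i,1]=\tfrac12>0$) and otherwise along $i\to i-1\to\cdots\to 2\to 1$ via the entries $A_{1}[i,i-1]=\tfrac12$ — so $\{1\}$ is the unique closed communicating (hence the unique recurrent) class, which forces the stationary distribution to be unique and equal to $e_{1}$. The only step that needs genuine care is the very first one: verifying that excising the first row and column of $A_{1}$ yields the clean tridiagonal matrix $B$, and not a boundary-perturbed variant, since this is exactly what makes the whole spectrum readable for free; the remaining ingredients are a known path-graph eigenvalue formula and elementary Markov chain theory.
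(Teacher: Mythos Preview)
Your proof is correct. The paper does not give a self-contained argument for this lemma; it merely states that the eigenvalues ``can be derived'' similarly to the cited reference~\cite{conf/spaa/KlingH11}, so your block-triangular decomposition together with the standard path-graph spectrum and the absorbing-chain uniqueness argument is a complete and clean replacement for what the paper leaves to an external source.
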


\begin{proof}[Proof of \Cref{theorem:1-dimension:lower-bound-opposed-configs}]
	Observe that by \Cref{lemma:1-dimension:lower-bound-matrix-properties}, the eigenvalues of \( A_{\theMatrixIndex} \) are equivalent to the ones used in \cite{conf/spaa/KlingH11}, since we have \( k = 1 \) and consider configurations given by \( n \) vectors in contrast to \( n + 1 \) vectors assumed in \cite{conf/spaa/KlingH11}.
	Due to \cite[Theorem 5]{conf/spaa/KlingH11}, for the spectral gap of \( A_{\theMatrixIndex} \) it holds
	$\lambda_{2} \in \Theta\left(1-\frac{1}{n^2}\right).$
	Combined with \Cref{theorem:1-dimension:lower-bound-mixing-time}, we can see that the mixing time for a factor of \( 2 \varepsilon \) we need a time of at least
	$t_{\text{mix}}(2\varepsilon) \geq \left(\frac{1}{1-\lambda_{2}} - 1 \right) \cdot \frac{1}{\log 4 \varepsilon}
	\in \Omega \left(n^{2} \log \left(\frac{1}{\varepsilon}\right)\right).$
	Assume we are at time step \( t<t_{\text{mix}}(2\varepsilon) \).
	Consider \( A_{\theMatrixIndex}^{t} \) and note that we can express the configuration at step \( t \) by \( w(t) = A_{\theMatrixIndex}^{t} \cdot w(0) \).
	Due to the mixing time and the stationary distribution of \( A_{\theMatrixIndex} \) introduced in \Cref{lemma:1-dimension:lower-bound-matrix-properties}, we know that there is an \( 1 \leq i \leq n \) such that \( A_{\theMatrixIndex}^{t}[i,1] < (1-2\varepsilon) \).
	Further, \( A_{\theMatrixIndex}^{t}[i,n] \leq 1 \).
	Define the initial configuration by
	$w_{1}(0) = 1, w_{i}(0) = -0.313 \text{ for } 2 \leq i \leq n$, and $w_{n}(0) = \varepsilon.$
	Notice that \( w(0) \) is a valid \opposedConfig{}.
	Also \( w_{i}(t+1) = \sum_{j=1}^{n} A_{\theMatrixIndex}^{t}[i,j] \cdot w_{j}(0) < (1-2\varepsilon) + \varepsilon = (1-\varepsilon) \) and thus, no \( \varepsilon \)-approx\-imation of the optimal configuration has been reached.
\end{proof}

\stepcounter{MatrixIndex}
\theoremMaxGtmMarchingBounds*

For the proof of \Cref{theorem:1-dimension:upper-bound-marching-configs}, we analyze the distance a robot moves in each time step.
Define $z_i(t) := p_i(t+1) - p_i(t)$ to be the vector pointing from $p_i(t)$ to $p_i(t+1)$.
W.l.o.g., we assume that the outer robots both move in positive direction.
More formally:
$z_1(t) =  \half p_1(t) + \half p_2(t) + \half - p_1(t),
z_i(t)    = \half p_{i-1}(t) + \half p_{i+1}(t) - p_i(t)  \textnormal{for } 1 < i < n$ and
$z_{n}(t) =    \half p_{n-1}(t) + \half p_ {n}(t) + \half - p_{n}(t) $.
Note that for an arbitrary $t \in \mathbb{N}_0$, $\sum_{i=1}^{n}z_i(t) = 1$.
The corresponding equations for the next time step can be computed as follows:
$z_1(t+1)    =   \half z_1(t)  + \half z_2(t),
z_i(t+1)    = \half z_{i-1}(t) + \half z_{i+1}(t)  \textnormal{ for } 1 < i < n$ and
$z_{n}(t+1) =    \half z_{n-1}(t) + \half z_{n}(t)$.
Now, let $z(t)$ be a column vector of length $n$ whose $i$-th entry contains $z_i(t)$.
For the next time step, $z(t+1)$ can be computed as the product of the transition matrix $A_{\theMatrixIndex}$ and $z(t)$:

\begin{align*}
	z(t+1) & = \begin{bmatrix}
		\half  & \half  & 0      & 0      & 0      & 0      & \dots  & 0      \\
		\half  & 0      & \half  & 0      & 0      & 0      & \dots  & 0      \\
		0      & \half  & 0      & \half  & 0      & 0      & \dots  & 0      \\
		0      & 0      & \half  & 0      & \half  & 0      & \dots  & 0      \\
		\vdots & \vdots & \vdots & \vdots & \vdots & \vdots & \vdots & \vdots \\
		0      & 0      & 0      & 0      & 0      & 0      & \half  & \half
	\end{bmatrix} \cdot z(t) = A_{\theMatrixIndex}^t \cdot z(1)
\end{align*}

Note that $A_{\theMatrixIndex}$ can also be interpreted as the transition matrix of an aperiodic and irreducible Markov chain.
The mixing time of this matrix has already been analyzed in \cite{conf/spaa/KlingH11}.

\begin{lemma}[\cite{conf/spaa/KlingH11}]
	For $A_{\theMatrixIndex}$,
	$t_{mix}(\varepsilon) \in \Omega \left( n^2 \log \frac{1}{\varepsilon} \right)	\textnormal{ and } 	t_{mix}(\varepsilon) \in \mathcal{O} \left( n^2 \log \frac{n}{\varepsilon} \right).$

\end{lemma}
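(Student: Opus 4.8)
The plan is to reduce the statement to the spectral gap of $A_{\theMatrixIndex}$ and then plug into the general mixing-time bound \Cref{theorem:1-dimension:lower-bound-mixing-time}. First I would record the structural properties of $A_{\theMatrixIndex}$: it is the transition matrix of the random walk on the path on $n$ vertices in which every interior vertex steps to each neighbour with probability $\half$, while each of the two endpoints holds with probability $\half$ and steps inward with probability $\half$. This matrix is symmetric — in particular doubly stochastic — hence reversible with respect to its stationary distribution $\pi$, which is therefore uniform; it is irreducible because the path is connected; and it is aperiodic because of the endpoint self-loops. In particular $\pi_{min}(A_{\theMatrixIndex}) = 1/n$, so all hypotheses of \Cref{theorem:1-dimension:lower-bound-mixing-time} are met, and it remains only to identify $\lambda_2$.

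Second, I would compute the spectrum. Using the discrete-cosine ansatz $x^{(k)}_i = \cos\bigl(\tfrac{(2i-1)k\pi}{2n}\bigr)$ for $i = 1,\dots,n$ and $k = 0,\dots,n-1$, the interior equation $\half x_{i-1} + \half x_{i+1} = \lambda x_i$ and both boundary equations $\half x_1 + \half x_2 = \lambda x_1$ and $\half x_{n-1} + \half x_n = \lambda x_n$ are verified by elementary product-to-sum identities, yielding eigenvalue $\lambda_k = \cos(k\pi/n)$; this is essentially the eigenvalue computation of~\cite{conf/spaa/KlingH11}, from which the statement is quoted. Since $\lvert\cos((n-1)\pi/n)\rvert = \cos(\pi/n)$, the second largest \emph{absolute} eigenvalue is $\lambda_2 = \cos(\pi/n)$.

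Finally, a Taylor expansion gives $1-\lambda_2 = 1-\cos(\pi/n) = \tfrac{\pi^2}{2n^2} + \mathcal{O}(n^{-4}) = \Theta(n^{-2})$, so $\tfrac{1}{1-\lambda_2}-1 = \Theta(n^2)$. Substituting this together with $\pi_{min} = 1/n$ into \Cref{theorem:1-dimension:lower-bound-mixing-time} gives the upper bound $t_{mix}(\varepsilon) \leq \bigl(\tfrac{1}{1-\lambda_2}-1\bigr)\ln(n/\varepsilon) = \mathcal{O}(n^2\log(n/\varepsilon))$ and the lower bound $t_{mix}(\varepsilon) \geq \bigl(\tfrac{1}{1-\lambda_2}-1\bigr)/\log(1/(2\varepsilon)) = \Omega(n^2\log(1/\varepsilon))$, as claimed. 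The only step that needs genuine care — rather than bookkeeping — is the eigenvalue computation together with the observation that the walk is only ``barely'' aperiodic: its most negative eigenvalue $\cos((n-1)\pi/n)$ is itself within $\Theta(n^{-2})$ of $-1$, so one must work with the second largest \emph{absolute} eigenvalue, not with $1$ minus the second largest signed eigenvalue. Fortunately both coincide with $\cos(\pi/n)$ here, so the relevant spectral gap is $1-\cos(\pi/n)$ either way.
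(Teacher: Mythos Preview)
The paper does not actually prove this lemma; it simply quotes it from~\cite{conf/spaa/KlingH11}. Your self-contained argument is correct: the matrix is indeed the symmetric (hence doubly stochastic and reversible) transition matrix of the simple random walk on the $n$-path with holding probability $\tfrac12$ at each endpoint, its spectrum is $\{\cos(k\pi/n):k=0,\dots,n-1\}$ as you verify, and the resulting relaxation time $(1-\lambda_2)^{-1}=\Theta(n^2)$ together with $\pi_{\min}=1/n$ yields both bounds via \Cref{theorem:1-dimension:lower-bound-mixing-time}. One cosmetic slip: in your final sentence the lower bound should read $\bigl(\tfrac{1}{1-\lambda_2}-1\bigr)\cdot\log\!\bigl(\tfrac{1}{2\varepsilon}\bigr)$, a product rather than a quotient---the paper's own statement of \Cref{theorem:1-dimension:lower-bound-mixing-time} carries the same typo---and your stated conclusion $\Omega(n^2\log(1/\varepsilon))$ is precisely what the product gives.
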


Based on the results in \cite{conf/spaa/KlingH11}, \Cref{theorem:1-dimension:upper-bound-marching-configs} follows.

\subsection{Two-Dimensional Analysis}

\begin{restatable}{lemma}{lemmaMaxGtzmPotentialDifferenceTwoDim}\label{lemma:2-dimensions:potential-difference}

	$	\phi_{2}(t) - \phi_{2}(t+1)  \geq \quarter \sum_{i=1}^{n} \|z_{i-1}(t) - z_{i+1}(t)\|^2$

\end{restatable}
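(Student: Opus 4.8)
The plan is to compute the potential difference $\phi_2(t) - \phi_2(t+1)$ directly from the definition $\phi_2(t) = \sum_{i=1}^n \|z_i(t)\|^2$ together with the recursion $z_i(t+1) = \frac{1}{2}z_{i-1}(t) + \frac{1}{2}z_{i+1}(t)$ that governs the evolution of the displacement vectors (with the boundary convention $z_0(t) = z_1(t)$ and $z_{n+1}(t) = z_n(t)$ coming from the way the outer robots move, so that the recursion holds uniformly for $1 \leq i \leq n$). First I would substitute the recursion into $\|z_i(t+1)\|^2$ to get $\|z_i(t+1)\|^2 = \frac{1}{4}\|z_{i-1}(t)\|^2 + \frac{1}{4}\|z_{i+1}(t)\|^2 + \frac{1}{2}\langle z_{i-1}(t), z_{i+1}(t)\rangle$. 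Then using the algebraic identity $\langle a, b\rangle = \frac{1}{2}\|a\|^2 + \frac{1}{2}\|b\|^2 - \frac{1}{2}\|a-b\|^2$, rewrite this as $\|z_i(t+1)\|^2 = \frac{1}{2}\|z_{i-1}(t)\|^2 + \frac{1}{2}\|z_{i+1}(t)\|^2 - \frac{1}{4}\|z_{i-1}(t) - z_{i+1}(t)\|^2$.

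Next I would sum over $i$ from $1$ to $n$. The sum $\sum_{i=1}^n \bigl(\frac{1}{2}\|z_{i-1}(t)\|^2 + \frac{1}{2}\|z_{i+1}(t)\|^2\bigr)$ telescopes: each $\|z_j(t)\|^2$ for $1 \leq j \leq n$ appears, and one must track the boundary terms carefully. Because $z_0(t) = z_1(t)$ and $z_{n+1}(t) = z_n(t)$, the reindexing shows that $\sum_{i=1}^n \frac{1}{2}\|z_{i-1}(t)\|^2 = \frac{1}{2}\|z_1(t)\|^2 + \frac{1}{2}\sum_{j=1}^{n-1}\|z_j(t)\|^2$ and similarly $\sum_{i=1}^n \frac{1}{2}\|z_{i+1}(t)\|^2 = \frac{1}{2}\|z_n(t)\|^2 + \frac{1}{2}\sum_{j=2}^n \|z_j(t)\|^2$, and adding these gives exactly $\sum_{j=1}^n \|z_j(t)\|^2 = \phi_2(t)$. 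Hence $\phi_2(t+1) = \phi_2(t) - \frac{1}{4}\sum_{i=1}^n \|z_{i-1}(t) - z_{i+1}(t)\|^2$, i.e.\ the claimed bound holds with equality (and the lemma's "$\geq$" follows a fortiori). I should double-check whether the boundary convention in the paper's actual setup introduces a small defect term at the ends (the outer-robot update $z_1(t+1) = \frac{1}{2}z_1(t) + \frac{1}{2}z_2(t)$ is already consistent with $z_0 = z_1$), in which case the identity is clean; if instead there is a loss at the boundary, the inequality direction still goes the right way.

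The main obstacle I anticipate is the careful bookkeeping of the two boundary indices: making sure the ghost terms $z_0(t)$ and $z_{n+1}(t)$ are handled consistently with how \AName{} moves the outer robots, so that the telescoping sum reassembles $\phi_2(t)$ exactly rather than with a spurious surplus or deficit. Everything else is routine expansion of squared norms and the parallelogram-type identity. Once the boundary terms are pinned down, the inequality is immediate, and in fact one gets the exact value of $\phi_2(t)-\phi_2(t+1)$, which is stronger than what the lemma asserts and will be convenient for the subsequent convergence argument of \Cref{theorem:2-dimensions:convergence}.
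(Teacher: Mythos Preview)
Your overall strategy—expand $\|z_i(t+1)\|^2$, apply the parallelogram identity, and telescope—matches the paper's proof. However, there is a genuine gap at the boundary.

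You assert that the outer-robot update satisfies $z_1(t+1)=\tfrac12 z_1(t)+\tfrac12 z_2(t)$ (equivalently, $z_0=z_1$) in the two-dimensional setting. This is false. That linear recursion was derived in the one-dimensional marching analysis, where the direction $\widehat{w}_2(t)$ is constant. In two dimensions, $p_1(t+1)=\tfrac12 p_1(t)+\tfrac12 p_2(t)-\tfrac12\widehat{w}_2(t)$ involves the \emph{normalized} vector $\widehat{w}_2(t)$, whose direction generally changes from round to round. A direct computation gives
\[
z_1(t+1)=\tfrac14\widehat{w}_2(t)+\tfrac14 w_3(t)-\tfrac12\widehat{w}_2(t+1),
\qquad
\tfrac12\bigl(z_1(t)+z_2(t)\bigr)=\tfrac14 w_3(t)-\tfrac14\widehat{w}_2(t),
\]
and these coincide only when $\widehat{w}_2(t+1)=\widehat{w}_2(t)$. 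So your telescoping argument would not reassemble $\phi_2(t)$ exactly, and you would \emph{not} get equality.

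The paper handles this by proving only the inequality $\|z_1(t+1)\|^2\le\tfrac14\|z_1(t)+z_2(t)\|^2$ (and likewise for $z_n$). With $a=\widehat{w}_2(t)$ and $b=w_3(t)$, one has $\|z_1(t+1)\|=\tfrac12\bigl(1-\|\tfrac{a+b}{2}\|\bigr)$ and $\tfrac12\|z_1(t)+z_2(t)\|=\tfrac14\|b-a\|$, and the needed bound $2-\|a+b\|\le\|a-b\|$ follows from the triangle inequality applied to $2a=(a+b)+(a-b)$. Once you have this boundary inequality, the rest of your argument (parallelogram law plus the index shift) goes through verbatim and yields the lemma as a genuine inequality. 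This is precisely why the statement says ``$\geq$'' rather than ``$=$''.
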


\begin{proof}

	\begin{align*}
		\phi_{2}(t+1) = \sum_{i=1}^{n} \|z_i(t+1) \|^2 & = \sum_{i=1}^{n} \| \frac{p_{i-1}(t+1) + p_{i+1}(t+1)}{2} - p_i(t+1) \|^2                                     \\
		& = \quarter \sum_{i=2}^{n-1} \| z_{i-1}(t) + z_{i+1}(t)\|^2                + \|z_1(t+1)\|^2 + \| z_n(t+1) \|^2
	\end{align*}

	With  $ \|z_1(t+1)\|^2 \leq \quarter \|z_1(t) + z_2(t) \|^2$ and
	$\|z_n(t~+~1)\|^2 \leq \quarter \|z_{n-1}(t) + z_n(t) \|^2$ it follows that
	$\phi_{2}(t+1)  \leq   \quarter \sum_{i=1}^{n} \| z_{i-1}(t) + z_{i+1}(t)\|^2$
	Now define $\Delta \phi_{2}(t)$ to be $\phi_{2}(t) - \phi_{2}(t+1)$
	With help of the parallelogram law, we derive a lower bound on $\Delta \phi_{2}(t)$:
	\begin{align*}
		\Delta \phi_{2}(t) & =  \sum_{i=1}^{n} \|z_i(t)\|^2 - \quarter  \sum_{i=1}^{n} \|z_{i-1}(t) + z_{i+1}(t) \|^2                              \\
		& = \sum_{i=1}^{n} \|z_i(t)\|^2  - \sum_{i=1}^{n}\left(\|z_{i}(t) \|^2 - \quarter \|z_{i-1}(t) - z_{i+1}(t) \|^2\right) \\
		& = \quarter \sum_{i=1}^{n} \|z_{i-1}(t) - z_{i+1}(t)\|^2
	\end{align*}
\end{proof}

\theoremConvergence*

\begin{proof}
	$\phi_2(t)$ is a monotonically decreasing function of $t$ which is bounded from below by $0$ and the potential difference can be lower bounded by $\quarter \sum_{i=1}^{n} \|z_{i-1}(t) - z_{i+1}(t)\|^2$ (\Cref{lemma:2-dimensions:potential-difference}).
	Hence, the potential difference can only be $0$ in case all vectors $z_i(t)$ are equal.
	Either, $z_1(t) = z_2(t) = \dots = z_n(t) > 0$ or $z_1(t) = \dots = z_n(t)= 0$.
	In the first case, all robots move the same distance into the same direction (a \marchingChain{}).
	In the second case, no robot moves at all.
	This can only be the case if the chain is stretched to a \maxChain{}.
\end{proof}

\stepcounter{MatrixIndex}
\theoremMaxGtmOneMovingUpper*
W.l.o.g.\ assume $r_1$ does not move at all and thus $p_1(0) = p_1(t)$ for all $t$.
We again use a potential function $\phi_2(t)$ which sums up the squared distances of robots to their target points.
This time, $z_1(t) = 0$ for all $t$, and therefore, we exclude $z_1(t)$ from the summation and obtain
$\phi_2(t) = 4  \sum_{i=2}^{n} \| z_i(t) \|^2$.
The new equation for $z_2(t+1)$ simplifies as follows (all other equations remain unchanged):
$\norm{z_2(t+1)}^2 =  \| \half p_1(t+1) + \half p_3(t+1) - p_2(t+1) \|^2
= \quarter \| z_3(t) \|^2 \leq \half \|z_3(t) \|^2$.

Define $z'(t) = \left(\norm{z_2(t)}^2, \dots, \norm{z_n(t)}^2 \right)^T$.
Given two $n$-di\-men\-sional column vectors $v$ and $v'$ we define $v \leq v'$ if  $v_i \leq v_i'$ for all $1 \leq i \leq n$.
Then, we can upper bound $z'(t+1)$ as follows:

\begin{align*}
	z'(t+1) & \leq
	\begin{bmatrix}
		0      & \half  & 0      & 0      & 0      & 0      & \dots  & 0      \\
		\half  & 0      & \half  & 0      & 0      & 0      & \dots  & 0      \\
		0      & \half  & 0      & \half  & 0      & 0      & \dots  & 0      \\
		0      & 0      & \half  & 0      & \half  & 0      & \dots  & 0      \\
		\vdots & \vdots & \vdots & \vdots & \vdots & \vdots & \vdots & \vdots \\
		0      & 0      & 0      & 0      & 0      & 0      & \half  & \half
	\end{bmatrix} \cdot z'(t) = A_{\theMatrixIndex} \cdot z'(t) = A_\theMatrixIndex^{t+1} \cdot z'(0)
\end{align*}

Observe that $A_\theMatrixIndex$ is a substochastic matrix in which every row except for the first one is stochastic.
To analyze the convergence time via \Cref{theorem:preliminaries:powers-of-substochastic-matrices}, we have to determine the largest eigenvalue and the eigenvectors of $A_{\theMatrixIndex}$.

\begin{restatable}{lemma}{lemmaMaxGtmEigenvaluesOneMoving} \label{lemma:2-dimensions:one-stationary-eigenvalues}
	The eigenvalues of $A_{\theMatrixIndex}$ are	$\lambda_{j} = \cos \left(\frac{(2j-1) \cdot \pi }{2n-1}\right) $  for $j = 1, \dots, n-1$.
	The corresponding eigenvectors are given by $x_j[i] = \cos \left( \frac{(2j-1) \cdot (2i-1)}{2 \cdot (2n-1)} \right)$ for $i,j = 1, \dots, n-1$ where $x_j[i]$ denotes the $i$-th entry of eigenvector $j$.
\end{restatable}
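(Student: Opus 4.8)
The plan is to verify the claimed pairs directly and then argue completeness. First I would rewrite the matrix--vector product $A_{\theMatrixIndex}x$ as a three--term recurrence in the coordinates: for every interior index $2\le i\le n-2$ one has $(A_{\theMatrixIndex}x)[i]=\half\bigl(x[i-1]+x[i+1]\bigr)$, the first row contributes only $(A_{\theMatrixIndex}x)[1]=\half\,x[2]$, and the last row contributes $(A_{\theMatrixIndex}x)[n-1]=\half\bigl(x[n-2]+x[n-1]\bigr)$. In other words $A_{\theMatrixIndex}$ is the (symmetric) path operator with two \emph{different} boundary conditions: a Dirichlet-type condition at the stationary robot (a ``ghost'' coordinate $x[0]$ effectively forced to $0$) and a reflecting/Neumann-type self-loop at the moving robot (a ``ghost'' coordinate $x[n]$ forced to equal $x[n-1]$). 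This already indicates which eigenfunctions to expect, namely sampled sine waves with a half-odd-integer quantization.

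Next I would plug in the trigonometric ansatz $x_j[i]=\sin(i\theta)$. For each interior row the product-to-sum identity $\sin\alpha+\sin\beta=2\sin\tfrac{\alpha+\beta}{2}\cos\tfrac{\alpha-\beta}{2}$ gives $\half\bigl(x_j[i-1]+x_j[i+1]\bigr)=\cos\theta\cdot x_j[i]$, so necessarily $\lambda=\cos\theta$; the first row is then satisfied automatically because $\half\,x_j[2]=\half\sin(2\theta)=\sin\theta\cos\theta=\cos\theta\cdot x_j[1]$ (equivalently $\sin 0=0$, confirming that sines are the right odd family for the stationary end). The only remaining constraint is the last row, which after expanding $\cos\theta\cdot x_j[n-1]$ with the same identity collapses to $\sin(n\theta)=\sin((n-1)\theta)$, i.e.\ $2\cos\!\bigl(\tfrac{(2n-1)\theta}{2}\bigr)\sin\!\bigl(\tfrac{\theta}{2}\bigr)=0$. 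For $\theta\in(0,\pi)$ this forces $\tfrac{(2n-1)\theta}{2}=\tfrac{(2j-1)\pi}{2}$, i.e.\ $\theta=\theta_j:=\tfrac{(2j-1)\pi}{2n-1}$ for $j=1,\dots,n-1$, giving the stated eigenvalues $\lambda_j=\cos\theta_j=\cos\!\bigl(\tfrac{(2j-1)\pi}{2n-1}\bigr)$ together with the eigenvectors $x_j[i]=\sin(i\theta_j)$ (equivalently, the closed form recorded in the lemma).

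Finally I would observe that $A_{\theMatrixIndex}$ is an $(n-1)\times(n-1)$ matrix and that $\theta_1<\dots<\theta_{n-1}$ all lie in $(0,\pi)$, where $\cos$ is strictly decreasing; hence the $\lambda_j$ are pairwise distinct and the eigenvectors $x_j$ are linearly independent, so these are \emph{all} eigenvalue--eigenvector pairs of $A_{\theMatrixIndex}$ (consistent with $A_{\theMatrixIndex}$ being symmetric, hence diagonalizable over $\mathbb{R}$), which is exactly what the subsequent bound on powers of substochastic matrices needs. The main obstacle is the bookkeeping at the two inhomogeneous boundary rows: one must pick the trigonometric family so that the Dirichlet end and the reflecting self-loop are \emph{simultaneously} satisfiable rather than over-determining $\theta$, and then read the correct half-odd-integer quantization off the single surviving equation $\sin(n\theta)=\sin((n-1)\theta)$; once that is arranged, everything else is routine product-to-sum manipulation.
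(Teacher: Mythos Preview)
Your argument is correct and takes a genuinely different route from the paper. The paper's proof is a one-line citation: it recognizes $A_{\theMatrixIndex}$ as a special case of the parametric tridiagonal family $T(a,b,c,\alpha,\beta)$ studied by Yueh (2005), sets $a=c=\half$, $b=\alpha=0$, $\beta=-\sqrt{ac}=-\half$, and invokes Theorem~2 there. You instead solve the three-term recurrence by hand with the sine ansatz, identifying the Dirichlet end (stationary robot, ghost $x[0]=0$) and the reflecting end (moving robot, ghost $x[n]=x[n-1]$), and extracting the half-odd quantization from $\sin(n\theta)=\sin((n-1)\theta)$. This is the classical treatment of a discrete path with mixed boundary conditions; it is fully self-contained and makes the role of the two inhomogeneous rows transparent, whereas the citation hides that structure. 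Both approaches yield the same eigenvalues and the same bound $|x_j[i]|\le 1$ that the subsequent substochastic-power estimate actually uses.

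One small remark on your parenthetical ``equivalently, the closed form recorded in the lemma'': you do not verify this equivalence, and it is not literally true as written. The cosine expression printed in the lemma (which is also missing a factor of $\pi$) is the eigenvector for the matrix with the two boundary rows swapped; concretely, $\cos\!\bigl(\tfrac{(2(n-i)-1)\theta_j}{2}\bigr)=(-1)^{j-1}\sin(i\theta_j)$, so the two formulas agree only after reversing the index $i\mapsto n-i$ and absorbing a sign. Since each eigenvalue is simple this is harmless for the application (only $\alpha\le 1$ and the top eigenvalue $\cos\!\bigl(\tfrac{\pi}{2n-1}\bigr)$ are used in the proof of \Cref{theorem:2-dimensions:one-stationary-upper-bound}), but you should either drop the parenthetical or state the identity explicitly.
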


\begin{proof}
	The matrix $A_{\theMatrixIndex}$ is a special tridiagonal matrix whose eigenvalues and eigenvectors have been analyzed in \cite{yueh2005eigenvalues}.
	The matrices in \cite{yueh2005eigenvalues} can be generalized as

	\begin{align*}
		T(a,b,c,\alpha,\beta) =	\begin{bmatrix}
			- \alpha +b & c      & 0      & 0      & 0      & 0      & \dots  & 0          \\
			a           & b      & c      & 0      & 0      & 0      & \dots  & 0          \\
			0           & a      & b      & c      & 0      & 0      & \dots  & 0          \\
			0           & 0      & a      & b      & c      & 0      & \dots  & 0          \\
			\vdots      & \vdots & \vdots & \vdots & \vdots & \vdots & \vdots & \vdots     \\
			0           & 0      & 0      & 0      & 0      & 0      & a      & -\beta + b
		\end{bmatrix}.
	\end{align*}
	In our case, $a = c = \half, b= \alpha = 0, \beta = - \sqrt{ac} = - \half$.
	For these values of $a, b, c, \alpha$ and $\beta$, \Cref{lemma:2-dimensions:one-stationary-eigenvalues} follows from Theorem $2$ in \cite{yueh2005eigenvalues}.
\end{proof}

\begin{proof}[Proof of \Cref{theorem:2-dimensions:one-stationary-upper-bound}]
	We apply the results of \Cref{theorem:preliminaries:powers-of-substochastic-matrices}.
	Due to \Cref{lemma:2-dimensions:one-stationary-eigenvalues} it holds $\alpha \leq 1$ and $\beta = \cos \left( \frac{\pi}{2n-1}\right)$.
	Using $\cos(x) \leq 1 - \frac{2x^2}{\pi^2}$ for $-\pi \leq x \leq \pi$, we can derive $\beta \leq (1 - \frac{1}{(2n-1)^2})$.
	For $t = (2n-1)^2$ we obtain $\beta^t \leq \frac{1}{e}$.
	Hence, for $t' \geq (2n-1)^2 \cdot \ln \left( \frac{n^2}{\varepsilon} \right)$ it holds for all $i,j$ $A_\theMatrixIndex^{t'}[i,j] \leq \frac{\varepsilon}{n}$.
	Since initially $\|z_i(0)\|^2 \leq 1$ for all $1 \leq i \leq n-1$, we obtain for all $1 \leq i \leq n-1$: $\|z_i(t')\| \leq \varepsilon$.
	After $t'$ rounds, no robot moves a distance larger than $\varepsilon$ anymore.
	This can only be the case if all robots are almost aligned on a line and all vectors $w_i(t')$ are approximately the same.

\end{proof}

\deltaU*

For the proof of \Cref{theorem:2-dimensions:general-lower-bound}, we need two auxiliary lemmata:
\begin{restatable}{lemma}{lemmaMaxGtmLowerBoundIncreasing} \label{lemma:2-dimension:lower-bound-monotonically-increasing-outer-vectors}
	During an execution of \textnormal{\AName{}} starting in a \deltaUConfig{} at time step $t_0$, it holds that $\norm{w_i(t)} \geq \norm{w_i(t_0)} $ for all $t$ and all $2 \leq i \leq n$.
\end{restatable}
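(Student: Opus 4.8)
The plan is to prove the claim by induction on $t \ge t_0$, carrying along a package of structural invariants for the configuration $w(t)$ and, as the decisive step, comparing $\norm{w_i(t+1)}$ with $\norm{w_i(t)}$ one vector at a time. First I would record the global bound $\norm{w_j(t)} \le 1$ for every $j$ and every $t \ge t_0$: this is an immediate induction, since the midpoint of two vectors of length at most $1$ has length at most $1$, and $w_2(t+1) = \tfrac12\widehat w_2(t) + \tfrac12 w_3(t)$ (symmetrically for $w_n$) is the midpoint of a unit vector and a vector of length at most $1$. This bound is exactly what makes the endpoint update ``stretching'': since $\norm{w_2(t)} \le 1$, the vector $\widehat w_2(t) = w_2(t)/\norm{w_2(t)}$ is $w_2(t)$ rescaled by a factor $1/\norm{w_2(t)} \ge 1$, so it can only enlarge the coordinates of $w_2(t)$ whose sign is under control.

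Second, I would show that a \deltaUConfig{} evolves inside a class of ``generalized V-configurations''. Writing $w_i(t) = (x_i(t), y_i(t))$, the invariants I want are: (i) the mirror symmetry $x_{n+2-i}(t) = x_i(t)$, $y_{n+2-i}(t) = -y_i(t)$, so in particular $y_{n/2+1}(t) = 0$; (ii) positivity $x_i(t) > 0$ for all $i$; (iii) convexity of the $x$-profile, $x_{i-1}(t) - 2x_i(t) + x_{i+1}(t) \ge 0$; and (iv) monotonicity of the $y$-profile, $y_2(t) \ge y_3(t) \ge \dots \ge y_n(t)$. All four hold at $t_0$ (there $x_i \equiv \delta/(n-1)$ and $y_i$ is affine in $i$), and I would check each is preserved: the interior midpoint step commutes with the reflection and preserves convexity and monotonicity of a sequence by the standard discrete-maximum-principle estimate, while the only endpoint modification replaces $w_2$ by the positively rescaled $\widehat w_2$, which — because $x_2 > 0$ and $y_2 \ge 0$ — can only reinforce (ii) and (iv) at index $2$, and symmetrically at index $n$; preserving (iii) at the indices next to the endpoints is the one place where the rescaling factor $1/\norm{w_2(t)} - 1$ has to be balanced against the slope of the $x$-profile. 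Note that (ii) already yields $\norm{w_i(t)} \ge x_i(t) \ge \delta/(n-1) = x_i(t_0)$ for all $i, t$, which settles the claim for the ``middle'' vectors, where $\norm{w_i(t_0)}$ is essentially $\delta/(n-1)$.

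Third, for the per-vector comparison I would use, for an inner index $i$, the identity $w_i(t+1) = w_i(t) + \tfrac12\,\Delta^2 w_i(t)$ with $\Delta^2 w_i(t) := w_{i-1}(t) - 2w_i(t) + w_{i+1}(t)$, giving
\[
  \norm{w_i(t+1)}^2 - \norm{w_i(t)}^2 = \bigl\langle w_i(t),\,\Delta^2 w_i(t)\bigr\rangle + \tfrac14\,\norm{\Delta^2 w_i(t)}^2,
\]
so it suffices that $\langle w_i(t), \Delta^2 w_i(t)\rangle \ge 0$; coordinatewise this is $x_i(x_{i-1}-2x_i+x_{i+1}) + y_i(y_{i-1}-2y_i+y_{i+1})$, and its first term is nonnegative by (ii) and (iii). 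The second term I would control using (i) and (iv) together with a convexity statement for the $y$-profile on each of its two monotone halves; maintaining that last statement is the one extra invariant I expect to be non-routine. The genuinely delicate case is $i \in \{2,n\}$: the endpoint update $w_2(t+1) = \tfrac12\widehat w_2(t) + \tfrac12 w_3(t)$ couples the two coordinates through the normalization, the clean second-difference identity is no longer available, and one must argue directly — e.g.\ by comparing $\norm{w_2(t)}^2$ with $\norm{w_2(t_0)}^2$ using that, near a \deltaUConfig{}, the $y$-component of $w_2$ stays a small perturbation of the marching-chain fixed point while its $x$-component grows steadily, the two effects being balanced through $\norm{w_2(t)} \le 1$, the monotone $y$-profile, and the explicit affine initial $y$-profile of slope $-2/n$ with $\delta$ small relative to $n$. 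I expect this endpoint case to be the main obstacle of the proof.
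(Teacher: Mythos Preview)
Your plan is considerably heavier than the paper's and carries a genuine gap beyond the endpoint issue you already flag.

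The paper runs on a single coarse invariant: writing $w_i(t)=(x_i(t),y_i(t))$, it just records that $x_i(t)\ge 0$ throughout and that the sign of $y_i(t)$ is preserved on each half of the chain. For an inner $i$ lying in a half where all $y_j$ share a sign, it then compares $\|w_i(t{+}1)\|=\tfrac12\|w_{i-1}(t)+w_{i+1}(t)\|$ directly against $\tfrac12\|w_{i-1}(t_0)+w_{i+1}(t_0)\|=\|w_i(t_0)\|$ via the induction hypothesis at $i{-}1$ and $i{+}1$. There is no convexity or monotonicity of profiles and, importantly, no attempt to compare round $t{+}1$ with round $t$ --- only with round $t_0$.

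You instead aim for the strictly stronger step-wise inequality $\|w_i(t{+}1)\|\ge\|w_i(t)\|$ through $\langle w_i,\Delta^2 w_i\rangle\ge 0$, and for the $y$-part you appeal to convexity of the $y$-profile on each monotone half. That invariant fails already one step in: at $t_0$ the $y$-profile is affine, so every inner $y_i$ is unchanged at $t_0{+}1$, while the endpoint normalisation (which uses $\|w_2(t_0)\|>|y_2(t_0)|$ once $\delta>0$) forces $y_2(t_0{+}1)<y_2(t_0)$. Hence the second $y$-difference at $i=3$ equals $y_2(t_0{+}1)-y_2(t_0)<0$; the profile is concave there, and your cross term $y_3\,\Delta^2 y_3$ is strictly negative. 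For small $\delta$ this negative $y$-contribution actually dominates the positive $x$-contribution, so $\|w_3(t_0{+}2)\|<\|w_3(t_0{+}1)\|=\|w_3(t_0)\|$: the step-wise monotonicity you target is false even at an interior index, and the second-difference route cannot be rescued by piling on more profile invariants. (The paper's own inner step, as written, also does not justify passing from $\|w_{i\pm1}(t)\|\ge\|w_{i\pm1}(t_0)\|$ to the corresponding inequality for the norm of the sum, and it is silent on $i\in\{2,n\}$; what is actually consumed downstream is only the endpoint bound $\|w_2(t)\|\ge 1-\tfrac{2}{n}$, which is the cleaner target if your goal is the subsequent lower bound.)
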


\begin{proof}
	Observe that for all $w_i(t)$ it always holds $x_i(t) \geq 0$ and for $1 \leq i \leq \frac{n+1}{2}-1: y_i(t) \leq 0$ and for
	$\frac{n+1}{2}-1 \leq i \leq n: y_i(t) \geq 0$.
	We prove the fact for vectors $w_{n/2+1}$ to $w_n$ by induction over $t$.
	The induction for the first half of vectors is analogous.
	The induction base is clear.
	Consider time step $t+1$ and an vector $w_i(t+1) = \half w_{i-1}(t) + \half w_{i+1}(t)$.
	It holds $\norm{w_i(t+1)} = \half \norm{w_{i-1}(t) + w_{i+1}(t)}$.
	Since $x_{i-1}(t), x_{i+1}(t), y_{i-1}(t)$ and $y_{i+1}(t) \geq 0$ it holds:
	$\norm{w_i(t+1)} \geq \half \norm{w_{i-1}(t_0) + w_{i+1}(t_0)} = \norm{w_i(t_0)}$.

\end{proof}

\begin{restatable}{lemma}{lemmaMaxGtmGeneralLowerBound} \label{lemma:2-dimensional-lower-bound-x}
	When applying \textnormal{\maxgtm/} to a \deltaUConfig{}, it holds that $x_2(t) \geq~\half$ after $\Omega\left(n^2 \cdot \log \left(1/\delta\right)\right)$ rounds.
\end{restatable}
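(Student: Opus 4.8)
\emph{Proof plan.} Write $w_i(t) = (x_i(t), y_i(t))^{\top}$ for the two coordinates of $w_i(t)$, so that $x_i(0) = \frac{\delta}{n-1}$ for all $i$ in a \deltaUConfig{}. The update rules of \maxgtm/ act coordinate-wise, and the only nonlinearity sits at the two outer robots: $x_2(t+1) = \frac{x_2(t)}{2\norm{w_2(t)}} + \half x_3(t)$, $x_i(t+1) = \half x_{i-1}(t) + \half x_{i+1}(t)$ for $3 \le i \le n-1$, and $x_n(t+1) = \half x_{n-1}(t) + \frac{x_n(t)}{2\norm{w_n(t)}}$. The plan is to control these two offending coefficients using the preceding analysis: by the observation used in the proof of \Cref{lemma:2-dimension:lower-bound-monotonically-increasing-outer-vectors} we have $x_i(t) \ge 0$ for all $i,t$, and by \Cref{lemma:2-dimension:lower-bound-monotonically-increasing-outer-vectors} itself $\norm{w_2(t)} \ge \norm{w_2(0)}$ and $\norm{w_n(t)} \ge \norm{w_n(0)}$ for all $t$. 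Hence $\frac{1}{2\norm{w_2(t)}} \le \sigma := \frac{1}{2\norm{w_2(0)}}$ and symmetrically at the other end, where for the relevant regime $\delta < 1$ one computes $\norm{w_2(0)} = \sqrt{(\delta/(n-1))^2 + (1-2/n)^2} \in [1-\tfrac{2}{n}, 1)$, so that $\sigma = \half + \Theta(1/n)$ \emph{independently of $\delta$}. (When $\delta \ge 1$ the claimed bound is vacuous.)

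Next I would dominate the $x$-coordinates by a purely linear system. Define $\hat x(t) \in \mathbb{R}^{n-1}$ by $\hat x_i(0) = \frac{\delta}{n-1}$ for $2 \le i \le n$ and $\hat x(t+1) = G\,\hat x(t)$, where $G$ is the symmetric tridiagonal $(n-1)\times(n-1)$ matrix with $\half$ on both off-diagonals and diagonal $(\sigma, 0, \dots, 0, \sigma)$. A one-line induction on $t$ — using $x_i(t) \ge 0$, the bound $\frac{1}{2\norm{w_2(t)}} \le \sigma$ (and its analogue), and the nonnegativity of every entry of $G$ — gives $x_i(t) \le \hat x_i(t)$ for all $i$, and in particular $x_2(t) \le \hat x_2(t) = \tfrac{\delta}{n-1}\,\bigl(G^{t}\mathbf{1}\bigr)_{1}$. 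So it remains to bound $(G^{t}\mathbf{1})_{1}$, i.e.\ the growth rate of the linear system $G$.

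The core step is the spectral analysis of $G$. The matrix $G$ is exactly of the tridiagonal form $T(\half, 0, \half, -\sigma, -\sigma)$ treated in \cite{yueh2005eigenvalues} (the same reference already used for \Cref{lemma:2-dimensions:one-stationary-eigenvalues}), so its eigenvalues are $\cos \theta_k$ for the roots of an explicit transcendental equation. Writing $G = D + (\sigma - \half)(e_1 e_1^{\top} + e_{n-1} e_{n-1}^{\top})$ with $D$ the doubly stochastic path matrix (whose Perron eigenvector is $\tfrac{1}{\sqrt{n-1}}\mathbf{1}$), the key point is that although the corner perturbation has size $\sigma - \half = \Theta(1/n)$, the Perron eigenvector of $G$ carries only $O(1/n)$ of its mass at the two perturbed corners; quantitatively, $Gu = \rho(G)u$ yields $(\rho(G) - 1)\sum_i u_i = (\sigma - \half)(u_1 + u_{n-1})$, and combined with a short bootstrap on the near-harmonicity of $u$ in the interior (which forces $\max_i u_i / \min_i u_i = n^{O(1)}$) this gives $\rho(G) = 1 + \Theta\bigl((1-\norm{w_2(0)})/n\bigr) = 1 + \Theta(1/n^2)$. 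Hence $\bigl(G^{t}\mathbf{1}\bigr)_{1} \le \frac{\max_i u_i}{\min_i u_i}\,\rho(G)^{t} \le n^{O(1)}\,(1 + \Theta(1/n^2))^{t}$.

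Putting everything together, $x_2(t) \le \tfrac{\delta}{n-1}\, n^{O(1)} (1 + \Theta(1/n^2))^{t}$, so $x_2(t) < \half$ as long as $t \cdot \Theta(1/n^2) < \ln\bigl(\tfrac{n-1}{2\,\delta\, n^{O(1)}}\bigr) = \Theta(\log(1/\delta))$ (the interesting regime being $\delta$ small enough that $\log(1/\delta)$ dominates the $O(\log n)$ term). Equivalently, $x_2(t) \ge \half$ forces $t = \Omega(n^2 \log(1/\delta))$, which is the claim. The hard part is the eigenvalue estimate of the third paragraph: because the corner perturbation of $G$ is of order $1/n$ — \emph{larger} than the $\Theta(1/n^2)$ spectral gap of the unperturbed path matrix — crude perturbation/row-sum bounds only give $\rho(G) \le 1 + O(1/n)$, which is off by a factor of $n$; the improvement to $1 + \Theta(1/n^2)$ must be extracted from the negligible endpoint mass of the Perron eigenvector, most cleanly via the explicit diagonalization in \cite{yueh2005eigenvalues}.
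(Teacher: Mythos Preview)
Your approach is correct, and it takes a substantially different route from the paper's. The paper argues in two lines: it asserts that $x_3(t)\le x_2(t)$ for all $t$, combines this with $\norm{w_2(t)}\ge\norm{w_2(0)}\ge 1-\tfrac{2}{n}$ from \Cref{lemma:2-dimension:lower-bound-monotonically-increasing-outer-vectors}, and obtains the scalar recursion
\[
x_2(t+1)\;\le\;\frac{x_2(t)}{2(1-2/n)}+\half\,x_3(t)\;\le\;\Bigl(1+\tfrac{1}{\,n-2\,}\Bigr)\,x_2(t),
\]
from which it concludes directly. Note, however, that a per-round growth factor of $1+\Theta(1/n)$ only shows that $x_2$ needs $\Omega(n)$ rounds to double, hence only $t=\Omega\bigl(n\log(1/\delta)\bigr)$ rounds to reach $\tfrac12$ --- a factor of $n$ short of the stated bound. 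Your matrix-domination argument repairs precisely this: by tracking the whole vector $(x_2,\dots,x_n)$ under the linear majorant $G$ and extracting $\rho(G)=1+\Theta(1/n^2)$ from the identity $(\rho(G)-1)\sum_i u_i=(\sigma-\tfrac12)(u_1+u_{n-1})$ for the Perron eigenvector $u$, you capture the effect that the $\Theta(1/n)$ excess at the two boundary rows is diluted across $\Theta(n)$ coordinates, giving the correct effective growth $1+\Theta(1/n^2)$. In fact $u$ is nicer than you state: one can write $u_i\propto\cosh\bigl((i-\tfrac{n+2}{2})\theta\bigr)$ with $\theta=\Theta(1/n)$, so $\max_i u_i/\min_i u_i=O(1)$ rather than merely $n^{O(1)}$, and your Perron comparison $(G^t\mathbf 1)_1\le(\max_i u_i/\min_i u_i)\,\rho(G)^t$ then yields the claim cleanly. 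The cost of your route is the spectral step; the paper's route is elementary but, as written, does not reach the $n^2$.
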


\begin{proof}
	Trivially, $x_3(t) \leq x_2(t)$ for all $t$.
	Thus, we can bound $x_2(t+1)$ as follows.
	\begin{align*}
		x_2(t+1)  \leq \frac{1}{2-\frac{4}{n}}x_2(t) + \half x_3(t)   = \half x_2(t) + \frac{1}{n-2} x_2(t) + \half x_3(t) & \leq \left(1+ \frac{1}{n-2}\right) x_2(t)
	\end{align*}

	Thus, $x_2(t)$ doubles at most every $\mathcal{O}(n)$ rounds.
	Since $x_2(t_0) = \frac{\delta}{n-1}$ it requires at least $\mathcal{O}(n^2)$ rounds until $x_2(t) \geq 2 \delta$.
	The lemma follows.
\end{proof}

\Cref{lemma:2-dimensional-lower-bound-x} together with the lower bound for one-dimensional configurations imply \Cref{theorem:2-dimensions:general-lower-bound}.

\section{Relation to Discrete Dynamical Systems} \label{section:discreteDynamicalSystems}

The aim of this section is to prove that the \marchingChain{} is unstable fixed point of \maxgtm/, interpreted as a discrete dynamical system.
To do so, we split the vector representation of a configuration into its $x$- and $y$-components.
More precisely, consider the vector $w_i(t) := (x_i(t),y_i(t))$.
Now define the state of a system to be $s(t) := \left(x_2(t), x_3(t), \dots, x_n(t), y_2(t), y_3(t), \dots ,y_n(t)\right).$ The dynamical system consists of $2 \left(n-1\right)$ variables each representing an entry of the vector representation.
Applying \maxgtm/ to the configuration can now be interpreted as a set of $2 \left(n-1\right)$ functions, one function for each variable.
$
x_2(t+1) = f_{x_2}(s(t)) = \frac{x_2(t)}{2 \cdot \sqrt{x_2(t)^2 + y_2(t)^2}} + \half x_3(t)$,
$x_i(t+1) = f_{x_i}(s(t)) = \half x_{i-	1}(t) + \half x_{i+1}(t)$  for  $2 < i < n$,
$x_n(t+1)  = f_{x_n}(s(t)) = \half x_{n-1}(t)+ \frac{x_n(t)}{2 \cdot \sqrt{x_n(t)^2 + y_n(t)^2}}$,
$y_2(t+1) = f_{y_2}(s(t)) = \frac{y_2(t)}{2 \cdot \sqrt{x_2(t)^2 + y_2(t)^2}} + \half y_3(t)$,
$y_i(t+1) = f_{y_i}(s(t)) = \half y_{i-	1}(t) + \half y_{i+1}(t)$ for $2 < i < n$ and
$y_n(t+1) = f_{y_n}(s(t)) = \half y_{n-1}(t)+ \frac{y_n(t)}{2 \cdot \sqrt{x_n(t)^2 + y_n(t)^2}}$.

Next, we compute the \emph{Jacobian matrix} $\mathcal{J}$ of the dynamical system.
For this dynamical system, $\mathcal{J}$ is a $2 \cdot \left(n-1\right) \times 2 \cdot \left(n-1\right)$-matrix.
Each row corresponds to one of the $2 \cdot \left(n-1\right)$ transition functions.
An entry $\mathcal{J}_{i,j}$ represents $\frac{\partial f_{i}}{\partial j}$, the derivative of $f_i$ with respect to variable~$j$.
Each function depends on at most $3$ variables and thus each row contains at most $3$ non-zero elements.

For better readability, we omit the time parameter~$t$.
Plugging in the partial derivations leads to the following matrix:

\setcounter{MaxMatrixCols}{20}
\begin{align}
	\mathcal{J} & =
	\begin{bmatrix}
		\frac{y_2^2}{2 \left(x_2^2 + y_2^2\right)^{3/2}}          & \half  & 0      & \dots  & 0      & 0                                                         & \frac{-x_2 \cdot y_2}{2 \left(x_2^2 + y_2^2\right)^{3/2}} & 0      & \dots  & 0      & 0                                                         \\
		\half                                                     & 0      & \half  & \dots  & 0      & 0                                                         & 0                                                         & 0      & \dots  & 0      & 0                                                         \\
		\vdots                                                    & \vdots & \vdots & \vdots & \vdots & \vdots                                                    & \vdots                                                    & \vdots & \vdots & \vdots & \vdots                                                    \\
		0                                                         & 0      & 0      & \dots  & \half  & \frac{y_n^2}{2 \left(x_n^2 + y_n^2\right)^{3/2}}          & 0                                                         & 0      & \dots  & 0      & \frac{-x_n \cdot y_n}{2 \left(x_n^2 + y_n^2\right)^{3/2}} \\
		\frac{-x_2 \cdot y_2}{2 \left(x_2^2 + y_2^2\right)^{3/2}} & 0      & 0      & \dots  & 0      & 0                                                         & \frac{x_2^2}{2 \left(x_2^2 + y_2^2\right)^{3/2}}          & \half  & \dots  & 0      & 0                                                         \\
		0                                                         & 0      & 0      & \dots  & 0      & 0                                                         & \half                                                     & 0      & \dots  & 0      & 0                                                         \\
		\vdots                                                    & \vdots & \vdots & \vdots & \vdots & \vdots                                                    & \vdots                                                    & \vdots & \vdots & \vdots & \vdots                                                    \\
		0                                                         & 0      & \dots  & 0      & 0      & 0                                                         & 0                                                         & 0      & \dots  & 0      & \half                                                     \\
		0                                                         & 0      & \dots  & 0      & 0      & \frac{-x_n \cdot y_n}{2 \left(x_n^2 + y_n^2\right)^{3/2}} & 0                                                         & 0      & \dots  & \half  & \frac{x_n^2}{2 \left(x_n^2 + y_n^2\right)^{3/2}}          \\
	\end{bmatrix} \label{equation:jacobi}
\end{align}

To prove that the \marchingChain{} is an unstable fixed point, we have to evaluate $\mathcal{J}$ at that fixed point.
Recall the \marchingChain{} is one-dimensional.
Thus assume that $x_i(t) = 0$ for $2 \leq i \leq n$.
The variables $y_i(t)$ are defined according to the marching chain: $y_2(t) = 1- \frac{2}{n}$, $y_3(t) = 1-\frac{4}{n}, \dots, y_n(t) = - \left(1-\frac{2}{n}\right)$.
Plugging these values into the Jacobian matrix (\Cref{equation:jacobi}) yields the following matrix:

\begin{align}
	\mathcal{J}_{w_M} & =
	\begin{bmatrix}
		\frac{n}{2 \left(n-2 \right)} & \half  & 0      & \dots  & 0      & 0                             & 0      & 0      & \dots  & 0      & 0      \\
		\half                         & 0      & \half  & \dots  & 0      & 0                             & 0      & 0      & \dots  & 0      & 0      \\
		\vdots                        & \vdots & \vdots & \vdots & \vdots & \vdots                        & \vdots & \vdots & \vdots & \vdots & \vdots \\
		0                             & 0      & 0      & \dots  & \half  & \frac{n}{2 \left(n-2 \right)} & 0      & 0      & \dots  & 0      & 0      \\
		0                             & 0      & 0      & \dots  & 0      & 0                             & 0      & \half  & \dots  & 0      & 0      \\
		0                             & 0      & 0      & \dots  & 0      & 0                             & \half  & 0      & \dots  & 0      & 0      \\
		\vdots                        & \vdots & \vdots & \vdots & \vdots & \vdots                        & \vdots & \vdots & \vdots & \vdots & \vdots \\
		0                             & 0      & \dots  & 0      & 0      & 0                             & 0      & 0      & \dots  & 0      & \half  \\
		0                             & 0      & \dots  & 0      & 0      & 0                             & 0      & 0      & \dots  & \half  & 0      \\
	\end{bmatrix} \label{equation:jacobiMarching}
\end{align}

The stability of the \marchingChain{} can now be analyzed bei the eigendecomposition of $\mathcal{J}_{w_M}$.
More precisely, in case at least one eigenvalue of $J_{w_M}$ has a magnitude larger than $1$ it follows that the marching chain is an unstable fixed point, as stated by the following theorem.

\begin{theorem}[\cite{galor2007discrete}]
	Let $f : D \rightarrow \mathcal{R}^{m},D \subseteq \mathcal{R}^{m}$ be a continuous, differentiable map with regard to
	all system state variables defined on an open subset around a fixed point $x^{*}$ and let $A$ be the
	Jacobian matrix of the system about $x^{*}$ .
	Then:
	\begin{enumerate}
		\item If the maximum modulus of the eigenvalues of $A$ is less than one then $x^{*}$ is asymptotically stable.
		\item If the maximum modulus of the eigenvalues of $A$ is greater than one then $x^{*}$ is unstable.
		\item If the maximum modulus of the eigenvalues of $A$ is equal to one then no conclusion is drawn.
	\end{enumerate}

\end{theorem}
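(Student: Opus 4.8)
The statement is the classical \emph{principle of linearised stability} for discrete-time systems, so the plan is to reconstruct its standard proof. After translating we may assume $x^{*}=0$, and since $f$ is $C^{1}$ we may write $f(x)=Ax+r(x)$ with $A=Df(0)$, $r(0)=0$, $Dr(0)=0$, hence $r(x)=o(\|x\|)$ as $x\to 0$. The workhorse is the elementary linear-algebra fact that for every $\eta>0$ there is a norm $\|\cdot\|_{\eta}$ on $\mathbb{R}^{m}$ whose induced operator norm satisfies $\|A\|_{\eta}\le\rho(A)+\eta$, where $\rho(A)$ is the spectral radius: bring $A$ to Schur (upper-triangular) form and conjugate by $\operatorname{diag}(1,\delta,\delta^{2},\dots)$ to make the strictly-upper part small, which works even when $A$ is not diagonalisable.

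For part $1$ ($\rho(A)<1$), pick $\eta$ with $\mu:=\rho(A)+2\eta<1$ and work in $\|\cdot\|_{\eta}$. For $x$ in a small enough ball $B$ around $0$ we get $\|f(x)\|_{\eta}\le\|A\|_{\eta}\|x\|_{\eta}+\|r(x)\|_{\eta}\le(\rho(A)+\eta)\|x\|_{\eta}+\eta\|x\|_{\eta}=\mu\|x\|_{\eta}$, where the remainder bound uses $r(x)=o(\|x\|)$ together with the equivalence of norms on $\mathbb{R}^{m}$. Thus $B$ is forward invariant and $\|f^{k}(x)\|_{\eta}\le\mu^{k}\|x\|_{\eta}\to 0$, i.e.\ $0$ is asymptotically stable.

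For part $2$ ($\rho(A)>1$) --- the genuinely harder direction --- I would use an adapted-norm/cone argument. Split $\mathbb{R}^{m}=E_{+}\oplus E_{-}$ into the $A$-invariant subspaces of generalised eigenvectors for eigenvalues of modulus $>1$ and $\le 1$, with $E_{+}\neq\{0\}$ since $\rho(A)>1$. Since $A|_{E_{+}}$ is invertible with $\rho((A|_{E_{+}})^{-1})<1$ and $\rho(A|_{E_{-}})\le 1$, the norm fact yields a number $\nu>1$ and, for any chosen $\sigma\in(1,\nu)$, a norm adapted to the splitting for which $\|Av\|\ge\nu\|v\|$ on $E_{+}$ and $\|Av\|\le\sigma\|v\|$ on $E_{-}$. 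Now consider the cone $\mathcal{C}=\{(v_{+},v_{-}):\|v_{-}\|\le\|v_{+}\|\}$: for points of $\mathcal{C}$ of small enough norm the $o(\|x\|)$ remainder is dominated by the gap $\nu/\sigma$, so one step keeps the orbit in $\mathcal{C}$ and multiplies $\|v_{+}\|$ by a factor bounded below by some $\nu'>1$; iterating forces the orbit out of any fixed small neighbourhood, and since such starting points lie arbitrarily close to $0$, the fixed point is unstable. (Alternatively, invoke the stable-manifold theorem: the forward-stable set is a submanifold of dimension $\dim E_{-}<m$, so its complement --- orbits leaving a neighbourhood --- is open and dense.) Part $3$ needs only scalar examples with $|f'(0)|=1$: $f(x)=x-x^{3}$ is asymptotically stable at $0$ whereas $f(x)=x+x^{3}$ is unstable there, so no general conclusion is possible.

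The main obstacle is part $2$: one must handle non-diagonalisable $A$ (hence the Schur-form norm trick rather than an eigenbasis argument) and control the nonlinear remainder uniformly on the cone. In this paper the theorem is used only as a tool, applied to $A=\mathcal{J}_{w_{M}}$ of \eqref{equation:jacobiMarching}: that matrix is block diagonal, so its spectrum is the union of the spectra of its two symmetric tridiagonal blocks, and evaluating the Rayleigh quotient of the $x$-block at the all-ones vector (interior rows sum to $1$, the two boundary rows to $\frac{n}{2(n-2)}+\frac{1}{2}=\frac{n-1}{n-2}$) gives $1+\frac{2}{(n-1)(n-2)}>1$, exhibiting an eigenvalue of modulus above $1$, so case $2$ applies and the \marchingChain{} is an unstable fixed point.
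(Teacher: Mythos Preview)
The paper does not prove this theorem at all: it is quoted verbatim from \cite{galor2007discrete} as a black-box tool from the dynamical-systems literature, so there is no ``paper's own proof'' to compare against. Your proposal correctly recognises this and instead supplies the standard textbook argument (adapted norm via Schur form for part~1, the expanding-cone argument on the unstable generalised eigenspace for part~2, and explicit scalar counterexamples for part~3). That sketch is sound; the only place requiring care is part~2 when $A$ has eigenvalues of modulus exactly~$1$ on $E_{-}$, but your choice of $\sigma\in(1,\nu)$ handles this since the spectral-radius norm trick still gives $\|A|_{E_{-}}\|_{\eta}\le 1+\eta<\nu$ for small~$\eta$.

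Your final paragraph about the application is the part that actually parallels the paper, and here you sharpen the paper's own computation slightly: you exploit that $\mathcal{J}_{w_M}$ is block diagonal (since $x_i=0$ at the marching chain kills the off-diagonal blocks) and apply the Rayleigh quotient to the $(n-1)\times(n-1)$ $x$-block alone, obtaining $1+\tfrac{2}{(n-1)(n-2)}>1$. The paper instead applies the Rayleigh bound (its \Cref{lemma:lowerBoundEigenvalue}) to the full $2(n-1)\times 2(n-1)$ matrix; both reach the same conclusion that the spectral radius exceeds~$1$, placing the marching chain in case~2.
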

The following lemma helps us to prove a lower bound on the largest eigenvalue of $\mathcal{J}_{w_M}$.

\begin{lemma}[\cite{WALKER2008519}] \label{lemma:lowerBoundEigenvalue}
	Define $u^{T} = \left(1, \dots, 1\right)$. The largest eigenvalue $\lambda_1(A)$ of a symmetric $n \times n$ matrix $A$ can be lower bounded by $\lambda_1(A) \geq \frac{u^{T} \cdot A \cdot u}{u^{T} \cdot u}$

\end{lemma}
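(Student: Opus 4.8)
The plan is to recognize this as the standard Rayleigh‑quotient (Courant–Fischer) lower bound for the largest eigenvalue of a real symmetric matrix, and to prove it via the spectral theorem. The vector $u=(1,\dots,1)^{T}$ plays no special role beyond being a fixed nonzero test vector; the inequality holds for any $u\neq 0$ in place of it, so I would state and prove the general fact and then specialize.

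First I would invoke the spectral theorem: since $A$ is real and symmetric, there is an orthonormal basis $v_{1},\dots,v_{n}$ of $\mathbb{R}^{n}$ consisting of eigenvectors of $A$, with real eigenvalues ordered as $\lambda_{1}(A)\geq\lambda_{2}(A)\geq\dots\geq\lambda_{n}(A)$. Next I would expand the test vector in this basis, writing $u=\sum_{i=1}^{n}c_{i}v_{i}$ with $c_{i}=v_{i}^{T}u$. Orthonormality then gives $u^{T}u=\sum_{i=1}^{n}c_{i}^{2}$, which in particular is strictly positive (indeed it equals $n$ for $u=(1,\dots,1)^{T}$), and $Au=\sum_{i=1}^{n}c_{i}\lambda_{i}v_{i}$, hence $u^{T}Au=\sum_{i=1}^{n}\lambda_{i}c_{i}^{2}$.

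The final step is a one‑line estimate: since $\lambda_{i}\leq\lambda_{1}(A)$ for every $i$ and each $c_{i}^{2}\geq 0$, we obtain $u^{T}Au=\sum_{i=1}^{n}\lambda_{i}c_{i}^{2}\leq\lambda_{1}(A)\sum_{i=1}^{n}c_{i}^{2}=\lambda_{1}(A)\,u^{T}u$, and dividing by $u^{T}u>0$ yields $\frac{u^{T}Au}{u^{T}u}\leq\lambda_{1}(A)$, which is exactly the claim. I expect no genuine obstacle here: the only point requiring care is that symmetry of $A$ is what makes the eigenvalues real and guarantees an orthonormal eigenbasis, so the quantity $\lambda_{1}(A)$ is well defined and the decomposition above is legitimate; the result is classical and the computation is routine.
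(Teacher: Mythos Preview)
Your proof is correct; this is exactly the standard Rayleigh--Ritz argument via the spectral theorem, and it works without any issues. Note that the paper does not actually prove this lemma itself but merely cites it from \cite{WALKER2008519}, so there is no in-paper proof to compare against; your self-contained argument is a perfectly good substitute.
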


\begin{theorem}
	The \marchingChain{} is an unstable fixed point.
\end{theorem}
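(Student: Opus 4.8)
The plan is to exploit the block structure of the Jacobian $\mathcal{J}_{w_M}$ displayed in \eqref{equation:jacobiMarching} and show that one of its eigenvalues has modulus strictly larger than $1$; the stability criterion of \cite{galor2007discrete} then gives instability at once. The key preliminary observation is that at the \marchingChain{} all $x$-coordinates vanish, so in \eqref{equation:jacobi} every mixed partial $\partial f_{x_i}/\partial y_j$ and $\partial f_{y_i}/\partial x_j$ (each of which carries a factor $x_2$ or $x_n$) evaluates to $0$. Hence $\mathcal{J}_{w_M} = \mathrm{diag}(B,C)$, where $C$ is the $(n-1)\times(n-1)$ symmetric tridiagonal matrix with $0$ on the diagonal and $\tfrac12$ on the two off-diagonals — a scaled path-adjacency matrix, with eigenvalues $\cos(k\pi/n)$, $k=1,\dots,n-1$, all of modulus $<1$ — and $B$ is the $(n-1)\times(n-1)$ symmetric tridiagonal matrix with $\tfrac{n}{2(n-2)}$ in its two corner diagonal positions, $0$ elsewhere on the diagonal, and $\tfrac12$ on the two off-diagonals. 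So it suffices to lower bound the largest eigenvalue $\lambda_1(B)$.

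For that I would invoke \Cref{lemma:lowerBoundEigenvalue}: taking $u^T = (1,\dots,1)\in\mathbb{R}^{n-1}$ gives $\lambda_1(B)\ge \frac{u^{T}Bu}{u^{T}u}$. The numerator is the sum of all entries of $B$: the two corner rows each sum to $\tfrac{n}{2(n-2)}+\tfrac12$, and each of the remaining $n-3$ interior rows sums to $\tfrac12+\tfrac12=1$, so $u^{T}Bu = \tfrac{n}{n-2} + (n-2)$, while $u^{T}u = n-1$. Thus
\[
\lambda_1(B)\;\ge\;\frac{\tfrac{n}{n-2} + (n-2)}{\,n-1\,},
\]
and this exceeds $1$ exactly when $\tfrac{n}{n-2}>1$, i.e.\ for every $n\ge 3$. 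Consequently the maximum modulus of the eigenvalues of $\mathcal{J}_{w_M}$ is $\ge \lambda_1(B) > 1$, and by \cite{galor2007discrete} the \marchingChain{} is an unstable fixed point.

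I do not expect a genuine obstacle here; the argument is a short computation once the right tool is in place. The two points that need care are (i) justifying the block-diagonalization — it hinges entirely on $x_i\equiv 0$ at $w_M$, which kills the mixed partials — and (ii) correctly identifying that the instability lives in the $x$-block $B$ rather than in the harmless $C$-block; intuitively, it is precisely the extending outer robots, whose normalization terms produce the enlarged corner entries $\tfrac{n}{2(n-2)}>\tfrac12$, that push the chain off the marching fixed point. If one ever worried that the crude Rayleigh-quotient bound were too weak (it does tend to $1$ as $n\to\infty$, though it stays strictly above $1$), one could instead estimate $\lambda_1(B)$ via a sharper trial vector or via known eigenstructure of corner-perturbed tridiagonal Toeplitz matrices, but the all-ones vector already suffices.
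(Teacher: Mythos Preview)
Your argument is correct and uses the same core tool as the paper, namely the Rayleigh--quotient bound of \Cref{lemma:lowerBoundEigenvalue} with the all-ones test vector. The paper, however, applies that bound directly to the full $2(n-1)\times 2(n-1)$ matrix $\mathcal{J}_{w_M}$, asserting that every column except two is stochastic; in fact the first and last columns of the $C$-block each sum only to $\tfrac12$, since the corner diagonal entries $\partial f_{y_2}/\partial y_2$ and $\partial f_{y_n}/\partial y_n$ carry factors $x_2^2$ resp.\ $x_n^2$ and hence vanish at $w_M$. With this correction the full-matrix Rayleigh quotient is $\bigl(2n-3+\tfrac{2}{n-2}\bigr)/(2n-2)$, which is $\le 1$ once $n\ge 4$. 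Your additional step of block-diagonalising and applying the bound to the $B$-block alone is therefore not cosmetic: it is precisely what makes the argument go through, giving $\lambda_1(B)\ge 1+\tfrac{2}{(n-1)(n-2)}>1$ as you compute, and then $\lambda_1(\mathcal{J}_{w_M})\ge\lambda_1(B)$ by the block structure.
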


\begin{proof}
	As $\mathcal{J}_{w_M}$ is a $2 \cdot \left(n-1\right) \times 2 \cdot \left(n-1\right)$ matrix, it follows $u^{T} \cdot u = 2 \cdot \left(n-1\right)$.
	Next observe that every column of $\mathcal{J}_{w_m}$, except for the first and the $n$-th, are stochastic and thus sum up to $1$.
	The first and the $n$-th column sum up to $\half + \frac{n}{2 \cdot \left(n-2\right)} = 1 + \frac{1}{n-2}$.
	Thus, we can compute
	$u^{T} \cdot A  = \left(1 + \frac{1}{n-2},1, 1, \dots, 1, 1 + \frac{1}{n-2},1, 1, \dots, 1\right)$ and $u^{T} \cdot A \cdot u= 2 \cdot (n-1)  + \frac{2}{n-2}$.

	Finally, we can bound $\lambda_1(\mathcal{J}_{w_M})$ via \Cref{lemma:lowerBoundEigenvalue} and prove the theorem:
	$\lambda_1(\mathcal{J}_{w_M}) \geq \frac{2\cdot \left(n-1\right) + \frac{2}{n-2}}{2 \cdot \left(n-1\right)} = 1 + \frac{1}{n-1} + \frac{1}{n-2} > 1.$

\end{proof}

\section{Analysis of \parameterizedMaxMoveOnBisector{1}{1-\texorpdfstring{\tau}{t}}} \label{section:contTwoDim}

We restate a lemma that can be found in \cite{journals/topc/DegenerKKH15} that expresses how the distance between robots changes based on their velocity vectors.
For this purpose, we define the angles $\beta_{i, j}(t) \coloneqq \angle\bigl( v_i(t), p_j(t) - p_i(t) \bigr)$.
In other words, $\beta_{i, j}(t)$ denotes the signed angle between the velocity vector $r_i$ and the line segment connecting $r_i$ and $r_j$ at time $t$.

\begin{lemma} [Lemma 3.1 in \cite{journals/topc/DegenerKKH15}] \label{lemma:continuousDistances}
	Consider two robots $r_i$ and $r_j$ and let $\Delta_{i,j}(t) : \mathbb{R}_{\geq 0} \rightarrow \mathbb{R}_{\geq 0}$ represent their distance at time $t$.
	The distance between $r_i$ and $r_j$ changes with speed
	$\deriv{ \Delta_{i,j}}{t}	 = - (\norm{v_i(t)} \cdot \cos (\beta_{i, j}(t))  + \norm{v_j(t)} \cdot \cos (\beta_{j,i}(t))). $

\end{lemma}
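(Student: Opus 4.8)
The plan is to reduce the claim to a one-line differentiation of the Euclidean norm. Set $d(t) := p_j(t) - p_i(t)$, so that $\Delta_{i,j}(t) = \norm{d(t)}$ and $\widehat{d}(t) := d(t)/\norm{d(t)}$ is the unit vector pointing from $r_i$ towards $r_j$. Wherever the two robots do not coincide, the map $x \mapsto \norm{x}$ is differentiable at $d(t)$, so the chain rule (read as a right derivative, since the trajectories $p_i, p_j$ are only guaranteed to be right-differentiable) gives $\deriv{\Delta_{i,j}}{t} = \widehat{d}(t) \cdot d'(t)$, and by definition of the velocity vectors $d'(t) = v_j(t) - v_i(t)$.

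Next I would expand the inner product into the two robot contributions, $\deriv{\Delta_{i,j}}{t} = \widehat{d}(t)\cdot v_j(t) - \widehat{d}(t)\cdot v_i(t)$, and rewrite each term via the angle definitions. The second term equals $\norm{v_i(t)}\cos\bigl(\angle(v_i(t), p_j(t)-p_i(t))\bigr) = \norm{v_i(t)}\cos(\beta_{i,j}(t))$ straight from the definition $\beta_{i,j}(t) = \angle(v_i(t), p_j(t)-p_i(t))$. For the first term, observe that $\widehat{d}(t)$ points opposite to $p_i(t) - p_j(t)$, against which $\beta_{j,i}(t)$ is measured; hence the angle between $v_j(t)$ and $\widehat{d}(t)$ is $\pi - \beta_{j,i}(t)$, so $\widehat{d}(t)\cdot v_j(t) = \norm{v_j(t)}\cos(\pi - \beta_{j,i}(t)) = -\norm{v_j(t)}\cos(\beta_{j,i}(t))$. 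Substituting both expressions yields exactly $\deriv{\Delta_{i,j}}{t} = -\bigl(\norm{v_i(t)}\cos(\beta_{i,j}(t)) + \norm{v_j(t)}\cos(\beta_{j,i}(t))\bigr)$.

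The only genuine subtlety is the degenerate instant $\Delta_{i,j}(t) = 0$, where $\norm{\cdot}$ fails to be differentiable; there the right derivative still exists and equals $\norm{v_j(t)-v_i(t)}$, and one would either restrict the statement to times without robot collisions or adopt the convention of the source. Since this is Lemma~3.1 of \cite{journals/topc/DegenerKKH15}, one may also simply cite it; the \enquote{hard part}, such as it is, amounts to keeping the sign bookkeeping straight --- the global minus sign is merely an artifact of defining $\beta_{j,i}$ relative to $p_i(t)-p_j(t)$ rather than $p_j(t)-p_i(t)$.
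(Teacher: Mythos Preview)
Your argument is correct and is the standard derivation: differentiate $\norm{p_j(t)-p_i(t)}$ via the chain rule and then read off the two cosine contributions from the definitions of $\beta_{i,j}$ and $\beta_{j,i}$, with the sign flip coming from $p_i-p_j = -(p_j-p_i)$. The paper itself does not prove this lemma at all; it is quoted verbatim as Lemma~3.1 of \cite{journals/topc/DegenerKKH15} and used as a black box, so there is no approach to compare against --- your write-up simply supplies the (easy) omitted proof.
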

\begin{restatable}{lemma}{lemmaMaxMobPotentialDecreasing} \label{lemma:potentialDecreasing}
	$\ell(t)$ is mon. increasing and $r(t)$ is mon. decreasing until $\ell(t) = r(t)$.
\end{restatable}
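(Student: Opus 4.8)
The plan is to reduce the whole statement to one structural invariant about the left ``collinear prefix'' and to argue only for $\ell(t)$, since the claim for $r(t)$ follows verbatim after reflecting the chain. The invariant is $(\ast)$: at every time $t$ the robots $r_1,\dots,r_{\ell(t)}$ lie, in this order, on a common ray emanating from $r_1$ in direction $\widehat w_2(t)$; equivalently $\widehat w_j(t)=\widehat w_2(t)$ for every $2\le j\le\ell(t)$ with $w_j(t)\neq(0,0)$. Granting $(\ast)$, monotonicity is immediate: the integer $\ell(t)$ can only change either when some equality $\widehat w_j(t)=\widehat w_2(t)$ with $2<j\le\ell(t)$ breaks while $w_j(t)$ stays nonzero --- which $(\ast)$ forbids --- or when $w_{\ell(t)+1}(t)$ vanishes or rotates into direction $\widehat w_2(t)$, and these last two events can only enlarge the prefix. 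Hence $\ell$ never decreases, and symmetrically $r$ never increases, as long as $\ell(t)<r(t)$.

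The substance is the proof of $(\ast)$, which I would obtain by examining the velocities that \maxmob/ prescribes for $r_1,\dots,r_{\ell(t)}$ and checking they are consistent with the prefix performing a rigid rotation about $r_1$ superposed with the internal midpoint-relaxation of $r_2,\dots,r_{\ell(t)-1}$. The outer robot $r_1$ moves only parallel to $\widehat w_2(t)$: its velocity is $\pm(1-\tau)\widehat w_2(t)$, and even in the distance-$1$ speed-adjusting mode only the component of its velocity along $\widehat w_2(t)$ is constrained, while ``move away from the neighbour'' contributes no perpendicular component, so $r_1$ has no instantaneous motion off the ray. Each $r_i$ with $2\le i<\ell(t)$ has outer angle $\alpha_i(t)=0$, because $\widehat w_i(t)=\widehat w_{i+1}(t)=\widehat w_2(t)$, so \maxmob/ moves it toward the midpoint of $r_{i-1}$ and $r_{i+1}$ while keeping it collinear with those two. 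The boundary robot $r_{\ell(t)}$ either does not move or moves along the bisector of the vectors pointing to $r_{\ell(t)-1}$ and $r_{\ell(t)+1}$. The decisive observation is that the ``stay collinear with your two neighbours'' clauses, imposed simultaneously for every interior index with $r_1$ and $r_{\ell(t)}$ fixed as the two ends, form a linear system whose unique solution places all of $r_1,\dots,r_{\ell(t)}$ on one line; hence any motion of $r_{\ell(t)}$ off the current ray is absorbed as a consistent rigid rotation of the whole prefix about $r_1$, with $\widehat w_2(t)$, being the common direction of $w_2(t),\dots,w_{\ell(t)}(t)$, turning along with it and $\ell(t)$ unchanged. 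Finally, to see that $\ell$ really does progress (so $\ell(t)=r(t)$ is eventually reached), I would observe that while $r_{\ell(t)}$ moves along its bisector the outer angle $\alpha_{\ell(t)}(t)$ strictly decreases --- a short trigonometric computation from the bisector direction --- and at the instant it hits $0$ the robots $r_{\ell(t)-1},r_{\ell(t)},r_{\ell(t)+1}$ are collinear; since the first two already lie on the prefix ray, so does $r_{\ell(t)+1}$, i.e.\ $\widehat w_{\ell(t)+1}(t)=\widehat w_2(t)$ and $\ell$ jumps up by at least one.

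I expect the hard part to be the ``no transient loss of collinearity'' half of $(\ast)$: one must verify that the interior robots' combined ``move toward the midpoint \emph{and} stay collinear'' rule cannot, even for an instant, leave the prefix non-collinear (which would let $\ell(t)$ dip momentarily). This is where the precise reading of the strategy's collinearity clause matters, and I would settle it by a careful first-order-in-time analysis of the coupled velocities rather than by the qualitative rotation picture above. The zero-length vectors allowed in the definition of $\ell(t)$ are only a bookkeeping nuisance: a robot coinciding with a neighbour sits at an endpoint of the relevant segment and is subsumed by the same collinearity argument.
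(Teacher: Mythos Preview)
Your monotonicity argument is essentially the paper's, only dressed up. The paper's proof is two lines: any robot already collinear with (or coincident with) its neighbours computes a target that lies between those neighbours; hence every robot with index $\le\ell(t)$ continues to sit on the segment from $r_1$ to $r_{\ell(t)}$, and $\ell(t)$ cannot drop. Your invariant $(\ast)$ is exactly this statement, and your ``rigid rotation of the prefix about $r_1$'' is the geometric picture behind it. The worry you flag at the end --- transient loss of collinearity --- is not an issue precisely because the strategy's collinear clause (``moves toward the midpoint \emph{while ensuring to stay collinear}'') is part of the rule, not something to be derived; you can simply cite it, as the paper does, rather than set up a first-order velocity analysis.

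Where your proposal goes wrong is the last paragraph. The lemma asserts only monotonicity; the phrase ``until $\ell(t)=r(t)$'' delimits the time interval on which the claim is made, it does not assert that this time is ever reached. That termination claim is a separate result (\Cref{lemma:mainLemmaFirst}) and requires a substantially more delicate case analysis over the size of $\alpha_{\ell}(t)$ with three different progress measures ($\innerLength$, $O_i(t)$, $H_i(t)$). Your proposed shortcut --- ``while $r_{\ell(t)}$ moves along its bisector the outer angle $\alpha_{\ell(t)}(t)$ strictly decreases'' --- is false in general: for large outer angles the bisector motion \emph{increases} $\alpha_{\ell}(t)$ (this is exactly the content of \Cref{lemma:outerAngleIncreasing}, proved for $\alpha_{\ell}(t)\ge\tfrac{3}{4}\pi$). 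So drop that paragraph entirely; it is neither needed nor correct.
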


\begin{proof}
	We prove that $\ell(t)$ is monotonically increasing.
	With the same arguments it follows that $r(t)$ is monotonically decreasing.
	First of all, every robot that is located on the straight line between its neighbors or on the same position with at least one of its neighbors will never compute a target point that does not lie between its neighbors.
	Additionally, it will also not compute a target point to the left of its left neighbor or to the right of its right neighbor.
	Additionally, all the robots follow the movements of their neighbors and thus all robots between $r_0$ and $r_{\ell(t)}$ stay on the straight line connecting $r_0$ and $r_{\ell(t)}$.
	Hence, none of these robots causes a decrease of $\ell(t)$.
\end{proof}

\begin{restatable}{lemma}{lemmaMaxMobInnerLengthDecreasing} \label{lemma:vectorsInBetween}
	$\innerLength{}$ is monotonically decreasing.
\end{restatable}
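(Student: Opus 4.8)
The plan is to show that $\deriv{I}{t} \le 0$ at every time $t$ by summing the contributions of the individual vector lengths $\norm{w_i(t)}$ for $\ell(t) < i \le r(t)$ and using \Cref{lemma:continuousDistances}. First I would observe that, by \Cref{lemma:potentialDecreasing}, $\ell(t)$ is monotonically increasing and $r(t)$ monotonically decreasing, so the index set $\{\ell(t)+1,\dots,r(t)\}$ only shrinks over time; hence whenever $\ell$ jumps up (or $r$ jumps down) this can only remove nonnegative terms $\norm{w_i(t)}$ from the sum, causing a (downward) jump in $I(t)$ that is consistent with monotonicity. It therefore suffices to bound the right-derivative of $I(t)$ on any interval where $\ell(t)$ and $r(t)$ are constant.

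On such an interval, I would write $\innerLength{} = \sum_{i=\ell+1}^{r}\norm{w_i(t)} = \sum_{i=\ell+1}^{r}\Delta_{i-1,i}(t)$ and differentiate termwise, applying \Cref{lemma:continuousDistances} to each consecutive pair $r_{i-1}, r_i$. This gives
\[
\deriv{I}{t} = -\sum_{i=\ell+1}^{r}\bigl(\norm{v_{i-1}(t)}\cos(\beta_{i-1,i}(t)) + \norm{v_i(t)}\cos(\beta_{i,i-1}(t))\bigr),
\]
and after regrouping by robot, each inner robot $r_j$ with $\ell < j < r$ contributes a term of the form $-\norm{v_j(t)}\bigl(\cos(\beta_{j,j-1}(t)) + \cos(\beta_{j,j+1}(t))\bigr)$, while the boundary robots $r_\ell$ (resp.\ $r_r$) contribute a single cosine term weighted by their speed. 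The key geometric fact is that every robot that moves does so either along the angle bisector of the two vectors pointing to its neighbors, or (once collinear) along the segment toward the midpoint; in the bisector case $\beta_{j,j-1}(t) = \beta_{j,j+1}(t) = \alpha_j(t)/2 \in [0,\pi/2]$, so both cosines are nonnegative, and in the collinear-midpoint case the robot moves along the line toward a point between its neighbors, so again both relevant cosines are nonnegative. For the boundary robots $r_\ell, r_r$, one must also use that they move along their bisector (or toward the midpoint) so that the cosine of the angle between their velocity and the inner segment $w_{\ell+1}$ (resp.\ $w_r$) is nonnegative. Consequently every summand is $\ge 0$, so $\deriv{I}{t} \le 0$.

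The main obstacle I anticipate is handling the boundary robots $r_\ell$ and $r_r$ carefully: their velocity vectors are governed by the outer angles $\alpha_\ell(t), \alpha_r(t)$, which the strategy allows to be as large as $\pi$, and one must verify that the component of their motion in the direction of the first inner vector is never expansive — i.e.\ that $\cos(\beta_{\ell,\ell+1}(t)) \ge 0$ and $\cos(\beta_{r,r}(t)) \ge 0$. This follows because $r_\ell$ moves along the bisector of $w_\ell$ and $w_{\ell+1}$ (so the angle to $w_{\ell+1}$ is $\alpha_\ell(t)/2 \le \pi/2$), or, in the degenerate situations where $r_\ell$ is already collinear with its neighbors, along the segment toward the midpoint, which again makes the angle at most $\pi/2$; a symmetric argument applies at $r_r$. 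A secondary subtlety is that robots $r_1,\dots,r_\ell$ and $r_r,\dots,r_n$ do move (e.g.\ to stay collinear), but their motion does not enter $\deriv{I}{t}$ since the terms $\Delta_{i-1,i}$ for $i \le \ell$ or $i > r$ are not part of $\innerLength{}$; I would state this explicitly to avoid confusion. Combining the interval-wise bound with the downward jumps at the discrete index changes yields that $\innerLength{}$ is monotonically decreasing.
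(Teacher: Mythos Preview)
Your approach — differentiating $I(t)$ termwise via \Cref{lemma:continuousDistances}, regrouping by robot, and arguing each robot's net contribution is nonpositive — is essentially the paper's proof carried out more explicitly. One step is incorrect as stated, however: in the collinear case where $r_j$ lies strictly between its two neighbors on a line, the two cosines are \emph{not} both nonnegative. If $r_j$ moves along that line toward the midpoint (or tracks it with $v_j=(v_{j-1}+v_{j+1})/2$), one of $\cos\beta_{j,j-1}(t),\cos\beta_{j,j+1}(t)$ is positive and the other negative. The conclusion still holds, but for a different reason: in this situation $\hat w_j(t)=\hat w_{j+1}(t)$, so
\[
\|v_j(t)\|\bigl(\cos\beta_{j,j-1}(t)+\cos\beta_{j,j+1}(t)\bigr)=v_j(t)\cdot\bigl(-\hat w_j(t)+\hat w_{j+1}(t)\bigr)=0
\]
irrespective of $v_j(t)$, and $r_j$'s contribution to $\deriv{I}{t}$ vanishes. (In the other collinear case, where both neighbors lie on the same side of $r_j$, your claim is fine: both cosines equal $+1$.) With this correction your argument goes through; your explicit handling of the downward jumps in $I$ when $\ell(t)$ or $r(t)$ change is a detail the paper's proof leaves implicit.
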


\begin{proof}
	An inner robot $r_j$ can execute three different movements.
	Either it does not move at all, it moves with speed $1$ along the bisector between its neighbors, or it follows the movements of the midpoint of its neighbors.
	Non moving robots do not increase the length of any vector.
	Robots $r_j$ that move along the bisector decrease both $\norm{w_{j}(t)}$ and $\norm{w_{j+1}(t)}$ with speed $\cos \left(\frac{\alpha_{j}(t)}{2}\right) > 0$.
	This is a conclusion from \Cref{lemma:continuousDistances}  since $\beta_{j, j-1}(t) = \beta_{j, j+1}(t) = \frac{\alpha_{j}(t)}{2}$ and $r_i$ moves with speed $1$.
	Robots that follow the movements of their neighbors also cannot increase the length of any vector because they neither follow the movement of $r_1$ nor the movement of $r_{n}$.
	Thus, they follow robots that decrease the lengths of neighboring vectors (or do not move at all).
	Since all possible movements cause a decrease of $\innerLength{}$, $\innerLength{}$ is monotonically decreasing.
\end{proof}

\begin{restatable}{lemma}{lemmaMaxMobOuterAnglesSmall} \label{lemma:smallAnglesInnerLength}
	Consider a configuration fulfilling $\ell(t) \neq r(t)$ with an outer angle $\alpha_{i}(t) < \referenceAngle{}$, for $i \in \left\{\ell(t),r(t)\right\}$.
	Then, $\innerLength{}$ decreases with speed at least $\cos\bigl(\frac{\alpha_{i}(t)}{2}\bigr)$.
\end{restatable}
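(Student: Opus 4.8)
The plan is to pin down the contribution of the robot $r_i$ to the instantaneous rate of change of $\innerLength{}$ and to show that it alone produces the claimed decrease, while every other robot contributes non-positively by (the proof of) \Cref{lemma:vectorsInBetween}. Without loss of generality I take $i = \ell(t)$; the case $i = r(t)$ is symmetric. Since $\ell(t) \neq r(t)$, the robot $r_{\ell(t)}$ is an inner robot, and because $0 < \alpha_{\ell(t)}(t) < \referenceAngle{} < \pi$ the strategy makes $r_{\ell(t)}$ move with speed $1$ along the angle bisector of the vectors pointing to $r_{\ell(t)-1}$ and $r_{\ell(t)+1}$. In particular $\norm{v_{\ell(t)}(t)} = 1$ and, exactly as derived in the proof of \Cref{lemma:vectorsInBetween}, $\beta_{\ell(t),\ell(t)-1}(t) = \beta_{\ell(t),\ell(t)+1}(t) = \tfrac{\alpha_{\ell(t)}(t)}{2}$.

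Next I would write $\innerLength{} = \sum_{j=\ell(t)+1}^{r(t)}\norm{w_j(t)} = \sum_{j=\ell(t)+1}^{r(t)} \Delta_{j-1,j}(t)$ and differentiate term by term with \Cref{lemma:continuousDistances}:
\begin{align*}
	\deriv{I}{t}
	&= \sum_{j=\ell(t)+1}^{r(t)} \deriv{\Delta_{j-1,j}}{t} \\
	&= -\sum_{j=\ell(t)+1}^{r(t)} \bigl( \norm{v_{j-1}(t)}\cos\beta_{j-1,j}(t) + \norm{v_j(t)}\cos\beta_{j,j-1}(t) \bigr).
\end{align*}
Regrouping this sum by robot, the only edge of $\innerLength{}$ incident to $r_{\ell(t)}$ is $w_{\ell(t)+1}(t)$ — the edge $w_{\ell(t)}(t)$ is \emph{not} part of $\innerLength{}$ — so $r_{\ell(t)}$ contributes exactly the single term $-\norm{v_{\ell(t)}(t)}\cos\beta_{\ell(t),\ell(t)+1}(t) = -\cos\bigl(\tfrac{\alpha_{\ell(t)}(t)}{2}\bigr)$. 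For every other robot the corresponding contribution — a sum over its at most two incident edges that lie in $\innerLength{}$ — is at most $0$: this is precisely the case analysis carried out in the proof of \Cref{lemma:vectorsInBetween} (a non-moving robot, a robot moving along its bisector, a robot following the midpoint of its neighbours), which shows that no robot ever increases the length of an adjacent vector. Hence $\deriv{I}{t} \le -\cos\bigl(\tfrac{\alpha_{\ell(t)}(t)}{2}\bigr)$.

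Finally I would note that $\ell(t)$ is constant in a neighbourhood of $t$ (the bend at $r_{\ell(t)}$ cannot have straightened, as $\alpha_{\ell(t)}(t) < \referenceAngle{} < \pi$) and that $r(t)$ is non-increasing by \Cref{lemma:potentialDecreasing}; a decrease of $r(t)$ only removes nonnegative summands $\norm{w_j(t)}$ from $\innerLength{}$, so $\innerLength{}$ can only jump downwards and the bound above persists. This yields the lemma; note also that $\cos\bigl(\tfrac{\alpha_i(t)}{2}\bigr) > \cos\bigl(\tfrac{\referenceAngle{}}{2}\bigr) = 1-\tau$, matching the rate quoted in \Cref{section:oneDimeOutline}. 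The main obstacle is the bookkeeping in the middle step: one must make sure that the edge along which $r_{\ell(t)}$ pulls is genuinely counted in $\innerLength{}$ and — crucially — that the neighbour $r_{\ell(t)+1}$ sharing that edge cannot undo the decrease (which is why I reuse, rather than redo, the movement case analysis of \Cref{lemma:vectorsInBetween}), together with the identity $\beta_{\ell(t),\ell(t)+1}(t) = \tfrac{\alpha_{\ell(t)}(t)}{2}$ for a robot moving along its bisector.
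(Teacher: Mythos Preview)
Your proof is correct and follows essentially the same argument as the paper: isolate the contribution of $r_{\ell(t)}$'s bisector motion (which alone yields the $-\cos\bigl(\tfrac{\alpha_{\ell(t)}(t)}{2}\bigr)$ term) and invoke the per-robot case analysis of \Cref{lemma:vectorsInBetween} to see that no other movement can increase $\innerLength{}$. The only cosmetic difference is that the paper bundles the collinear stretch from $r_{\ell(t)}$ to the next bend $r_{\ell^{+}(t)}$ into the single distance $\Delta_{\ell,\ell^{+}}(t)$ and bounds that, whereas you work edge by edge and regroup by robot; both routes rest on the same identity $\beta_{\ell(t),\cdot}(t)=\tfrac{\alpha_{\ell(t)}(t)}{2}$.
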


\begin{proof}
	The robot $r_i$ moves with speed $1$ along the bisector of vectors pointing to its neighbors.
	This movement decreases $\Delta_{i,i^{+}}(t)$ with speed $\cos \left(\frac{\alpha_{i}(t)}{2}\right)$.
	This can be derived from \Cref{lemma:continuousDistances} by observing $\beta_{i,i^{+}}(t) = \frac{\alpha_{i}(t)}{2}$.
	At the same time, $r_{i^{+}(t)}$ is defined such that $\alpha_{i^{+}}(t) < \pi$ and thus $r_{i^{+}(t)}$ either does not move at all or it moves also along the bisector between its neighbors.
	Hence, the movement of $r_{i^{+}(t)}$ can also not increase $\Delta_{i,i^{+}}(t)$.
	All robots in between stay on the straight line between $r_{i(t)}$ and $r_{i^{+}(t)}$.
	Lastly, observe $\Delta_{i,i^{+}}(t)$ is part of $\innerLength{}$ which proves the lemma.
\end{proof}

While $\innerLength{}$ can only decrease (\Cref{lemma:vectorsInBetween}), $\leftOuter{}$ and $\rightOuter{}$ can either increase or decrease, depending on the current size of outer angles.

\begin{restatable}{lemma}{lemmaMaxMobOuterLengthSmallAngle} \label{lemma:outerAnglesSmallVectors}
	Consider a configuration with $\ell(t) \neq r(t)$ and an outer angle fulfilling $0 \leq \alpha_{i}(t) < \referenceAngle{}$, for $i \in \left\{\ell(t), r(t)\right\}$.
	Then, $- \tau < \deriv{O_i}{t} \leq 0$.
\end{restatable}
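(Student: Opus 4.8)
The plan is to express $\deriv{O_i}{t}$ as a single distance derivative and to split it into the contribution of the outer robot and that of $r_i$; I treat $i=\ell(t)$ and write $\ell:=\ell(t)$, the case $i=r(t)$ being symmetric (replace $r_1,r_\ell$ by $r_n,r_{r(t)}$). By the definition of $\ell(t)$ the robots $r_1,\dots,r_\ell$ are collinear and lie in this order on the ray from $r_1$ in direction $\widehat{w}_2(t)$, so that $O_\ell(t)=\sum_{i=2}^{\ell}\norm{w_i(t)}=\norm{\sum_{i=2}^{\ell}w_i(t)}=\Delta_{1,\ell}(t)$ and hence $\deriv{O_\ell}{t}=\deriv{\Delta_{1,\ell}}{t}$. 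Applying \Cref{lemma:continuousDistances} to $r_1$ and $r_\ell$ gives
\[
	\deriv{O_\ell}{t} \;=\; -\bigl(\norm{v_1(t)}\cos\beta_{1,\ell}(t)\;+\;\norm{v_\ell(t)}\cos\beta_{\ell,1}(t)\bigr).
\]

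For the $r_\ell$ term I use that $\alpha_\ell(t)<\referenceAngle{}$ forces $r_\ell$ to move with speed $1$ along the bisector of the two vectors to its neighbours. The same angle computation used for \Cref{lemma:vectorsInBetween,lemma:smallAnglesInnerLength} shows that this bisector makes angle $\alpha_\ell(t)/2$ with the direction from $r_\ell$ to $r_{\ell-1}$; since $r_1,\dots,r_{\ell-1}$ are collinear, that is also the direction to $r_1$, so $\beta_{\ell,1}(t)=\alpha_\ell(t)/2$ and $\norm{v_\ell(t)}=1$. Thus the $r_\ell$ term equals $-\cos(\alpha_\ell(t)/2)$, and $\alpha_\ell(t)<\referenceAngle{}=\referenceAngleConcrete{}$ yields the key bound $\cos(\alpha_\ell(t)/2)>1-\tau$.

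For the $r_1$ term I want to show $v_1(t)=-(1-\tau)\widehat{w}_2(t)$, i.e.\ $r_1$ moves directly away from $r_2$ (hence from $r_\ell$) at speed exactly $1-\tau$, so that $\beta_{1,\ell}(t)=\pi$ and the $r_1$ term equals $+(1-\tau)$. If $\norm{w_2(t)}<1$ this is the definition of \maxmob/. The only delicate case is $\norm{w_2(t)}=1$: here one argues that $r_1$ cannot keep its neighbour at distance exactly $1$, since $r_\ell$'s bisector motion has a component of size $\cos(\alpha_\ell(t)/2)>1-\tau$ directed along the chain which, transmitted inward through the midpoint-following robots $r_{\ell-1},\dots,r_2$, would force $r_1$ to move faster than $1-\tau$ to maintain $\norm{w_2(t)}=1$; being unable to do so, $r_1$ reverts to its default of moving away at speed $1-\tau$ while $\norm{w_2(t)}$ decreases. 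I expect this velocity-bookkeeping step (propagating the "shrinking" through the collinear outer segment and checking the speed cap is exceeded) to be the main obstacle; everything else is routine.

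Combining the two contributions gives $\deriv{O_\ell}{t}=(1-\tau)-\cos(\alpha_\ell(t)/2)$. Since $\cos(\alpha_\ell(t)/2)>1-\tau$ this is $\le 0$, and since $\cos(\alpha_\ell(t)/2)\le 1$ it is $\ge(1-\tau)-1=-\tau$, strictly whenever $\alpha_\ell(t)>0$; hence $-\tau<\deriv{O_\ell}{t}\le 0$. The borderline situation in which $r_\ell$ is collinear with both neighbours (so it heads straight at $r_1$ under the "move towards the midpoint" branch) is handled the same way and gives the same bound, and the case $i=r(t)$ follows from the symmetric argument with $r_n$ and $r_{r(t)}$ in place of $r_1$ and $r_\ell$.
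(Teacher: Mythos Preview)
Your proposal is correct and follows essentially the same approach as the paper: identify $O_\ell(t)$ with $\Delta_{1,\ell}(t)$, apply \Cref{lemma:continuousDistances}, compute $\beta_{\ell,1}(t)=\alpha_\ell(t)/2$ and $\beta_{1,\ell}(t)=\pi$ with $\norm{v_\ell(t)}=1$ and $\norm{v_1(t)}=1-\tau$, and conclude $\deriv{O_\ell}{t}=(1-\tau)-\cos(\alpha_\ell(t)/2)\in(-\tau,0]$. Your treatment is in fact more explicit than the paper's on why $r_1$ still moves at full speed $1-\tau$ when $\norm{w_2(t)}=1$; the paper compresses this into a single sentence, while you spell out the propagation argument along the collinear outer segment (the same mechanism appears in detail in the proof of \Cref{lemma:outerAngleMediumVectors}).
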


\begin{proof}
	Assume $i = \ell(t)$ and let us analyze $\deriv{\leftOuterShort{}}{t}$.
	In this configuration, $\beta_{1, \ell}(t) = \pi$ and $\beta_{\ell, 1}(t) = \frac{\alpha_{\ell}(t)}{2}$.
	Since $\alpha_{\ell}(t) < \referenceAngle{}$ it follows $\cos \left(\beta_{\ell, 1}(t)\right) > 1-\tau$.
	Thus, the outer robot moves at full speed, i.e.\ $\norm{v_1(t)} = 1-\tau$.
	\Cref{lemma:continuousDistances} yields $\deriv{\leftOuterShort{}}{t} = - \left(-\left(1-\tau\right) + \cos \left(\frac{\alpha_{\ell}(t)}{2}\right)\right) = 1 - \tau -\cos \left(\frac{\alpha_{\ell}(t)}{2}\right)$.
	Lastly, we observe $1-\tau < \cos \left(\frac{\alpha_{\ell}(t)}{2}\right) \leq 1$ and conclude $- \tau $ $< \deriv{O_i}{t}$ $\leq 0$.
\end{proof}

In case the outer angle $\alpha_{i}(t)$ has a size of at least $\referenceAngle{}$, the robot $r_i$ does not move at all in case $O_i(t) < \gamma_i(t)$.
As a consequence, the length of $O_i(t)$ increases with constant speed.

\begin{restatable}{lemma}{lemmaMaxMobOuterLengthDecreasing}  \label{lemma:outerAngleMediumVectors}
	Consider a configuration with an outer angle fulfilling $\alpha_{i}(t) \geq \referenceAngle{}$ and $O_i(t) < \gamma_i(t)$.
	Then, $ \deriv{O_i}{t} = 1 - \tau$.
\end{restatable}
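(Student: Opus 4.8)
The plan is to reduce to the case $i = \ell(t)$ (the case $i = r(t)$ being symmetric) and then compute $\deriv{\leftOuterShort{}}{t}$ directly from \Cref{lemma:continuousDistances}, mirroring the proof of \Cref{lemma:outerAnglesSmallVectors}. First I would record the basic geometric fact that, by the definition of $\ell(t)$, the robots $r_1,\dots,r_{\ell(t)}$ are collinear with $\widehat{w}_j(t) = \widehat{w}_2(t)$ (or $w_j(t) = (0,0)$) for all $2 \le j \le \ell(t)$, so that $\leftOuter{} = \sum_{j=2}^{\ell(t)}\norm{w_j(t)} = \Delta_{1,\ell(t)}(t)$. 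Moreover $\alpha_{\ell(t)}(t) \in [\referenceAngle{}, \pi)$ (it cannot equal $\pi$, since then $\widehat{w}_{\ell(t)+1}(t) = \widehat{w}_2(t)$, contradicting the definition of $\ell(t)$), hence $\ell(t)$ is locally constant and $\deriv{\leftOuterShort{}}{t} = \deriv{\Delta_{1,\ell}}{t}$.

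The heart of the argument is to pin down the velocities of the two endpoints $r_1$ and $r_{\ell(t)}$ of this segment. I claim $r_{\ell(t)}$ is at rest: by the strategy an inner robot moves only if $\norm{w_{\ell}(t)} = 1$, $\norm{w_{\ell+1}(t)} = 1$, or $\alpha_{\ell}(t) < \referenceAngle{}$, and the third fails by hypothesis. For the first two I would use that $\leftOuter{} < \gamma_\ell(t) = \ell(t)-1$ means the collinear prefix $r_1,\dots,r_{\ell(t)}$ is not yet fully stretched, and invoke the invariant that this prefix straightens and stretches from the outside inward — so its innermost vector $w_{\ell(t)}$ attains length $1$ only once the whole prefix has, and $w_{\ell(t)+1}$, which forms a genuine bend $\alpha_\ell < \pi$ at $r_{\ell}$, would be shortened by a bisector move and hence cannot sit at length $1$ while the prefix is unstretched. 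This gives $\norm{w_{\ell}(t)} < 1$ and $\norm{w_{\ell+1}(t)} < 1$, so $v_{\ell(t)}(t) = (0,0)$. Symmetrically I claim $r_1$ moves at speed exactly $1-\tau$ in the direction $-\widehat{w}_2(t)$: if $\norm{w_2(t)} < 1$ this is immediate from the strategy; and if $\norm{w_2(t)} = 1$, the outer robot moves along the line away from $r_2$ at its maximal admissible speed, which equals $1-\tau$ since bringing this would-be distance-$1$ move below $1-\tau$ would again force the entire prefix (and thus $w_{\ell(t)}$) up to length $1$, contradicting $\leftOuter{} < \gamma_\ell(t)$. In either case $v_1(t)$ is antiparallel to $p_{\ell(t)}(t) - p_1(t)$, so $\beta_{1,\ell}(t) = \pi$ and $\norm{v_1(t)} = 1-\tau$.

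Plugging into \Cref{lemma:continuousDistances} then finishes the proof: $\deriv{\leftOuterShort{}}{t} = \deriv{\Delta_{1,\ell}}{t} = -\bigl(\norm{v_1(t)}\cos\beta_{1,\ell}(t) + \norm{v_{\ell(t)}(t)}\cos\beta_{\ell,1}(t)\bigr) = -\bigl((1-\tau)(-1) + 0\bigr) = 1-\tau$. The main obstacle is the middle paragraph: rigorously extracting from the single inequality $O_i(t) < \gamma_i(t)$ the facts that the corner robot $r_{\ell(t)}$ is stationary and that the outer robot $r_1$ runs at its full speed $1-\tau$. Everything hinges on a clean invariant describing how the collinear prefix $r_1,\dots,r_{\ell(t)}$ stretches — in particular that none of its vectors, and certainly not $w_{\ell(t)}$ nor the first bent vector $w_{\ell(t)+1}$, can reach length $1$ before the prefix is completely stretched — after which the final computation is a one-line application of \Cref{lemma:continuousDistances}.
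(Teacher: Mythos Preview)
Your overall plan—reduce to $i=\ell(t)$, show $r_{\ell(t)}$ is stationary and $r_1$ moves at speed $1-\tau$, then apply \Cref{lemma:continuousDistances}—is exactly the paper's structure, and the final computation is identical. The divergence, and the gap, is in how you justify the middle claims.

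You want to invoke an invariant that the collinear prefix ``stretches from the outside inward,'' so that $\norm{w_{\ell}(t)}$ reaches $1$ only after every other $w_j$, $j\le\ell(t)$, already has. But this invariant is false as stated: nothing prevents a configuration with, say, $\norm{w_2(t)} = \tfrac12$ and $\norm{w_3(t)} = \dots = \norm{w_{\ell}(t)} = 1$, which satisfies $O_\ell(t) < \gamma_\ell(t)$ yet has $\norm{w_{\ell}(t)} = 1$. So you cannot conclude $\norm{w_{\ell}(t)} < 1$ (nor, by a similar counterexample, $\norm{w_2(t)} < 1$) from $O_\ell(t) < \gamma_\ell(t)$ alone.

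The paper handles this more directly via a local chain-reaction argument rather than a global ordering invariant. It splits on whether $\norm{w_{\ell}(t)} < 1$ (done) or $\norm{w_{\ell}(t)} = 1$. In the latter case, since $O_\ell(t) < \gamma_\ell(t)$ there is a largest $j < \ell(t)$ with $\norm{w_j(t)} < 1$; robot $r_j$ then moves at speed $1$ toward $r_{j+1}$ (its midpoint lies on that side because $\norm{w_j(t)} < \norm{w_{j+1}(t)}$), and since every robot between $r_j$ and $r_{\ell(t)}$ tracks the midpoint of its neighbors, $\Delta_{j,\ell}(t)$—and hence each $\norm{w_i(t)}$ for $j < i \le \ell(t)$—is driven below $1$. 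So after infinitesimal time $\norm{w_{\ell}(t)} < 1$ and $r_{\ell(t)}$ does not move. The same argument run from the other end (take the smallest $k>2$ with $\norm{w_k(t)} < 1$) shows $\norm{w_2(t)}$ is, or immediately becomes, strictly below $1$, so $r_1$ runs at its full speed $1-\tau$. Replace your stretching-order invariant with this ``pick the nearest deficient index and chase the cascade'' argument; that is what actually does the work.
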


\begin{proof}
	W.l.o.g.\ assume that $i = \ell(t)$.
	We prove that $r_{\ell(t)}$ does not move at all in this case.
	In case $\|w_{\ell}(t)\| < 1$ the robot $r_{\ell(t)}$ does not move at all due to the definition of the strategy.
	Consider the case $\norm{w_{\ell}(t)} = 1$.
	Since $O_i(t) < \gamma_i(t)$, there must be at least one vector with index less than $\ell(t)$ that has a length smaller than $1$.
	Let $j$ be the highest index of a such a vector so that $\norm{w_i(t)} = 1$ for all $j < i < \ell(t)$.
	The robot $r_j$ moves with speed $1$ into the direction of $r_{j+1}$ as  $\norm{w_j(t)} < \norm{w_{j+1}(t)}$ and therefore the midpoint between $r_{j-1}$ and $r_{j+1}$ must lie closer to $r_{j+1}$.
	Note that $\beta_{j, \ell}(t) = 0$ and thus, by \Cref{lemma:continuousDistances}, $\deriv{\Delta_{j,\ell}}{t} = -1$.
	Since all robots with an index between $j$ and $\ell(t)$ follow the movement of the midpoint between their neighbors, it follows that all vectors $w_i(t)$ for an index $j < i \leq \ell(t)$ decrease in length.
	Hence, after an infinitesimal time interval it also holds $\norm{w_{\ell}(t)} < 1$ and thus $r_\ell(t)$ does not move at all.

	At the same time, $r_1$ is able to move with speed $1-\tau$ away from $r_{\ell(t)}$ and increases the distance between $r_1$ and $r_{\ell(t)}$ with speed $1-\tau$.
	To see this, we again have to consider two cases.
	In case $\|w_{2}(t)\| < 1$, $r_1$ moves with speed $1-\tau$ due to the definition of the strategy.
	In the other case, namely $\norm{w_2(t)} = 1$ consider the smallest index $k < \ell(t)$ such that $\|w_i(t)\| = 1$ for all $1 < i < k$ and $\norm{w_{k}(t)} < 1$.
	The robot $r_{k-1}$ moves with speed $1$ towards the midpoint between $r_{k-2}$ and $r_{k}$.
	Due to the definition of $k$, this midpoint lies closer to $r_{k-2}$ than to $r_k$.
	This implies $\beta_{k, 1}(t) = 0$ and by \Cref{lemma:continuousDistances} it follows that the movement of $r_{k}$ decreases $\Delta_{k,1}(t)$ with speed $1$.
	Since all robots with an index between $1$ and $k$ follow the movement of the midpoint between their neighbors, it follows that all vectors $w_i(t)$ for an index $2 \leq i \leq k$ decrease in length.
	This implies that $\norm{w_2(t)}$ decreases after an infinitesimal time interval such that $r_1$ is able to move at its maximum speed $1-\tau$.

	Combining the movements of $r_1$ and $r_{\ell(t)}$ it follows that $r_{\ell(t)}$ does not move at all, while $r_{1}$ moves with speed $1-\tau$ with an angle $\beta_{1, \ell}(t) = \pi$ such that
	$\deriv{\leftOuterShort{}}{t} = 1-\tau$ by \Cref{lemma:continuousDistances}.
	The arguments for $\rightOuter{}$ are analogous.

\end{proof}

In configurations with an outer angle of a size at most $\psi$ it holds that $I(t)$ decreases with speed $1-\tau$.
Therefore, the total time such a configuration can exist ist upper bounded by $\frac{n-3}{1-\tau}$ since $I(t)$ is upper bounded by $n-3$.

\begin{restatable}{lemma}{lemmaMaxMobInnerLength} \label{lemma:innerLengh}
	In configurations having an outer angle $\alpha_{i}(t) < \referenceAngle{}$, for $i \in \left\{\ell(t).r(t)\right\}$, it holds $\innerLength{}$ decreases with speed at least $1- \tau$.
\end{restatable}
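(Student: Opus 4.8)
The statement is essentially a corollary of \Cref{lemma:smallAnglesInnerLength} once we plug in the concrete value $\referenceAngle{} = \referenceAngleConcrete{}$. My plan is therefore a short two-step argument. First, I would invoke \Cref{lemma:smallAnglesInnerLength} for the given configuration (whose hypothesis $\ell(t) \neq r(t)$ is also the only case in which $\innerLength{}$ is nonzero, so the only case with content), obtaining that $\innerLength{}$ decreases with speed at least $\cos\bigl(\tfrac{\alpha_i(t)}{2}\bigr)$. Second, I would lower-bound $\cos\bigl(\tfrac{\alpha_i(t)}{2}\bigr)$ by $\oneminustau$.

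For the second step I would argue as follows. By hypothesis $0 \le \alpha_i(t) < \referenceAngle{} = 2\cos^{-1}(1-\tau)$, hence $0 \le \tfrac{\alpha_i(t)}{2} < \cos^{-1}(1-\tau)$. Since $\tau \in (0,\tfrac12]$ we have $1-\tau \in [\tfrac12, 1)$, so $\cos^{-1}(1-\tau) \in (0, \tfrac{\pi}{3}]$; in particular both $\tfrac{\alpha_i(t)}{2}$ and $\cos^{-1}(1-\tau)$ lie in $[0, \tfrac{\pi}{2}]$, an interval on which $\cos$ is strictly decreasing. Applying $\cos$ to the above inequality therefore gives $\cos\bigl(\tfrac{\alpha_i(t)}{2}\bigr) > \cos\bigl(\cos^{-1}(1-\tau)\bigr) = 1-\tau$. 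Combined with \Cref{lemma:smallAnglesInnerLength}, this shows $\innerLength{}$ decreases with speed at least $\cos\bigl(\tfrac{\alpha_i(t)}{2}\bigr) > 1-\tau$, which proves (in fact slightly strengthens) the claim.

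There is no real obstacle here: the substantive work — analysing the bisector movement of $r_i$ via \Cref{lemma:continuousDistances} and checking that the opposite endpoint of the inner segment cannot lengthen it (and, for monotonicity of $\innerLength{}$ in general, \Cref{lemma:vectorsInBetween}) — has already been carried out in \Cref{lemma:smallAnglesInnerLength}. The only point needing the slightest care is making sure the comparison of cosines happens inside the range where $\cos$ is monotone, which is precisely why the definition of $\referenceAngle{}$ caps $\tfrac{\alpha_i(t)}{2}$ at $\cos^{-1}(1-\tau) \le \tfrac{\pi}{3}$.
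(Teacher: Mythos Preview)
Your proposal is correct and is essentially the same argument as the paper's: both reduce the claim to the inequality $\cos\bigl(\tfrac{\alpha_i(t)}{2}\bigr) \ge 1-\tau$ for $\alpha_i(t) < \referenceAngle{} = \referenceAngleConcrete{}$, applied to the bisector movement of $r_i$ within the segment from $r_i$ to $r_{i^{+}}$. The only cosmetic difference is that the paper re-derives the bisector bound on $\Delta_{i,i^{+}}(t)$ directly (and also notes the stronger $2(1-\tau)$ decrease when $r_{i^{+}}$ moves as well), whereas you cleanly factor this through \Cref{lemma:smallAnglesInnerLength}.
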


\begin{proof}
	W.l.o.g.\ assume $i = \ell(t)$, the arguments for $i = r(t)$ are analogous.
	In such a configuration, $\ell(t)$ moves with speed $1$ along the bisector of vectors pointing to its neighbors.
	At the same time, $r_{\ell^{+}}(t)$ either does not move or it moves with speed $1$ along its bisector (provided $\alpha_{\ell^{+}}(t) < \referenceAngle{}$).
	In case it does not move, $\Delta_{\ell,\ell^{+}}(t)$ decreases with speed $\cos \left(\frac{\alpha_{\ell}(t)}{2}\right) \geq 1- \tau$, as $\beta_{\ell, \ell^{+}}(t) = \frac{\alpha_{\ell}(t)}{2}$.
	In case both move, $\Delta_{\ell,\ell^{+}}(t)$ decreases with speed $\cos \left(\frac{\alpha_{\ell}(t)}{2}\right) + \cos \left(\frac{\alpha_{\ell^{+}}(t)}{2}\right) \geq 2 \cdot \left(1- \tau\right)$.
	This can be derived from \Cref{lemma:continuousDistances} since $\beta_{\ell, \ell^{+}}(t) = \frac{\alpha_{\ell}(t)}{2}$ and $\beta_{\ell^{+}, \ell}(t) = \frac{\alpha_{\ell^{+}}(t)}{2}$ and $\norm{v_{\ell}(t)} = \norm{v_{\ell^{+}}(t)}$ $ = 1$.
	Since $\Delta_{\ell,\ell^{+}}(t)$ is part of $\innerLength{}$  wen can conclude that $\innerLength{}$ decreases with speed at least $1-\tau$.
\end{proof}

\begin{corollary} \label{lemma:turnaroundTotal}
	The total time an outer angle of size less than \referenceAngle{} exists, while $\ell(t) \neq r(t)$, is bounded by $
	\frac{n-3}{1-\tau}$.
\end{corollary}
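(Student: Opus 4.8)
The plan is to use the inner length $I(t)$ as a single global progress measure: every unit of time spent with a small outer angle consumes at least $1-\tau$ of $I(t)$, while $I(t)$ is never replenished.

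First I would record the a priori bound $I(t) \le n-3$ for every $t$. By definition $I(t) = \sum_{i=\ell(t)+1}^{r(t)} \norm{w_i(t)}$ is a sum of at most $n-3$ terms (the indices $i$ satisfy $\ell(t) < i \le r(t)$ with $\ell(t) \ge 2$ and $r(t) \le n-1$ whenever $\ell(t)\neq r(t)$, and the sum is empty otherwise), each term being at most $1$ since the chain stays connected throughout the execution. In particular $I(0) \le n-3$.

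Next, let $T$ be the set of times $t$ at which $\ell(t) \neq r(t)$ and at least one outer angle is smaller than $\referenceAngle{}$; the total length of $T$ is exactly the quantity to be bounded. By \Cref{lemma:innerLengh}, the right derivative of $I$ satisfies $\deriv{I}{t} \le -(1-\tau)$ at every $t \in T$, and by \Cref{lemma:vectorsInBetween} $I$ is non-increasing at every other time as well. Since $I$ is monotone, bounded below by $0$, and starts at $I(0) \le n-3$, its total decrease over the whole execution is at most $n-3$; on the other hand, this total decrease is at least the decrease accumulated during the times in $T$, which is at least $(1-\tau)\,\lvert T\rvert$. Combining the two estimates gives $(1-\tau)\,\lvert T\rvert \le n-3$, i.e.\ $\lvert T\rvert \le \frac{n-3}{1-\tau}$, as claimed.

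I expect the only mildly delicate point to be making the inequality ``total decrease of $I$'' $\ge (1-\tau)\,\lvert T\rvert$ fully rigorous, since $\ell(t)$ and $r(t)$ change by integer jumps, so $I$ may drop discontinuously and is only right-differentiable. This is handled by cutting the time axis at the finitely many times where $\ell$ or $r$ jumps: on each open subinterval $\ell$ and $r$ are constant, $I$ is a continuous sum of fixed-index terms, and integrating the rate bounds of \Cref{lemma:innerLengh,lemma:vectorsInBetween} over that subinterval is routine; a jump of $\ell$ or $r$ only decreases $I$ further (a vector leaves the inner part), so it can only help the bound. Everything else is a one-line computation, so there is no substantial obstacle.
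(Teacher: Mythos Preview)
Your proposal is correct and is exactly the argument the paper intends: the corollary is stated without proof, and the surrounding text justifies it by noting that $I(t)$ is bounded by $n-3$, is monotonically decreasing (\Cref{lemma:vectorsInBetween}), and drops at rate at least $1-\tau$ whenever a small outer angle is present (\Cref{lemma:innerLengh}). Your additional care about the integer jumps of $\ell(t)$ and $r(t)$ and the resulting discontinuities of $I$ is a welcome refinement the paper does not spell out.
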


Next, we analyze the behavior of outer angles that have a size of at least \referenceAngle{}.
A robot corresponding to an outer angle $\alpha_{i}(t) \geq \referenceAngle{}$ only moves in case $O_i(t) = \gamma_i(t)$.
The following lemmata assume that $O_i(t) = \gamma_i(t)$.

\begin{restatable}{lemma}{lemmaMaxMobOuterAngleMedium} \label{lemma:outerAngleMedium}
	Assume that $i \in \left\{\ell(t), r(t)\right\}$.
	In configurations with an outer angle of size $\alpha_{i}(t): \referenceAngle{} \leq \alpha_{i}(t) \leq \frac{3}{4} \pi$ while $O_{i}(t) = \gamma_i(t)$ and $\ell(t) \neq r(t): \deriv{\innerLengthShort{}}{t} \leq - \frac{\sqrt{2-\sqrt{2}}}{2}$ .
\end{restatable}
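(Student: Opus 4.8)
Without loss of generality I would take $i = \ell(t)$; the case $i = r(t)$ is symmetric. The plan is to run the same argument as in \Cref{lemma:smallAnglesInnerLength}, tracking the single distance $\Delta_{\ell,\ell^{+}}(t)$, but replacing the estimate $\cos(\alpha_{\ell}(t)/2) > 1-\tau$ (which is valid only for $\alpha_{\ell}(t) < \referenceAngle{}$) by the estimate forced here by the upper bound $\alpha_{\ell}(t) \leq \frac{3}{4}\pi$. First I would observe that $\leftOuter{} = \gamma_{\ell}(t)$ means the left outer length is maximal, hence $\norm{w_{j}(t)} = 1$ for every $2 \leq j \leq \ell(t)$ and in particular $\norm{w_{\ell}(t)} = 1$. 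Combined with $\alpha_{\ell}(t) \geq \referenceAngle{} > 0$, the strategy makes $r_{\ell(t)}$ move with speed $1$ along the inner angle bisector of the two vectors pointing to its neighbors (it is not collinear with them, so it is in the bisector phase, not the midpoint phase).

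Next I would bound $\deriv{\Delta_{\ell,\ell^{+}}}{t}$ using \Cref{lemma:continuousDistances}. Since $\alpha_{j}(t) = \pi$ for all $\ell(t) < j < \ell^{+}(t)$, the robots $r_{\ell(t)}, r_{\ell(t)+1}, \dots, r_{\ell^{+}(t)}$ lie on a common line in direction $\widehat{w}_{\ell(t)+1}(t)$, so the direction from $r_{\ell(t)}$ to $r_{\ell^{+}(t)}$ is exactly $\widehat{w}_{\ell(t)+1}(t)$, which gives $\beta_{\ell,\ell^{+}}(t) = \tfrac{\alpha_{\ell}(t)}{2}$. By definition of $\ell^{+}(t)$ we have $\alpha_{\ell^{+}}(t) < \pi$, so $r_{\ell^{+}(t)}$ is not collinear with its neighbors and hence either does not move or moves along its own inner bisector, in which case $\beta_{\ell^{+},\ell}(t) = \tfrac{\alpha_{\ell^{+}}(t)}{2} < \tfrac{\pi}{2}$; in both cases $\norm{v_{\ell^{+}}(t)}\cos\beta_{\ell^{+},\ell}(t) \geq 0$. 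Plugging $\norm{v_{\ell}(t)} = 1$ into \Cref{lemma:continuousDistances} then gives $\deriv{\Delta_{\ell,\ell^{+}}}{t} = -\bigl(\cos(\tfrac{\alpha_{\ell}(t)}{2}) + \norm{v_{\ell^{+}}(t)}\cos\beta_{\ell^{+},\ell}(t)\bigr) \leq -\cos\bigl(\tfrac{\alpha_{\ell}(t)}{2}\bigr)$.

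To pass from $\Delta_{\ell,\ell^{+}}(t)$ to $\innerLength{}$, I would argue (as in \Cref{lemma:smallAnglesInnerLength,lemma:potentialDecreasing}) that when $\ell(t) \neq r(t)$ the index $\ell^{+}(t)$ exists and satisfies $\ell^{+}(t) \leq r(t)$, so by collinearity of $r_{\ell(t)},\dots,r_{\ell^{+}(t)}$ we may write $\innerLength{} = \Delta_{\ell,\ell^{+}}(t) + \sum_{j=\ell^{+}(t)+1}^{r(t)} \norm{w_{j}(t)}$. All robots with index in $\{\ell^{+}(t),\dots,r(t)\}$ are inner robots moving along a bisector or towards the midpoint of their neighbors, so, by the same case distinction used in the proof of \Cref{lemma:vectorsInBetween}, the tail sum is non-increasing. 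Hence $\deriv{\innerLengthShort{}}{t} \leq \deriv{\Delta_{\ell,\ell^{+}}}{t} \leq -\cos\bigl(\tfrac{\alpha_{\ell}(t)}{2}\bigr)$, and since $\referenceAngle{} \leq \alpha_{\ell}(t) \leq \frac{3}{4}\pi$ implies $\tfrac{\alpha_{\ell}(t)}{2} \leq \tfrac{3}{8}\pi$, the elementary identity $\cos\bigl(\tfrac{3}{8}\pi\bigr) = \sin\bigl(\tfrac{\pi}{8}\bigr) = \tfrac{1}{2}\sqrt{2-\sqrt{2}}$ yields $\deriv{\innerLengthShort{}}{t} \leq -\tfrac{1}{2}\sqrt{2-\sqrt{2}}$, as claimed.

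The step I expect to be most delicate is the bookkeeping around $\ell^{+}(t)$: verifying that it is well defined with $\ell^{+}(t) \leq r(t)$ precisely when $\ell(t) \neq r(t)$, that collinearity legitimately lets me collapse $\sum_{j=\ell(t)+1}^{\ell^{+}(t)}\norm{w_j(t)}$ into the single distance $\Delta_{\ell,\ell^{+}}(t)$, and that neither the motion of $r_{\ell^{+}(t)}$ nor that of the straight-chain robots strictly between $r_{\ell(t)}$ and $r_{\ell^{+}(t)}$ can increase $\Delta_{\ell,\ell^{+}}(t)$ or the tail sum. This is exactly the kind of reasoning already carried out for \Cref{lemma:potentialDecreasing,lemma:vectorsInBetween,lemma:smallAnglesInnerLength}, so it is conceptually routine but has to be spelled out; the genuinely new content of this lemma is only the substitution of the angle bound $\alpha_{\ell}(t) \leq \frac{3}{4}\pi$ together with the value $\cos(3\pi/8) = \tfrac{1}{2}\sqrt{2-\sqrt{2}}$.
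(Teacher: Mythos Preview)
Your proof is correct and follows essentially the same approach as the paper: both argue that $O_{\ell}(t)=\gamma_{\ell}(t)$ forces $\norm{w_{\ell}(t)}=1$, so $r_{\ell(t)}$ moves at speed $1$ along its bisector, and then use \Cref{lemma:continuousDistances} with $\beta_{\ell,\ell^{+}}(t)=\alpha_{\ell}(t)/2\leq 3\pi/8$ together with the non-increase of the remaining part of $\innerLength{}$ to obtain the bound $\cos(3\pi/8)=\tfrac12\sqrt{2-\sqrt{2}}$. The paper's proof additionally remarks that, because $r_{\ell(t)}$ decreases $\Delta_{1,\ell}(t)$ with speed at most $1-\tau$ while $r_1$ moves away at speed up to $1-\tau$, the equality $O_{\ell}(t)=\gamma_{\ell}(t)$ is preserved; this observation is not needed for the instantaneous derivative statement of the lemma but is what makes \Cref{corollary:mediumSizeTime} go through, so you may want to record it as well.
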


\begin{proof}
	W.l.o.g. assume that $i = \ell(t)$.
	Since $\leftOuter{} = \gamma_\ell(t)$, it holds $\norm{w_{\ell}(t)} = 1$ and thus $r_{\ell(t)}$ moves with speed $1$ along the bisector formed by vectors pointing to its neighbors.
	By noticing $\beta_{\ell,1}(t) = \frac{\alpha_{\ell}(t)}{2}$ and applying \Cref{lemma:continuousDistances}, we conclude that the movement of $r_\ell(t)$ decreases $\Delta_{\ell,\ell^{+}}(t)$ with speed at least $\cos \left(\frac{3}{8} \pi\right) = \frac{\sqrt{2-\sqrt{2}}}{2}$ and at most $\cos \left(\frac{\referenceAngle{}}{2}\right) = 1-\tau$.
	As $r_1$ is able to move with speed at most $1- \tau$, $r_1$ moves fast enough such that the distance between $r_1$ and $r_{\ell(t)}$ does not decrease and thus $\leftOuter{}$ and especially $\norm{w_{\ell}(t)}$remains constant.
	Hence, $r_{\ell(t)}$ continues moving along its bisector while decreasing $\innerLength{}$ with speed at least $\frac{\sqrt{2-\sqrt{2}}}{2}$.

\end{proof}

\begin{corollary} \label{corollary:mediumSizeTime}
	The total time an outer angle of size $\referenceAngle{} \leq \alpha_{i}(t) \leq \frac{3}{4}\pi$ exists, while $O_i(t) = \gamma_i(t)$ and $\ell(t) \neq r(t)$ is bounded by $\frac{2 \cdot \left(n-3\right)}{\sqrt{2-\sqrt{2}}}$.
\end{corollary}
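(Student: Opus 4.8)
The plan is a straightforward amortization argument: \Cref{lemma:outerAngleMedium} already shows that whenever the hypothesis of the corollary holds, the inner length $\innerLength{}$ shrinks at rate at least $\frac{\sqrt{2-\sqrt 2}}{2}$, so it suffices to control the total amount of decrease of $\innerLength{}$ that is available over the execution.

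Concretely, I would first observe that the inner vectors are exactly $w_{\ell(t)+1},\dots,w_{r(t)}$, of which there are at most $n-3$ (since $\ell(t)\ge 2$ and $r(t)\le n-1$ whenever $\ell(t)\ne r(t)$), and that each has norm at most $1$ because the chain stays connected; hence $0 \le \innerLength{} \le n-3$ throughout the execution. By \Cref{lemma:vectorsInBetween}, $\innerLength{}$ is monotonically decreasing, so it converges and the \emph{total} amount by which it can decrease over the whole execution is at most $n-3$. Let $T$ denote the set of times at which the corollary's hypothesis holds, i.e.\ $i\in\{\ell(t),r(t)\}$, $\referenceAngle{}\le\alpha_i(t)\le\frac34\pi$, $O_i(t)=\gamma_i(t)$ and $\ell(t)\ne r(t)$. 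On $T$ the (right) derivative of $\innerLength{}$ is at most $-\frac{\sqrt{2-\sqrt 2}}{2}$ by \Cref{lemma:outerAngleMedium}, while off $T$ it is still at most $0$ by \Cref{lemma:vectorsInBetween}. Integrating the right derivative of $\innerLength{}$ over all time, the total decrease is on the one hand at least $\frac{\sqrt{2-\sqrt 2}}{2}\cdot|T|$ and on the other hand at most $n-3$, which yields $|T|\le \frac{2(n-3)}{\sqrt{2-\sqrt 2}}$, exactly the claimed bound.

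The one point that needs a little care --- rather than being a genuine obstacle --- is that the medium-angle regime may be entered and left repeatedly as $\ell(t)$, $r(t)$ and the outer angles evolve, so the argument must use the \emph{global} monotonicity of $\innerLength{}$ (\Cref{lemma:vectorsInBetween}) rather than reasoning interval by interval, in order to guarantee that the decreases accumulated over all such intervals still sum to at most $n-3$. One should also note that $\innerLength{}$ is continuous and piecewise differentiable with right derivative given by \Cref{lemma:continuousDistances}, so integrating that derivative is legitimate.
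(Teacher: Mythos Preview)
Your argument is correct and is exactly the approach the paper intends: the corollary is stated without proof because it follows immediately from \Cref{lemma:outerAngleMedium} together with the global monotonicity of $\innerLength{}$ (\Cref{lemma:vectorsInBetween}) and the bound $\innerLength{}\le n-3$, precisely as you spell out. Your explicit remark that global monotonicity is what makes the amortization work across possibly many entries and exits of the medium-angle regime is a useful clarification that the paper leaves implicit.
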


It remains to analyze outer angles that have a size of at least $\frac{3}{4}\pi$.
It turns out that an outer angle of size at least $\frac{3}{4}\pi$ only increases.

\begin{restatable}{lemma}{lemmaMaxMobOuterAnglesIncreasing} \label{lemma:outerAngleIncreasing}
	Assume that $i \in \left\{\ell(t), r(t)\right\}$.
	In configurations fulfilling $\ell(t) \neq r(t)$, $\alpha_{i}(t) \geq \frac{3}{4} \pi$ and $O_i(t) = \gamma_i(t)$ for an outer angle $\alpha_{i}(t)$ it holds that $\alpha_{i}(t)$ is monotonically increasing.
\end{restatable}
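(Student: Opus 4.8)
The plan is to reduce the outer angle $\alpha_i(t)$ to the apex angle of a small triangle and to differentiate it. Assume without loss of generality $i=\ell(t)$. Since $\leftOuter{}=\gamma_\ell(t)$, all of $w_2(t),\dots,w_{\ell}(t)$ have length $1$ and, by \Cref{lemma:potentialDecreasing}, the robots $r_1,\dots,r_{\ell}$ stay collinear; in particular $\norm{w_{\ell}(t)}=1$, so $r_{\ell}$ moves at speed $1$ along its internal angle bisector. Arguing as in the proof of \Cref{lemma:outerAngleMedium}, $r_1$ moves at speed $\cos(\alpha_{\ell}(t)/2)\le\oneminustau$ directly away from its neighbour while keeping $\norm{w_2(t)}=1$, so $\Delta_{1,\ell}(t)=\ell(t)-1$ stays constant. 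By the definition of $\ell^{+}(t)$ the robots $r_{\ell},\dots,r_{\ell^{+}}$ are collinear as well, hence $\alpha_{\ell}(t)$ is the angle at the vertex $r_{\ell}$ of the triangle $T(t)$ with vertices $r_1,r_{\ell},r_{\ell^{+}}$, which is non-degenerate since $\alpha_{\ell}(t)<\pi$.

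I would then compute $\deriv{\alpha_{\ell}}{t}$ by splitting it into the angular velocities of the two legs $\overline{r_{\ell}r_1}$ and $\overline{r_{\ell}r_{\ell^{+}}}$ of $T(t)$. Writing $\alpha_{\ell}(t)=\theta_2(t)-\theta_1(t)$ for suitable branches of the polar angles $\theta_1:=\arg\bigl(p_1(t)-p_{\ell}(t)\bigr)$ and $\theta_2:=\arg\bigl(p_{\ell^{+}}(t)-p_{\ell}(t)\bigr)$, each $\deriv{\theta_j}{t}$ equals the component of the relative velocity of that leg's endpoints perpendicular to the leg, divided by the leg length; the relevant angles between velocities and legs are the half-angles $\alpha_{\ell}(t)/2$ (between $v_{\ell}(t)$ and each leg) and $\alpha_{\ell^{+}}(t)/2$ (between $v_{\ell^{+}}(t)$ and $\overline{r_{\ell}r_{\ell^{+}}}$), exactly as in the uses of \Cref{lemma:continuousDistances}. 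As $v_1(t)$ is parallel to the leg $\overline{r_{\ell}r_1}$, it does not affect $\deriv{\theta_1}{t}$, so $\deriv{\theta_1}{t}=-\sin(\alpha_{\ell}(t)/2)/\Delta_{1,\ell}(t)$, while the bisector motion of $r_{\ell}$ contributes $+\sin(\alpha_{\ell}(t)/2)/\Delta_{\ell,\ell^{+}}(t)$ to $\deriv{\theta_2}{t}$. Hence
\[
	\deriv{\alpha_{\ell}}{t}=\sin\!\Bigl(\tfrac{\alpha_{\ell}(t)}{2}\Bigr)\Bigl(\tfrac{1}{\Delta_{1,\ell}(t)}+\tfrac{1}{\Delta_{\ell,\ell^{+}}(t)}\Bigr)+\Xi(t),
\]
where $\Xi(t)$ collects the contribution of $r_{\ell^{+}}$'s motion to $\deriv{\theta_2}{t}$; if $r_{\ell^{+}}$ does not move — which is the case unless $\norm{w_{\ell^{+}}(t)}=1$, $\norm{w_{\ell^{+}+1}(t)}=1$ or $\alpha_{\ell^{+}}(t)<\referenceAngle{}$ — then $\Xi(t)=0$ and we are done.

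The delicate step is to control $\Xi(t)$ when $r_{\ell^{+}}$ does move, in which case it moves at speed $1$ along its own internal bisector and $|\Xi(t)|$ can a priori be of order $\sin(\alpha_{\ell^{+}}(t)/2)/\Delta_{\ell,\ell^{+}}(t)$, which the first term of the display need not dominate termwise when $\Delta_{1,\ell}(t)$ is large. I would instead feed $\Xi(t)$ back into the global shape of $T(t)$: differentiating $\cos\alpha_{\ell}=\bigl(\Delta_{\ell,\ell^{+}}^{2}+\Delta_{1,\ell}^{2}-\Delta_{1,\ell^{+}}^{2}\bigr)/\bigl(2\Delta_{\ell,\ell^{+}}\Delta_{1,\ell}\bigr)$, using $\deriv{\Delta_{1,\ell}}{t}=0$ and the angle $\varphi(t)$ of $T(t)$ at $r_{\ell^{+}}$, and recalling that $\cos$ is decreasing on $[0,\pi]$, one gets that $\deriv{\alpha_{\ell}}{t}\ge 0$ is equivalent to $\deriv{\Delta_{1,\ell^{+}}}{t}\ge\cos\varphi(t)\cdot\deriv{\Delta_{\ell,\ell^{+}}}{t}$. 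Here the hypothesis $\alpha_{\ell}(t)\ge\tfrac{3}{4}\pi$ enters in two ways: it forces $\varphi(t)\le\pi-\alpha_{\ell}(t)\le\tfrac{\pi}{4}$, so $\cos\varphi(t)\ge\tfrac{\sqrt{2}}{2}>0$ while $\deriv{\Delta_{\ell,\ell^{+}}}{t}\le 0$ by \Cref{lemma:vectorsInBetween} (hence the right-hand side is $\le 0$); and, after expanding both distance derivatives with \Cref{lemma:continuousDistances} (using $\norm{v_1(t)}=\cos(\alpha_{\ell}(t)/2)$, $\norm{v_{\ell}(t)}=\norm{v_{\ell^{+}}(t)}=1$ and the known directions of these velocities relative to the sides of $T(t)$), it yields $\sin(\alpha_{\ell}(t)/2)\ge\sin\alpha_{\ell}(t)$ — equivalently $2\cos(\alpha_{\ell}(t)/2)\le 1$, valid already for $\alpha_{\ell}(t)\ge\tfrac{2}{3}\pi$ — i.e.\ the rotation $r_{\ell}$'s bisector motion imparts to $\overline{r_{\ell}r_{\ell^{+}}}$ matches the largest rotation any unit-speed neighbour could impart to it. What remains is a trigonometric inequality in the three angles of $T(t)$ and $\alpha_{\ell^{+}}(t)$; determining on which side of $\overline{r_{\ell}r_{\ell^{+}}}$ the chain bends at $r_{\ell^{+}}$ (hence the sign of $\Xi(t)$) and discharging the adverse case — bend toward $r_1$ — using the $T(t)$-relations together with the angle bound above is where I expect the bulk of the work to lie.
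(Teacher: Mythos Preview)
Your overall approach coincides with the paper's: work in the triangle $T(t)$ with vertices $r_1,r_\ell,r_{\ell^+}$, differentiate $\cos\alpha_\ell$ via the law of cosines using $\deriv{\Delta_{1,\ell}}{t}=0$, reduce $\deriv{\alpha_\ell}{t}\ge 0$ to the inequality $\deriv{\Delta_{1,\ell^+}}{t}\ge\cos\beta^{+}(t)\cdot\deriv{\Delta_{\ell,\ell^+}}{t}$ (your $\varphi$ is the paper's $\beta^{+}$, the angle of $T(t)$ at $r_{\ell^+}$), and split on whether $r_{\ell^+}$ moves. The stationary case is indeed immediate.

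The genuine gap is the moving case, which you leave open, and your sketch of how to close it is not on target. First, the claimed intermediate reduction to ``$\sin(\alpha_\ell/2)\ge\sin\alpha_\ell$'' is not what the expansion yields: after plugging in both distance derivatives the inequality still involves $\alpha_{\ell^+}$ and both base angles $\beta^{-},\beta^{+}$ of $T(t)$ non-trivially, and no estimate using only $\alpha_\ell$ discharges it. Second --- and this is the missing idea --- the paper exploits the strategy's design: in the present regime, if $r_{\ell^+}$ moves at all then $\alpha_{\ell^+}(t)<\referenceAngle=\referenceAngleConcrete$, hence (using the standing assumption $\tau\le\tfrac12$) $\sin\bigl(\alpha_{\ell^+}(t)/2\bigr)\le\sqrt{1-(1-\tau)^2}\le\sqrt{3}/2$. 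Writing $c:=\pi-\alpha_\ell(t)<\pi/4$, the paper then uses $\beta^{-}=c-\beta^{+}$ and the identity $\cos\beta^{+}+\cos(c-\beta^{+})=2\cos(c/2)\cos(c/2-\beta^{+})$ together with $2\sin(c/2)\cos(c/2)=\sin c$ to collapse the whole expression to the single numerical comparison $\sqrt{3}/2<\cos(c/2)$, which holds since $c/2<\pi/8<\pi/6$. So the ``bulk of the work'' you anticipate is precisely invoking the bound $\alpha_{\ell^+}<\referenceAngle$ (and $\tau\le\tfrac12$) to control the disturbance from $r_{\ell^+}$; without that ingredient the adverse-bend case cannot be closed.
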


\begin{proof}
	We give the proof for $\alpha_{\ell}(t)$.
	For this, we rewrite $\alpha_{\ell}(t) = \pi -c$ for $0 \leq c < \frac{1}{4}\pi$.
	Let $f_{\ell}(t) = \cos \left(\alpha_{\ell}(t)\right)$.
	We compute the derivation $\deriv{f_{\ell}}{t}$ and prove $\deriv{f_{\ell}}{t} < 0$.
	As $\cos \left(x\right)$ is monotonically decreasing in the interval $[\frac{3}{4}\pi, \pi)$, this proves that $\alpha_{i}(t)$ is monotonically increasing.
	Let $\beta^{-}(t)$ be the angle enclosed by the line segments connecting $r_1$ and $r_{\ell^{+}(t)}$ and $r_1$ and $r_{\ell(t)}$. See \Cref{lemma:outerAngleIncreasing} for a visualization.
	Similarly, let $\beta^{+}(t)$ denote the angle enclosed by the line segments connecting $r_1$ and $r_{\ell^{+}(t)}$ and $r_{\ell^{+}(t)}$ and $r_{\ell(t)}$.
	\begin{figure}[htb]
		\centering
		\includegraphics[width=0.52\textwidth, clip=true]{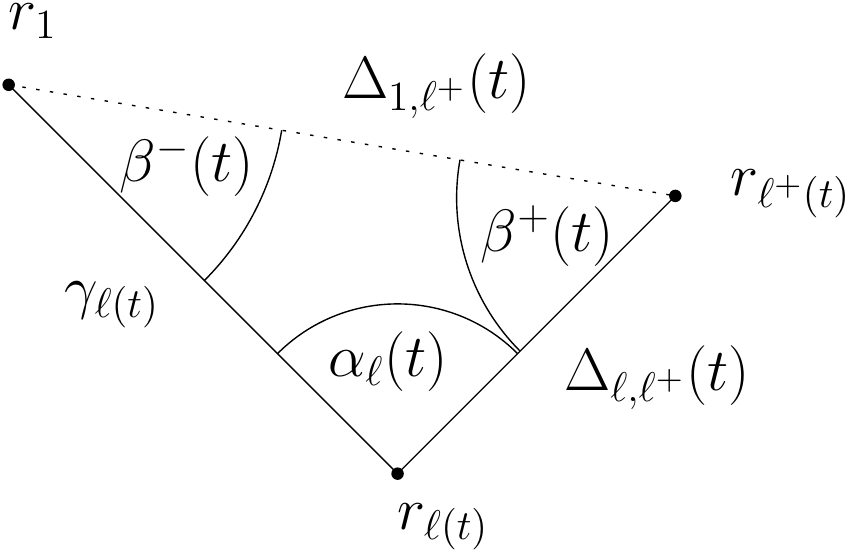}
		\caption{Visualization of the definitions needed for the proof of \Cref{lemma:outerAngleIncreasing}.}
		\label{figure:triangleHeight}
	\end{figure}

	We start with giving a formula for $f_{\ell}(t)$ and compute its derivation.
	Note that $\leftOuter{}$ stays constant as $\beta_{\ell,1}(t) = \frac{\alpha_{\ell}(t)}{2}$ and thus the movement of $r_{\ell(t)}$ decreases $\leftOuter{}$ with speed at most $\cos \left(\frac{3}{8}\right)$.
	Since $\cos \left(\frac{3}{8}\right) < 1- \tau$, $r_1$ moves fast enough such that $\leftOuter{} = \gamma_\ell(t)$ remains constant.

	Now consider the triangle formed by $r_1$, $r_{\ell(t)}$ and $r_{\ell^{+}(t)}$.
	By our assumption, $\Delta_{1,\ell}(t) = \gamma_\ell(t)$.
	Via the law of cosines, we obtain: $\Delta_{1,\ell^{+}}(t)^2 = \gamma_{\ell}(t)^2 + \Delta_{\ell,\ell^{+}}(t)^2 - 2 \cdot \Delta_{1,\ell^{+}}(t) \cdot \Delta_{\ell,\ell^{+}}(t) \cdot \cos \left(\alpha_{\ell}(t)\right)$.
	By substituting $\cos \left(\alpha_{\ell}(t)\right)$ by $f_{\ell}(t)$ and rearranging the terms, we get the following formula for $f_{\ell}(t)$:
	$f_\ell(t) =  \frac{\gamma_\ell(t)^2 + \Delta_{\ell,\ell^{+}}(t)^2  - \Delta_{1,\ell^{+}}(t)^2}{2 \cdot \gamma_\ell(t) \cdot \Delta_{\ell,\ell^{+}}(t)}$.

	Now, we compute $\deriv{f_{\ell}}{t}$.
	Remember that $\deriv{\gamma_{\ell}}{t} = \gamma_{\ell}(t)$ as stated above.
	$$\deriv{f_{\ell}}{t} = \frac{\deriv{\Delta_{\ell,\ell^{+}}}{t} \cdot \left(-\gamma_\ell(t)^2 +  \Delta_{\ell,\ell^{+}}(t)^2 + \Delta_{1,\ell^{+}}(t)^2\right)}{2 \cdot \gamma_\ell(t) \cdot \Delta_{\ell,\ell^{+}}(t)^2}
	- \frac{2 \cdot \Delta_{\ell,\ell^{+}}(t) \cdot \Delta_{1,\ell^{+}}(t)\deriv{\Delta_{\ell,\ell^{+}}}{t}}{2 \cdot \gamma_\ell(t) \cdot \Delta_{\ell,\ell^{+}}(t)^2}$$.
	Applying the law of cosines again gives us $-\gamma_\ell(t)^2 + \Delta_{1,\ell^{+}}(t)^2 + \Delta_{\ell,\ell^{+}}(t)^2 = 2 \cdot \Delta_{\ell,\ell^{+}}(t) \cdot \Delta_{1,\ell^{+}}(t) \cdot \cos \left(\beta^{+}(t)\right)$.
	Replacing this in the original formula for $\deriv{f_{\ell}}{t}$ yields:
	$\deriv{f_{\ell}}{t} = \frac{\Delta_{1,\ell^{+}}(t)}{\gamma_\ell(t) \cdot \Delta_{\ell,\ell^{+}}(t)} \cdot \left(\deriv{\Delta_{\ell,\ell^{+}}}{t}  \cdot \cos \left(\beta^{+}(t) \right)-\deriv{\Delta_{1,\ell^{+}}}{t} \right)$.
	Now, we have to consider two cases.
	Either, $r_{\ell^{+}(t)}$ is moving and thus $\alpha_{\ell^{+}}(t) \leq \referenceAngle{}$ or $r_{\ell^{+}(t)}$ does not move.
	In both cases it holds that $ \Delta_{\ell,\ell^{+}}(t)  \leq 0$ as $\ell(t) \neq r(t)$ (\Cref{lemma:vectorsInBetween}).
	We start by analyzing the case that $r_{\ell^{+}(t)}$ does not move.
	Observe $\beta_{1, \ell^{+}}(t) = \pi - \beta^{-}(t)$.
	By \Cref{lemma:continuousDistances}, we obtain $\deriv{\Delta_{1,\ell^{+}}}{t} = - \norm{v_1(t)} \cdot \cos \left(\pi-\beta^{-}(t)\right) > 0$ as $\beta^{-}(t)$ can be upper bounded by $\frac{1}{4} \pi$ and thus $\pi - \beta^{-}(t) > \frac{3}{4}\pi$.
	We can conclude,  $-\deriv{\Delta_{1,\ell^{+}}}{t} < 0$ and $\deriv{f_{\ell}}{t} < 0$ in this case.
	As both  $\Delta_{\ell,\ell^{+}}(t) \leq 0$ and $-\deriv{\Delta_{1,\ell^{+}}}{t} < 0$ it follows $\deriv{f_{\ell}(t)}{t} < 0$.
	It remains to analyze the case that $r_{\ell^{+}}(t)$ is moving.
	We compute $\deriv{\Delta_{1,\ell^{+}}}{t}$.
	Depending on the orientation of $\alpha_{\ell^{+}}(t)$, there are two possible variants of $\deriv{\Delta_{1,\ell^{+}}}{t}$.
	Either $\alpha_{\ell}(t)$ and $\alpha_{\ell^{+}}(t)$ have the same or different orientations.
	Consider the case that $\alpha_{\ell}(t)$ and $\alpha_{\ell^{+}}(t)$ have the same orientation.
	In this case
	$\deriv{\Delta_{1,\ell^{+}}}{t} = \cos \left(\frac{\alpha_{\ell}(t)}{2}\right) \cdot \cos \left(\beta^{-}(t)\right) - \cos \left(\frac{\alpha_{\ell^{+}}(t)}{2} - \beta^{+}(t)\right)$.
	In the other case, it holds	$\deriv{\Delta_{1,\ell^{+}}}{t} = \cos \left(\frac{\alpha_{\ell}(t)}{2}\right) \cdot \cos \left(\beta^{-}(t)\right) - \cos \left(\frac{\alpha_{\ell^{+}}(t)}{2} + \beta^{+}(t)\right)$.
	Since 	$\cos \left(\frac{\alpha_{\ell}(t)}{2}\right) \cdot \cos \left(\beta^{-}(t)\right) - \cos \left(\frac{\alpha_{\ell^{+}}(t)}{2} + \beta^{+}(t)\right) > \cos \left(\frac{\alpha_{\ell}(t)}{2}\right) \cdot \cos \left(\beta^{-}(t)\right) - \cos \left(\frac{\alpha_{\ell^{+}}(t)}{2} - \beta^{+}(t)\right)$ , we can analyze the first case which immediately implies the second case.
	Thus, assume that $\deriv{\Delta_{1,\ell^{+}}}{t} = \cos \left(\frac{\alpha_{\ell}(t)}{2}\right) \cdot \cos \left(\beta^{-}(t)\right) - \cos \left(\frac{\alpha_{\ell^{+}}(t)}{2} - \beta^{+}(t)\right).$
	Note that $\cos \left(a-b\right) = \sin\left(a\right) \cdot \sin \left(b\right) + \cos \left(a\right) + \cos \left(b\right)$ and thus  $\cos \left(\frac{\alpha_{\ell^{+}}(t)}{2} - \beta^{+}(t)\right) = \sin \left(\frac{\alpha_{\ell^{+}}(t)}{2}\right) \cdot \sin \left(\beta^{+}(t)\right)  + \cos \left(\frac{\alpha_{\ell^{+}}(t)}{2}\right) \cdot \cos \left(\beta^{+}(t)\right) $.
	Additionally, we obtain via \Cref{lemma:continuousDistances}, $\deriv{\Delta_{\ell,\ell^{+}}}{t} = - \left(\cos \left(\frac{\alpha_{\ell}(t)}{2}\right) + \cos \left(\frac{\alpha_{\ell^{+}}(t)}{2}\right)\right)$.
	For improved readability, $\sigma(t) := \frac{\Delta_{1,\ell^{+}}(t)}{\gamma_\ell(t) \cdot  \Delta_{\ell,\ell^{+}}(t)}$ and
	$\left(\cos \left(\beta^{-}(t)\right) + \cos \left(\beta^{+}(t)\right)\right) =: \mu(t)$.

	Plugging all these insights into $\deriv{f_{\ell}}{t}$ allows us the following estimation (see below).
	\Cref{equation:sineAcos} holds as $\sin \left(\cos^{-1}\left(x\right)\right) = \sqrt{1-x^2}$.
	Additionally, it holds that $\cos \left(\frac{\pi}{2} -x\right) = \sin \left(x\right)$ and thus $\cos \left(\frac{\pi}{2} - \frac{c}{2}\right) = \sin \left(\frac{c}{2}\right)$.
	For \Cref{equation:cosineVodoo}, note $\beta^{-}(t) = \pi - \alpha_{\ell}(t) - \beta^{+}(t)$ as the sum of internal angles of a triangle is equal to $\pi$.
	Hence, we can rewrite $\cos \left(\beta^{-}(t)\right) = \cos \left(\pi - \alpha_{\ell}(t) - \beta^{+}(t)\right)  = \cos \left(\pi - \pi + c - \beta^{+}(t)\right) = \cos \left(c-\beta^{+}(t)\right)$.
	Now observe that $\cos \left(\beta^{+}(t)\right) + \cos \left(c- \beta^{+}(t)\right) = 2 \cdot \cos \left(\frac{c}{2}\right) \cdot \cos \left(\frac{c}{2} - \beta^{+}(t)\right)$.
	As a last step we use the equality $2 \cdot \cos \left(\frac{x}{2}\right) \cdot \sin \left(\frac{x}{2}\right) = \sin \left(x\right)$.
	Plugging all together yields $\sin \left(\frac{c}{2}\right) \cdot \mu(t) = \sin \left(c\right) \cdot \cos \left(\frac{c}{2} - \beta^{+}(t)\right)$.
	Finally, we can conclude that $f_\ell(t)$ is monotonically decreasing for $\alpha_{\ell}(t) \in \left[\frac{3}{4}\pi, \pi\right]$ and, thus, that $\alpha_{\ell}(t)$ is monotonically increasing in the same interval.
	\begin{align}
		& \deriv{f_{\ell}}{t}  =  \sigma(t) \, \left(\sin \left(\frac{\alpha_{\ell^{+}}(t)}{2}\right) \cdot \sin \left(\beta^{+}(t)\right) - \cos \left(\frac{\alpha_{\ell}(t)}{2}\right) \cdot \mu(t) \right)     \\
                     	& \leq   \sigma(t) \, \left(\sin \left( \referenceAngle{}\right) \cdot \sin \left(\beta^{+}(t)\right) - \cos \left(\frac{\alpha_{\ell}(t)}{2}\right) \cdot \mu(t) \right)                                  \\
		& =  \sigma(t) \, \left( \sqrt{1-\left(1-\tau\right)^2} \cdot \sin \left(\beta^{+}(t)\right) - \cos \left(\frac{\alpha_{\ell}(t)}{2}\right) \cdot \mu(t) \right)                 \label{equation:sineAcos} \\
	                                     \\
		& \leq \sigma(t)  \, \left( \frac{\sqrt{3}}{2} \cdot \sin \left(\beta^{+}(t)\right) - \cos \left(\frac{\alpha_{\ell}(t)}{2}\right) \cdot \mu(t) \right)                                                    \\
                          \label{equation:rewriteAlphaEll}            \\
		& \leq  \sigma(t)  \, \left( \frac{\sqrt{3}}{2} \cdot \sin \left(c\right) - \sin \left(c\right) \cdot \cos \left(\frac{c}{2} - \beta^{+}(t)\right)  \right)             \label{equation:cosineVodoo}       \\
		& \leq  \sigma(t)  \, \left( \frac{\sqrt{3}}{2} \cdot \sin \left(c\right) - \sin \left(c\right) \cdot \cos \left(\frac{c}{2}\right)  \right)                                                               = \sigma(t) \, \sin \left(c\right) \ \cdot  \left(  \frac{\sqrt{3}}{2}  - \cos \left(\frac{c}{2} \right)  \right)       < 0
                                                                                     \\
	\end{align}

\end{proof}

However, we cannot use $I(t)$ as a progress measure here, because for very large angles in the order of $\pi-\frac{1}{n}$, $I(t)$ does not decrease with constant speed anymore.
Therefore, we introduce another progress measure which decreases with constant speed in this case (see \Cref{figure:height}).

\begin{definition}
	Define $H_\ell(t)$ to be the distance of $r_{\ell(t)}$ to the line segment connecting $r_1$ and $r_{\ell^{+}(t)}$
	and define $H_r(t)$ to be the distance of $r_{r(t)}$ to the line segment connecting $r_{r^{+}(t)}$ and $r_{n}$.
\end{definition}

\begin{figure}[hbt]
	\begin{minipage}[hbt]{0.45\textwidth}
		\centering
		\vspace*{0.4cm}
		\includegraphics[width=1\textwidth, clip=true]{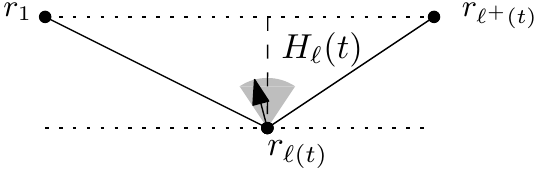}
		\vspace*{0.7cm}
		\caption{Visualization of  $H_{\ell}(t)$. The gray part marks the area of all possible vectors $v_{\ell}(t)$.}
		\label{figure:height}
	\end{minipage}
	\hfill
	\begin{minipage}[hbt]{0.48\textwidth}
		\begin{minipage}[hbt]{0.45\textwidth}
			\centering
			\includegraphics[width=\textwidth]{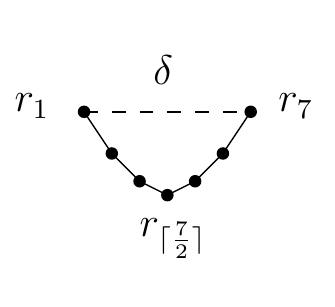}
			\subcaption{
				A discrete \hfill\\ \( \delta \)-V-configuration.
			}\label{fig:discreteThetaV}
		\end{minipage}
		\phantom{2}
		\begin{minipage}[hbtt]{0.45\textwidth}
			\centering
			\includegraphics[width=\textwidth]{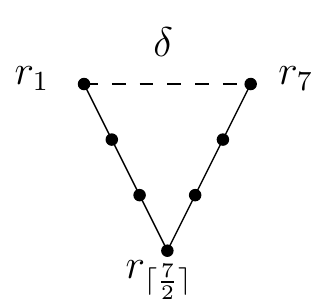}
			\subcaption{A continuous \hfill\\ \( \delta \)-V-configuration.}\label{fig:contThetaV}
		\end{minipage}
		\caption{%
			A depiction of discrete \( \delta \)-V-con-\\figurations and continuous \( \delta \)-V-configu-\\rations.}\label{fig:bothThetaVs}
	\end{minipage}
\end{figure}

\begin{restatable}{lemma}{lemmaMaxMobHeight} \label{lemma:outerAngleHeightNew}
	In configurations fulfilling $\ell(t) \neq r(t)$, $\alpha_{i}(t) \geq \frac{3}{4}\pi$ for $i \in \{\ell(t), r(t)\}$ and $O_i(t) = \gamma_i(t)$ it holds that  $\deriv{H_i}{t} \leq - \frac{1}{20}$.
\end{restatable}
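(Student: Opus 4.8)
I would treat $i = \ell(t)$; the case $i = r(t)$ is symmetric. Since $\leftOuter{} = \gamma_\ell(t)$, all of $w_2(t),\dots,w_{\ell}(t)$ have length $1$ and the common direction $\widehat{w}_2(t)$, so $r_{\ell}$ moves with speed $1$ along the bisector of the vectors pointing to its neighbors (and, as $\alpha_\ell(t) < \pi$, it is not yet collinear with them); by definition of $\ell^{+}(t)$ the robots $r_{\ell},\dots,r_{\ell^{+}}$ are collinear, so $r_{\ell^{+}}$ lies on the ray leaving $r_{\ell}$ in direction $\widehat{w}_{\ell+1}(t)$. Hence $r_1, r_{\ell}, r_{\ell^{+}}$ form a non-degenerate triangle with interior angle $\alpha_\ell(t) \in [\tfrac{3}{4}\pi,\pi)$ at $r_{\ell}$; write $\beta^{-},\beta^{+}$ for its interior angles at $r_1$ and $r_{\ell^{+}}$, so $\beta^{-}+\beta^{+} = \pi - \alpha_\ell(t) =: c \le \tfrac{\pi}{4}$ and $0 < \beta^{-},\beta^{+} < \tfrac{\pi}{2}$. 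Since $\beta^{\pm} < \tfrac{\pi}{2}$, the foot of the perpendicular from $r_{\ell}$ onto the base line lies strictly inside the segment $r_1 r_{\ell^{+}}$, so $H_\ell(t)$ equals this perpendicular distance (cf.\ \Cref{figure:height}).

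I would then fix, at the time $t$ in question, the coordinate frame in which $r_1$ is the origin, $r_{\ell^{+}}$ is on the positive $x$-axis and $r_{\ell}$ has positive $y$-coordinate. Writing $H_\ell = \langle p_{\ell} - p_1,\, n\rangle$ for the unit normal $n$ of the base segment oriented toward $r_{\ell}$, differentiating, and using $\langle n, \deriv{n}{t}\rangle = 0$ and $\langle p_{\ell^{+}} - p_1,\, n\rangle \equiv 0$, one obtains
\[
  \deriv{H_\ell}{t} \;=\; \bigl(v_{\ell}\bigr)_y \;-\; (1-s)\bigl(v_1\bigr)_y \;-\; s\bigl(v_{\ell^{+}}\bigr)_y ,
\]
where $s\in(0,1)$ is the relative position of the foot along the segment and, by the law of sines, $s = \tfrac{\cos\beta^{-}\sin\beta^{+}}{\sin c}$ and $1-s = \tfrac{\sin\beta^{-}\cos\beta^{+}}{\sin c}$. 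The three components I would bound as: \emph{(i)} in this frame the bisector direction of $r_\ell$ is $(\sin\eta, -\cos\eta)$ with $\eta = \tfrac{\beta^{-}-\beta^{+}}{2}$, so $(v_\ell)_y = -\cos\eta \le -\cos\tfrac{\pi}{8}$ since $|\eta|\le\tfrac{c}{2}\le\tfrac{\pi}{8}$; \emph{(ii)} as in the proof of \Cref{lemma:outerAngleMedium}, $r_1$ moves along $-\widehat{w}_{\ell}(t)$ with speed at most $\oneminustau{}$, and $-\widehat{w}_{\ell}(t)$ has $y$-component $-\sin\beta^{-}$, so $-(1-s)(v_1)_y \le (1-s)\,\oneminustau{}\,\sin\beta^{-}$; \emph{(iii)} using that every vector strictly between the two fronts has not yet reached length $1$, $r_{\ell^{+}}$ moves only if $\alpha_{\ell^{+}}(t) < \referenceAngle{}$, in which case its velocity is the unit bisector of $\alpha_{\ell^{+}}(t)$, making angle $\alpha_{\ell^{+}}(t)/2 < \referenceAngle{}/2 = \cos^{-1}(1-\tau)\le\tfrac{\pi}{3}$ with the direction from $r_{\ell^{+}}$ toward $r_\ell$ (whose $y$-component is $\sin\beta^{+}$), so $(v_{\ell^{+}})_y \ge \sin(\beta^{+}-\tfrac{\pi}{3})$ and hence $-s(v_{\ell^{+}})_y \le s\sin(\tfrac{\pi}{3}-\beta^{+})$ (the bound being $0$ if $r_{\ell^{+}}$ is stationary).

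Substituting these bounds and the formulas for $s,1-s$ gives
\[
  \deriv{H_\ell}{t} \;\le\; -\cos\tfrac{\beta^{-}-\beta^{+}}{2} \;+\; \frac{\oneminustau{}\sin^{2}\beta^{-}\cos\beta^{+} + \cos\beta^{-}\sin\beta^{+}\sin(\tfrac{\pi}{3}-\beta^{+})}{\sin c},
\]
and the final step is to verify that the right-hand side, as a function of $\beta^{-},\beta^{+}\ge 0$ with $\beta^{-}+\beta^{+}=c\in(0,\tfrac{\pi}{4}]$ and $\tau\in(0,\tfrac12]$, is at most $-\tfrac{1}{20}$. This is an elementary optimization: the expression is largest in the limit $\beta^{-}\to 0$, $\tau=\tfrac12$, where it approaches $-1+\sin\tfrac{\pi}{3} = -1+\tfrac{\sqrt3}{2}\approx-0.134 < -\tfrac{1}{20}$, which proves the claim. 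The case $i = r(t)$ follows by the mirror-image argument.

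\textbf{Where the difficulty lies.} The only genuinely delicate term is $-s(v_{\ell^{+}})_y$: the base segment itself moves, and if $r_{\ell^{+}}$ drifted away from $r_\ell$ while the foot of the perpendicular sat near $r_{\ell^{+}}$ (i.e.\ $s\approx 1$), $H_\ell$ could a priori increase. Two ingredients prevent this and need to be pinned down carefully: that $r_{\ell^{+}}$ moves only while $\alpha_{\ell^{+}}(t) < \referenceAngle{}$, so its velocity stays within $\referenceAngle{}/2\le\tfrac{\pi}{3}$ of the direction toward $r_\ell$ — this rests on the outside-in stretching structure of the strategy — and the law-of-sines coupling between $s$ and the triangle's angles, without which the final inequality does not close. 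That coupling is exactly why the stated constant is the unclean $\tfrac{1}{20}$ rather than something close to $1$.
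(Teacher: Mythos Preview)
Your argument is correct and follows the same geometric idea as the paper: fix a frame with the base segment $r_1 r_{\ell^{+}}$ horizontal, bound the normal components of the three velocities $v_\ell$, $v_1$, $v_{\ell^{+}}$, and use that $r_{\ell^{+}}$ moves only when $\alpha_{\ell^{+}}(t) < \referenceAngle{}$. The paper places $r_\ell$ at the origin with the segment above, bounds the three ``upward'' speeds separately as $\ge 0.92$, $\le 0.28$, $\le \sqrt{3}/2 < 0.87$, and concludes $\deriv{H_\ell}{t} \le -0.92 + 0.87 = -\tfrac{1}{20}$, implicitly using that the base-line contribution is a convex combination of the two endpoint velocities and hence bounded by their maximum.

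Your version makes that convex combination explicit via the law-of-sines weights $s$ and $1-s$, which is more transparent but also more work: it forces you into a two-parameter optimisation at the end that you only assert (``elementary optimization, largest as $\beta^{-}\to 0$'') without carrying out. The paper sidesteps this entirely by the crude bound $(1-s)(v_1)_y + s(v_{\ell^{+}})_y \le \max\{(v_1)_y,(v_{\ell^{+}})_y\}$, which immediately gives $-\tfrac{1}{20}$ with no optimisation. Your sharper analysis would yield the better constant $-1+\tfrac{\sqrt3}{2}\approx -0.134$ if completed, but since any constant suffices for \Cref{corollary:largeSize}, the paper's shortcut is the cleaner route here.
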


\begin{proof}
	Assume that $i = \ell(t)$, the proof for $i = r(t)$ is analogous.
	We have to analyze the movements of $r_1, r_{\ell(t)}$ and $r_{\ell^{+}(t)}$ in this case.
	Rewrite $\alpha_{\ell}(t) = \pi - c$ for $c \leq \frac{\pi}{4}$.
	Without loss of generality assume $r_{\ell(t)}$ to be positioned in the origin and the line segment connecting $r_1$ and $r_{\ell^{+}(t)}$ to be a parallel line to the $x$-axis above of $r_{\ell(t)}$.
	As all robots between $r_1$ and $r_{\ell(t)}$ as well as all robots $r_{\ell(t)}, \dots r_{\ell^{+}(t)}$ are collinear, $v_{\ell}(t)$ must point upwards.
	Since $\alpha_{\ell}(t) \geq \frac{3}{4} \pi$, $v_{\ell}(t)$ must form an angle of size less than $\frac{\pi}{8}$ with the $y$-axis.
	Hence, $r_{\ell(t)}$ moves with speed at least $\cos \left(\frac{\pi}{6}\right) > 0.92$ upwards.
	At the same time the robots $r_1$ and $r_{\ell^{+}(t)}$ could move.
	Consider the movement of $r_1$.
	Similar to the proof of \Cref{lemma:outerAngleIncreasing} (see also \Cref{figure:triangleHeight}), let $\beta^{-}(t)$ be the angle formed by vectors pointing from $r_1$ to $r_{\ell(t)}$ and from $r_1$ to $r_{\ell^{+}(t)}$ and let $\beta^{+}(t)$ be the angle formed by vectors pointing from $r_{\ell^{+}(t)}$ to $r_1$ and from $r_{\ell^{+}(t)}$ to $r_\ell$.
	$ v_1(t) = - \cos \left(\frac{\alpha_{i}(t)}{2}\right) \cdot \widehat{w}_2(t)$.
	Thus, $\beta_{1, \ell}(t) = \pi - \beta^{-}(t)$.
	Therefore, $r_1$ moves upwards with speed $\sin \left(\beta^{-}(t)\right) \cdot \cos \left(\frac{\alpha_{\ell}(t)}{2}\right) = \sin \left(\beta^{-}(t)\right) \cdot \cos \left( \frac{\pi}{2} - \frac{\alpha_{\ell}(t)}{2}\right) = \sin \left(\beta^{-}(t)\right) \cdot \sin \left(\frac{c}{2}\right)$.
	Lastly observe $\beta^{-}(t) \leq c$ as the sum of internal angles of a triangle is $\pi$.
	Hence, $r_1$ moves upwards with speed at most $\sin \left(c\right) \cdot \sin \left(\frac{c}{2}\right) \leq \sin \left(\frac{1}{4}\pi\right) \cdot \sin \left(\frac{\pi}{8}\right) < 0.28$.
	It remains to analyze the speed of $r_{\ell^{+}(t)}$ moving upwards.
	As $r(t) \neq \ell(t)$, $r_{\ell^{+}(t)}$ either does not move at all or $\alpha_{\ell^{+}}(t) < \referenceAngle{}$.
	In case $r_{\ell^{+}(t)}$ does not move at all $H_{\ell}(t) <= -0.92 + 0.28 = 0.64$.
	Now consider $\alpha_{\ell^{+}}(t) < \referenceAngle{}$.
	$\beta^{+}(t)$ could be almost $0$, such that in the worst case the angle formed by $v_{\ell^{+}}(t)$ and the line segment connecting $r_1$ and $r_{\ell^{+}(t)}$ lies completely above that line segment.
	In this case, $r_{\ell^{+}(t)}$ moves upwards with speed at most $\sin \left(\frac{\referenceAngle{}}{2}\right) = \sqrt{1-\left(1-\tau\right)^2} = \sqrt{1-1+2\cdot \tau - \tau^2} = \sqrt{2 \cdot \tau - \tau^2} \leq \sqrt{2 \cdot \frac{1}{2} - \frac{1}{4}} = \frac{\sqrt{3}}{2} < 0.87$.
	Therefore,  $H_{\ell}(t) \leq -0.92 + 0.87 = - 0.05 = - \frac{1}{20}$.
\end{proof}

By noticing that $H_{i}(t)$, for $i \in \{\ell(t), r(t)\}$, is upper bounded by $\Delta_{i,i^{+}}(t) \leq |i-i^{+}(t)| $ and using the fact that $H_i(t)$ cannot decrease anymore since $\alpha_{i}(t)$ is monotonically increasing, we can derive the total time an outer angle can have a size of at least $\frac{3}{4} \pi$ while $O_i(t) = \gamma_i(t)$.

\begin{corollary} \label{corollary:largeSize}
	The total time an outer angle of size at least $\frac{3}{4}\pi$ exists, while $O_i(t) = \gamma_i(t)$ and $\ell(t) \neq r(t)$, is bounded by $20 \cdot \left(n-3\right)$.
\end{corollary}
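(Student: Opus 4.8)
The plan is to bound the total time spent in the \textbf{regime} $R$ of all times $t$ with $\alpha_{i}(t)\ge\frac{3}{4}\pi$, $O_{i}(t)=\gamma_{i}(t)$ and $\ell(t)\neq r(t)$, for $i=\ell(t)$; the case $i=r(t)$ is symmetric. By \Cref{lemma:potentialDecreasing} the index $\ell(t)$ is non-decreasing as long as $\ell(t)\neq r(t)$, so I would split the time axis into \textbf{epochs}, each a maximal interval on which $\ell(t)$ equals a fixed value $k$. The proof then has three parts: (i)~show that $R$ meets each epoch in a single interval; (ii)~bound the length of that interval using the height $H_{\ell}$; (iii)~sum over epochs by a telescoping/disjointness argument on the indices.

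\emph{One epoch.} Fix the epoch with $\ell(t)=k$, so $\gamma_{\ell}(t)=k-1$ is constant on it. Suppose that at some time in the epoch both $\alpha_{k}(t)\ge\frac34\pi$ and $O_{k}(t)=\gamma_{k}(t)$ hold. Then \Cref{lemma:outerAngleIncreasing} gives that $\alpha_{k}$ is monotonically increasing, and, as in the proof of \Cref{lemma:outerAngleMedium}, the move of $r_{k}$ along its bisector shrinks $\Delta_{1,k}$ with speed $\cos\!\bigl(\tfrac{\alpha_{k}(t)}{2}\bigr)\le\cos\!\bigl(\tfrac{\referenceAngle{}}{2}\bigr)=1-\tau$, so the outer robot $r_{1}$ (moving away at speed $1-\tau$) keeps $O_{k}(t)=\gamma_{k}(t)$. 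Hence both conditions persist, $R$ stays active until $\alpha_{k}(t)=\pi$, i.e.\ until the epoch ends, and so $R$ meets this epoch in a single interval $[a_{k},b_{k}]$. On $[a_{k},b_{k}]$, \Cref{lemma:outerAngleHeightNew} yields $\deriv{H_{\ell}}{t}\le-\tfrac{1}{20}$, hence $b_{k}-a_{k}\le20\bigl(H_{\ell}(a_{k})-H_{\ell}(b_{k})\bigr)\le20\,H_{\ell}(a_{k})$. Since $H_{\ell}(t)$ is the distance from $r_{k}$ to the segment $r_{1}r_{\ell^{+}(t)}$ and the robots $r_{k},\dots,r_{\ell^{+}(t)}$ are collinear with $\widehat{w}_{k+1}=\dots=\widehat{w}_{\ell^{+}(t)}$, we get $H_{\ell}(a_{k})\le\Delta_{k,\ell^{+}(a_{k})}(a_{k})=\sum_{j=k+1}^{\ell^{+}(a_{k})}\norm{w_{j}(a_{k})}\le\ell^{+}(a_{k})-k$.

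\emph{Summation.} Let $k_{1}<k_{2}<\dots<k_{m}$ be the epoch values of $\ell$ whose epochs meet $R$. When epoch $k_{j}$ ends we have $\alpha_{k_{j}}=\pi$ and $\widehat{w}_{k_{j}}=\widehat{w}_{2}$, forcing $\widehat{w}_{k_{j}+1}=\widehat{w}_{2}$; combined with $\alpha_{j'}=\pi$ for $k_{j}<j'<\ell^{+}$ (definition of $\ell^{+}$), the robots $r_{1},\dots,r_{\ell^{+}}$ are then collinear in direction $\widehat{w}_{2}$, so the new value of $\ell$ is at least $\ell^{+}$ at that moment and hence at least $\ell^{+}(a_{k_{j}})$ (as $\ell^{+}(t)$ does not decrease within an epoch). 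Thus $k_{j+1}\ge\ell^{+}(a_{k_{j}})$, the half-open index intervals $\bigl(k_{j},\ell^{+}(a_{k_{j}})\bigr]$ are pairwise disjoint and contained in $(2,n-1]$ (using $\ell(t)\ge2$ and $\ell^{+}(t)\le n-1$), and summing the per-epoch bounds gives total time in $R$ at most $20\sum_{j}\bigl(\ell^{+}(a_{k_{j}})-k_{j}\bigr)\le20(n-3)$. The same estimate holds for $i=r(t)$; in fact the left and right index intervals are jointly disjoint (they grow from index $2$ upward and from $n$ downward and stop once $\ell(t)=r(t)$), so even the combined large-angle time over both outer angles is at most $20(n-3)$.

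\emph{Main obstacle.} I expect the delicate step to be the structural input to the summation, namely that the interior robots $r_{k+1},\dots,r_{\ell^{+}-1}$ actually stay collinear throughout the epoch, so that $\ell^{+}(t)$ is non-decreasing and $\ell$ jumps to (at least) $\ell^{+}$ once the outer angle reaches $\pi$. This has to be read off from the movement rule of a collinear inner robot, which moves toward the midpoint of its neighbors while staying on the segment between them. A secondary point is pinning down the exact constant $n-3$ and treating the degenerate cases ($w_{j}(t)=(0,0)$, or no index $\ell^{+}(t)$ existing) separately, which the main analysis can absorb but which should be checked.
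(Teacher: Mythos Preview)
Your proposal is correct and follows the same route as the paper: bound $H_i(t)$ by $\Delta_{i,i^{+}}(t)\le |i-i^{+}(t)|$, use \Cref{lemma:outerAngleHeightNew} to get a per-epoch time bound of $20\,|i-i^{+}|$, invoke \Cref{lemma:outerAngleIncreasing} to ensure the regime is an interval within each epoch, and then telescope over epochs. The paper's own justification is a single sentence preceding the corollary (with the evident slip ``$H_i(t)$ cannot decrease'' where ``$\alpha_i(t)$ cannot decrease'' is meant), so you are supplying the details the paper leaves implicit---in particular the persistence of $O_i(t)=\gamma_i(t)$ once the large-angle regime is entered, and the disjointness of the index intervals $(k_j,\ell^{+}(a_{k_j})]$ via the jump $\ell\to\ell^{+}$. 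The structural obstacle you flag (collinear inner robots between $r_{\ell}$ and $r_{\ell^{+}}$ remain collinear, so $\ell^{+}$ is non-decreasing within an epoch) is exactly the point the paper takes for granted; it follows from the movement rule for collinear inner robots, by the same reasoning used in the proof of \Cref{lemma:potentialDecreasing}. Your final remark that the left and right index intervals are jointly disjoint is plausible (using $\ell^{+}(t)\le r(t)$ and $r^{+}(t)\ge \ell(t)$ throughout) but is not something the paper argues; even without it one gets $40(n-3)$ for the union, and the paper's use of the corollary in \Cref{lemma:mainLemmaFirst} only needs an $O(n)$ bound.
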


A combination of the preceding insights leads to a time bound until $\ell(t) = r(t)$.

\begin{restatable}{lemma}{lemmaMaxMobMainLemmaFirst} \label{lemma:mainLemmaFirst}
	After time at most  $2 \cdot \left(n-3\right) \cdot \left(\frac{1}{1-\tau} + \frac{1}{\sqrt{2-\sqrt{2}}} + 10\right)$ it holds that $\ell(t) = r(t)$.
\end{restatable}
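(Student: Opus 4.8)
The plan is to partition the set of times $t$ with $\ell(t) \neq r(t)$ into four regimes, bound the total duration of each separately, and add the bounds. I would first check the following tautology: at every time with $\ell(t) \neq r(t)$, at least one of these holds: \textbf{(i)} some outer angle satisfies $\alpha_i(t) < \psi$ for $i \in \{\ell(t),r(t)\}$; \textbf{(ii)} some outer angle satisfies $\psi \le \alpha_i(t) \le \frac34\pi$ and $O_i(t) = \gamma_i(t)$; \textbf{(iii)} some outer angle satisfies $\alpha_i(t) > \frac34\pi$ and $O_i(t) = \gamma_i(t)$; or \textbf{(iv)} $\alpha_\ell(t) \ge \psi$ with $O_\ell(t) < \gamma_\ell(t)$ \emph{and} $\alpha_r(t) \ge \psi$ with $O_r(t) < \gamma_r(t)$. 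Indeed, if none of (i)--(iii) applies to either outer robot, then both outer angles are $\ge \psi$, and for each the failure of (ii) and (iii) forces $O_i(t) < \gamma_i(t)$, which is (iv).

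Regimes (i)--(iii) are covered by the corollaries already established. By \Cref{lemma:turnaroundTotal} the total time in regime (i) is at most $\frac{n-3}{1-\tau}$ (there \Cref{lemma:innerLengh} makes $I(t)$ drop at rate $\ge 1-\tau$, and $0 \le I(t) \le n-3$). By \Cref{corollary:mediumSizeTime} the total time in regime (ii) is at most $\frac{2(n-3)}{\sqrt{2-\sqrt{2}}}$ (via \Cref{lemma:outerAngleMedium} and the same bound on $I(t)$). By \Cref{corollary:largeSize} the total time in regime (iii) is at most $20(n-3)$: here \Cref{lemma:outerAngleHeightNew} makes $H_\ell(t)$ (resp.\ $H_r(t)$) drop at rate $\ge \frac1{20}$ while \Cref{lemma:outerAngleIncreasing} ensures it never increases again for a fixed value of $\ell(t)$ (resp.\ $r(t)$), and the index windows $[\ell(t)+1,\ell^{+}(t)]$, $[r^{+}(t),r(t)-1]$ occurring across the at most $n-3$ epochs partition a subset of $\{3,\dots,n-1\}$, so the reset values sum to at most $n-3$.

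The technical core is regime (iv). There \Cref{lemma:outerAngleMediumVectors} gives $\deriv{O_\ell}{t} = \deriv{O_r}{t} = 1-\tau$, while $\ell(t),r(t)$ (hence $\gamma_\ell(t),\gamma_r(t)$) stay fixed. I would introduce the outer-length deficits $D_\ell(t) := \gamma_\ell(t) - O_\ell(t) \ge 0$ and $D_r(t) := \gamma_r(t) - O_r(t) \ge 0$ and use the potential $\Phi(t) := I(t) + D_\ell(t) + D_r(t) \ge 0$. Away from the discrete advances of $\ell(t),r(t)$ one has $\deriv{\Phi}{t} = \deriv{I}{t} - \deriv{O_\ell}{t} - \deriv{O_r}{t}$, and a short case check over the regimes shows $\deriv{\Phi}{t} \le 0$ for every $\tau \le \frac12$: when no outer angle is below $\psi$ this follows from $\deriv{I}{t}\le 0$ and $\deriv{O_\ell}{t},\deriv{O_r}{t} \ge 0$, and when an outer angle drops below $\psi$ the fast decrease of $I(t)$ from \Cref{lemma:smallAnglesInnerLength} (a contribution of at least $\cos(\alpha_i(t)/2) > 1-\tau$ per small outer angle, on disjoint parts of the inner chain) compensates the increase of the corresponding $D_i$, which by \Cref{lemma:outerAnglesSmallVectors} is at rate $< \tau$ -- this is exactly where $\tau \le \frac12$ enters. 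Inside regime (iv) itself $\deriv{\Phi}{t} \le -2(1-\tau)$. Finally, at an advance of $\ell(t)$ from $k$ to $k'$ one has $O_\ell=\gamma_\ell$ just before (so $D_\ell=0$), and afterwards $D_\ell$ jumps to $\sum_{i=k+1}^{k'}(1-\norm{w_i(t)})$ while $I(t)$ drops by $\sum_{i=k+1}^{k'}\norm{w_i(t)}$, so $\Phi$ changes by $\sum_{i=k+1}^{k'}(1-2\norm{w_i(t)})$; the index windows of all (left and right) advances are disjoint subsets of $\{3,\dots,n-1\}$, so the total upward jump of $\Phi$ is at most $n-3$. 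Since $\Phi(t_0)\le n-1$ (it is a sum of $n-1$ terms, each in $[0,1]$) and $\Phi\ge 0$, the total decrease of $\Phi$ is at most $(n-1)+(n-3)$, and at rate $2(1-\tau)$ during regime (iv) this bounds the time spent in regime (iv) by $\frac{(n-1)+(n-3)}{2(1-\tau)}$; a slightly sharper bookkeeping of the jump terms (bounding them by $\sum_i(1-\norm{w_i})$ at the absorption time, which combines with $\Phi(t_0)$ to at most $n-1$ total) brings this to at most $\frac{n-3}{1-\tau}$ for $n\ge 5$.

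Adding the four bounds gives a total time of at most $\frac{2(n-3)}{1-\tau} + \frac{2(n-3)}{\sqrt{2-\sqrt{2}}} + 20(n-3) = 2(n-3)\bigl(\frac{1}{1-\tau} + \frac{1}{\sqrt{2-\sqrt{2}}} + 10\bigr)$; since (by \Cref{lemma:potentialDecreasing}) the set $\{t : \ell(t)\ne r(t)\}$ is an initial interval, after this much time we must have $\ell(t)=r(t)$. The main obstacle -- and the only step that is not a mechanical assembly of earlier lemmas -- is regime (iv): identifying the combined potential $I + D_\ell + D_r$, verifying that it is non-increasing off the discrete advances for the whole admissible range $0<\tau\le\frac12$ (in particular handling the case where \emph{both} outer angles are below $\psi$), and controlling the possibly positive jumps incurred when inner robots get absorbed into one of the two straightened arms.
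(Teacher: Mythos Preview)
Your decomposition into regimes (i)--(iv) and the bounds for (i)--(iii) match the paper's proof exactly, invoking the same corollaries. The difference lies entirely in how you handle regime~(iv).

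The paper's argument for~(iv) is more direct: it tracks $O_\ell$ and $O_r$ themselves rather than a combined potential. It observes that $O_i$ only decreases while $\alpha_i<\psi$ (\Cref{lemma:outerAnglesSmallVectors}), hence for total time at most $\frac{n-3}{1-\tau}$ by \Cref{lemma:turnaroundTotal}; since $O_i$ increases at speed $1-\tau$ whenever $\alpha_i\ge\psi$ and $O_i<\gamma_i$ (\Cref{lemma:outerAngleMediumVectors}), and $O_\ell+O_r$ is bounded by $n-3$, the total time in regime~(iv) is at most $\frac{n-3}{1-\tau}$. This avoids any need to verify monotonicity of a potential across \emph{all} regimes and sidesteps the jump bookkeeping at advances of $\ell(t),r(t)$ entirely.

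Your potential $\Phi=I+D_\ell+D_r$ is a legitimate alternative and does work, but it buys you extra verification duties: you must check $\deriv{\Phi}{t}\le 0$ in every combination of left/right states (nine cases, not just the ones you sketch), and you must control the jumps. One minor slip: the assertion ``$O_\ell=\gamma_\ell$ just before an advance'' is not justified---an advance of $\ell(t)$ can occur while $r_{\ell(t)}$ is stationary (and hence $O_\ell<\gamma_\ell$), through motion of $r_{\ell^+(t)}$ rotating the inner segment onto the outer line. Fortunately your jump formula $\Delta\Phi=\sum_{i=k+1}^{k'}(1-2\norm{w_i})$ is correct regardless of $D_\ell$ before the jump, so this does not break the argument. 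Your ``slightly sharper bookkeeping'' to reach exactly $\frac{n-3}{1-\tau}$ is also left vague; the paper's direct $O_i$-budget avoids this fiddling.
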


\begin{proof}
	The total time in configurations with an outer angle of size at most \referenceAngle{} is bounded by $\frac{n-3}{1-\tau}$ (\Cref{lemma:turnaroundTotal}).
	It remains to bound the time of configurations with larger outer angles.
	As soon as an outer angle reaches a size of at least \referenceAngle{}, $O_i(t)$ increases with speed $1-\tau$ (\Cref{lemma:outerAngleMediumVectors}) until it reaches maximal length.
	Additionally, by \Cref{lemma:outerAnglesSmallVectors}, $O_i(t)$ will only decrease in the future in case the corresponding outer angle has a size of less than \referenceAngle{} (\Cref{lemma:outerAnglesSmallVectors}).
	Hence, the total time all $O_i(t)$ can decrease is bounded by $\frac{n-3}{1-\tau}$, the total time outer angles can have a size of less than \referenceAngle{}.
	In case an outer angle $\alpha_{i}(t)$ has a size of at least \referenceAngle{} and $O_i(t) < \gamma_i(t)$ it holds that $O_i(t)$ increases with speed $1- \tau$ (\Cref{lemma:outerAngleMediumVectors}).
	As $O_i(t)$ is bounded by $n-3$ while $\ell(t) \neq r(t)$, the total time all $O_i(t)$ can increase with speed $1-\tau$ is bounded by $\frac{n-3}{1-\tau}$.
	Furthermore, the total time $O_i(t) = \gamma_i(t)$ and $\alpha_{i}(t) \leq \frac{3}{4} \pi$ is bounded by $\frac{2 \cdot \left(n-3\right)}{\sqrt{2-\sqrt{2}}}$ (\Cref{corollary:mediumSizeTime}).
	The total time $O_i(t) = \gamma_i(t)$ and $\alpha_{i}(t) \geq \frac{3}{4} \pi$ is bounded by $20 \cdot \left(n-3\right)$.
	Hence, the total time needed until $\ell(t) = r(t)$ is bounded by $\frac{2 \cdot \left(n-3\right)}{1-\tau} + \frac{2 \cdot \left(n-3\right)}{\sqrt{2-\sqrt{2}}}$ + $20 \cdot \left(n-3\right) = 2 \cdot \left(n-3\right) \cdot \left(\frac{1}{1-\tau} + \frac{1}{\sqrt{2-\sqrt{2}}} + 10\right)$.
\end{proof}

Lastly, it remains to analyze the case $\ell(t) = r(t)$.
The combination of \Cref{lemma:mainLemmaFirst,lemma:mainLemmaSecond} yields then a total runtime bound.

\begin{restatable}{lemma}{lemmaMaxMobMainLemma} \label{lemma:mainLemmaSecond}
	Assume that $\ell(t)  = r(t)$.
	Then, after time $3n \cdot \left(\frac{1}{\tau} +\frac{1}{1-\tau}\right)$, the configuration is transformed into a \maxChain{} or all robots are located on the same position.
\end{restatable}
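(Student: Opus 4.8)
The plan is to specialize the phase decomposition behind \Cref{lemma:mainLemmaFirst} to the situation $\ell(t)=r(t)=:k$. In this case the chain consists of two straight arms, $r_1,\dots,r_k$ collinear along $\widehat{w}_2(t)$ and $r_k,\dots,r_n$ collinear along $\widehat{w}_n(t)$, that meet at the bend $r_k$ with angle $\alpha_k(t)$; consequently $\innerLength{}=0$, and the only quantities that still must be driven to their targets are the arm lengths $\leftOuter{}=\Delta_{1,k}(t)\le\gamma_\ell(t)$ and $\rightOuter{}=\Delta_{k,n}(t)\le\gamma_r(t)$ together with $\alpha_k(t)$, a \maxChain{} being reached exactly once both arms are full and $\alpha_k(t)=\pi$. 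I would track $L(t):=\leftOuter{}+\rightOuter{}\le\gamma_\ell(t)+\gamma_r(t)=n-1$ and $\alpha_k(t)$, and show that the remaining execution decomposes into intervals of three kinds, each of bounded total length.

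\emph{Stretching.} As long as $\alpha_k(t)\ge\referenceAngle{}$ and some vector on an arm is shorter than $1$, the bend robot $r_k$ moves only if an adjacent vector already has length $1$, so neither arm shrinks; the outer robot on an incomplete arm then moves directly away from $r_k$ and lengthens that arm with speed $\oneminustau$ (as in \Cref{lemma:outerAngleMediumVectors}), while an outer robot whose adjacent vector has length $1$ merely keeps that distance as its neighbors redistribute. Since $\gamma_\ell(t)+\gamma_r(t)=n-1$ and the two arms grow in parallel, these intervals contribute $\mathcal{O}\bigl(n/(1-\tau)\bigr)$ in total. Once both arms are full, $r_k$ moves along its bisector, which straightens $\alpha_k(t)$; this cascades inward along the two (now rigid, unit-spaced) arms exactly as in \moblong/ for chain formation, the chain becomes collinear, and a final stretching phase driven by the outer robots at speed $\oneminustau$ produces the \maxChain{} after $\mathcal{O}\bigl(n/(1-\tau)\bigr)$ further time.

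\emph{Shrinking.} Whenever $\alpha_k(t)<\referenceAngle{}$, the robot $r_k$ moves along its bisector with speed $1$, so by \Cref{lemma:continuousDistances} (with $\beta_{k,k-1}(t)=\beta_{k,k+1}(t)=\alpha_k(t)/2$) both $\norm{w_k(t)}$ and $\norm{w_{k+1}(t)}$ shrink with speed $\cos(\alpha_k(t)/2)>\oneminustau$; since the outer robots move away with speed at most $\oneminustau$, the computation of \Cref{lemma:outerAnglesSmallVectors} gives $\deriv{\leftOuterShort{}}{t}=\oneminustau-\cos(\alpha_k(t)/2)\in(-\tau,0)$ and likewise for $\rightOuterShort{}$, so $L(t)$ is strictly decreasing. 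The regime is self-limiting: the bisector step of $r_k$ also straightens the bend, so $\alpha_k(t)$ increases, and either (a) $\alpha_k(t)$ climbs back to $\referenceAngle{}$, returning to the stretching analysis, or (b) the configuration is the folded, ``marching'' configuration ($\widehat{w}_n(t)=-\widehat{w}_2(t)$, $\alpha_k(t)=0$), in which every inner robot chases the outer robots' common direction at speed up to $1$ while the outer robots lag at speed $\oneminustau$ so that $L(t)$ collapses at the uniform rate $2\tau$. Since $L(t)\le n-1$ and its rate of decrease is of order $\tau$ whenever $\alpha_k(t)$ is bounded away from $\referenceAngle{}$, all shrinking intervals together last $\mathcal{O}(n/\tau)$ time, after which either all robots are at one position or $\alpha_k(t)\ge\referenceAngle{}$ for good.

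Summing the three contributions and keeping track of the constants---the stretching and straightening giving at most $3n/(1-\tau)$ and the shrinking at most $3n/\tau$---yields the bound $3n\bigl(\frac1\tau+\frac1{1-\tau}\bigr)$ of \Cref{lemma:mainLemmaSecond}, which together with \Cref{lemma:mainLemmaFirst} proves \Cref{theorem:mainTheorem}. The main obstacle is the shrinking case. First, the shrink rate $\cos(\alpha_k(t)/2)-\oneminustau$ degenerates as $\alpha_k(t)\uparrow\referenceAngle{}$, so one cannot bound a shrinking interval directly by $L(t)$ alone; one must show either that $\alpha_k(t)$ strictly increases on such an interval---the bisector step of $r_k$ straightens the bend and the neighboring robots cannot keep up unless the configuration is perfectly folded---or replace $L(t)$ by a progress measure such as $\Delta_{1,n}(t)$ or a height as in \Cref{lemma:outerAngleHeightNew} whose derivative is uniformly controlled by $\tau$. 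Second, one must rule out an unbounded alternation between stretching and shrinking: here one uses that an arm can lose length only while $\alpha_k(t)<\referenceAngle{}$ (an analogue of \Cref{lemma:outerAnglesSmallVectors}) and that the final straightening is irreversible, so the total amount by which the arms are ever re-lengthened is $\mathcal{O}(n)$.
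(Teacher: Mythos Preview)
Your phase decomposition (stretching when $\alpha_k(t)\ge\referenceAngle{}$ and arms are short; the bend robot $r_k$ working along its bisector otherwise) matches the paper's structure, and you correctly single out the two degenerate one-dimensional sub-cases (the opposed configuration and the folded $\alpha_k=0$ configuration). However, the argument as written does not close, for the reason you yourself flag: the progress measure $L(t)=\leftOuter{}+\rightOuter{}$ has a derivative $2\bigl(\oneminustau-\cos(\alpha_k(t)/2)\bigr)$ that vanishes as $\alpha_k(t)\uparrow\referenceAngle{}$, so no bound of the form $\mathcal{O}(n/\tau)$ on the shrinking intervals follows from $L(t)$ alone. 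Your fallback ``$\alpha_k(t)$ strictly increases during shrinking'' is plausible in the isosceles case but is neither proved nor quantified, and for a non-isosceles bend it is not obvious. The second gap is the straightening step ``this cascades inward along the two (now rigid, unit-spaced) arms exactly as in \moblong/'': there is no cascade here, since only $r_k$ has an angle below $\pi$; the arms simply pivot around the moving $r_k$, and \mob/'s analysis (with fixed endpoints) does not apply.

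The paper resolves both gaps with the single progress measure you mention but do not implement: the height $H_\ell(t)$ of $r_k$ over the line through $r_1$ and $r_n$ (in the isosceles case; for a non-isosceles bend one extends the shorter arm by a virtual robot $r_v$ moving exactly like the corresponding outer robot and works with the height over the line through $r_1$ and $r_v$). The point is that whenever $r_k$ moves along its bisector---which happens both when $\alpha_k(t)<\referenceAngle{}$ and when the arms are full---its ``upward'' speed toward that base line is $1$ (by symmetry of the bisector), while each outer robot has upward speed at most $\oneminustau$; hence $\deriv{H_\ell}{t}\le -\tau$ uniformly, with no degeneracy at $\alpha_k(t)=\referenceAngle{}$. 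Since $H_\ell(t)\le n-2$, the total time $r_k$ spends moving is at most $(n-2)/\tau$, and the remaining time (arms refilling at speed $\oneminustau$ while $r_k$ waits) is at most $\mathcal{O}(n/(1-\tau))$; together with the one-dimensional endgames (\Cref{lemma:oneDimOpposed,lemma:oneDimMarching,lemma:oneDimGather}) this gives the stated $3n\bigl(\tfrac{1}{\tau}+\tfrac{1}{1-\tau}\bigr)$. In short: commit to the height from the outset rather than $L(t)$, and treat the non-isosceles triangle via the virtual-robot reduction rather than an unproved monotonicity claim for $\alpha_k$.
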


For the proof of \Cref{lemma:mainLemmaSecond}, we need some lemmata dealing with one-dimensional configurations.

\begin{lemma} \label{lemma:oneDimLeftEqualsRight}
	Starting \textnormal{\parameterizedMaxMoveOnBisector{1}{1-\tau}} in a one-di\-men\-sional configuration, $\ell(t) \neq r(t)$ can hold for time at most $\frac{n-3}{2}$.
\end{lemma}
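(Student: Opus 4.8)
The plan is to track the inner length $\innerLength = \sum_{i=\ell(t)+1}^{r(t)}\norm{w_i(t)}$ and to show that it shrinks with speed at least $2$ for as long as $\ell(t)\neq r(t)$. This suffices: $\innerLength$ is a sum of at most $r(t)-\ell(t)\le n-3$ vectors of norm at most $1$, and $\ell(t)<r(t)$ forces $w_{\ell(t)+1}(t)\neq(0,0)$ and hence $\innerLength>0$, whereas $\ell(t)=r(t)$ makes the sum empty; so $\ell(t)=r(t)$ is reached exactly when $\innerLength$ hits $0$, which --- starting from at most $n-3$ and decreasing at speed $2$ --- happens after time at most $\frac{n-3}{2}$.

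First I would note that \parameterizedMaxMoveOnBisector{1}{1-\tau} keeps a one-dimensional configuration one-dimensional, since every robot's velocity vector lies on the common line (outer robots move along it; every inner robot is already collinear with its two neighbours and therefore either stays put or moves on the line towards their midpoint). Next, in a one-dimensional configuration the robots $r_{\ell(t)}$ and $r_{r(t)}$ sit at a fold of the chain, so their outer angles satisfy $\alpha_{\ell(t)}(t)=\alpha_{r(t)}(t)=0<\psi$; hence, by the definition of the strategy, both of them move with speed $1$ along their angle bisectors, which here point inward along the line, i.e.\ towards the midpoint of their respective neighbours.

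Then I would bound $\deriv{\innerLengthShort}{t}$ by a case distinction on whether an interior fold exists. If no index $j$ with $\ell(t)<j<r(t)$ has $\alpha_j(t)<\pi$, then $r_{\ell(t)},\dots,r_{r(t)}$ lie on a single line segment, so $\innerLength=\Delta_{\ell(t),r(t)}(t)$; since $r_{\ell(t)}$ and $r_{r(t)}$ move towards each other along this segment at speed $1$ we get $\beta_{\ell(t),r(t)}(t)=\beta_{r(t),\ell(t)}(t)=0$, and \Cref{lemma:continuousDistances} gives $\deriv{\innerLengthShort}{t}=\deriv{\Delta_{\ell(t),r(t)}}{t}=-2$. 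Otherwise there is an interior fold; its first occurrence has index $\ell^{+}(t)$ and its last occurrence has index $r^{+}(t)$ with $\ell^{+}(t)\le r^{+}(t)$, so $\Delta_{\ell(t),\ell^{+}(t)}(t)=\sum_{i=\ell(t)+1}^{\ell^{+}(t)}\norm{w_i(t)}$ and $\Delta_{r^{+}(t),r(t)}(t)=\sum_{i=r^{+}(t)+1}^{r(t)}\norm{w_i(t)}$ are sub-sums of $\innerLength$ supported on disjoint index ranges. The argument from the proof of \Cref{lemma:smallAnglesInnerLength}, applied once at $i=\ell(t)$ and once at $i=r(t)$, shows that the former decreases with speed at least $\cos(\alpha_{\ell(t)}(t)/2)=1$ and the latter with speed at least $\cos(\alpha_{r(t)}(t)/2)=1$; combined with the fact that no robot's movement increases the norm of any vector (\Cref{lemma:vectorsInBetween}), this yields $\deriv{\innerLengthShort}{t}\le-2$. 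Since by \Cref{lemma:potentialDecreasing} the configuration remains in the regime $\ell(t)\le r(t)$, this rate holds throughout the period where $\ell(t)\neq r(t)$, which therefore lasts at most $\frac{n-3}{2}$.

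The step I expect to require the most care is the disjointness bookkeeping: I must make sure that the shrinking of $\innerLength$ attributed to $r_{\ell(t)}$ (on the sub-segment up to the first interior fold) and the shrinking attributed to $r_{r(t)}$ (on the sub-segment back to the last interior fold) are supported on disjoint sets of vectors, which is precisely why the case split on the existence of an interior fold is needed --- in the no-interior-fold case the two contributions act on the same single segment but still combine additively because the two robots approach it from opposite ends. A few degeneracies (a zero-length vector incident to $r_{\ell(t)}$ or $r_{r(t)}$, or the case $\norm{w_2(t)}<1$ in which the outer robot is still stretching at speed $1-\tau$) also warrant a quick check but leave the rate estimate intact.
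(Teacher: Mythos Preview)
Your proposal is correct and follows essentially the same approach as the paper: observe that in a one-dimensional configuration both outer angles satisfy $\alpha_{\ell}(t)=\alpha_{r}(t)=0$, so the inner length $I(t)$ decreases with speed at least $\cos(0)+\cos(0)=2$, and since $I(t)\le n-3$ the condition $\ell(t)\neq r(t)$ can persist for time at most $\frac{n-3}{2}$. The paper's proof is a three-line appeal to \Cref{lemma:smallAnglesInnerLength} that simply adds the two contributions; your case split on the existence of an interior fold is exactly the bookkeeping that justifies this addition (avoiding double-counting when $\ell^{+}(t)=r(t)$), so you are filling in a detail the paper leaves implicit rather than taking a different route.
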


\begin{proof}
	This is a conclusion of \Cref{lemma:smallAnglesInnerLength}.
	In this configuration, both $\alpha_{\ell}(t) = 0$ and $\alpha_{r}(t) = 0$ and $\norm{v_\ell(t)} = \norm{v_r(t)} = 1$.
	Thus, $\innerLength{}$ decreases with speed at least $\cos \left(\frac{\alpha_{\ell}(t)}{2}\right) +\cos \left(\frac{\alpha_{r}(t)}{2}\right) = 2 \cdot \cos \left(0\right)~= ~2$.
\end{proof}

\begin{restatable}{lemma}{lemmaMaxMobOpposedOneDim} \label{lemma:oneDimOpposed}
	A one-dimensional configuration fulfilling $\ell(t) = r(t) = 0$ is transformed into a \maxChain{} after time at most $\frac{n-1}{2 \cdot \left(1-\tau\right)}$.
\end{restatable}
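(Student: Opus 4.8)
The plan is to follow the endpoint distance $\Delta_{1,n}(t)$ and show that, before a \maxChain{} is reached, it grows at the constant rate $2(1-\tau)$. Observe first that a one-dimensional configuration with $\ell(t)=r(t)=0$ has all non-zero vectors $w_i(t)$ pointing in one common direction, and this is preserved by \textnormal{\maxmob/}: every inner robot is collinear with its neighbours, and under \textnormal{\maxmob/} such a robot either stays put or moves towards the midpoint between its neighbours, so the chain stays collinear, the robot order is preserved, and $\ell(t)=r(t)=0$ holds at all later times. Hence $\Delta_{1,n}(t)=\sum_{i=2}^{n}\norm{w_i(t)}\le n-1$, with $\Delta_{1,n}(t)=n-1$ exactly when the configuration is a \maxChain{}, so it suffices to bound the time until $\Delta_{1,n}(t)=n-1$. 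Let $q_i(t)\in\mathbb{R}$ be the position of $r_i$ along the line, oriented so that $r_1$ is leftmost and $r_n$ rightmost; then $r_1$ moves directly away from $r_n$ and $r_n$ directly away from $r_1$, so $\beta_{1,n}(t)=\beta_{n,1}(t)=\pi$ and \Cref{lemma:continuousDistances} gives $\deriv{\Delta_{1,n}}{t}=\norm{v_1(t)}+\norm{v_n(t)}$.

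By the definition of the strategy, $\norm{v_1(t)}=1-\tau$ whenever $\norm{w_2(t)}<1$, and symmetrically $\norm{v_n(t)}=1-\tau$ whenever $\norm{w_n(t)}<1$; thus $\deriv{\Delta_{1,n}}{t}=2(1-\tau)$ as long as $\norm{w_2(t)}<1$ and $\norm{w_n(t)}<1$. The crux is therefore the claim that, as long as the configuration is not a \maxChain{}, the set of times $t$ with $\norm{w_2(t)}=1$ — and, symmetrically, the set with $\norm{w_n(t)}=1$ — has Lebesgue measure zero.

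To prove the claim I fix such a time $t_0$ and let $k$ be the largest index with $\norm{w_j(t_0)}=1$ for all $2\le j\le k$; it satisfies $2\le k\le n-1$ since the configuration is not a \maxChain{}, so $\norm{w_{k+1}(t_0)}<1$. The target midpoint of $r_k$ is $\tfrac12\bigl(q_{k-1}+q_{k+1}\bigr)=q_k-\tfrac12\bigl(1-\norm{w_{k+1}(t_0)}\bigr)<q_k$, strictly left of $q_k$, so $r_k$ moves left with speed exactly $1$, i.e.\ $\dot q_k(t_0)=-1$. For $2\le j\le k-1$ the robot $r_j$ sits exactly at the midpoint of its two distance-$1$ neighbours and keeps tracking it, so $\dot q_j=\tfrac12(\dot q_{j-1}+\dot q_{j+1})$; and $r_1$ tracks $r_2$ at distance $1$ while moving left as fast as it may, so $\dot q_1=\max\{-(1-\tau),\dot q_2\}$. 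The identities for $2\le j\le k-1$ make $(\dot q_1,\dots,\dot q_k)$ an arithmetic progression. If $\dot q_2(t_0)\ge-(1-\tau)$, then $\dot q_1=\dot q_2$, the progression is constant and hence equals $\dot q_k=-1$, giving $\dot q_1=-1<-(1-\tau)$, a contradiction. So $\dot q_2(t_0)<-(1-\tau)$, hence $\dot q_1(t_0)=-(1-\tau)$, and the right derivative of $\norm{w_2}=q_2-q_1$ at $t_0$ equals $\dot q_2(t_0)-\dot q_1(t_0)=\dot q_2(t_0)+(1-\tau)<0$. Thus $\norm{w_2(\cdot)}$ strictly decreases to the right of every $t_0$ in the set, so the set is right-isolated (each of its points has a neighbouring interval on which $\norm{w_2}<1$), hence countable and null; the same argument applies to $\norm{w_n}$.

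Combining, $\deriv{\Delta_{1,n}}{t}=2(1-\tau)$ for almost every $t$ before the \maxChain{} is reached; since $\Delta_{1,n}$ is Lipschitz (robot speeds are bounded), integrating gives $\Delta_{1,n}(t)=\Delta_{1,n}(0)+2(1-\tau)\,t$ until it hits $n-1$, which therefore happens no later than time $\frac{n-1-\Delta_{1,n}(0)}{2(1-\tau)}\le\frac{n-1}{2(1-\tau)}$. At that moment $\sum_{i=2}^{n}\norm{w_i}=n-1$ with each $\norm{w_i}\le 1$, forcing every $\norm{w_i}=1$ on a collinear chain, i.e.\ a \maxChain{}. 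The one genuinely delicate point — and the only place beyond a routine integration — is ruling out that an outer robot stalls at distance $1$ from its neighbour for a positive-measure set of times, which would drop the growth rate below $2(1-\tau)$; this is exactly what the arithmetic-progression argument in the previous paragraph does.
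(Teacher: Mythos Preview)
Your argument is correct and follows the paper's approach: both track $\Delta_{1,n}(t)$ and argue it increases at rate $2(1-\tau)$ until it reaches $n-1$. The paper's own proof is in fact terser than yours---it simply asserts that the outer robots move at speed $1-\tau$ without discussing the case $\norm{w_2(t)}=1$ at all; your arithmetic-progression argument showing that such instants form a null set is a welcome rigorous addition that the paper omits.
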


\begin{proof}
	In case $\ell(t) = r(t) = 0$, it holds $\alpha_{\ell}(t) = \pi$ and thus all vectors point into the same direction rendering the configuration an \opposedConfig{}.
	In \opposedConfig{}s, the outer robots move both with speed $\left(1-\tau\right)$ away from each other such that their distance decreases with speed $2 \cdot \left(1-\tau\right)$.
	Since the maximal distance is $n-1$, a straight line of length $n-1$ is obtained after time at most $\frac{n-1}{2 \cdot \left(1-\tau\right)}$.
\end{proof}

\begin{restatable}{lemma}{lemmaMaxMobOneDimOpposing} \label{lemma:oneDimMarching}
	A one-dimensional configuration fulfilling $\ell(t) = r(t) = x$ for $1 < x < n$ and $p_1(t) \neq p_n(t)$ is transformed into a \maxChain{} after time at most $\frac{n-1}{2} \cdot \left(\frac{1}{\tau} + \frac{1}{1-\tau}\right)$.
\end{restatable}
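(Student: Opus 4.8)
The plan is to turn the marching‑type one‑dimensional configuration into an \opposedConfig{} and then invoke \Cref{lemma:oneDimOpposed}. Fix the index $x$ from the starting time $t_0$. Since the configuration stays one‑dimensional and no inner robot ever targets a point beyond $r_x$ (cf.\ the proof of \Cref{lemma:potentialDecreasing}), the robot $r_x$ remains the extreme point of the chain throughout, with both arms $r_1,\dots,r_x$ and $r_x,\dots,r_n$ collinear. Writing $a(t):=\Delta_{1,x}(t)$ and $b(t):=\Delta_{x,n}(t)$ for the two arm lengths, one has $a(t_0)+b(t_0)\le L(t_0)\le n-1$ and $\Delta_{1,n}(t)=|a(t)-b(t)|$.

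First I would bound the relevant velocities. The robot $r_x$ sees both of its neighbours on the same side and is collinear with them, so it moves at speed $1$ towards their midpoint, which lies on the $r_1$‑ and $r_n$‑side of $r_x$ (this holds unless $w_x(t)$ and $w_{x+1}(t)$ both vanish). Each outer robot moves away from its single neighbour with speed at most $1-\tau$. Plugging the resulting angles (an angle of $\pi$ at the outer robot and $0$ at $r_x$) into \Cref{lemma:continuousDistances} gives $\deriv{a}{t}=\norm{v_1(t)}-1\le-\tau$ and $\deriv{b}{t}=\norm{v_n(t)}-1\le-\tau$, so both arms shrink at rate at least $\tau$.

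Since $a(t_0)+b(t_0)\le n-1$, assume w.l.o.g.\ $b(t_0)\le a(t_0)\le (n-1)$ with $b(t_0)\le (n-1)/2$; then $b$ reaches $0$ after time at most $b(t_0)/\tau\le (n-1)/(2\tau)$, and (by the non‑degeneracy discussed below) $a$ is still positive there. At that moment all of $r_x,\dots,r_n$ coincide while $r_1,\dots,r_x$ still span a positive length, so the configuration is now opposed with $\ell(t)=r(t)=0$; by \Cref{lemma:oneDimOpposed} it is straightened into a \maxChain{} after a further $(n-1)/(2(1-\tau))$ time. Summing the two phases yields the claimed bound $\frac{n-1}{2}\bigl(\frac{1}{\tau}+\frac{1}{1-\tau}\bigr)$.

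The main obstacle is making the ``w.l.o.g.\ exactly one arm vanishes first'' step rigorous: a priori both arms could collapse simultaneously -- this is precisely the degenerate \marchingChain{} that contracts to a single point -- and ruling this out is exactly where the hypothesis $p_1(t_0)\neq p_n(t_0)$, i.e.\ $a(t_0)\neq b(t_0)$, is used. Here $\deriv{(a-b)}{t}=\norm{v_1(t)}-\norm{v_n(t)}$ vanishes as long as both outer robots run at full speed $1-\tau$ (so $\Delta_{1,n}(t)$ is \emph{constant} while $\norm{w_2(t)},\norm{w_n(t)}<1$), and the remaining effort goes into showing that $a(t)-b(t)$ can never reach $0$ once one also controls the regime where an outer vector has length exactly $1$ (so its outer robot is slowed) and the edge case where a vector adjacent to $r_x$ vanishes (so $r_x$ momentarily moves slower than $1$). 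Once this non‑degeneracy is secured, the time accounting above is routine.
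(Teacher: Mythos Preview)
Your approach is essentially the paper's: both arms $O_\ell$ and $O_r$ (your $a$ and $b$) shrink at rate $\tau$, the shorter one (at most $(n-1)/2$) vanishes after time at most $(n-1)/(2\tau)$, and then \Cref{lemma:oneDimOpposed} finishes in time $(n-1)/(2(1-\tau))$. The paper's proof is terser than yours: it simply asserts that both outer lengths decrease \emph{with speed exactly $\tau$} (so $a-b$ is constant and the hypothesis $p_1\neq p_n$, i.e.\ $O_\ell\neq O_r$, immediately rules out simultaneous collapse), without discussing the edge cases you flag concerning $\|w_2\|=1$, $\|w_n\|=1$, or a vanishing vector adjacent to $r_x$. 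Your extra caution is thus more than the paper itself provides.
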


\begin{proof}
	In such a configuration it holds $\leftOuter{} \neq \rightOuter{}$ and $\alpha_{\ell}(t) = 0$.
	The robot $r_{\ell(t)}$ moves with speed $1$ towards the two outer robots.
	The outer robots move with speed $1-\tau$ away from $r_{\ell(t)}$.
	These movements cause a decrease of both $\leftOuter{}$ and $\rightOuter{}$ with speed $\tau$.
	To see this, note that $\beta_{1, \ell}(t) = \beta_{n,\ell}(t) = \pi$ and $\beta_{\ell, 1}(t) = \beta_{\ell,n}(t) = 0$.
	\Cref{lemma:continuousDistances} yields $\deriv{\Delta_{1,\ell}}{t} = \deriv{O_\ell}{t} = -  \norm{v_1(t)} $ $ \cdot \cos \beta_{1, \ell}(t) - 1 \cdot \cos \beta_{\ell, 1}(t) = - \left(1-\tau\right) +1 = - \tau$.
	As a consequence, after time at most $\frac{n-1}{2\cdot\tau}$ either $\leftOuter{}$ or $\rightOuter{}$ reaches size $0$ such that in the following $\ell(t) = r(t) = 0$.
	Such a configuration is transformed into a straight line of length $n-1$ after time at most $\frac{n-1}{2 \cdot \left(1-\tau\right)}$ (\Cref{lemma:oneDimLeftEqualsRight}).
\end{proof}

\begin{restatable}{lemma}{lemmaMaxMobOneDimMarching} \label{lemma:oneDimGather}
	A one-dimensional configuration fulfilling $\ell(t) = r(t) = x$ for $1 < x < n$ and $p_1(t) = p_n(t)$ is transformed into a configuration in which all robots are located on the same position after time at most $\frac{n-1}{2 \cdot \tau}$.
\end{restatable}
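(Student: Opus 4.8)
The plan is to read off the rigid geometry forced by the hypotheses and then track a single scalar with \Cref{lemma:continuousDistances}. Since $\ell(t_0)=r(t_0)=x$ with $1<x<n$, the robots $r_1,\dots,r_x$ are collinear with $w_2,\dots,w_x$ all pointing in direction $\widehat{w}_2$, while $r_x,\dots,r_n$ are collinear with $w_{x+1},\dots,w_n$ all pointing in direction $\widehat{w}_n=-\widehat{w}_2$; by the argument behind \Cref{lemma:potentialDecreasing}, these two blocks stay collinear for the rest of the execution. Drawing the line as an axis with $\widehat{w}_2$ to the right, $r_x$ is the unique rightmost robot, the positions along $r_1,\dots,r_x$ increase monotonically and those along $r_x,\dots,r_n$ decrease monotonically, and because $p_1(t_0)=p_n(t_0)$ every robot lies on the segment between $r_1=r_n$ and $r_x$. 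Hence the diameter of the configuration equals $\Delta_{1,x}(t)=O_{\ell}(t)$ (the $w_i$ with $2\le i\le x$ are parallel), and at $t_0$ we have $O_{\ell}(t_0)=O_{r}(t_0)$ with $O_{\ell}(t_0)=\sum_{i=2}^{x}\|w_i(t_0)\|\le x-1$ and $O_{r}(t_0)=\sum_{i=x+1}^{n}\|w_i(t_0)\|\le n-x$, so $\Delta_{1,x}(t_0)\le\min\{x-1,\,n-x\}\le (n-1)/2$.

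Next I would prove that $\Delta_{1,x}(t)$ shrinks at the constant rate $\tau$. Robot $r_x$ is collinear with its two neighbours, which lie strictly to its left (until the chain fully collapses), so it moves at speed $1$ towards their midpoint — and hence directly towards both $r_1$ and $r_n$ — giving $\beta_{x,1}(t)=\beta_{x,n}(t)=0$; this is the same behaviour of the ``turning'' robot used in the proof of \Cref{lemma:oneDimMarching}. Each outer robot moves away from its single neighbour, i.e.\ away from $r_x$, so $\beta_{1,x}(t)=\beta_{n,x}(t)=\pi$. I claim $\|v_1(t)\|=\|v_n(t)\|=1-\tau$ at every moment: if $\|w_2(t)\|<1$ this is the strategy's definition, and if $\|w_2(t)\|=1$ then $r_1$ tracks $r_2$ capped at $1-\tau$, while $r_2$ — collinear with its neighbours but generically not at their midpoint — moves at speed $1$, so $r_1$ still moves at $\min\{1,1-\tau\}=1-\tau$; the remaining sub-case, where an entire initial block $\|w_2\|=\dots=\|w_k\|=1$ makes $r_1$'s speed a cascade of ``stay-on-the-midpoint'' averages, is settled by a discrete maximum principle (the speeds along such a block are affine in the index with the free end pinned to $1$, forcing $1-\tau$ at $r_1$). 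Since $r_1$ and $r_n$ start at one point and always move in the same direction at the same speed, $p_1(t)=p_n(t)$ for all $t$, hence $\Delta_{1,x}(t)=\Delta_{x,n}(t)$ throughout. Plugging the above into \Cref{lemma:continuousDistances} gives $\deriv{\Delta_{1,x}}{t}=-\bigl(\|v_1\|\cos\beta_{1,x}+\|v_x\|\cos\beta_{x,1}\bigr)=-\bigl((1-\tau)\cos\pi+1\cdot\cos 0\bigr)=-\tau$.

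It then follows that the diameter, which equals $\Delta_{1,x}(t)$, decreases at constant rate $\tau$ and, starting from at most $(n-1)/2$, reaches $0$ within time $\frac{n-1}{2\tau}$; at that instant all pairwise distances vanish, i.e.\ all robots occupy one position, which is the claimed conclusion. I expect the main obstacle to be the structural bookkeeping that keeps the derivative computation valid for the whole run: verifying that the folded shape (collinearity of the two blocks, monotonicity of the positions, and $r_x$ remaining the rightmost robot) is preserved under the dynamics even as robots coalesce near the end, and — the delicate part — that both outer robots really move at speed $1-\tau$ at every instant, including configurations where several vectors next to an outer robot already have length $1$ and its speed is governed by a cascade of midpoint-tracking constraints rather than by the plain rule $v_1=-(1-\tau)\widehat{w}_2$. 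Everything else reduces to a direct application of \Cref{lemma:continuousDistances}.
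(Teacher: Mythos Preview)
Your proposal is correct and follows essentially the same route as the paper: identify that $r_x$ moves at speed $1$ towards the outer robots while $r_1,r_n$ move at speed $1-\tau$ away, apply \Cref{lemma:continuousDistances} with $\beta_{x,1}=0$ and $\beta_{1,x}=\pi$ to obtain $\deriv{\Delta_{1,x}}{t}=-\tau$, and combine with the bound $\Delta_{1,x}(t_0)\le(n-1)/2$. Your write-up is in fact more careful than the paper's, which simply asserts $\norm{v_1(t)}=\norm{v_n(t)}=1-\tau$ without the case analysis you sketch for blocks of unit-length vectors adjacent to an outer robot.
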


\begin{proof}
	Since the outer robots are located at the same position it must hold $\leftOuter{} = \rightOuter{}$ and $\alpha_{\ell}(t) = 0$.
	$r_{\ell(t)}$ moves with speed $1$ towards the two outer robots.
	In this configuration, it holds $\beta_{1, \ell}(t) = \beta_{n, \ell}(t) = \pi$ and $\beta_{\ell,1}(t) = \beta_{\ell, n}(t) = 0$.
	Combined with $\norm{v_1(t)} = \norm{v_n(t)} = 1 - \tau$ and $\norm{v_{\ell}(t)} = 1$, \Cref{lemma:continuousDistances} gives us
	$\deriv{\Delta_{1,\ell}}{t} = \deriv{\leftOuterShort{}}{t} = - \left(- \left(1-\tau\right) +1\right) = - \tau$.
	As $\leftOuter{}$ and $\rightOuter{}$ are both bounded by $\frac{n-1}{2}$, the lemma follows.
\end{proof}

\begin{proof}[Proof of \Cref{lemma:mainLemmaSecond}]
	There are two cases in which $\ell(t) = r(t)$.
	Either $\ell(t) = r(t) = 0$ or $\ell(t) = r(t) = j$ with $2 \leq j \leq n-1$.
	In case $\ell(t) = r(t) = 0$ this means that either every vector is the $0$-vector and thus all robots are located on a single point or all robots are located on the same line in an \opposedConfig{}.
	This is a one-dimensional configuration and transformed into a line of length $n-1$  after time at most $\frac{n}{2 \cdot \left(1-\tau\right)}$ (\Cref{lemma:oneDimOpposed}).
	It remains to consider $\ell(t) = r(t) = j$ with $2 \leq j \leq n-1$.
	In case $\alpha_{\ell}(t) = 0$ the configuration is one-dimensional and transformed into a straight line or a single point after time at most $\frac{n-1}{2} \cdot \left(\frac{1}{\tau} + \frac{1}{1-\tau}\right)$ (\Cref{lemma:oneDimMarching,lemma:oneDimGather}).
	Assume that $\alpha_{\ell}(t) > 0$.
	There is only a single angle of size less than $\pi$ in this configuration.
	For the special case of a \thetaV{} we have proven a runtime of at most $n \cdot \left(\frac{1}{\tau} + \frac{1}{1-\tau}\right)$ in \Cref{def:continuousThetaV}.
	Now, suppose the triangle formed by $r_1, r_{\ell(t)}$ and $r_n$ is not isosceles.
	W.l.o.g.\ assume that $\Delta_{\ell,n}(t) < \Delta_{1,\ell}(t)$.
	In this case, enlarge the line segment connecting $r_{\ell(t)}$ and $r_n$ such that it has length $\Delta_{1,\ell}(t)$ and place a virtual robot $r_v$ at the end of this line segment.
	Now, the triangle formed by $r_1$, $r_\ell(t)$ and $r_v$ is an isosceles triangle.
	Assume that the virtual robot $r_v$ moves exactly as $r_{n}$.
	Define $H_v(t)$ to be the distance of $r_{\ell(t)}$ to the line segment connecting $r_1$ and $r_{v}$.
	$r_{\ell(t)}$ moves with speed $1$ upwards while both $r_1$ and $r_v$ can move with speed at most $1-\tau$ upwards.
	The rest of the argumentation is analogous to \thetaVs{} with the only difference that $H_{\ell}(t)$ is bounded by $n-2$ (in case $\leftOuter{} = n-2$ and $\rightOuter{} = 1$ or vice versa).
	Thus the total time spent in such a configuration can be bounded by $2n \cdot \left(\frac{1}{\tau} + \frac{1}{1-\tau}\right) + \frac{n}{2 \cdot \left(1-\tau\right)} < 3n \cdot \left(\frac{1}{\tau}  + \frac{1}{1-\tau} \right)$.
\end{proof}

\theoremMainTheorem*

\Cref{lemma:mainLemmaSecond} states that there might be two-dimensional configurations in which the chain contracts to a single point instead of reaching the \maxChain{}.
Our simulations support the following conjecture.

\conjectureLebesgueContinuousTwo*

\section{The Influence of the Speed of outer Robots}\label{section:Influence-of-outer-robots-extended}

We conclude by discussing the role of the speeds of outer robots in \parameterizedMaxMoveOnBisector{1}{1-\tau} and \AName{}, revealing an interesting runtime gap between the continuous time model and \fsync{}.
There are two classes of configurations that play an important role in this discussion, \emph{discrete} $\delta$-V-configurations (\Cref{definition:2-dimensions:delta-u-configurations}) and \emph{continuous} $\delta$-V-configurations (\Cref{def:continuousThetaV}, see below).
Both configurations are depicted in \Cref{fig:bothThetaVs}.
With help of these configurations, we give evidence why the speed of outer robots is reduced to $1-\tau$ in \parameterizedMaxMoveOnBisector{1}{1-\tau}. Continuous $\delta$-V-configurations are resolved by \maxmob/ in time $\mathcal{O}\left(n\right)$.

\begin{definition} \label{def:continuousThetaV}
	Let $\delta$ be a positive constant and define $\theta = 2 \cdot \sin^{-1}\left(\delta/ \lfloor \frac{n}{2} \rfloor\right)$.
	For $n$ odd, the  \emph{\thetaV{}} forms an isosceles triangle with $\norm{w_i(t)} = 1$ for all $2 \leq i \leq n$, $\alpha_{\lceil \frac{n}{2} \rceil}(t) = \theta$ and for all other angles it holds $\alpha_{i}(t) = \pi$.
\end{definition}

\lemmaMaxMobContinuousThetaV*

\begin{proof}
	Fix a point in time $t_0$ in which the configuration forms a \thetaV{}.
	Note that a \thetaV{} forms an isosceles triangle whose legs have a length of $\lfloor \frac{n}{2} \rfloor$ and the base ($\Delta_{1,n}(t_0)$) has a length of $\delta$.
	W.l.o.g.\ assume $p_{\ell}(t_0) = (0,0)$, $p_1(t_0) = (x_1 ,y_1)$ and and $p_{n}(t_0) = (x_{n},y_{n})$  with $x_1 = - \frac{\delta}{2}, y_1 = \lfloor \frac{n}{2} \rfloor \cdot \cos \left(\frac{\theta}{2}\right),$ $x_n = \frac{\delta}{2}$ and $y_n = \lfloor \frac{n}{2} \rfloor \cdot \cos \left(\frac{\theta}{2}\right)$.
	Consider the case $\alpha_{\ell}(t) = \theta < \referenceAngle{}$.
	$r_{\ell(t)}$ moves with speed $1$ upwards.
	As $r_1$ and $r_{n}$ move with speed at most $1-\tau$, $H_{\ell}(t)$ and $H_r(t)$ decrease with speed at least $\tau$.
	As $H_{\ell}(t_0) = \lfloor \frac{n}{2} \rfloor \cdot \cos \left(\frac{\theta}{2}\right) < \frac{n}{2}$, $\alpha_{\ell}(t) \geq \referenceAngle{}$ must hold after time at most $\frac{n}{2 \cdot \tau}$, otherwise $H_{\ell}(t) = 0$ and, thus, $\alpha_{\ell}(t) = \pi$.
	Since $\alpha_{\ell}(t) < \referenceAngle{}$ initially and $\alpha_{\ell}(t)$ changes continuously,  $\alpha_{\ell}(t) \geq \referenceAngle{}$  must hold before $H_{\ell}(t) = 0$.
	As soon as $\alpha_{\ell}(t)$ reaches a size of $\referenceAngle{}$, $r_{\ell(t)}$ stops moving as its movement has decreased $\leftOuter{}$ and $\rightOuter{}$.
	Then, both $\leftOuter{}$ and $\rightOuter{}$ increase with speed $1-\tau$ (\Cref{lemma:outerAngleMediumVectors}).
	As $\leftOuter{}$ and $\rightOuter{}$ are bounded by $\lfloor\frac{n}{2}\rfloor$ it holds $\leftOuter{} = \gamma_\ell(t)$ and $\rightOuter{} = \gamma_r(t)$ after time at most $\frac{1}{1-\tau} \cdot \lfloor \frac{n}{2} \rfloor$.
	Afterwards, $r_{\ell(t)}$ continues moving with speed $1$ upwards, decreasing $H_{\ell}(t)$ with speed at least $\tau$ (we can apply the same arguments as before).
	Thus, finally after additional time of at most $\frac{n}{2 \cdot \tau}$ it holds $H_\ell(t) = 0$ and the configuration is a one-dimensional \opposedConfig{} that is transformed into a straight line of length $n-1$ after time at most $\frac{n-1}{2 \cdot \left(1-\tau\right)}$ (\Cref{lemma:oneDimOpposed}).
	We conclude that the total time is upper bounded by $\frac{2 \cdot n}{2 \cdot \tau} + \frac{n}{2} \cdot \frac{1}{1-\tau} + \frac{n-1}{2 \cdot \left(1-\tau\right)} < n \cdot \left(\frac{1}{\tau} + \frac{1}{1-\tau}\right)$.
\end{proof}

Consider now the na\"ive approach that the outer robots always move at full speed in \maxmob/.
We call this strategy \naivemaxmob/.
We prove that the runtime of \naivemaxmob/ for \thetaVs{} depends also on $\delta$, exactly as in the lower bound for the discrete case.
Thus, for flattening the chain, it is crucial that the inner robots move significantly faster upwards than the outer robots in these configurations.

\theoremOneOneMaxMob*

For the proof of \Cref{theorem:thetaTrianglesRuntime}, we state the following lemma:

\begin{restatable}{lemma}{lemmaMaxMobFullSpeedThetaV} \label{lemma:continuousThetaVSpeed}
	When applying \textnormal{\naivemaxmob/} to \thetaVs{}, $$\deriv{ \Delta_{1,n}}{t}=\Delta_{1,n}(t) \cos \bigl(\frac{\theta}{2}\bigr)/\lfloor \frac{n}{2} \rfloor. $$
\end{restatable}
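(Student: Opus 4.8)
The plan is to compute $\deriv{\Delta_{1,n}}{t}$ directly from \Cref{lemma:continuousDistances}, which reduces the whole task to pinning down the velocity vectors $v_1(t)$ and $v_n(t)$ of the two outer robots in a \thetaV{}. Write $m := \lceil n/2\rceil$ and $L := \lfloor n/2\rfloor$, and let $\theta := \alpha_m(t)$ be the current apex angle. First I would fix coordinates exploiting the mirror symmetry of the configuration: put $r_m$ on the $y$-axis at $(0,y_m)$ and the outer robots at $r_1 = (-b,y_o)$, $r_n = (b,y_o)$ with $y_m < y_o$. Since each of the two legs is a straight segment made of $m-1$ unit vectors, the triangle $r_1 r_m r_n$ is isosceles with leg length $L$, so $b = L\sin(\theta/2)$ and the altitude is $H := y_o - y_m = L\cos(\theta/2)$, whence $\Delta_{1,n}(t) = 2b = 2L\sin(\theta/2)$.

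Next I would determine the velocities. The apex robot $r_m$ has $0 < \alpha_m < \pi$ and $\norm{w_m(t)} = 1$, so by the strategy it moves at speed $1$ along its angle bisector; by symmetry the bisector is the positive $y$-direction, hence $v_m(t) = (0,1)$. For $r_1$: as long as the configuration is a \thetaV{}, the left leg stays straight and taut (the outer robot keeps $\norm{w_2(t)} = 1$ and the collinear inner robots track the midpoints of their neighbours, so $\norm{w_i(t)} = 1$ for all $2 \le i \le m$), hence $\frac{d}{dt}\norm{w_i(t)}^2 = 0$, which gives $(v_i(t) - v_{i-1}(t))\cdot\widehat{w}_i(t) = 0$. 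Since all $\widehat{w}_i(t)$ on the left leg equal $\widehat{w}_2(t)$, the scalar $v_i(t)\cdot\widehat{w}_2(t)$ is the same for every $i \in \{1,\dots,m\}$, so in particular $v_1(t)\cdot\widehat{w}_2(t) = v_m(t)\cdot\widehat{w}_2(t)$. In the chosen coordinates $\widehat{w}_2(t) = (\sin(\theta/2), -\cos(\theta/2))$ (it points from $r_1$ toward the apex, at angle $\theta/2$ to the downward altitude), so $v_m(t)\cdot\widehat{w}_2(t) = -\cos(\theta/2)$. Since the strategy makes $r_1$ move directly away from $r_2$, i.e.\ in direction $-\widehat{w}_2(t)$, adjusting only its speed, this forces $v_1(t) = \cos(\theta/2)\,(-\widehat{w}_2(t)) = \cos(\theta/2)(-\sin(\theta/2), \cos(\theta/2))$; note $\norm{v_1(t)} = \cos(\theta/2) \le 1$, so the speed bound of \naivemaxmob/ is genuinely not active — this is exactly the phenomenon the lemma records. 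By the mirror symmetry, $v_n(t) = \cos(\theta/2)(\sin(\theta/2), \cos(\theta/2))$.

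Finally I would feed these into \Cref{lemma:continuousDistances} with $i = 1$, $j = n$. With $\widehat{(p_n(t) - p_1(t))} = (1,0)$ we get $\norm{v_1(t)}\cos\beta_{1,n}(t) = v_1(t)\cdot(1,0) = -\cos(\theta/2)\sin(\theta/2)$ and, symmetrically, $\norm{v_n(t)}\cos\beta_{n,1}(t) = v_n(t)\cdot(-1,0) = -\cos(\theta/2)\sin(\theta/2)$, so $\deriv{\Delta_{1,n}}{t} = 2\sin(\theta/2)\cos(\theta/2)$. Using $2\sin(\theta/2) = \Delta_{1,n}(t)/L$ from the first paragraph, this becomes $\deriv{\Delta_{1,n}}{t} = \Delta_{1,n}(t)\cos(\theta/2)/\lfloor n/2\rfloor$, as claimed.

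The main obstacle is the second step: correctly identifying the outer robots' velocities, and in particular proving that the distance-$1$ constraint between $r_1$ and $r_2$ caps $r_1$'s effective speed at $\cos(\theta/2)$ rather than at $1$. The clean way around it is the conservation argument above — $\norm{w_i} \equiv 1$ along a straight leg forces $v_i\cdot\widehat{w}_2$ to be constant along the leg and equal to $v_m\cdot\widehat{w}_2$ — which avoids having to track the inner leg robots individually. The only fact that must be granted is that a \thetaV{} remains a \thetaV{}-shaped configuration (straight, taut legs with a single bend at $r_m$) for an infinitesimal time interval; this is immediate from the strategy's definition and is already implicit in the analysis of \maxmob/ on \thetaVs{} (cf.\ \Cref{lemma:continuousThetaV}). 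Everything else is elementary trigonometry.
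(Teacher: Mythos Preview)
Your proof is correct and follows essentially the same route as the paper: establish that the outer robots move with speed $\cos(\theta/2)$ (the paper asserts this from the observation that the leg lengths stay constant; you give a slightly more explicit conservation argument along the taut leg), apply \Cref{lemma:continuousDistances} to get $\deriv{\Delta_{1,n}}{t}=2\sin(\theta/2)\cos(\theta/2)=\sin\theta$, and then convert via the isosceles-triangle relation $\Delta_{1,n}(t)=2\lfloor n/2\rfloor\sin(\theta/2)$ (the paper phrases this last step as the law of sines, which is the same identity).
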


\begin{proof}
	Due to the symmetry, we obtain
	$\beta_{1,\ell}(t) = \beta_{n,\ell}(t) = \frac{\pi}{2} + \frac{\theta}{2}$.
	In \thetaVs{}, the distances $ \Delta_{\ell,\ell^{+}}(t)$ and $ \Delta_{r,r^{+}}(t)$ remain constant, because the outer robots are able to move with speed $1$.
	Hence, the outer robots move with speed $\cos \left(\frac{\theta(t)}{2}\right)$ as the robot $r_{\lceil \frac{n}{2} \rceil}$ reduces $ \Delta_{\ell,\ell^{+}}(t)$ and $ \Delta_{r,r^{+}}(t)$ with speed $\cos \left(\frac{\theta(t)}{2}\right)$.
	Thus, we can calculate $\deriv{ \Delta_{1,n}}{t}$ according to \Cref{lemma:continuousDistances} as follows:
	$\deriv{ \Delta_{1,n}}{t} = - 2\cdot \left( \cos \left(\frac{\theta}{2}\right) \cdot \cos \left(\frac{\pi}{2} + \frac{\theta}{2}\right) \right) = - 2\cdot \left(- \cos \left(\frac{\theta}{2}\right) \cdot \sin \left(\frac{\theta}{2}\right) \right)                 = \sin \left(\theta\right)$.
	Via the law of sines, we then obtain: $\frac{ \Delta_{1,n}(t)}{\sin\left(\theta\right)} = \frac{\lfloor\frac{n}{2} \rfloor}{\sin \left(\frac{\pi - \theta}{2}\right)}
	\iff \sin\left(\theta\right) = \frac{ \Delta_{1,n}(t) \cdot \cos \left(\frac{\theta}{2}\right)}{\lfloor \frac{n}{2} \rfloor} $

\end{proof}

\begin{proof}[Proof of \cref{theorem:thetaTrianglesRuntime}]
	Fix a point $t_0$ such that $\Delta_{1,n}(t_0) = \delta$, according to the definition of \thetaVs{}.
	Since $\cos (\theta(t)/2) \leq 1$, we can bound $\deriv{ \Delta_{1,n}}{t} \leq \frac{2 \cdot  \Delta_{1,n}(t)}{n}$ (see \Cref{lemma:continuousThetaVSpeed}).
	Thus, it requires time $\mathcal{O}(n)$ until $ \Delta_{1,n}(t)$ doubles.
	To increase $ \Delta_{1,n}(t)$ such that $ \Delta_{1,n}(t) \geq c$ for an arbitrary constant (less than $1$), it requires time $\Omega \left(n \cdot \log \left(1/ \Delta_{1,n}(t_0)\right)\right) = \Omega \left(n \cdot \log \left(1/ \delta\right)\right)$.
\end{proof}

This also explains another aspect of \maxmob/.
An inner robot moves in case either $\norm{w_{i}(t) } = 1$, $\norm{w_{i+1}(t)} = 1$ or $\alpha_{i}(t) < \referenceAngle{}$.
Suppose we drop the last assumption and inner robots move only in case either $\norm{w_{i}(t) } = 1$, $\norm{w_{i+1}(t)} = 1$.
This has the consequence that we lose the speed gain obtained by reducing the speed of the outer robots!
To see this, observe that in a \thetaV{} with very small angles $\theta$, the robot $r_{\lceil \frac{n}{2} \rceil}$ moves fast enough such that $O_{\ell}(t)$ and $O_{r}(t)$ decrease with constant speed such that immediately $\norm{w_{\lceil \frac{n}{2} \rceil}(t) } < 1$ and $\norm{w_{\lceil \frac{n}{2} \rceil+1}(t)} < 1$ hold.
Hence, $r_{\lceil \frac{n}{2} \rceil}$ stops moving and waits until $O_{\ell}(t)$ and $O_{r}(t)$ reach their maximum length again.
As the process is continuous, $r_{\lceil \frac{n}{2} \rceil}$ does not wait until this happens but is slowed down to a speed of $1-\tau$ such that $r_{\lceil \frac{n}{2} \rceil}$ and the outer robots move with the same speed that results in a runtime depending on $\delta$.
To summarize, two aspects in the design of \maxmob/ are crucial for the linear runtime: Slowing down the outer robots to a speed of $1-\tau$ and allowing the inner robots to move with full speed along their bisectors in case they are located at a very small angle.

Since slowing down the outer robots in the continuous time model removes the dependence one $\delta$, one could conjecture that the same approach would also work in the discrete time model.
Consider the strategy $\left(1-\tau\right)$-\AName{}, in which the outer robots do not move the full distance to their target point but only $1-\tau$ times the distance they would usually move.
The movement of inner robots remains unchanged.
The new positions of $r_1$ and $r_n$ can be computed as follows:
$p_1(t+1)  = p_1(t) + \oneminustau \, \left(\half \cdot p_{2}(t) + \half p_1(t) - \half \widehat{w}_2(t) - p_1(t) \right)
= \halfoneplustau \cdot p_1(t) + \halfoneminustau \cdot p_{2}(t) - \halfoneminustau \cdot \widehat{w}_2(t)$ and
$p_n(t+1)  = \halfoneplustau \cdot p_n(t) + \halfoneminustau \cdot p_{n-1}(t) + \halfoneminustau \cdot \widehat{w}_n(t)$.
For the vector representation, we obtain the following equations:
$w_2(t+1) =  \half w_3(t) + \tauhalf w_2(t) + \halfoneminustau  \cdot \widehat{w}_2(t)$ and
$w_n(t+1)  = \half \cdot w_{n-1}(t) + \tauhalf \cdot w_n(t) + \halfoneminustau  \cdot \widehat{w}_n(t)$

Similar to \Cref{definition:2-dimensions:delta-u-configurations}, we can define configurations that have the same behavior under \tauAName{} as \deltaUConfig{}s under \AName{} showing that a speed reduction does not work here.

\begin{restatable}{definition}{definitionOneMinusTauThetaV} \label{definition:2-dimensions:tau-delta-u-configurations}
	For $n$ even, a \emph{\taudeltaUConfig{}} is defined by the vectors $w_2(t) = \left(\frac{\delta}{n-1},\frac{1-\tau}{1-\tau + \frac{2}{n-2}} \right)$ and for all $2 < i \leq n$:
	$w_i(t) = \left(\frac{\frac{n}{2} - i+1}{\frac{n}{2}-1}\right) \cdot w_2(t)$.
\end{restatable}

Note that for $\tau = 0$, \deltaUConfig{}s and \taudeltaUConfig{}s coincide.
Also for $\delta = 0$, \taudeltaUConfig{} have a marching chain behavior, however the movement distance per round scales with $\tau$.

\theoremOneMinusTauGtmLower*

\begin{lemma} \label{lemma:2-dimension:tau-lower-bound-monotonically-increasing-outer-vectors}
	During an execution of \textnormal{\tauAName{}} starting in a \taudeltaUConfig{} at time step $t_0$, it holds that $\norm{w_i(t)} \geq \norm{w_i(t_0)} $ for all $t$ and all $2 \leq i \leq n$.
\end{lemma}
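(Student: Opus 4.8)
I would mirror the proof of \Cref{lemma:2-dimension:lower-bound-monotonically-increasing-outer-vectors} (the analogous statement for \AName{} on \deltaUConfig{}s) and only re-do the two outer-robot cases, since those are the only ones affected by the speed reduction. Write $w_i(t) = (x_i(t), y_i(t))$ and recall from \Cref{definition:2-dimensions:tau-delta-u-configurations} that in a \taudeltaUConfig{} every $w_i(t_0)$ is the coefficient $\tfrac{n/2-i+1}{n/2-1}$ times $w_2(t_0)$, so the initial vectors are affine in the index and symmetric under $i\mapsto n+2-i$ with a sign flip (in particular $w_{n/2+1}(t_0)=(0,0)$). The first step is to record the \emph{sign structure} preserved by $(1-\tau)$-\AName{}: for all $t\ge t_0$ and all $i$, $x_i(t)$ and $y_i(t)$ keep the sign of $x_i(t_0)$ resp.\ $y_i(t_0)$, and the middle vector stays $(0,0)$ by symmetry. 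This is a one-step induction: an inner robot takes the midpoint of its two neighbours, which cannot flip a component's sign, and for an outer robot the extra term $\tfrac{1-\tau}{2}\widehat{w}_i(t)$ is a positive multiple of $w_i(t)$, hence of the same component signs.

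The second step is the monotonicity itself, which I would prove by induction on $t$ in the stronger componentwise form $|x_i(t)|\ge|x_i(t_0)|$ and $|y_i(t)|\ge|y_i(t_0)|$; together with the sign structure this immediately yields $\norm{w_i(t)}\ge\norm{w_i(t_0)}$. For an inner robot $2<i<n$ nothing changes relative to the \AName{} proof: from $w_i(t+1)=\tfrac12(w_{i-1}(t)+w_{i+1}(t))$, the induction hypothesis, the sign structure, and the fact that $w_i(t_0)=\tfrac12(w_{i-1}(t_0)+w_{i+1}(t_0))$ (the defining coefficients are affine in $i$), one gets $|x_i(t+1)|=\tfrac12|x_{i-1}(t)+x_{i+1}(t)|\ge\tfrac12|x_{i-1}(t_0)+x_{i+1}(t_0)|=|x_i(t_0)|$, and likewise for $y$.

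The crux is the outer robots. Consider $r_1$; by the left--right symmetry of the configuration and the strategy, $r_n$ is handled identically. Its update is $w_2(t+1)=\tfrac{\tau}{2}w_2(t)+\tfrac12 w_3(t)+\tfrac{1-\tau}{2}\widehat{w}_2(t)$. I would first show connectivity is maintained, $\norm{w_2(t)}\le 1$ for all $t$: once all inner vectors have norm $\le 1$, the update gives $\norm{w_2(t+1)}\le\tfrac{\tau}{2}+\tfrac12+\tfrac{1-\tau}{2}=1$, so by induction on $t$ it suffices that $\delta$ is small enough for the initial chain to be connected. Hence $\widehat{w}_2(t)=w_2(t)/\norm{w_2(t)}$ dominates $w_2(t)$ in each coordinate in absolute value, giving componentwise $|w_2(t+1)|\ge\tfrac{\tau}{2}|w_2(t)|+\tfrac12|w_3(t)|+\tfrac{1-\tau}{2}|w_2(t)|=\tfrac12|w_2(t)|+\tfrac12|w_3(t)|$. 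Plugging in the induction hypothesis, what remains is the purely algebraic check that this lower bound is at least $|w_2(t_0)|$ in each coordinate, for the specific value $a=\tfrac{1-\tau}{1-\tau+2/(n-2)}$ fixed in \Cref{definition:2-dimensions:tau-delta-u-configurations}. That constant is precisely the value making the outer update norm-non-decreasing out of a \taudeltaUConfig{}: it is the fixed point of the scalar recursion $s\mapsto\tfrac12\bigl(\tfrac{n/2-2}{n/2-1}+\tau\bigr)s+\tfrac{1-\tau}{2}$ governing the $y$-coordinate there, so this is a short computation.

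I expect this last matching of the one-step bound against the definition of $a$ to be where essentially all of the real work sits; everything else is a routine adaptation of the \AName{} argument, and the left--right symmetry lets one do just one outer robot and one half of the chain. A secondary point to be careful about is that the argument really uses the affine/collinear structure of the \taudeltaUConfig{} (so that the inner midpoint step is \emph{exactly} neutral at time $t_0$), which is what forces us to work with this particular family rather than arbitrary near-marching configurations.
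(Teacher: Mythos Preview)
Your plan is exactly what the paper does: its entire proof of this lemma is the single sentence ``The proof is analogous to the proof of \Cref{lemma:2-dimension:lower-bound-monotonically-increasing-outer-vectors}.'' You actually go further than the paper by attempting to spell out the outer-robot case, which even the referenced proof leaves implicit.

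That extra step, however, does not close as you have written it. After invoking $\norm{w_2(t)}\le 1$ to replace $\widehat{w}_2(t)$ by $w_2(t)$, your componentwise bound becomes $|w_2(t+1)|\ge\tfrac12|w_2(t)|+\tfrac12|w_3(t)|$; plugging in the induction hypothesis yields only $\tfrac12\bigl(1+\tfrac{n/2-2}{n/2-1}\bigr)\,|w_2(t_0)|<|w_2(t_0)|$, so the ``short computation'' fails. Your fixed-point reading of $a$ is correct, but it says precisely that the outer update is norm-preserving when $\norm{w_2}=a$, i.e.\ at $\delta=0$; for $\delta>0$ with the definition read literally (both coordinates scaled) one has $\norm{w_2(t_0)}>a$ and the one-step scalar multiplier $\tfrac{\tau}{2}+\tfrac{n/2-2}{n-2}+\tfrac{1-\tau}{2\norm{w_2(t_0)}}$ drops strictly below $1$, so $\norm{w_2}$ decreases at the very first step. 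The downstream use in \Cref{lemma:maxGtmTauLower} only needs $\norm{w_2(t)}\ge a$, so if you want to make the argument rigorous you should either weaken the invariant to that, or read \Cref{definition:2-dimensions:tau-delta-u-configurations} with the scaling applied to the $y$-coordinate only (in analogy with the original \deltaUConfig{}); with the crude $\widehat{w}_2\to w_2$ replacement the outer step cannot work.
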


\begin{proof}
	The proof is analogous to the proof of \Cref{lemma:2-dimension:lower-bound-monotonically-increasing-outer-vectors}.
\end{proof}

\begin{restatable}{lemma}{lemmaMaxGtmTauLower} \label{lemma:maxGtmTauLower}
	Assume that we start \tauAName{} in a \taudeltaUConfig{}.
	It holds $x_2(t) \geq \half$ after $\Omega\left(n^2 \cdot \log \left(1/\delta\right)\right)$ rounds.
\end{restatable}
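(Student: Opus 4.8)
The plan is to follow the proof of Lemma~\ref{lemma:2-dimensional-lower-bound-x} almost verbatim, the whole point being that the speed reduction by $(1-\tau)$ leaves the key recursion for the outer vector completely unchanged. Concretely, I will show that along any execution of $(1-\tau)$-\AName{} started in a \taudeltaUConfig{} at time $t_0$, the first coordinate $x_2(t)$ of the outer vector $w_2(t)$ grows by a factor of at most $1+\frac1{n-2}$ per round --- the \emph{same} bound as for \AName{} --- and since $x_2(t_0)=\frac{\delta}{n-1}$, this keeps $x_2(t)$ below $\frac12$ for $\Omega\!\bigl(n^2\log(1/\delta)\bigr)$ rounds, which is exactly the assertion.

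First I would take the update rule for the outer vector under $(1-\tau)$-\AName{} stated in Section~\ref{section:speedOfOuterRobots}, namely $w_2(t+1)=\frac12 w_3(t)+\frac{\tau}{2}w_2(t)+\frac{1-\tau}{2}\widehat{w}_2(t)$, and project it onto the $x$-axis. Using $\widehat{w}_2(t)=w_2(t)/\norm{w_2(t)}$ this yields
\[
x_2(t+1)=\tfrac12 x_3(t)+\tfrac{\tau}{2}x_2(t)+\tfrac{1-\tau}{2}\cdot\frac{x_2(t)}{\norm{w_2(t)}}.
\]
Two ingredients then bound the right-hand side. (i) The ordering $x_2(t)\ge x_3(t)\ge x_4(t)\ge\cdots$ along the first half of the chain, which holds at $t_0$ since the $x$-components of a \taudeltaUConfig{} are monotone there with $x_2$ maximal, is preserved by the dynamics: the inner robots only average neighbouring $x$-coordinates, and for the outer one $\norm{w_2(t)}\le1$ (connectivity) gives $x_2(t+1)\ge\tfrac12 x_3(t)+\tfrac12 x_2(t)\ge x_3(t+1)$; in particular $x_3(t)\le x_2(t)$ for all $t$. (ii) By Lemma~\ref{lemma:2-dimension:tau-lower-bound-monotonically-increasing-outer-vectors}, $\norm{w_2(t)}\ge\norm{w_2(t_0)}\ge\frac{1-\tau}{\,1-\tau+\frac2{n-2}\,}$, so $\frac1{\norm{w_2(t)}}\le 1+\frac2{(n-2)(1-\tau)}$. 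Substituting both into the display, the $\tau$-terms cancel exactly ($\tfrac12+\tfrac{\tau}{2}+\tfrac{1-\tau}{2}=1$) and the surviving error term is $\frac{1-\tau}{2}\cdot\frac{2}{(n-2)(1-\tau)}=\frac1{n-2}$, leaving
\[
x_2(t+1)\le\Bigl(1+\tfrac1{n-2}\Bigr)x_2(t),
\]
which is precisely the recursion from the proof of Lemma~\ref{lemma:2-dimensional-lower-bound-x}. This is also where the slightly odd second coordinate $\frac{1-\tau}{1-\tau+\frac2{n-2}}$ in Definition~\ref{definition:2-dimensions:tau-delta-u-configurations} pays off: it is tuned exactly so that the leftover correction comes out as $\frac1{n-2}$ again.

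From here the counting is identical to that in Lemma~\ref{lemma:2-dimensional-lower-bound-x}: iterating gives $x_2(t)\le\bigl(1+\tfrac1{n-2}\bigr)^t\cdot\frac{\delta}{n-1}$, so $x_2$ can at most double every $\mathcal{O}(n)$ rounds, and starting from $\frac{\delta}{n-1}$ it takes $\Omega\!\bigl(n^2\log(1/\delta)\bigr)$ rounds before $x_2(t)\ge\frac12$ becomes possible; combined with the one-dimensional lower bound this assembles Theorem~\ref{theorem:2-dimensions:tau-lower-bound}, exactly as Lemma~\ref{lemma:2-dimensional-lower-bound-x} yields Theorem~\ref{theorem:2-dimensions:general-lower-bound}. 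I expect the only genuinely delicate point to be the ordering invariant $x_2(t)\ge x_3(t)\ge\cdots$ on the first half of the chain: one has to check it is not spoiled by the averaging near the "tip" of the V, where some $x$-components turn negative, and that the sign and monotonicity facts used in the induction ($x_2(t),x_3(t)\ge 0$ over the relevant time window) are legitimate. Everything else is the bookkeeping already carried out for \AName{} together with the appeal to the monotonicity lemma.
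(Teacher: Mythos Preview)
Your proposal is correct and follows essentially the same argument as the paper: project the update rule for $w_2$ onto the $x$-axis, use Lemma~\ref{lemma:2-dimension:tau-lower-bound-monotonically-increasing-outer-vectors} to lower-bound $\norm{w_2(t)}$ by the $y$-component $\frac{1-\tau}{1-\tau+\frac{2}{n-2}}$ of $w_2(t_0)$, and combine with $x_3(t)\le x_2(t)$ to recover the \emph{identical} recursion $x_2(t+1)\le\bigl(1+\tfrac{1}{n-2}\bigr)x_2(t)$ from Lemma~\ref{lemma:2-dimensional-lower-bound-x}. Your explicit observation that the $\tau$-dependence cancels (because the second coordinate in Definition~\ref{definition:2-dimensions:tau-delta-u-configurations} is tuned for exactly this) and your discussion of the ordering invariant are useful elaborations, but the substance is the same as the paper's proof.
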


\begin{proof}
	\begin{align*}
		x_2(t+1)  \leq  \half x_{3}(t) + \tauhalf \cdot x_2(t) + \frac{1 - \tau}{2 \cdot \frac{1-\tau}{1 - \tau + \frac{2}{n-2}}} \cdot x_2(t)
		& = \half x_3(t) + \left(\half + \frac{1}{n-2}\right) \cdot x_2(t) \\
        & \leq \left(1+ \frac{1}{n-2}\right) \cdot x_2(t)
	\end{align*}

	The last line is exactly the same formula that has been obtained in the proof of \Cref{lemma:2-dimensional-lower-bound-x} and thus, the same lower bound holds.
\end{proof}




%
%

\end{document}